\documentclass[
 reprint,
 amsmath,amssymb,
 aps,
 physrev,
 superscriptaddress,
]{revtex4-2}

\usepackage[utf8]{inputenc}
\usepackage{graphicx}
\usepackage{array}[=2016-10-06]
\usepackage{dcolumn}
\usepackage{bm}
\usepackage{multirow}
\usepackage{braket}
\usepackage{amsmath, amssymb, amsthm}
\usepackage{xcolor}
\usepackage{hyperref}
\usepackage[caption=false]{subfig}
\usepackage{orcidlink}

\hypersetup{
    colorlinks=true,
    allcolors=blue
}

\newtheorem{theorem}{Theorem}
\newtheorem{proposition}[theorem]{Proposition}
\newtheorem{lemma}[theorem]{Lemma}
\newtheorem{corollary}[theorem]{Corollary}
\theoremstyle{definition}
\newtheorem{definition}[theorem]{Definition}

\iffalse
\newcommand{\comments}[1]{\footnote{\textcolor{blue}{\textit{#1}}}}

\else
\newcommand{\comments}[1]{}

\fi

\usepackage{thmtools}
\newenvironment{restate}[2]{%
  \par\noindent\textbf{#1 \ref{#2}. }\itshape%
}{\par\medskip}

\begin{document}

\title{Fourier Symmetrization for Geometric Quantum Machine Learning}


\author{Letao Wang\,\orcidlink{0009-0004-2411-8916}}
\email{letao.wang@centralesupelec.fr}
\affiliation{Laboratory of Signals and Systems, CentraleSup\'elec, CNRS, Paris-Saclay University, Gif-sur-Yvette, 91190, France}
\author{Abdel Lisser\,\orcidlink{0000-0003-1318-6679}}
\affiliation{Laboratory of Signals and Systems, CentraleSup\'elec, CNRS, Paris-Saclay University, Gif-sur-Yvette, 91190, France}
\affiliation{F\'ed\'eration de Math\'ematiques de CentraleSup\'elec, CentraleSup\'elec, CNRS, Paris-Saclay University, Gif-sur-Yvette, 91190, France}
\author{Sreejith Sreekumar\,\orcidlink{0000-0003-3615-8961}}
\affiliation{Laboratory of Signals and Systems, CentraleSup\'elec, CNRS, Paris-Saclay University, Gif-sur-Yvette, 91190, France}
\author{Zeno Toffano\,\orcidlink{0000-0001-8594-3291}}
\affiliation{Laboratory of Signals and Systems, CentraleSup\'elec, CNRS, Paris-Saclay University, Gif-sur-Yvette, 91190, France}


\begin{abstract}
Geometric quantum machine learning incorporates symmetry into quantum models, but how symmetry shapes their expressivity and guides effective model design remains insufficiently understood. We address this question through the Fourier representation of quantum Fourier models (QFMs). Symmetry organizes the frequency spectrum into orbits and sums the Fourier coefficients within each orbit into a symmetrized coefficient. When QFM trainable layers form independent exact 2-designs, the variance of each symmetrized coefficient equals the sum of the coefficient variances in its orbit. For single-layer QFMs with \(\varepsilon\)-approximate 2-design trainable layers, we bound the deviation from this identity. The resulting bound for individual Fourier coefficients can be exponentially tighter than an existing bound. The hyperoctahedral group provides an example of orbit growth that can mitigate vanishing expressivity of the symmetrized coefficients. Symmetrization of QFMs over an elementary abelian 2-group also yields pure multivariate Chebyshev polynomial basis functions. We introduce randomized encoding, which implements invariant models without ancilla qubits or the additional circuit depth for quantum twirling. We evaluate the models as quantum physics-informed neural networks (QPINNs) on two-dimensional screened Poisson and stationary viscous Hamilton-Jacobi equations. Under hard boundary constraints, QPINNs using exact symmetrization and randomized encoding achieve the lowest mean errors in the two benchmarks, respectively.
\end{abstract}

\maketitle

\section{Introduction}

Symmetry is a structural prior in many scientific learning problems. In portfolio optimization, a simultaneous relabeling of assets should not change the underlying objective. In molecular modeling, energies and force fields must respect geometric symmetries of the input coordinates. Encoding such structure into the model can improve data efficiency, generalization, and trainability, and this idea has motivated geometric quantum machine learning (GQML) \cite{Larocca2022,Meyer2023,Michael2023,Cenk2024,Louis2024,Nguyen2024,West2024,Sauvage2024,nha2025,Das2025}. This paper studies the role of symmetry in data re-uploading models \cite{Adrian2020,Schuld2021Effect} from the perspective of quantum Fourier models (QFMs) \cite{Hela2025}. The underlying circuit is the familiar re-uploading architecture, while the QFM viewpoint emphasizes that the induced hypothesis function admits a finite Fourier series. This perspective connects geometric priors in variational quantum models with questions of trainability, spectral approximation, and scientific machine learning~\cite{Schuld2021Effect,Hela2025}.

The Fourier viewpoint provides a new perspective on what makes a symmetry prior useful. QFMs can access a large frequency spectrum, especially under exponential encoding schemes \cite{Shin_2023}. The statistics of Fourier coefficients can affect trainability. In particular, coefficient variances can decay exponentially with the number of qubits, which can make the corresponding frequency components difficult to train. This phenomenon is known as \textit{vanishing expressivity} \cite{Hela2025}. A recent perspective argues that the Fourier structure of QFMs should guide model design and identifies GQML as a promising direction \cite{Oh2026}. However, how imposing invariance changes the frequency spectrum and Fourier coefficients of a QFM remains unclear. We therefore study how invariance reorganizes the frequency spectrum and changes the coefficient statistics.

To address these questions, we enforce invariance by averaging the QFM hypothesis function over inputs transformed by the symmetry. This construction follows the general twirling principle, in which functions, operators, or channels are averaged over a group representation so that the resulting object respects the symmetry \cite{Meyer2023,Michael2023,Louis2024,Nguyen2024}. At the function level, this averaging enforces invariance. At the Fourier level, it organizes frequencies into orbits. Our first contribution is to show that invariant QFMs admit the \textit{Fourier orbit decomposition}, whether the invariant model is obtained by this averaging operator or by another method. In this decomposition, each frequency orbit defines an orbit basis function, whose coefficient in the invariant model is the sum of the original Fourier coefficients belonging to that orbit. We call this sum the \textit{symmetrized coefficient} of the orbit. For the averaging construction, if the original frequency spectrum contains only part of an orbit, averaging completes that orbit. The orbit basis function then contains all frequencies in the completed orbit, whereas its symmetrized coefficient sums only the Fourier coefficients present in the original spectrum. This Fourier level view lets us analyze how invariance changes expressivity, rather than treating invariance only as a formal constraint on the output function.

Building on the Fourier orbit decomposition, we analyze the variance of the symmetrized coefficients under exact and \(\varepsilon\)-approximate 2-design assumptions. We first consider an idealized setting in which each parameterized trainable layer independently forms an exact 2-design. Under this assumption, distinct Fourier coefficients have zero cross covariance, so the variance of each symmetrized coefficient equals the sum of the variances of the Fourier coefficients in its orbit. With the normalized orbit basis, the fact that these variances add provides a mechanism for mitigating vanishing expressivity. If an orbit contains many coefficients with comparable variance, the variance of the symmetrized coefficient can be substantially larger than that of any individual coefficient in the orbit. We then consider realistic ansatzes whose trainable layers may form only \(\varepsilon\)-approximate 2-designs. In this setting, cross covariance terms between distinct Fourier coefficients do not vanish, so the symmetrized variance no longer equals the sum of the individual coefficient variances. We bound this deviation in terms of the approximation error. In some practically relevant regimes, the resulting bounds can be exponentially tighter than those in previous work \cite{Hela2025}.

In particular, the Fourier representation of QFMs has been highlighted as a guide for designing differential-equation solvers \cite{Oh2026}. In this setting, symmetry averaging has a role beyond changing coefficient variance. It can also impose a useful inductive bias for solving partial differential equations (PDEs). Chebyshev polynomials form a powerful basis for function approximation and are especially useful for this purpose. A natural way to bring this structure into quantum models is to use the \textit{Chebyshev feature map} \cite{Kyriienko_2021}, but the direct construction introduces non-polynomial terms that weaken the intended Chebyshev inductive bias and have derivatives that diverge as the input approaches the domain boundaries. We show that averaging over sign flips of the encoded angles removes these terms and yields pure multivariate Chebyshev polynomial basis functions.

After establishing these benefits of invariance, we turn to the question of implementation. Existing GQML methods can construct quantum circuits that respect the symmetry in advance or impose symmetry during circuit execution. Null space and Choi operator methods construct compatible layers or channels through offline computation, whose cost can scale exponentially with the number of qubits \cite{Nguyen2024}. Gate symmetrization can similarly precompute a gateset compatible with the symmetry, but its computational cost scales with the size of the symmetry group \cite{Meyer2023}. Alternatively, twirling can implement the averaging procedure during circuit execution using quantum circuits for the corresponding group representation. Although twirling can be implemented at the circuit level in several ways, these implementations can require additional ancilla qubits or extra circuit depth \cite{Nguyen2024}. The resource costs of these conventional methods become especially significant for large symmetry groups such as the hyperoctahedral group \(B_m\) considered in \cite{Meyer2023}, which consists of coordinate permutations and independent sign flips. It captures symmetries that are practically relevant to multivariate learning problems, but its size \(|B_m|=2^m m!\) grows faster than exponentially with the number of variables, making conventional implementations difficult to scale.

We therefore propose \textit{randomized encoding}, which estimates the same averaged predictor by randomizing the input encoding. Instead of implementing the corresponding group representation on the quantum circuit, randomized encoding samples a transformation from the symmetry group and applies the original encoding with the transformed input while keeping the original QFM circuit architecture. This procedure is equivalent to randomly selecting and applying encoding gates (often parameterized Pauli rotation gates) according to transformations sampled from the symmetry group, and it provides an unbiased estimator of the averaged predictor. We mainly compare randomized encoding with the twirling implementation in \cite{Nguyen2024} because it also samples transformations from the symmetry group but implements the corresponding group representation on the quantum device, whereas randomized encoding removes this depth overhead when the symmetry acts on the input.

Finally, we evaluate symmetry averaging in quantum physics-informed neural network (QPINN) \cite{Kyriienko_2021,Berger2025,panichi2025} models for two PDEs in two dimensions: a screened Poisson equation and a stationary viscous Hamilton-Jacobi equation. We compare QFMs and models using the Chebyshev feature map with their \(B_2\) averaged versions, together with a fully connected PINN with a comparable number of trainable parameters. The averaged versions compute the average operator either by enumerating all group elements or by randomized encoding. We test hard constraints that enforce the boundary conditions exactly and soft constraints that include boundary condition violations in the loss function. Under hard constraints, randomized encoding gives lower final loss and error than the QFM without symmetry averaging, the model using the Chebyshev feature map, and the fully connected PINN for the screened Poisson equation. For the stationary viscous Hamilton-Jacobi equation, it achieves the lowest mean final loss and the lowest mean error among all tested models.

The paper is organized as follows. Section~\ref{sec:GQML} reviews the symmetry framework for data re-uploading quantum models. Section~\ref{sec:fourier_symmetrization} develops the Fourier orbit decomposition, analyzes the variance of the symmetrized coefficients, and presents the inductive bias for solving PDEs. Section~\ref{sec:randomized_encoding} introduces randomized encoding and analyzes its sampling cost and hardware requirements. Section~\ref{sec:experiment} reports the QPINN experiments before we conclude.

\section{Symmetrization Framework}
\label{sec:GQML}

Recent studies exploiting symmetries in quantum machine learning, including GQML have demonstrated promising results across various problem setups, leveraging distinct advantages in complexity, trainability, and generalization \cite{Larocca2022,Meyer2023,Michael2023,Cenk2024,Louis2024,Nguyen2024,Kazi2024,Sauvage2024,Das2025,Sreekumar2026}. Existing GQML frameworks impose symmetry in different circuit settings. Some approaches consider single-layer re-uploading models \cite{Larocca2022,Michael2023}, extend such models to multiple layers while assuming that the encoding layers remain identical across all layers \cite{Meyer2023,nha2025}, or focus on multi-layer quantum neural networks \cite{Michael2023,Louis2024,Nguyen2024}. In this work, we investigate the quantum Fourier model (QFM), a generalized re-uploading architecture that allows flexible encoding across layers. The Fourier structure of these models provides a basis for understanding their expressivity and guiding their design, as highlighted in a recent perspective~\cite{Oh2026}.

Consider an \(n\)-qubit QFM with \(L\) layers, where \(\bm{\theta}\) denotes the set of trainable parameters. The prepared quantum state and the corresponding re-uploading unitary are
\begin{equation}
\label{eq:reupload}
\begin{aligned}
\rho(\bm{x}, \bm{\theta})
&=
U(\bm{x}, \bm{\theta})|0\rangle\langle0|U^\dagger(\bm{x}, \bm{\theta}),\\
U(\bm{x}, \bm{\theta})
&=
W^{(L+1)}(\bm{\theta})\prod_{l=1}^{L}E^{(l)}(\bm{x}) W^{(l)}(\bm{\theta}).
\end{aligned}
\end{equation}
This unitary comprises alternating encoding layers \(E(\bm{x})\) and trainable layers \(W(\bm{\theta})\). Given a Hermitian observable \(\hat{O}\), we consider its Pauli decomposition \(\hat{O} = \sum_{i=1}^{M} h_i O^{(i)}\), where \(h_i \in \mathbb{R}\) are the Pauli coefficients and \(O^{(i)} \in \{I, X, Y, Z\}^{\otimes n}\) are \(n\)-qubit Pauli strings. Let the \(L_1\)-norm of the Pauli coefficients be defined as \(\|h\|_1=\sum_{i=1}^M|h_i|\). For a classical input \(\bm{x} = (x_1, \dots, x_m) \in \mathbb{R}^m\), the model induces a parameterized hypothesis function \(f(\bm{x}, \bm{\theta})\) by linearity:
\begin{equation}
\label{eq:re-uploading_function}
f(\bm{x}, \bm{\theta}) = \operatorname{Tr}(\hat{O} \rho(\bm{x}, \bm{\theta})) = \sum_{i=1}^{M} h_i \operatorname{Tr}(O^{(i)} \rho(\bm{x}, \bm{\theta})),
\end{equation}
where \(|f(\bm{x}, \bm{\theta})| \leq\|h\|_1\). For the \(i\)-th Pauli string \(O^{(i)}\), each circuit shot yields a random variable \(Z^{(i)}_{\bm{x}} \in \{-1, 1\}\) with an outcome distribution governed by Born's rule, yielding an expectation value \(\mathbb{E}[Z^{(i)}_{\bm{x}}] = \operatorname{Tr}(O^{(i)} \rho(\bm{x}, \bm{\theta}))\). Furthermore, this hypothesis function admits a Fourier series form \cite{Schuld2021Effect,Hela2025}:
\begin{equation}
\label{eq:QFM}
f(\bm{x}, \bm{\theta}) = \sum_{\bm{\omega}\in\Omega} c_{\bm{\omega}}(\bm{\theta})\, e^{-i\,\bm{x}\cdot\bm{\omega}}.
\end{equation}
Here \(\Omega\) is the finite frequency spectrum, \(\bm{\omega}\) is a frequency vector with the same dimension as the input \(\bm{x}\), and \(c_{\bm{\omega}}(\bm{\theta})\) denotes the Fourier coefficient associated with \(\bm{\omega}\). To formalize our framework, we first define functions invariant under the symmetry, a central objective in GQML.

\begin{definition}[Invariant function]
\label{def:invariant_function}
Consider a finite group \(\mathcal{S}\) with a group representation \(V:\mathcal{S} \to \operatorname{Aut}(\mathbb{R}^m)\), \(s \mapsto V_s\), where \(V_s\) denotes the element of the representation \(V\) associated with \(s\), and \(\operatorname{Aut}(\mathbb{R}^m)\) denotes the automorphism group of the input space \(\mathbb{R}^m\). A function \(f(\bm{x},\bm{\theta})\) is said to be \(\mathcal{S}\)-invariant if
\begin{equation}
f(V_s[\bm{x}],\bm{\theta})
=
f(\bm{x},\bm{\theta}),
\quad
\forall s\in\mathcal S,\ \bm{x}\in\mathbb R^m,
\end{equation}
for every admissible parameter \(\bm{\theta}\).
\end{definition}

A standard approach to constructing invariant functions is to build them from \textit{equivariant operations} \cite{Larocca2022,Meyer2023,Cenk2024,Louis2024,Nguyen2024,West2024,nha2025,Das2025}. Typically in our QFM setting, the classical input \(\bm{x}\) is embedded via an encoding layer \(E(\bm{x})\) as defined in Eq.~\eqref{eq:reupload}. This encoding layer is said to be \textit{equivariant} with respect to a symmetry group \(\mathcal{S}\) if, for all \(s\in\mathcal{S}\), there exists a unitary \textit{induced representation} \(R(s)\) such that
\begin{equation}
\label{eq:eq_encoding}
R(s) E(\bm{x}) R(s)^{\dagger}=E(V_s[\bm{x}]),
\end{equation}
where the induced representation follows the definition in \cite{Meyer2023}. Similarly, the trainable layer \(W(\bm{\theta})\) is equivariant with respect to a symmetry element \(s \in \mathcal S\) if
\begin{equation}
\label{eq:eq_trainable}
[W(\bm{\theta}), R(s)] = 0.
\end{equation}
Additionally, the initial state and the observable are considered invariant if they satisfy:
\begin{align}
\label{eq:eq_init}
&R(s)|0\rangle\langle0|R(s)^\dagger = |0\rangle\langle0|, \\
\label{eq:eq_observable}
&R(s)^\dagger \hat{O} R(s) = \hat{O}.
\end{align}

Under these conditions, we establish the following lemma, which captures a common construction behind many GQML methods.

\begin{lemma}[Invariance from equivariance]
\label{lemma:invariance}
A re-uploading quantum model defined in Eq.~\eqref{eq:re-uploading_function} induces a hypothesis function invariant under \(\mathcal{S}\) if it satisfies the conditions of equivariant encoding and trainable layers, alongside an invariant initial state and observable, as in Eqs.~\eqref{eq:eq_encoding}--\eqref{eq:eq_observable}.
\end{lemma}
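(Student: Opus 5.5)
The plan is a direct conjugation argument. Fix \(s\in\mathcal S\) and an admissible \(\bm\theta\), and write \(R=R(s)\). Starting from \(U(V_s[\bm x],\bm\theta)\), we replace each encoding layer using Eq.~\eqref{eq:eq_encoding}, \(E^{(l)}(V_s[\bm x])=R\,E^{(l)}(\bm x)\,R^\dagger\). We assume here that every layer \(E^{(l)}\) is equivariant with respect to the same induced representation \(R\). This is how we read the hypothesis, since the QFM allows different encodings across layers. Each trainable layer satisfies \(W^{(l)}(\bm\theta)=R\,W^{(l)}(\bm\theta)\,R^\dagger\) by Eq.~\eqref{eq:eq_trainable}, because \(R\) is unitary and commutes with \(W^{(l)}\).

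Substituting both identities into the product in Eq.~\eqref{eq:reupload}, all interior factors \(R^\dagger R=I\) telescope. This gives
\begin{equation*}
U(V_s[\bm x],\bm\theta)=R\,U(\bm x,\bm\theta)\,R^\dagger .
\end{equation*}
This telescoping is the only step needing care. One must check that the outer trainable layer \(W^{(L+1)}\) is also handled, and that the ordering of the product is irrelevant because every factor is conjugated by the same \(R\).

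Next we compute the state. Using Eq.~\eqref{eq:eq_init} in the form \(R^\dagger|0\rangle\langle0|R=|0\rangle\langle0|\), which is equivalent since \(R\) is unitary, we get
\begin{equation*}
\rho(V_s[\bm x],\bm\theta)=R\,U(\bm x,\bm\theta)\,R^\dagger|0\rangle\langle0|R\,U^\dagger(\bm x,\bm\theta)\,R^\dagger=R\,\rho(\bm x,\bm\theta)\,R^\dagger .
\end{equation*}

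Finally, we apply cyclicity of the trace and Eq.~\eqref{eq:eq_observable}:
\begin{equation*}
f(V_s[\bm x],\bm\theta)=\operatorname{Tr}\!\left(R^\dagger\hat O R\,\rho(\bm x,\bm\theta)\right)=\operatorname{Tr}\!\left(\hat O\rho(\bm x,\bm\theta)\right)=f(\bm x,\bm\theta).
\end{equation*}
Since \(s\), \(\bm x\) and \(\bm\theta\) were arbitrary, this establishes Definition~\ref{def:invariant_function}.

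No step is a genuine obstacle. The only subtlety is the bookkeeping for the initial-state condition: it is stated with \(R\) on the left, while the computation uses the \(R^\dagger\) version. Unitarity of \(R(s)\) makes the two forms equivalent.
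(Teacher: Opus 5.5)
Your proposal is correct and follows essentially the same route as the paper. You conjugate every encoding and trainable layer by the same $R(s)$, so the interior factors telescope to $U(V_s[\bm{x}],\bm{\theta})=R(s)U(\bm{x},\bm{\theta})R(s)^\dagger$; you then use the initial-state condition in the form $R(s)^\dagger|0\rangle\langle0|R(s)=|0\rangle\langle0|$ (the paper uses it this way implicitly, via Eq.~\eqref{eq:eq_init} and unitarity), followed by trace cyclicity and the invariant observable.
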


The proof is detailed in Appendix~\ref{sec:invariance}. While this lemma provides a systematic recipe for constructing invariant functions, enforcing the required equivariant operations on quantum circuits can be difficult. One prominent way to realize these operations is \textit{twirling} \cite{Meyer2023}, which averages an operator or quantum channel over a group representation and projects it into its symmetric counterpart. For an operator \(X\) and an induced representation \(R(s)\) of a finite group \(\mathcal{S}\), the twirling method is defined as
\begin{equation}
\label{eq:twirling}
\mathcal{T}_R[X]=\frac{1}{|\mathcal{S}|} \sum_{s \in \mathcal{S}} R(s) X R(s)^{\dagger}.
\end{equation}
By construction, the resulting operator commutes with the group representation. This enables the systematic design of equivariant layers and invariant observables, a framework further generalized by \citet{Nguyen2024} for quantum channels that are not unitary. However, implementing twirling natively in quantum hardware can impose significant resource overhead, including the need for ancilla qubits, additional quantum circuits for the group representation, or quantum control at each shot \cite{Nguyen2024}. For complex symmetry groups, these costs can make GQML implementations impractical, motivating alternatives with lower resource cost for enforcing invariant hypothesis functions. We discuss these costs in Sec.~\ref{sec:construction_resource_comparison}.

\section{Fourier Symmetrization}
\label{sec:fourier_symmetrization}

This section shows how invariance reorganizes the frequency spectrum and Fourier coefficients. We first derive the corresponding Fourier orbit decomposition, then analyze the effect of this reorganization on coefficient variances, and finally show how invariance can be used for other purposes, such as providing an inductive bias for solving PDEs.

\subsection{Fourier orbit decomposition}
\label{sec:fourier_orbit_decomposition}

Instead of enforcing the symmetry by translating the input space representation into an induced representation on the quantum circuit, we consider settings in which the symmetry is kept at the level of the input to the encoding layer. In this section, \(V\) is the same group representation introduced in Definition~\ref{def:invariant_function}, but we restrict it to the orthogonal case. This restriction covers many symmetry representations used in practice, including permutations, sign flips, rotations, reflections, etc. Specifically, \(V_s\in O(m)\) for all \(s\in\mathcal{S}\), where \(O(m)\) denotes the orthogonal group. Equivalently, \(V\) can be written as \(V:\mathcal{S}\to O(m)\). For the hypothesis function \(f\), we define the average operator \(\mathcal{A}\) over \(\mathcal{S}\) to act on the input variable while holding \(\bm{\theta}\) fixed:
\begin{equation}
\label{eq:group_op_def}
(\mathcal Af)(\bm{x},\bm{\theta})
:=
\frac{1}{|\mathcal S|}
\sum_{s\in\mathcal S}
f(V_s[\bm{x}],\bm{\theta}).
\end{equation}
For any $s\in \mathcal S$, let $l_s:\mathcal{S}\to \mathcal{S}$, $t\mapsto ts$, denote right multiplication by $s$. Since $\mathcal S$ is a group, $l_s$ is a bijection with $l_s^{-1}=l_{s^{-1}}$. Combined with the representation properties of \(V\), namely \(V_{s_1 \circ s_2} = V_{s_1} V_{s_2}\) and \(V_s^{-1} = V_{s^{-1}}\), the averaged function \((\mathcal{A} f)\) is \(\mathcal{S}\)-invariant, i.e.,
\begin{equation}
(\mathcal A f)\!\left(V_s[\boldsymbol{x}],\bm{\theta}\right) = (\mathcal A f)(\boldsymbol{x},\bm{\theta}),
\quad \bm{x} \in \mathbb{R}^m,\ \forall s \in \mathcal{S}.
\end{equation}
To understand how this averaging changes the Fourier series defined in Eq.~\eqref{eq:QFM}, we first relate the orthogonal representation \(V\) to the frequencies. The key identity is
\begin{equation}
V_s[\bm{x}] \cdot\bm{\omega}=\bm{x} \cdot V_{s^{-1}}[\bm{\omega}].
\end{equation}
This identity shows that evaluating a Fourier term at \(V_s[\bm{x}]\) is equivalent to keeping the input fixed and replacing \(\bm{\omega}\) by \(V_{s^{-1}}[\bm{\omega}]\). Two frequency vectors \(\bm{\omega}\) and \(\bm{\omega}'\) are in the same orbit if \(\bm{\omega}'=V_s[\bm{\omega}]\) for some \(s\in\mathcal S\). From each distinct orbit that contains at least one frequency in \(\Omega\), choose one representative \(\bm{\nu}\in\Omega\),  called an \textit{orbit frequency}. The orbit indexed by \(\bm{\nu}\) is
\begin{equation}
[\bm{\nu}] = \{ V_s[\bm{\nu}] \mid s \in \mathcal{S} \},
\end{equation}
where \(s\) ranges over all elements of \(\mathcal S\). The set of these representatives is the \textit{orbit frequency spectrum} \(\tilde{\Omega}\). We say that the frequency spectrum \(\Omega\) is closed under the representation \(V:\mathcal S\to O(m)\) on frequency vectors if \(V_s[\bm{\omega}]\in\Omega\) for every \(\bm{\omega}\in\Omega\) and \(s\in\mathcal S\). In this closed case, the orbits partition \(\Omega\), so \(\Omega\) is a union of complete frequency orbits. If \(\Omega\) is not closed, then for any orbit with \([\bm{\nu}]\nsubseteq\Omega\), the frequencies in \([\bm{\nu}]\setminus\Omega\) complete the part of \(\Omega\) in that orbit to the full orbit \([\bm{\nu}]\).

\begin{figure}[t]
    \centering
    \includegraphics[width=0.9\linewidth]{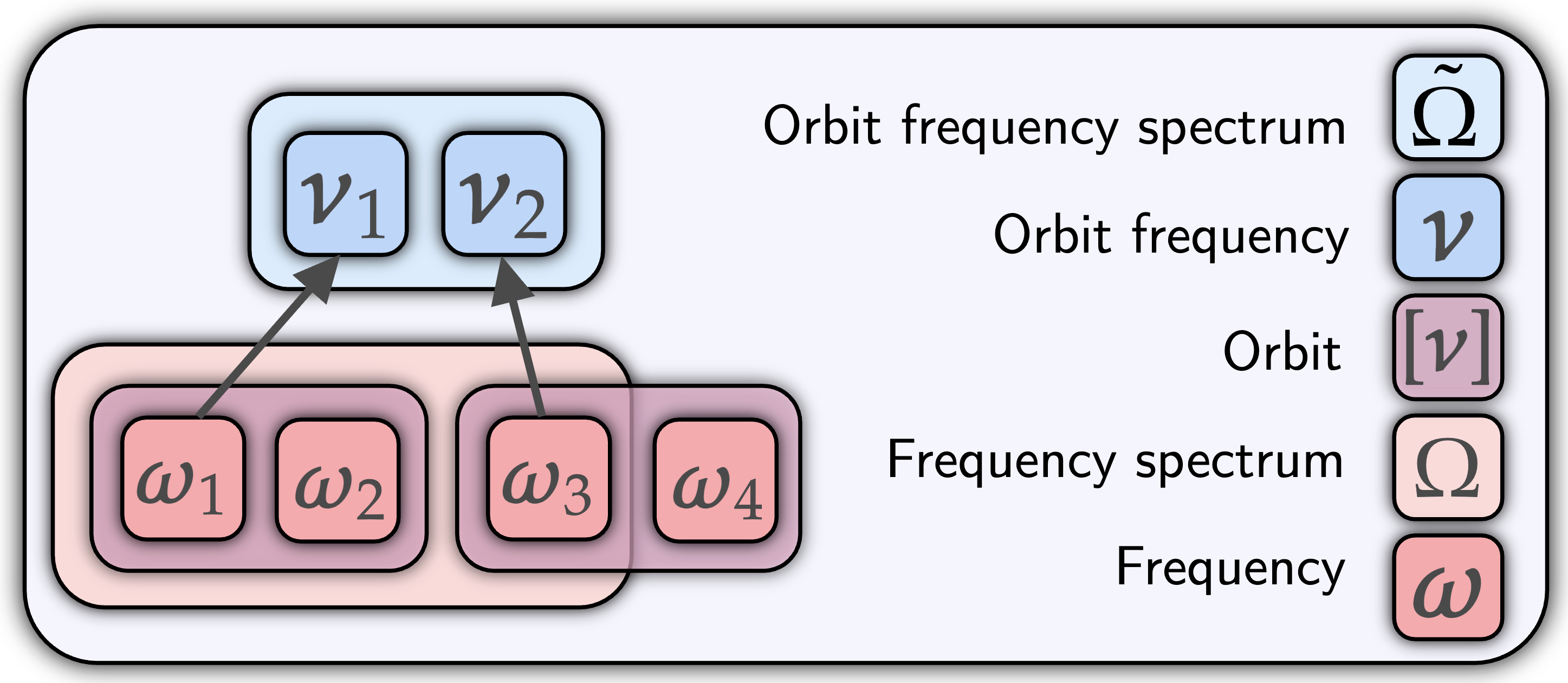}
    \caption{Illustration of the orbit notation. The original frequency spectrum is \(\Omega=\{\bm{\omega}_1,\bm{\omega}_2,\bm{\omega}_3\}\), and the average operator completes the orbit by adding \(\bm{\omega}_4\notin\Omega\). The frequencies \(\bm{\omega}_1\) and \(\bm{\omega}_2\) belong to one orbit, while \(\bm{\omega}_3\) and \(\bm{\omega}_4\) belong to another orbit. We choose \(\bm{\nu}_1\) and \(\bm{\nu}_2\) as orbit frequencies representing \(\bm{\omega}_1\) and \(\bm{\omega}_3\), respectively, so the two orbits are denoted by \([\bm{\nu}_1]\) and \([\bm{\nu}_2]\). The orbit frequencies \(\bm{\nu}_1\) and \(\bm{\nu}_2\) together form the orbit frequency spectrum \(\tilde{\Omega}\).}
    \label{fig:orbit_diagram}
\end{figure}

Figure~\ref{fig:orbit_diagram} illustrates the orbit notation. For example, take \(\mathcal S=\{e,\tau\}\), where \(V_e\) is the identity map and \(V_\tau[(v_1,v_2)]=(v_2,v_1)\). If \(\Omega=\{\bm{\omega}_1,\bm{\omega}_2,\bm{\omega}_3\}\), where \(\bm{\omega}_1=(1,0)\), \(\bm{\omega}_2=(0,1)\), and \(\bm{\omega}_3=(1,2)\), then \([\bm{\omega}_1]=\{(1,0),(0,1)\}\), while \([\bm{\omega}_3]=\{(1,2),(2,1)\}\). This is a non-closed case, since the frequency \(\bm{\omega}_4=(2,1)\notin\Omega\) is introduced to complete the orbit \([\bm{\omega}_3]\). To include such completed orbits in one formula, we set \(c_{\bm{\omega}}(\bm{\theta})=0\) for frequencies outside the original spectrum, \(\bm{\omega}\notin\Omega\). With this convention, averaging over \(V_s[\bm{x}]\) groups Fourier terms by their frequency orbits and yields the Fourier orbit decomposition:
\begin{proposition}[Fourier orbit decomposition]
\label{prop:FOD}
Let \(\mathcal S\) be a finite group, \(V:\mathcal S\to O(m)\) an orthogonal representation, \(\Omega\) a finite frequency spectrum, and \(\tilde{\Omega}\) the orbit frequency spectrum. For \(f(\bm{x},\bm{\theta})=\sum_{\bm{\omega}\in\Omega}c_{\bm{\omega}}(\bm{\theta})e^{-i\,\bm{x}\cdot\bm{\omega}}\) defined in Eq.~\eqref{eq:QFM}, set \(c_{\bm{\omega}}(\bm{\theta})=0\) for \(\bm{\omega}\notin\Omega\). Then the averaged function \(\mathcal{A}f\) defined by Eq.~\eqref{eq:group_op_def} is \(\mathcal{S}\)-invariant and satisfies
\begin{equation}
\label{eq:FOD}
(\mathcal{A}f)(\bm{x}, \bm{\theta}) = \sum_{\bm{\nu} \in \tilde{\Omega}} a_{\bm{\nu}}(\bm{\theta}) \phi_{[\bm{\nu}]}(\bm{x}),
\end{equation}
where the symmetrized coefficient is
\begin{equation}
\label{eq:symmetrized_coefficient}
a_{\bm{\nu}}(\bm{\theta}) := \sum_{\bm{\omega} \in [\bm{\nu}]} c_{\bm{\omega}}(\bm{\theta}),
\end{equation}
and the normalized orbit basis is
\begin{equation}
\label{eq:orbit_basis}
\phi_{[\bm{\nu}]}(\bm{x}) = \frac{1}{|[\bm{\nu}]|} \sum_{\tilde{\bm{\omega}} \in [\bm{\nu}]} e^{-i \bm{x} \cdot \tilde{\bm{\omega}}}.
\end{equation}

\end{proposition}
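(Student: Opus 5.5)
The plan is to compute \((\mathcal A f)\) directly from its definition, Eq.~\eqref{eq:group_op_def}, by substituting the Fourier series and exchanging finite sums. Invariance of \(\mathcal A f\) was already verified before the proposition using the bijection \(l_s\) and the homomorphism property of \(V\), so we will only cite that argument. The substantive claim is the formula \eqref{eq:FOD}.

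\textbf{Step 1: move the group action onto the frequencies.} Using the identity \(V_s[\bm{x}]\cdot\bm{\omega}=\bm{x}\cdot V_{s^{-1}}[\bm{\omega}]\), which follows from \(V_s^{T}=V_s^{-1}=V_{s^{-1}}\) for orthogonal \(V_s\), we write
\begin{equation*}
(\mathcal A f)(\bm{x},\bm{\theta})=\sum_{\bm{\omega}\in\Omega}c_{\bm{\omega}}(\bm{\theta})\,\frac{1}{|\mathcal S|}\sum_{s\in\mathcal S}e^{-i\,\bm{x}\cdot V_{s^{-1}}[\bm{\omega}]}.
\end{equation*}
As \(s\) runs over \(\mathcal S\), so does \(s^{-1}\).

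\textbf{Step 2: evaluate the inner average by orbit–stabilizer.} The map \(s\mapsto V_s[\bm{\omega}]\) from \(\mathcal S\) onto \([\bm{\omega}]\) has fibres that are cosets of the stabilizer \(\mathcal S_{\bm{\omega}}\). Each fibre therefore has size \(|\mathcal S|/|[\bm{\omega}]|\), and
\begin{equation*}
\frac{1}{|\mathcal S|}\sum_{s\in\mathcal S}e^{-i\,\bm{x}\cdot V_{s}[\bm{\omega}]}=\frac{1}{|[\bm{\omega}]|}\sum_{\tilde{\bm{\omega}}\in[\bm{\omega}]}e^{-i\,\bm{x}\cdot\tilde{\bm{\omega}}}=\phi_{[\bm{\omega}]}(\bm{x}).
\end{equation*}
This step requires the most care, since it is where the normalization \(1/|[\bm{\nu}]|\) comes from. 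The counting must be done on the image set, with multiplicity, and not on the group elements. The function \(\phi_{[\bm{\omega}]}\) depends only on the orbit, so \(\phi_{[\bm{\omega}]}=\phi_{[\bm{\nu}]}\) whenever \(\bm{\omega}\in[\bm{\nu}]\).

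\textbf{Step 3: regroup by orbits.} Orbits are equivalence classes, so the sets \([\bm{\nu}]\cap\Omega\) with \(\bm{\nu}\in\tilde{\Omega}\) partition \(\Omega\). By construction \(\tilde\Omega\) contains exactly one representative for each orbit meeting \(\Omega\). Splitting the sum over \(\Omega\) accordingly gives
\begin{equation*}
(\mathcal A f)(\bm{x},\bm{\theta})=\sum_{\bm{\nu}\in\tilde\Omega}\Bigl(\sum_{\bm{\omega}\in[\bm{\nu}]\cap\Omega}c_{\bm{\omega}}(\bm{\theta})\Bigr)\phi_{[\bm{\nu}]}(\bm{x}).
\end{equation*}
With the convention \(c_{\bm{\omega}}=0\) for \(\bm{\omega}\notin\Omega\), the inner sum equals \(a_{\bm{\nu}}(\bm{\theta})\) in Eq.~\eqref{eq:symmetrized_coefficient}. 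This yields Eq.~\eqref{eq:FOD} and covers the closed and non-closed cases simultaneously. In the non-closed case, the completed frequencies appear in \(\phi_{[\bm{\nu}]}\) but contribute nothing to \(a_{\bm{\nu}}\).
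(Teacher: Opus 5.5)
Your proposal is correct and follows essentially the same route as the paper's proof: you move the action onto the frequencies via orthogonality and the reindexing $s\mapsto s^{-1}$, you evaluate the inner average as $\phi_{[\bm{\omega}]}$ by counting stabilizer cosets (orbit--stabilizer), and you regroup by orbit representatives using the convention $c_{\bm{\omega}}=0$ outside $\Omega$. The only cosmetic difference is that you regroup over $[\bm{\nu}]\cap\Omega$ before invoking the zero convention, whereas the paper regroups directly over the full orbits $[\bm{\nu}]$.
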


The proof is detailed in Appendix~\ref{app:FOD}. When \(f(\bm{x},\bm{\theta})\) is already \(\mathcal{S}\)-invariant, then
\(f(\bm{x},\bm{\theta})=\frac{1}{|\mathcal{S}|}\sum_{s\in\mathcal{S}}f(V_s[\bm{x}],\bm{\theta})\),
so \(f\) coincides with its averaged form \(\mathcal{A}f\) in Eq.~\eqref{eq:group_op_def}. Therefore, Proposition~\ref{prop:FOD} also shows that any \(\mathcal{S}\)-invariant QFM under this orthogonality assumption on \(V\) admits a Fourier orbit decomposition.

\begin{corollary}[Invariant Fourier orbit decomposition]
\label{cor:invariant_qfm_orbit_decomposition}
Under the assumption of Proposition~\ref{prop:FOD}, any \(\mathcal S\)-invariant \(f(\bm{x},\bm{\theta})\) defined by Eq.~\eqref{eq:QFM} and satisfying Definition~\ref{def:invariant_function} admits the Fourier orbit decomposition in Eq.~\eqref{eq:FOD}.
\end{corollary}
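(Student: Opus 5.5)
The corollary follows by reducing to Proposition~\ref{prop:FOD}. If \(f\) is \(\mathcal S\)-invariant, then \(f=\mathcal{A}f\), and the proposition then expresses \(\mathcal{A}f\) in orbit form.

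\textbf{Step 1: \(f=\mathcal{A}f\).} Fix \(\bm{\theta}\) and \(\bm{x}\). By Definition~\ref{def:invariant_function}, \(f(V_s[\bm{x}],\bm{\theta})=f(\bm{x},\bm{\theta})\) for every \(s\in\mathcal S\). Averaging this identity over the finite group gives
\begin{equation*}
(\mathcal A f)(\bm{x},\bm{\theta})=\frac{1}{|\mathcal S|}\sum_{s\in\mathcal S}f(V_s[\bm{x}],\bm{\theta})=\frac{1}{|\mathcal S|}\sum_{s\in\mathcal S}f(\bm{x},\bm{\theta})=f(\bm{x},\bm{\theta}).
\end{equation*}
This uses only that \(\mathcal S\) is finite and nonempty. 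Orthogonality of \(V\) is not needed here.

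\textbf{Step 2: apply Proposition~\ref{prop:FOD}.} The function \(f\) has the finite Fourier form of Eq.~\eqref{eq:QFM} with spectrum \(\Omega\), and \(V\) is orthogonal. So the hypotheses of Proposition~\ref{prop:FOD} hold verbatim, with the convention \(c_{\bm{\omega}}=0\) off \(\Omega\). Hence \(\mathcal{A}f=\sum_{\bm{\nu}\in\tilde\Omega}a_{\bm{\nu}}\phi_{[\bm{\nu}]}\), with \(a_{\bm{\nu}}\) and \(\phi_{[\bm{\nu}]}\) as in Eqs.~\eqref{eq:symmetrized_coefficient}--\eqref{eq:orbit_basis}. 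Combining with Step 1 gives Eq.~\eqref{eq:FOD} for \(f\) itself.

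\textbf{Optional consistency check.} One could add a remark that, for invariant \(f\), the Fourier coefficients are constant on orbits. Since \(V_s[\bm{x}]\cdot\bm{\omega}=\bm{x}\cdot V_{s^{-1}}[\bm{\omega}]\), invariance together with uniqueness of finite exponential expansions (linear independence of distinct characters \(e^{-i\bm{x}\cdot\bm{\omega}}\)) gives \(c_{V_s[\bm{\omega}]}=c_{\bm{\omega}}\). In particular, \(\Omega\) is automatically closed under \(V\). Then \(a_{\bm{\nu}}=|[\bm{\nu}]|\,c_{\bm{\nu}}\), which is consistent with the normalized basis.

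This remark is not needed for the corollary. The only step that needs any care is the linear independence of exponentials with distinct real frequencies. I would justify it by integrating against \(e^{i\bm{x}\cdot\bm{\omega}'}\) over large cubes and taking the limit. Otherwise the proof is a direct substitution, with no real obstacle.
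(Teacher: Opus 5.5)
Your argument is correct and is the same as the paper's: invariance gives $f=\mathcal{A}f$ pointwise, and applying Proposition~\ref{prop:FOD} to $f$ then yields Eq.~\eqref{eq:FOD}. One small correction to your optional remark: invariance makes only the support $\{\bm{\omega}: c_{\bm{\omega}}(\bm{\theta})\neq 0\}$ closed under $V$, not the spectrum $\Omega=\{\bm{\Lambda}_J-\bm{\Lambda}_{J'}\}$ itself (for example, $\hat O=I$ gives a constant, hence invariant, $f$ for any encoding), although $a_{\bm{\nu}}=|[\bm{\nu}]|\,c_{\bm{\nu}}$ still holds under the zero-extension convention.
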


Consequently, all such invariant models built from the data re-uploading circuit in Eq.~\eqref{eq:reupload} and Eq.~\eqref{eq:re-uploading_function} admit this Fourier orbit decomposition, which covers a broad class of GQML models. The Fourier orbit decomposition is useful for interpreting invariant QFMs in several ways. First, in the normalized orbit basis of Eq.~\eqref{eq:orbit_basis}, symmetrization sums the Fourier coefficients in each orbit into \(a_{\bm{\nu}}\). Its variance can therefore be much larger than that of any individual Fourier coefficient, which can mitigate vanishing expressivity \cite{Hela2025} at the level of the symmetrized coefficients. Second, because frequencies related by the symmetry are grouped into frequency orbits, the orbit frequency spectrum \(\tilde{\Omega}\) can be substantially smaller than the original frequency spectrum \(\Omega\), which can facilitate dequantization through the construction of a classical surrogate \cite{Schreiber2023}. Finally, the orbit basis reveals the structure imposed by the symmetry in a QFM and can help identify a useful inductive bias. The usefulness of the orbit basis therefore depends on whether the chosen symmetry matches the target or produces a useful basis through the feature map. This point can be illustrated in Section \ref{sec:basis_function} by sign-flip averaging, which produces Chebyshev polynomial basis functions for solving PDEs.

\subsection{Symmetrized coefficient variance}
\label{sec:coefficient_analysis}

For the variance analysis below, the redundancy of the frequencies within each orbit is important for determining the coefficient variances. We follow the notation of \citet{Hela2025} for redundancy and extend it to orbits.
\begin{definition}[Orbit Frequency Generator]
\label{def:frequency_generator}
For each encoding layer in Eq.~\eqref{eq:reupload}, write
\begin{equation}
\label{eq:encoding_lambda}
E^{(l)}(\bm{x})|j\rangle=e^{-i\bm{x}\cdot\bm{\lambda}^{(l)}_j}|j\rangle.
\end{equation}
Here \(l\in[1,L]\), \(j\in[1,d]\), and \(d=2^n\). The vector \(\bm{\lambda}^{(l)}_j=(\lambda^{(l)}_{j,1},\ldots,\lambda^{(l)}_{j,m})\in\mathbb R^m\) is the eigenvalue vector associated with \(|j\rangle\) in the \(l\)-th encoding layer, where \(\lambda^{(l)}_{j,k}\) is the eigenvalue corresponding to the input coordinate \(x_k\). For index paths \(J=(j_1,\ldots,j_L)\in[1,d]^L\), \(J'=(j'_1,\ldots,j'_L)\in[1,d]^L\), and a frequency orbit \([\bm{\nu}]\), define the \textit{orbit frequency generator}
\begin{equation}
\label{eq:orbit_frequency_generator}
\mathfrak R_{[\bm{\nu}]}
:=
\left\{
(J,J')\in[1,d]^{2L}:
\sum_{l=1}^{L}
\left(
\bm{\lambda}^{(l)}_{j_l}
-
\bm{\lambda}^{(l)}_{j'_l}
\right)
\in[\bm{\nu}]
\right\},
\end{equation}
and the \textit{frequency generator} is
\begin{equation}
\label{eq:single_frequency_generator}
\mathfrak R(\bm{\omega})
:=
\left\{
(J,J')\in[1,d]^{2L}:
\sum_{l=1}^{L}
\left(
\bm{\lambda}^{(l)}_{j_l}
-
\bm{\lambda}^{(l)}_{j'_l}
\right)
=\bm{\omega}
\right\}.
\end{equation}
\end{definition}

The cardinality \(|\mathfrak R_{[\bm{\nu}]}|\) is the total redundancy of the orbit, while \(|\mathfrak R(\bm{\omega})|\) measures the redundancy of an individual frequency. The next proposition records a simple compatibility condition under which this redundancy is constant along a symmetry orbit.

\begin{figure*}[!t]
    \centering
    \includegraphics[width=1\linewidth]{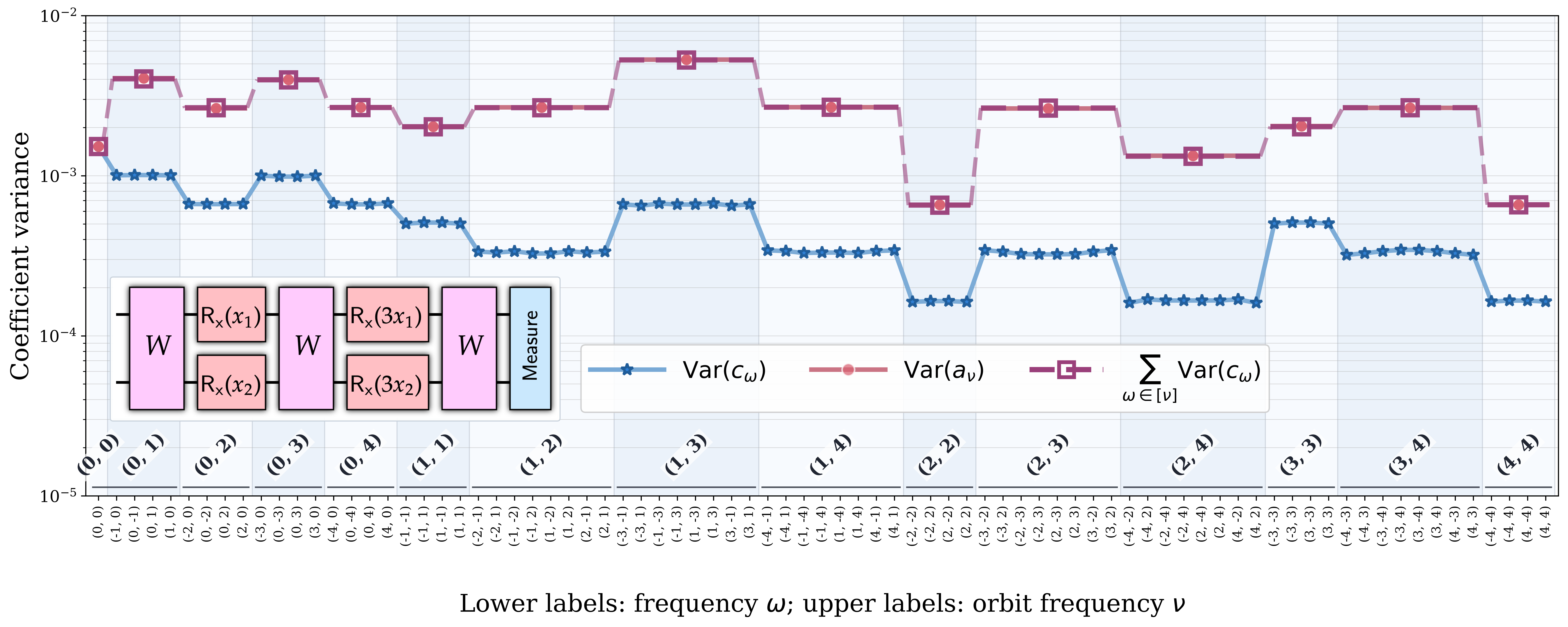}
    \caption{Fourier coefficient variances reorganized into frequency orbits with respect to the hyperoctahedral group under the exact 2-design assumption. The circuit in the lower left is the QFM used in this numerical example, and its output is obtained by measuring the global observable \(\hat O=(\ket{0}\!\bra{0})^{\otimes 2}\). The x-axis uses two rows of labels: the lower row lists the original frequencies \(\bm{\omega}\), and the upper row labels each orbit by its orbit frequency \(\bm{\nu}\). For example, under the coordinate permutations and independent sign flips in the hyperoctahedral group, the orbit frequency \((0,1)\) represents the orbit containing frequencies \((-1,0)\), \((0,-1)\), \((0,1)\), and \((1,0)\). The blue curve shows the variance of each coefficient \(c_{\bm{\omega}}\), the pink curve shows the variance of each symmetrized coefficient \(a_{\bm{\nu}}\), and the purple curve shows the sum \(\sum_{\bm{\omega}\in[\bm{\nu}]}\operatorname{Var}(c_{\bm{\omega}})\) over the same orbit. The overlap between the pink and purple curves numerically confirms Theorem~\ref{theorem:exp_variance}: the variance of a symmetrized coefficient equals the sum of the variances inside its orbit. Their separation from the blue curve shows the gain in coefficient variance produced by grouping frequencies into orbits.}
    \label{fig:coefficient_variance_distribution}
\end{figure*}

\begin{proposition}[Redundancy Preservation]
\label{prop:redundancy_preservation_representation_elements}
With the notation of the QFM in Eq.~\eqref{eq:reupload} and the frequency generators in Definition~\ref{def:frequency_generator}, let \(V:\mathcal S\to O(m)\), \(s\mapsto V_s\), be the group representation and fix \(s\in\mathcal S\). Suppose that for every \(l\in[1,L]\), there exists a bijection \(\tau_{s,l}:[1,d]\to[1,d]\) such that
\begin{equation}
\label{eq:redundancy_condition}
\bm{\lambda}^{(l)}_{\tau_{s,l}(j)}
=
V_s[\bm{\lambda}^{(l)}_j],
\qquad
j=1,\ldots,d.
\end{equation}
Then, for all \(\bm{\omega}\in\Omega\),
\begin{equation}
\left|\mathfrak R(V_s[\bm{\omega}])\right|
=
\left|\mathfrak R(\bm{\omega})\right|,
\qquad
\bm{\omega}\in\Omega.
\end{equation}
\end{proposition}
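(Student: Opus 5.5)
We construct an explicit bijection between \(\mathfrak R(\bm{\omega})\) and \(\mathfrak R(V_s[\bm{\omega}])\) from the layerwise bijections \(\tau_{s,l}\). Define the map
\begin{equation}
\Phi_s:[1,d]^{2L}\to[1,d]^{2L},\qquad (J,J')\mapsto\big((\tau_{s,1}(j_1),\ldots,\tau_{s,L}(j_L)),(\tau_{s,1}(j'_1),\ldots,\tau_{s,L}(j'_L))\big).
\end{equation}
Since each \(\tau_{s,l}\) is a bijection of \([1,d]\), \(\Phi_s\) is a bijection of the full index set \([1,d]^{2L}\). Its inverse acts componentwise by \(\tau_{s,l}^{-1}\).

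The key step is to show that \(\Phi_s\) maps \(\mathfrak R(\bm{\omega})\) into \(\mathfrak R(V_s[\bm{\omega}])\). For \((J,J')\in\mathfrak R(\bm{\omega})\), Eq.~\eqref{eq:redundancy_condition} and the linearity of \(V_s\in O(m)\) give
\begin{equation}
\sum_{l=1}^{L}\Big(\bm{\lambda}^{(l)}_{\tau_{s,l}(j_l)}-\bm{\lambda}^{(l)}_{\tau_{s,l}(j'_l)}\Big)
=\sum_{l=1}^{L}V_s\big[\bm{\lambda}^{(l)}_{j_l}-\bm{\lambda}^{(l)}_{j'_l}\big]
=V_s[\bm{\omega}],
\end{equation}
so \(\Phi_s(J,J')\in\mathfrak R(V_s[\bm{\omega}])\). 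Hence \(\Phi_s\) restricts to an injection \(\mathfrak R(\bm{\omega})\hookrightarrow\mathfrak R(V_s[\bm{\omega}])\).

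The main point needing care is surjectivity onto \(\mathfrak R(V_s[\bm{\omega}])\). Take \((K,K')\in\mathfrak R(V_s[\bm{\omega}])\) and set \((J,J')=\Phi_s^{-1}(K,K')\), so that \(k_l=\tau_{s,l}(j_l)\) and \(k'_l=\tau_{s,l}(j'_l)\). Reading the displayed computation backwards gives
\begin{equation}
V_s\Big[\sum_l\big(\bm{\lambda}^{(l)}_{j_l}-\bm{\lambda}^{(l)}_{j'_l}\big)\Big]=V_s[\bm{\omega}].
\end{equation}
Because \(V_s\) is invertible (orthogonal), the bracketed sum equals \(\bm{\omega}\), so \((J,J')\in\mathfrak R(\bm{\omega})\). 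This avoids needing a separate compatibility condition for \(s^{-1}\); invertibility of \(V_s\) suffices.

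Therefore \(\Phi_s\) is a bijection between the two finite sets, and \(|\mathfrak R(V_s[\bm{\omega}])|=|\mathfrak R(\bm{\omega})|\). Nothing in the argument uses \(\bm{\omega}\in\Omega\) beyond the definition, so the identity holds for any frequency vector. In particular, both sides vanish off the spectrum, which is consistent with the convention \(c_{\bm{\omega}}=0\) for \(\bm{\omega}\notin\Omega\).
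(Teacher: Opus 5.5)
Your proof is correct and follows essentially the same route as the paper: the paper also builds the componentwise bijection \(\Phi_s\) on \([1,d]^{2L}\) from the layerwise maps \(\tau_{s,l}\), uses linearity of \(V_s\) to show that \(\Phi_s\) sends the difference sum \(\sum_{l}(\bm{\lambda}^{(l)}_{j_l}-\bm{\lambda}^{(l)}_{j'_l})\) to its image under \(V_s\), and invokes invertibility of \(V_s\) to conclude \(\Phi_s(\mathfrak R(\bm{\omega}))=\mathfrak R(V_s[\bm{\omega}])\). Your remark that the hypothesis \(\bm{\omega}\in\Omega\) is never used is also accurate; the paper's argument likewise holds for any frequency vector.
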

The proof is detailed in Appendix~\ref{app:redundancy_preservation}. At the level of eigenvalue sets, Eq.~\eqref{eq:redundancy_condition} requires \(V_s\) to introduce no eigenvalue vectors beyond those already present in each encoding layer. In the Fourier orbit decomposition, \(V_s\) is applied to frequencies to organize them into frequency orbits. In Proposition~\ref{prop:redundancy_preservation_representation_elements}, the same \(V_s\) is applied to the eigenvalues that generate these frequencies. Under the layerwise relabeling condition, the transformed eigenvalues are a bijective reindexing of the eigenvalues in the same layer. This reindexing gives a bijection between the corresponding frequency generators. Hence, when the condition holds for all \(s\in\mathcal S\), all frequencies in the same orbit have the same redundancy.

The orbit structure fixes which coefficients are grouped, while the redundancy describes how the encoding populates the frequencies in each orbit. We now connect this structure to trainability through the variance of the symmetrized coefficient, which decomposes as
\begin{equation}
\operatorname{Var}_{\bm{\theta}}(a_{\bm{\nu}})
=
\sum_{\bm{\omega}\in[\bm{\nu}]}
\operatorname{Var}_{\bm{\theta}}(c_{\bm{\omega}})
+
\sum_{\substack{\bm{\omega},\bm{\omega}'\in[\bm{\nu}]\\ \bm{\omega}\neq \bm{\omega}'}}
\operatorname{Cov}_{\bm{\theta}}(c_{\bm{\omega}},c_{\bm{\omega}'}).
\end{equation}
Thus, the variance of a symmetrized coefficient depends on the original coefficient variances and the cross-covariances between distinct frequencies in the same orbit. We analyze this statistical question next.

\subsubsection{Exact 2-designs}

Let \(\mathcal H^{(2)}\) denote the Haar second-moment channel,
\begin{equation}
\label{eq:Haar_operator}
\mathcal{H}^{(2)}[X]
:=
\mathbb{E}_{U\sim\mu_H}
\left[
(U\otimes U)
X
(U^\dagger\otimes U^\dagger)
\right],
\end{equation}
where \(X\) is an arbitrary operator on \((\mathbb C^d)^{\otimes 2}\) and \(\mu_H\) is the Haar probability measure on the unitary group \(U(d)\), with \(d=2^n\) for an \(n\)-qubit Hilbert space. For a trainable layer \(W\), define its second-moment channel by
\begin{equation}
\label{eq:second_moment_channel}
\mathcal{M}^{(2)}[X]
:=
\mathbb{E}_{\bm{\theta}}
\left[
(W\otimes W)
X
(W^\dagger\otimes W^\dagger)
\right].
\end{equation}
We say that \(W\) forms an exact 2-design when \(\mathcal M^{(2)}=\mathcal H^{(2)}\). We first consider the idealized case in which the trainable layers are independent exact 2-designs. The following lemma removes the cross-covariances under this assumption.

\begin{lemma}[Fourier Coefficient Decoupling]
\label{lemma:fourier_coefficient_decoupling}
Consider the QFM hypothesis function \(f(\bm{x},\bm{\theta})\) in Eq.~\eqref{eq:QFM}, with Fourier coefficients \(c_{\bm{\omega}}(\bm{\theta})\). If each trainable layer \(W^{(l)}\) forms an independent exact 2-design, then for any distinct frequencies \(\bm{\omega}\neq\bm{\omega}'\),
\begin{align}
\mathbb{E}_{\bm{\theta}} [c_{\bm{\omega}} c_{\bm{\omega}'}^*] &= 0,\\
\operatorname{Cov}_{\bm{\theta}}(c_{\bm{\omega}}, c_{\bm{\omega}'}) &= 0.
\end{align}
\end{lemma}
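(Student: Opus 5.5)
Write the Fourier coefficient explicitly in terms of the circuit. Insert resolutions of the identity in the computational basis around each encoding layer and use Eq.~\eqref{eq:encoding_lambda}. This gives
\begin{equation*}
c_{\bm{\omega}}(\bm{\theta})=\sum_{(J,J')\in\mathfrak R(\bm{\omega})}\langle 0|W^{(1)\dagger}|j'_1\rangle\cdots\langle j'_L|W^{(L+1)\dagger}\hat O W^{(L+1)}|j_L\rangle\cdots\langle j_1|W^{(1)}|0\rangle ,
\end{equation*}
that is, a sum over pairs of index paths $(J,J')$ whose eigenvalue differences sum to $\bm{\omega}$. Each summand is a product over layers of matrix elements of $W^{(l)}$ and $W^{(l)\dagger}$. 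The product $c_{\bm{\omega}}c^*_{\bm{\omega}'}$ is then a double sum over $(J,J')\in\mathfrak R(\bm{\omega})$ and $(K,K')\in\mathfrak R(\bm{\omega}')$. The factors coming from layer $l$ can be collected into an expression of the form $\operatorname{Tr}[(W^{(l)}\otimes W^{(l)})X_l(W^{(l)\dagger}\otimes W^{(l)\dagger})Y_l]$. Here $X_l$ and $Y_l$ are tensor products of rank-one computational-basis operators such as $|j_{l-1}\rangle\langle j'_{l-1}|\otimes|k'_{l-1}\rangle\langle k_{l-1}|$; the complex conjugate swaps the roles of the primed and unprimed indices in the second copy.

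Because the layers are independent, the expectation factorizes layer by layer. The exact 2-design assumption lets each factor be replaced by $\mathcal H^{(2)}$. I then use the Weingarten form of the Haar second-moment channel:
\begin{equation*}
\mathcal H^{(2)}[X]=\alpha(X)\,I+\beta(X)\,\mathrm{SWAP},
\end{equation*}
with $\alpha,\beta$ linear in $\operatorname{Tr}X$ and $\operatorname{Tr}(X\,\mathrm{SWAP})$. Applied to the middle layers, $\langle a|\otimes\langle b|\,(\alpha I+\beta\,\mathrm{SWAP})\,|c\rangle\otimes|e\rangle$ is nonzero only if $(a,b)=(c,e)$ (identity term) or $(a,b)=(e,c)$ (swap term). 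The inputs $X_l$ also vanish in trace unless they satisfy a similar pairing. Propagating these constraints from layer $1$, where the input is $|0\rangle\langle0|^{\otimes 2}$, through layer $L+1$, where $\hat O\otimes\hat O$ enters, shows that only index configurations with $j_l=k_l$ and $j'_l=k'_l$ for all $l$ survive. This matches the frequency $\sum_l(\bm\lambda_{j_l}-\bm\lambda_{j'_l})$ with $\sum_l(\bm\lambda_{k_l}-\bm\lambda_{k'_l})$, so it forces $\bm{\omega}=\bm{\omega}'$. Hence $\mathbb E[c_{\bm{\omega}}c^*_{\bm{\omega}'}]=0$ whenever $\bm{\omega}\neq\bm{\omega}'$.

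The main obstacle is this bookkeeping. The SWAP terms could in principle pair $j_l$ with $k'_l$, producing a cross pairing that must still yield equal frequencies or contribute zero. My first approach is to track the surviving pattern layer by layer: the $\mathrm{Tr}(X)$ and $\mathrm{Tr}(X\,\mathrm{SWAP})$ coefficients feed forward as scalars, and each layer's output is again a combination of $I$ and SWAP. Any surviving contraction then pairs indices so that the phase $e^{-i\bm x\cdot(\bm\omega-\bm\omega')}$ from the encoding becomes $\bm x$-independent. A cleaner alternative, which I would try if the bookkeeping gets heavy, is to work with $g(\bm x,\bm y)=\mathbb E[f(\bm x)f(\bm y)^*]$. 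Averaging with the $I$/SWAP structure shows $g$ depends only on diagonal-phase combinations that are invariant under the joint shift, i.e.\ $g$ is a function of $\bm x-\bm y$. Expanding $g=\sum_{\bm\omega,\bm\omega'}\mathbb E[c_{\bm\omega}c^*_{\bm\omega'}]e^{-i\bm x\cdot\bm\omega+i\bm y\cdot\bm\omega'}$ and using linear independence of distinct exponentials then kills all off-diagonal terms.

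For the covariance, I would note that $\mathbb E[c_{\bm{\omega}}]=0$ for $\bm{\omega}\neq 0$ by the first-moment analogue (the 1-design part). Then $\operatorname{Cov}(c_{\bm{\omega}},c_{\bm{\omega}'})=\mathbb E[c_{\bm{\omega}}c^*_{\bm{\omega}'}]-\mathbb E[c_{\bm{\omega}}]\mathbb E[c_{\bm{\omega}'}]^*=0$, since at most one of the two frequencies is zero.
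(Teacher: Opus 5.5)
Your primary route is essentially the paper's proof. You factor $\mathbb{E}_{\bm{\theta}}[c_{\bm{\omega}}c^*_{\bm{\omega}'}]$ over the independent layers, replace each fourth moment by its Haar value, and use the fact that the row indices of $W^{(l)}$ at each encoding position $l\le L$ must be paired for the factor to be nonzero.

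Your summary of which index configurations survive needs one correction, and the correction also removes the obstacle you flag. The layer-$l$ factor is $W_{j_l j_{l-1}}W_{k'_l k'_{l-1}}W^*_{j'_l j'_{l-1}}W^*_{k_l k_{l-1}}$. Haar moments only pair an unconjugated row index with a conjugated one. Your own observation about $\langle a|\otimes\langle b|(\alpha I+\beta\,\mathrm{SWAP})|c\rangle\otimes|e\rangle$, with $(a,b)=(j_l,k'_l)$ and $(c,e)=(j'_l,k_l)$, gives exactly two admissible patterns:
\begin{itemize}
\item the identity pairing $j_l=j'_l$, $k_l=k'_l$;
\item the swap pairing $j_l=k_l$, $j'_l=k'_l$.
\end{itemize}
A pairing of $j_l$ with $k'_l$ never arises, because both index unconjugated entries.

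So your claim that only $j_l=k_l$, $j'_l=k'_l$ survive is false as stated. It is harmless, though: the identity pairing gives $\bm{\lambda}_{j_l}-\bm{\lambda}_{j'_l}=0=\bm{\lambda}_{k_l}-\bm{\lambda}_{k'_l}$, so in both cases the layerwise increments agree and $\bm{\omega}=\bm{\omega}'$. You do not need any propagation through the circuit, nor the column constraints coming from $X_l$. The constraint at each $l$ is local to the factor of $W^{(l)}$, so every term of the double sum vanishes individually when $\bm{\omega}\neq\bm{\omega}'$. Your covariance step matches the paper's.

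Your fallback via $g(\bm{x},\bm{y})$ is also correct and takes a genuinely different route. For diagonal encodings, conjugation by $E(\bm{x})\otimes E(\bm{y})$ fixes $I$ and sends $\mathrm{SWAP}$ to $\mathrm{SWAP}\,(E(\bm{y}-\bm{x})\otimes E(\bm{x}-\bm{y}))$. By induction, each Haar twirl therefore produces $I$/SWAP coefficients that depend only on $\bm{x}-\bm{y}$. Uniqueness of Fourier coefficients on $\mathbb{R}^m$ then forces the off-diagonal moments to vanish. This replaces index bookkeeping with a translation-invariance argument, but it only yields that the aggregate moment is zero. The paper's route gives termwise vanishing of the Haar pairings, which is what its $\varepsilon$-approximate analysis reuses when it splits $A=A^{\mathrm H}+\Delta A$ and $B=B^{\mathrm H}+\Delta B$.
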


The proof is detailed in Appendix~\ref{app:fourier_coefficient_decoupling}. Lemma~\ref{lemma:fourier_coefficient_decoupling} turns the deterministic orbit sum in Proposition~\ref{prop:FOD} into a statistical variance identity: once distinct Fourier coefficients decouple, the variance of \(a_{\bm{\nu}}\) is the sum of the variances of the coefficients in its orbit. Theorem~\ref{theorem:exp_variance} states this identity explicitly.

\begin{theorem}[Symmetrized Coefficient Variance under Exact 2-designs]
\label{theorem:exp_variance}
Consider the QFM hypothesis function \(f(\bm{x},\bm{\theta})\) in Eq.~\eqref{eq:QFM}, with Fourier coefficients \(c_{\bm{\omega}}(\bm{\theta})\) satisfying \(c_{\bm{\omega}}(\bm{\theta})=0\) for \(\bm{\omega}\notin\Omega\). Let \(a_{\bm{\nu}}(\bm{\theta})\) be the symmetrized coefficient defined in Eq.~\eqref{eq:symmetrized_coefficient}. If each trainable layer forms an independent exact 2-design, then for any orbit frequency \(\bm{\nu}\) in the orbit frequency spectrum \(\tilde{\Omega}\),
\begin{equation}
\label{eq:exp_variance}
\operatorname{Var}_{\bm{\theta}}(a_{\bm{\nu}}) = \sum_{\bm{\omega} \in [\bm{\nu}]} \operatorname{Var}_{\bm{\theta}}(c_{\bm{\omega}}) = |[\bm{\nu}]| \cdot \overline{\operatorname{Var}}_{\bm{\theta}}(c_{\bm{\omega}}),
\end{equation}
where \(\overline{\operatorname{Var}}_{\bm{\theta}}(c_{\bm{\omega}})\) denotes the mean variance over the distinct frequencies within the orbit \([\bm{\nu}]=\{ V_s[\bm{\nu}] \mid s \in \mathcal{S} \}\).
\end{theorem}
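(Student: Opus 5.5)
The plan is to expand the variance of the orbit sum bilinearly and then use Lemma~\ref{lemma:fourier_coefficient_decoupling} to remove every off-diagonal term. Since \(f\) is real-valued, the coefficients satisfy \(c_{-\bm{\omega}}=c_{\bm{\omega}}^*\). The coefficients are complex in general, so I will use the complex variance \(\operatorname{Var}_{\bm{\theta}}(Z)=\mathbb{E}_{\bm{\theta}}|Z-\mathbb{E}_{\bm{\theta}}Z|^2\) and the covariance \(\operatorname{Cov}_{\bm{\theta}}(Z,W)=\mathbb{E}_{\bm{\theta}}[(Z-\mathbb{E}Z)(W-\mathbb{E}W)^*]\). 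This matches the form in which Lemma~\ref{lemma:fourier_coefficient_decoupling} is stated.

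\textbf{Expansion.} By Eq.~\eqref{eq:symmetrized_coefficient}, \(a_{\bm{\nu}}=\sum_{\bm{\omega}\in[\bm{\nu}]}c_{\bm{\omega}}\), where the sum runs over the distinct frequencies of the finite orbit. Sesquilinearity of the covariance gives
\begin{equation*}
\operatorname{Var}_{\bm{\theta}}(a_{\bm{\nu}})=\sum_{\bm{\omega}\in[\bm{\nu}]}\operatorname{Var}_{\bm{\theta}}(c_{\bm{\omega}})+\sum_{\substack{\bm{\omega},\bm{\omega}'\in[\bm{\nu}]\\ \bm{\omega}\neq\bm{\omega}'}}\operatorname{Cov}_{\bm{\theta}}(c_{\bm{\omega}},c_{\bm{\omega}'}),
\end{equation*}
which is the decomposition already displayed before the theorem.

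\textbf{Cross terms.} For \(\bm{\omega}\neq\bm{\omega}'\) I split into two cases.
\begin{itemize}
\item If both frequencies lie in \(\Omega\), Lemma~\ref{lemma:fourier_coefficient_decoupling} applies directly under the independent exact 2-design hypothesis, so the covariance is zero.
\item If either frequency lies outside \(\Omega\), the convention \(c_{\bm{\omega}}\equiv0\) makes that coefficient a deterministic zero. Its covariance with anything then vanishes trivially.
\end{itemize}
The same convention gives \(\operatorname{Var}(c_{\bm{\omega}})=0\) for \(\bm{\omega}\in[\bm{\nu}]\setminus\Omega\). These terms therefore contribute nothing to the first sum, so the identity holds both for non-closed spectra and for orbits completed by averaging.

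\textbf{Second equality.} This is the definition of the mean: \(\overline{\operatorname{Var}}_{\bm{\theta}}(c_{\bm{\omega}})=|[\bm{\nu}]|^{-1}\sum_{\bm{\omega}\in[\bm{\nu}]}\operatorname{Var}_{\bm{\theta}}(c_{\bm{\omega}})\), averaged over all \(|[\bm{\nu}]|\) distinct orbit elements, including completed frequencies with zero variance.

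\textbf{Main obstacle.} The only step with real content is the appeal to Lemma~\ref{lemma:fourier_coefficient_decoupling}. I must check that it gives the vanishing of the centered covariance, not just of \(\mathbb{E}[c_{\bm{\omega}}c_{\bm{\omega}'}^*]\). That requires \(\mathbb{E}[c_{\bm{\omega}}]\mathbb{E}[c_{\bm{\omega}'}]^*=0\), which fails only if both means are nonzero, and a nonzero mean is possible only for \(\bm{\omega}=\bm{0}\). Since \(\bm{0}\) is fixed by every orthogonal \(V_s\), the orbit \([\bm{0}]\) is the singleton \(\{\bm{0}\}\) and has no cross terms. In every other orbit all means vanish. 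In any case the lemma as stated already asserts \(\operatorname{Cov}=0\), so I will cite it directly.
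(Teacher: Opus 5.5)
Your proposal is correct and follows the paper's proof. Like the paper, you expand \(\operatorname{Var}_{\bm{\theta}}(a_{\bm{\nu}})\) into diagonal variances plus cross-covariances, remove the cross terms with Lemma~\ref{lemma:fourier_coefficient_decoupling}, and read off the second equality as the definition of the orbit mean; your explicit treatment of completed frequencies \(\bm{\omega}\in[\bm{\nu}]\setminus\Omega\) as deterministic zeros, and of the singleton orbit \([\bm{0}]\), only spells out what the paper leaves implicit in the convention \(c_{\bm{\omega}}=0\) and in the lemma's covariance statement.
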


The proof is detailed in Appendix~\ref{app:symmetrized_coefficient_variance}. Theorem~\ref{theorem:exp_variance} combines the two preceding steps. Proposition~\ref{prop:FOD} identifies each symmetrized coefficient as a sum of Fourier coefficients, while Lemma~\ref{lemma:fourier_coefficient_decoupling} removes the cross-covariance terms that would otherwise appear in the variance of this sum. Under the exact 2-design assumption, the variance gain from symmetrization is therefore controlled by the orbit size and the variances of the coefficients in the orbit. For re-uploading models, \citet[Theorem~6]{Hela2025} express these coefficient variances in terms of redundancy counts derived from the encoding layers. The layerwise reindexing in Proposition~\ref{prop:redundancy_preservation_representation_elements} preserves these counts along each orbit. Consequently, if the condition in Proposition~\ref{prop:redundancy_preservation_representation_elements} holds for all group elements, the coefficient variances are equal within each orbit, so $\overline{\operatorname{Var}}_{\bm{\theta}}(c_{\bm{\omega}})=\operatorname{Var}_{\bm{\theta}}(c_{\bm{\omega}})$ for any \(\bm{\omega}\in[\bm{\nu}]\).
In this case, Theorem~\ref{theorem:exp_variance} reduces to
\begin{equation}
\operatorname{Var}_{\bm{\theta}}(a_{\bm{\nu}})=|[\bm{\nu}]|\operatorname{Var}_{\bm{\theta}}(c_{\bm{\omega}}), \; \forall \bm{\omega}\in[\bm{\nu}].
\end{equation}

Figure~\ref{fig:coefficient_variance_distribution} illustrates this mechanism using a two-qubit exponential encoding QFM with two encoding layers. As shown by the blue curve, \(\operatorname{Var}_{\bm{\theta}}(c_{\bm{\omega}})\) is (almost) the same for all frequencies \(\bm{\omega}\) within each frequency orbit. The QFM shown in the figure satisfies the condition in Proposition~\ref{prop:redundancy_preservation_representation_elements}, as verified below. Here we consider the hyperoctahedral group \(B_m\) because it captures two common coordinate symmetries at once: permutations of input variables and independent sign flips. The hyperoctahedral group \(B_m\) is the semidirect product \(\mathbb{Z}_2^m\rtimes S_m\), where \(\mathbb{Z}_2^m\) is the elementary abelian 2-group and \(S_m\) is the symmetric group. Its order is \(|B_m|=2^m m!\). We use its standard signed permutation representation \(V:B_m\to O(m)\), \(s\mapsto V_s\). For \(s=(\bm{\epsilon},\pi)\), with \(\bm{\epsilon}\in\{\pm1\}^m\) and \(\pi\in S_m\), \(V_s\) permutes the coordinates by \(\pi\) and then flips their signs according to \(\bm{\epsilon}\). Equivalently,
\begin{equation}
\label{eq:B2_Vs}
(V_s[\bm{\omega}])_i
=
\epsilon_i\omega_{\pi^{-1}(i)},
\end{equation}
so the \(i\)-th output coordinate comes from the \(\pi^{-1}(i)\)th input coordinate. For the two-variable model considered here, \(m=2\), so the relevant group is \(B_2\).

To verify the condition in Proposition~\ref{prop:redundancy_preservation_representation_elements}, note that encoding layer \(l\in\{1,2\}\) applies \(R_X(\alpha_l x_1)\otimes R_X(\alpha_l x_2)\), where \(\alpha_l\) is the angle coefficient. Since \(R_X(t)=e^{-itX/2}\) and \(X\) has eigenvalues \(\pm1\), each independent choice of \(\sigma_1,\sigma_2\in\{\pm1\}\) gives an eigenvalue vector \((\sigma_1\alpha_l/2,\sigma_2\alpha_l/2)\) in layer \(l\). These four choices give four distinct eigenvalue vectors, which we index by \(j=1,\ldots,4\) and denote by \(\bm{\lambda}^{(l)}_j\). For \(s=(\bm{\epsilon},\pi)\in B_2\), applying \(V_s\) to \(\bm{\lambda}^{(l)}_j=(\sigma_1\alpha_l/2,\sigma_2\alpha_l/2)\) gives \(V_s[\bm{\lambda}^{(l)}_j]=(\epsilon_1\sigma_{\pi^{-1}(1)}\alpha_l/2,\epsilon_2\sigma_{\pi^{-1}(2)}\alpha_l/2)\), which is again one of the four eigenvalue vectors in the same layer. Since \(V_s\) is invertible, it permutes these four eigenvalue vectors bijectively. Let \(\tau_{s,l}(j)\) be the index of \(V_s[\bm{\lambda}^{(l)}_j]\). Then \(\tau_{s,l}\) is a bijection and \(\bm{\lambda}^{(l)}_{\tau_{s,l}(j)}=V_s[\bm{\lambda}^{(l)}_j]\), as required by Eq.~\eqref{eq:redundancy_condition}. Proposition~\ref{prop:redundancy_preservation_representation_elements} therefore shows that all frequencies in the same \(B_2\) orbit have the same redundancy. Under the exact 2-design assumption, this equal redundancy implies equal Fourier coefficient variance, explaining the orbitwise equality shown by the blue curve.

In Figure~\ref{fig:coefficient_variance_distribution}, the two encoding layers use angle coefficients \(\alpha_1=3^0\) and \(\alpha_2=3^1\), following the exponential encoding scheme \cite{Shin_2023}. Applied to both input variables, these layers give the frequency spectrum \(\Omega=\{-4,-3,\ldots,4\}^2\), containing \(9^2=81\) frequencies \cite[Appendix~A.2.2]{Hela2025}. The \(B_2\) representation organizes these Fourier coefficients into frequency orbits. Following the numerical protocol in \citet[Sec.~5.2]{Hela2025}, we consider each trainable layer \(W\) as an exact 2-design using five repetitions of the Strongly Entangling Ansatz \cite{Schuld2020}. Using the global observable \(\hat O=(\ket{0}\!\bra{0})^{\otimes 2}\), we estimate the coefficient variances from \(3\times10^4\) random parameter samples.

The close overlap between the measured variance of the symmetrized coefficient and the sum of the coefficient variances confirms Theorem~\ref{theorem:exp_variance} in this finite setting. The separation between the pink and blue curves shows how the normalized orbit basis mitigates vanishing expressivity: a symmetrized coefficient can retain substantially more variance than an individual Fourier coefficient. This variance gain concerns \(a_{\bm{\nu}}\) in the normalized orbit basis and does not increase the variance of the full hypothesis function. This gain can be viewed as arising from the factor $1/|[\bm{\nu}]|$ in the normalized orbit basis of Eq.~\eqref{eq:orbit_basis}. This factor gives $\phi_{[\bm{\nu}]}(\bm{0})=1$ and $|\phi_{[\bm{\nu}]}(\bm{x})|\leq 1$, the same range as an individual Fourier basis function $e^{-i\bm{x}\cdot\bm{\omega}}$, so that $a_{\bm{\nu}}$ and $c_{\bm{\omega}}$ are coefficients of basis functions with the same range. Under sign-flip averaging, it also makes $\phi_{[\bm{\nu}]}$ a product of Chebyshev polynomials, so that $a_{\bm{\nu}}$ are exactly the Chebyshev coefficients in Proposition~\ref{prop:quantum_chebyshev_model}. The coefficient variance analysis does not by itself guarantee larger loss gradients or faster optimization. 

We now recall the notion of vanishing expressivity and convert this gain in coefficient variance into an asymptotic statement. The expressivity of a re-uploading model relies on its accessible frequency spectrum, and exponential encoding schemes can make this spectrum grow exponentially with the number of qubits or layers \cite{Shin_2023}. The trade-off is that, with exact 2-design trainable layers, the variance of frequency coefficients may decay exponentially, which severely limits their trainability. This phenomenon, formalized as \emph{vanishing expressivity} \cite{Hela2025}, occurs when there exists a frequency
 \(\bm{\omega} \in \Omega\) such that
\begin{equation}
\operatorname{Var}_{\bm{\theta}}\left[c_{\bm{\omega}}(\bm{\theta})\right] = \Theta\!\left(\frac{1}{b^{n}}\right)
\end{equation}
for some constant \(b>1\), where \(n\) denotes the number of qubits. Theorem~\ref{theorem:exp_variance} gives the precise way in which symmetrization can counteract this decay with the normalized orbit basis: the variance of \(a_{\bm{\nu}}\) includes an orbit-size factor. Throughout this work, mitigation of vanishing expressivity in the symmetrized model is understood with respect to the symmetrized coefficients \(a_{\bm{\nu}}\) in the normalized orbit basis of Eq.~\eqref{eq:orbit_basis}. The following corollary states when this mitigation occurs.

\begin{corollary}[Mitigating Vanishing Expressivity]
\label{cor:mitigating_vanishing_expressivity}
Under the assumptions of Theorem~\ref{theorem:exp_variance}, suppose the Fourier coefficients \(c_{\bm{\omega}}(\bm{\theta})\) of the QFM in Eq.~\eqref{eq:QFM} exhibit vanishing expressivity on a frequency orbit \([\bm{\nu}]\), such that \(\operatorname{Var}_{\bm{\theta}}[c_{\bm{\omega}}] = \Theta(1/b^{n})\) for all \(\bm{\omega} \in [\bm{\nu}]\), with some \(b>1\). If the chosen symmetry group \(\mathcal{S}\) yields an orbit size satisfying
\begin{equation}
    |[\bm{\nu}]| \in \Omega\left(\frac{b^n}{c(n)}\right),
\end{equation}
where $c(n)$ is a function, then the symmetrized coefficient \(a_{\bm{\nu}}(\bm{\theta})\) defined in Eq.~\eqref{eq:symmetrized_coefficient} satisfies \(\operatorname{Var}_{\bm{\theta}}[a_{\bm{\nu}}] \in \Omega(1/c(n))\).
\end{corollary}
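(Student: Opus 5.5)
The plan is to apply Theorem~\ref{theorem:exp_variance} and then pass uniform bounds through the asymptotic notation. Under the assumptions of Theorem~\ref{theorem:exp_variance}, the trainable layers are independent exact 2-designs. Hence for the orbit frequency \(\bm{\nu}\) we have the exact identity
\begin{equation*}
\operatorname{Var}_{\bm{\theta}}(a_{\bm{\nu}})=\sum_{\bm{\omega}\in[\bm{\nu}]}\operatorname{Var}_{\bm{\theta}}(c_{\bm{\omega}}).
\end{equation*}
By the convention \(c_{\bm{\omega}}=0\) for \(\bm{\omega}\notin\Omega\), every term on the right is nonnegative. The hypothesis gives \(\operatorname{Var}_{\bm{\theta}}[c_{\bm{\omega}}]=\Theta(1/b^n)\) for all \(\bm{\omega}\in[\bm{\nu}]\), which implicitly places the whole orbit inside \(\Omega\).

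Next I lower bound each term. The \(\Theta\) statement, read as holding uniformly over the orbit, supplies a constant \(k>0\) and a threshold \(n_0\) such that \(\operatorname{Var}_{\bm{\theta}}[c_{\bm{\omega}}]\ge k/b^n\) for all \(n\ge n_0\) and all \(\bm{\omega}\in[\bm{\nu}]\). Summing over the orbit gives
\begin{equation*}
\operatorname{Var}_{\bm{\theta}}(a_{\bm{\nu}})\ge |[\bm{\nu}]|\,\frac{k}{b^n},
\end{equation*}
which is the mean-variance form \(|[\bm{\nu}]|\cdot\overline{\operatorname{Var}}_{\bm{\theta}}(c_{\bm{\omega}})\) of Theorem~\ref{theorem:exp_variance}. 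The orbit-size assumption \(|[\bm{\nu}]|\in\Omega(b^n/c(n))\) gives a constant \(k'>0\) with \(|[\bm{\nu}]|\ge k' b^n/c(n)\) for large \(n\). Multiplying the two bounds cancels the factor \(b^n\), so \(\operatorname{Var}_{\bm{\theta}}(a_{\bm{\nu}})\ge kk'/c(n)\), that is, \(\operatorname{Var}_{\bm{\theta}}[a_{\bm{\nu}}]\in\Omega(1/c(n))\).

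The main obstacle is uniformity. Both the orbit and its size vary with \(n\), so the hidden constant in \(\Theta(1/b^n)\) must not depend on which \(\bm{\omega}\) in the orbit is chosen. I would state this explicitly as the intended reading of "for all \(\bm{\omega}\in[\bm{\nu}]\)", with one constant for the whole orbit. As a fallback when the condition of Proposition~\ref{prop:redundancy_preservation_representation_elements} holds for all \(s\in\mathcal S\), the variances are equal within the orbit, and a single \(\Theta\) bound suffices. Only the lower half of \(\Theta\) is used. I would also require \(c(n)>0\) so that \(1/c(n)\) makes sense.
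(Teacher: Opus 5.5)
Your proposal is correct and follows essentially the same route as the paper. The paper gives no separate proof and treats the corollary as an immediate consequence of the identity \(\operatorname{Var}_{\bm{\theta}}(a_{\bm{\nu}})=|[\bm{\nu}]|\cdot\overline{\operatorname{Var}}_{\bm{\theta}}(c_{\bm{\omega}})\) from Theorem~\ref{theorem:exp_variance}: multiplying the \(\Omega(b^n/c(n))\) orbit-size bound by the \(\Theta(1/b^n)\) coefficient variance cancels \(b^n\). Your insistence on a single hidden constant uniform over the \(n\)-dependent orbit, and on \(c(n)>0\), makes precise assumptions that the paper leaves implicit.
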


The usefulness of Corollary~\ref{cor:mitigating_vanishing_expressivity} depends on whether the chosen symmetry can produce large frequency orbits. Typically, the relevant orbit size grows with the number of variables \(m\), and \(m\) scales proportionally with the number of qubits \(n\). Therefore, an orbit size that grows exponentially with \(m\) also grows exponentially with \(n\). For the hyperoctahedral group \(B_m\), this orbit size condition can be made explicit. Consider its representation on an \(m\)-dimensional frequency vector \(\bm{\nu}\), where the elements \(V_s\) permute coordinates and flip signs. Suppose that \(\bm{\nu}\) has \(k\) nonzero coordinates and \(m-k\) zero coordinates. Let \(q_1,\ldots,q_r\) be the multiplicities of the distinct nonzero absolute values among these \(k\) coordinates, so \(\sum_{j=1}^r q_j=k\). Its stabilizer
\(\operatorname{Stab}(\bm{\nu})=\{s\in B_m:V_s[\bm{\nu}]=\bm{\nu}\}\)
has size \(2^{m-k}(m-k)!\prod_{j=1}^r q_j!\). Since \(|B_m|=2^m m!\), the Orbit-Stabilizer Theorem gives
\begin{equation}
\label{eq:orbit-stabilizer}
\begin{aligned}
|[\bm{\nu}]|
=
\frac{|B_m|}{|\operatorname{Stab}(\bm{\nu})|}
=
\frac{2^k m!}{(m-k)!\prod_{j=1}^r q_j!}.
\end{aligned}
\end{equation}
When the nonzero absolute values are all distinct, every \(q_j=1\), and Eq.~\eqref{eq:orbit-stabilizer} reduces to \(2^k m!/(m-k)!\). When \(k\) scales linearly with \(m\), the factor \(2^k\) already gives exponential orbit growth in \(m\) or \(n\) under the hyperoctahedral group. More generally, Corollary~\ref{cor:mitigating_vanishing_expressivity} shows that once the orbit size satisfies the required lower bound, symmetrization mitigates vanishing expressivity with the normalized orbit basis when \(c(n)=c^n\) with \(c<b\), and can raise the symmetrized coefficient variance to a polynomial scale when \(c(n)=\operatorname{poly}(n)\). The exact 2-design analysis is the cleanest version of the mechanism, but it is not the only regime of interest. The next subsection relaxes the assumption and asks how much deviation from an exact 2-design can be tolerated before the orbit-variance gain is lost.

\subsubsection{\texorpdfstring{\(\varepsilon\)-approximate}{Epsilon-approximate} 2-designs}

The exact 2-design assumption is analytically useful but idealized. We therefore consider the \(\varepsilon\)-approximate 2-design setting for a single layer \(L=1\). For \(W\in\{W^{(1)},W^{(2)}\}\), where \(W^{(1)}\) and \(W^{(2)}\) are the two trainable layers in the single layer \(L=1\) specialization of Eq.~\eqref{eq:reupload}, let \(\mathcal M^{(2)}\) be the corresponding second-moment channel in Eq.~\eqref{eq:second_moment_channel} constructed from that particular \(W\) and its parameter distribution. We assume that both choices of \(W\) are \(\varepsilon\)-approximate 2-designs in the sense that
\begin{equation}
\label{eq:approx_2_design_channel_assumption}
\left\|
\mathcal{M}^{(2)}-\mathcal{H}^{(2)}
\right\|_{\infty}
\leq
\varepsilon .
\end{equation}
Here \(\mathcal{H}^{(2)}\) is defined in Eq.~\eqref{eq:Haar_operator}, and \(\|\cdot\|_{\infty}\) denotes the spectral norm (Schatten $\infty$-norm) of the matrix form of the corresponding superoperator. With this notion of approximation, the first step is to ask whether the decoupling of Fourier coefficients in Lemma~\ref{lemma:fourier_coefficient_decoupling} survives perturbatively.
\begin{lemma}
\label{lemma:distinct_frequency_properties}
Consider the single-uploading QFM hypothesis function \(f(\bm{x},\bm{\theta})\) in Eq.~\eqref{eq:QFM} with \(L=1\) and Fourier coefficients \(c_{\bm{\omega}}(\bm{\theta})\). If each trainable layer \(W^{(l)}\) forms an \(\varepsilon\)-approximate 2-design in the sense of Eq.~\eqref{eq:approx_2_design_channel_assumption}, then for any two distinct nonzero frequencies \(\bm{\omega}\neq\bm{\omega}'\), we have
\begin{align}
&\left|
\mathbb{E}_{\bm{\theta}}
\left[
c_{\bm{\omega}}
\right]
\right|
\leq
\varepsilon^2\|\hat O\|_F,\\
&\left|
\mathbb{E}_{\bm{\theta}}
\left[
c_{\bm{\omega}}c_{\bm{\omega}'}^*
\right]
\right|
\leq
\varepsilon^2
\|\hat O\|_F^2,\\
&\left|
\operatorname{Cov}_{\bm{\theta}}
\left(
c_{\bm{\omega}},c_{\bm{\omega}'}
\right)
\right|
\leq
\|\hat O\|_F^2
\left(
\varepsilon^2+\varepsilon^4
\right),
\end{align}
where \(\|\cdot\|_{F}\) denotes the Frobenius norm.
\end{lemma}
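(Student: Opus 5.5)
We work with the explicit $L=1$ form of the coefficients. Write $U=W^{(2)}E(\bm{x})W^{(1)}$, with $E(\bm{x})|j\rangle=e^{-i\bm{x}\cdot\bm{\lambda}_j}|j\rangle$. Expanding the trace gives
\begin{equation*}
c_{\bm{\omega}}=\sum_{(j,k)\in\mathfrak R(\bm{\omega})}\langle k|\hat O_W|j\rangle\,\langle j|\rho_W|k\rangle,
\end{equation*}
where $\hat O_W=W^{(2)\dagger}\hat O W^{(2)}$ and $\rho_W=W^{(1)}|0\rangle\langle0|W^{(1)\dagger}$. Because the two layers are independent, every moment of $c_{\bm{\omega}}$ factorizes into a moment of $\hat O_W$ times a moment of $\rho_W$.

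\textbf{First moment.} Here we get
\begin{equation*}
\mathbb E[c_{\bm{\omega}}]=\sum_{(j,k)}\mathbb E[\langle k|\hat O_W|j\rangle]\,\mathbb E[\langle j|\rho_W|k\rangle].
\end{equation*}
A nonzero frequency forces $j\neq k$, so only off-diagonal entries appear. Under Haar averaging, $\mathbb E[\hat O_W]=\operatorname{Tr}(\hat O)I/d$ and $\mathbb E[\rho_W]=I/d$, and both have vanishing off-diagonal entries. Each factor is therefore an entry of $\mathcal M^{(1)}[X]-\mathcal H^{(1)}[X]$. We control the first moment through the second-moment assumption by writing $X\otimes I$ and taking the partial trace, which gives $\|\mathcal M^{(1)}-\mathcal H^{(1)}\|\le\varepsilon$ in a compatible norm. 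Hence each factor deviation is bounded by $\varepsilon$ times the Frobenius norm of the input, using $\|\,|0\rangle\langle0|\,\|_F=1$. The product of the two deviations gives the factor $\varepsilon^2$.

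We then sum over $(j,k)$ by Cauchy--Schwarz. Writing $\Delta_O=\mathbb E[\hat O_W]-\mathcal H^{(1)}[\hat O]$ and $\Delta_\rho$ analogously, we get $\sum|(\Delta_O)_{kj}||(\Delta_\rho)_{jk}|\le\|\Delta_O\|_F\|\Delta_\rho\|_F$. This avoids redundancy factors and yields $\varepsilon^2\|\hat O\|_F$.

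\textbf{Second moment.} Similarly,
\begin{equation*}
\mathbb E[c_{\bm{\omega}}c^*_{\bm{\omega}'}]=\sum\mathbb E\bigl[(\hat O_W)_{kj}(\hat O_W)^*_{k'j'}\bigr]\,\mathbb E\bigl[(\rho_W)_{jk}(\rho_W)^*_{j'k'}\bigr].
\end{equation*}
Each expectation is a matrix element of $\mathcal M^{(2)}[\hat O\otimes\hat O]$ or $\mathcal M^{(2)}[\rho_0\otimes\rho_0]$. We split $\mathcal M^{(2)}=\mathcal H^{(2)}+\Delta$ on each side. The Haar$\times$Haar term vanishes for $\bm{\omega}\neq\bm{\omega}'$, as in Lemma~\ref{lemma:fourier_coefficient_decoupling}: the Weingarten structure supported on identity and swap forces $\bm{\omega}=\bm{\omega}'$ or $\bm{\omega}=\bm{0}$. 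The $\Delta\times\Delta$ term is bounded, after Cauchy--Schwarz over the index set, by $\varepsilon\|\hat O\|_F^2\cdot\varepsilon\|\rho_0\|_F^2=\varepsilon^2\|\hat O\|_F^2$.

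The main obstacle is the two cross terms Haar$\times\Delta$. We intend to show they vanish exactly. The Haar second moment of either side is a combination of $I$ and $\mathrm{SWAP}$, so its matrix elements are nonzero only when the index pattern is $(j=k,\ j'=k')$ or $(j=j',\ k=k')$. The first pattern gives $\bm{\omega}=\bm{\omega}'=\bm{0}$ and the second gives $\bm{\omega}=\bm{\omega}'$; both are excluded. If this selection rule fails for some index structure, we fall back to bounding those terms by $\varepsilon$, which would still be absorbed if the Haar entries carry $O(1/d)$ weights. We should also check that $\varepsilon$-closeness in spectral norm of the superoperator controls the relevant Frobenius-weighted contractions; it does, since the vectorized Frobenius norm is the operator 2-norm.

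\textbf{Covariance.} Finally,
\begin{equation*}
\operatorname{Cov}(c_{\bm{\omega}},c_{\bm{\omega}'})=\mathbb E[c_{\bm{\omega}}c^*_{\bm{\omega}'}]-\mathbb E[c_{\bm{\omega}}]\,\mathbb E[c_{\bm{\omega}'}]^*.
\end{equation*}
The triangle inequality with the two previous bounds gives $\varepsilon^2\|\hat O\|_F^2+\varepsilon^4\|\hat O\|_F^2$.
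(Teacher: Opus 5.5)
Your proposal is correct and is essentially the paper's argument. You factor the $L=1$ moments across the independent layers and observe that the identity/swap structure of the Haar first and second moments makes every Haar-containing term vanish pointwise on $\mathfrak R(\bm{\omega})\times\mathfrak R(\bm{\omega}')$ for distinct nonzero frequencies, so your selection rule for the Haar$\times\Delta$ terms is exactly right. The hedged fallback, which could not give an $\varepsilon^2$ bound anyway, is unnecessary. You then bound the surviving deviation products by Cauchy--Schwarz with $\|(\mathcal M^{(2)}-\mathcal H^{(2)})(X)\|_F\le\varepsilon\|X\|_F$, using for the first moments the partial-trace estimate $\|\operatorname{Tr}_2 Z\|_F\le\sqrt d\,\|Z\|_F$ applied to $X\otimes I/d$, and finish the covariance with the triangle inequality.

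One detail to make explicit: the $\hat O$ side involves the reverse channel $\mathbb E[(W^{(2)\dagger})^{\otimes 2}(\cdot)(W^{(2)})^{\otimes 2}]$. This channel is the Hilbert--Schmidt adjoint of $\mathcal M^{(2)}$, and $\mathcal H^{(2)}$ is self-adjoint, so it satisfies the same spectral-norm bound $\varepsilon$.
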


The proof is detailed in Appendix~\ref{sec:fourier_coefficient_approximate_2design}, specifically in Lemmas~\ref{lemma:nonzero_frequency_first_moment_bound} and~\ref{lemma:approx_fourier_coefficient_decoupling} and Corollary~\ref{cor:approx_fourier_coefficient_covariance_decoupling}. Lemma~\ref{lemma:distinct_frequency_properties} plays the same structural role as Lemma~\ref{lemma:fourier_coefficient_decoupling}, but with exact cancellation replaced by a controlled covariance error that vanishes as \(\varepsilon\to0\). The next theorem gives the approximate analogue of Theorem~\ref{theorem:exp_variance}.

\begin{figure*}[hbt]
    \centering
    \makebox[\textwidth][c]{%
    \subfloat[Coefficient variances.]{
        \includegraphics[width=0.475\textwidth]{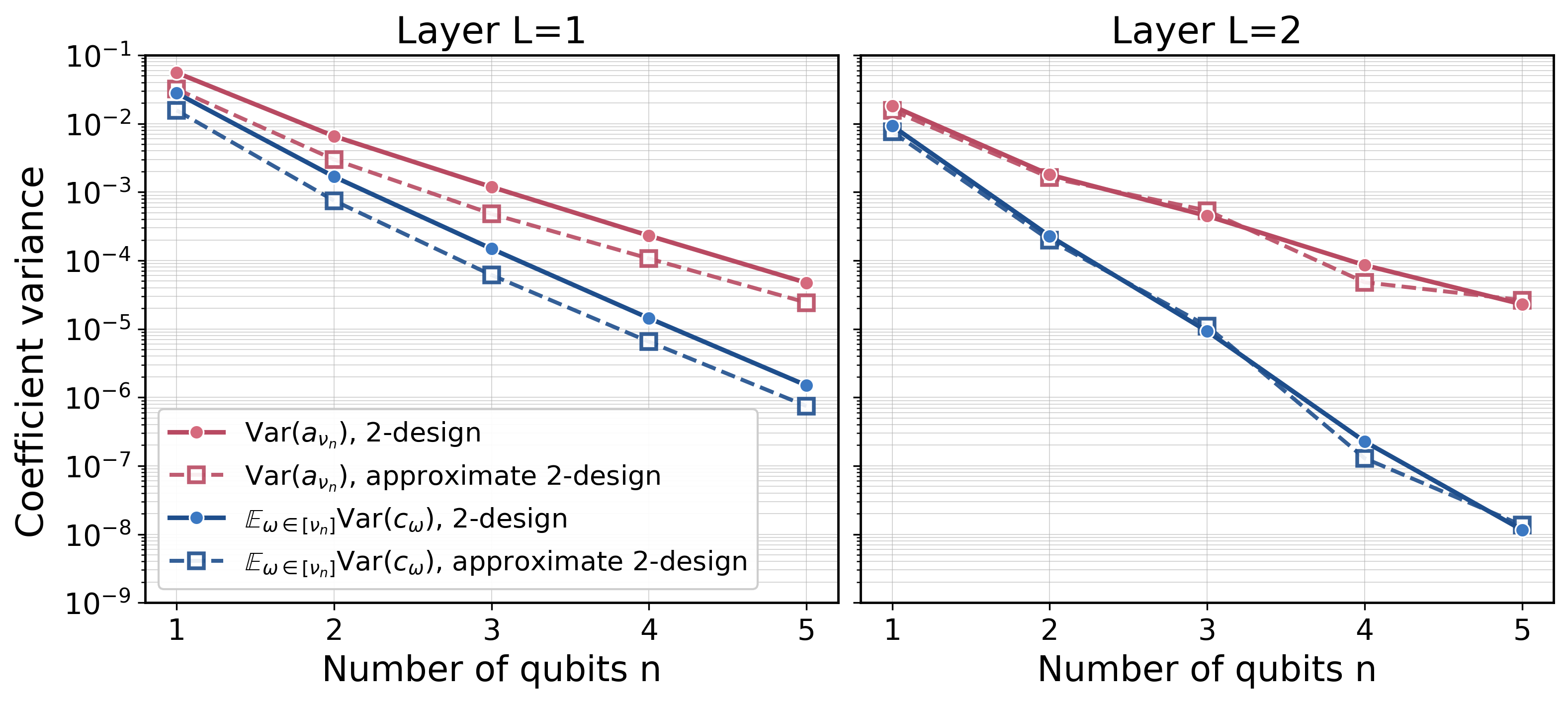}
        \label{fig:variance_values_comparison}
    }%
    \hfill
    \subfloat[Variance ratios.]{
        \includegraphics[width=0.475\textwidth]{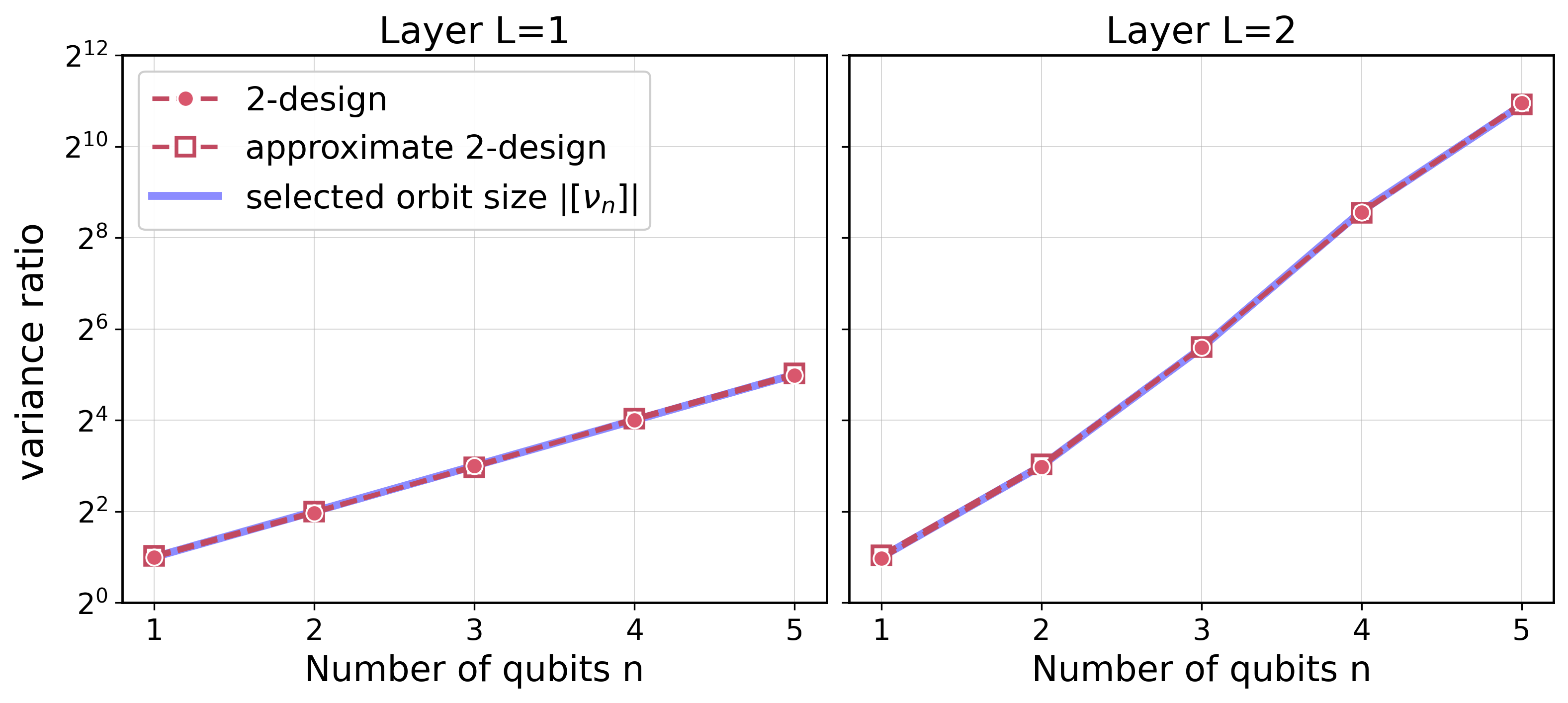}
        \label{fig:variance_ratio_comparison}
    }%
    }
    \caption{Fourier coefficient variance analysis for exponential encoding QFMs. We consider an \(n\)-qubit, \(n\)-variable QFM under the hyperoctahedral group \(B_n\), with \(L=1\) or \(L=2\). The \(l\)-th encoding layer is \(\bigotimes_{j=1}^{n}R_Z(3^{l-1}x_j)\), and the output is obtained by measuring the local observable \(\hat O_{\mathrm L}\) defined in Eq.~\eqref{eq:local_observable}. Each trainable layer uses five repetitions of the Strongly Entangling Ansatz for the exact 2-design reference and one repetition for the \(\varepsilon\)-approximate 2-design setting. The variances are estimated from \(10^4\) random parameter samples. For \(L=1\), we select \(\bm{\nu}_n=(1,\ldots,1)\). For \(L=2\), we select \(\bm{\nu}_n=(1,\ldots,n)\) for \(n=1,2,3,4\) and \(\bm{\nu}_5=(1,1,2,3,4)\). (a) Red curves show \(\operatorname{Var}(a_{\bm{\nu}_n})\), and blue curves show \(\mathbb{E}_{\bm{\omega}\in[\bm{\nu}_n]}\operatorname{Var}(c_{\bm{\omega}})\). Solid circles and dashed squares denote the exact (five repetitions of the Strongly Entangling Ansatz) and \(\varepsilon\)-approximate 2-design (one repetitions of the Strongly Entangling Ansatz) settings, respectively. (b) The corresponding variance ratio \(\operatorname{Var}(a_{\bm{\nu}_n})/\mathbb{E}_{\bm{\omega}\in[\bm{\nu}_n]}\operatorname{Var}(c_{\bm{\omega}})\) is compared with the selected orbit size \(|[\bm{\nu}_n]|\) presented in Eq.~\eqref{eq:orbit-stabilizer}. The curves nearly overlap, numerically confirming Eq.~\eqref{eq:exp_variance} in Theorem~\ref{theorem:exp_variance}.}
    \label{fig:variance_values}
\end{figure*}

\begin{theorem}[Symmetrized Coefficient Variance under \(\varepsilon\)-Approximate 2-designs]
\label{theorem:tight_approx_exp_variance}
Consider the \(n\)-qubit single-uploading QFM in Eq.~\eqref{eq:QFM} with \(L=1\) and symmetrized coefficients \(a_{\bm{\nu}}(\bm{\theta})\) defined in Eq.~\eqref{eq:symmetrized_coefficient}. Suppose the two trainable layers form independent \(\varepsilon\)-approximate 2-designs in the sense of Eq.~\eqref{eq:approx_2_design_channel_assumption}. For any frequency orbit $[\bm{\nu}]$ excluding the zero frequency, let \(R=|\mathfrak R_{[\bm{\nu}]}|\) be its total redundancy as in Definition~\ref{def:frequency_generator}. Then the deviation from the exact 2-design variance satisfies
\begin{equation}
\label{eq:exact_tight_bound}
\left|
\operatorname{Var}_{\bm{\theta}}(a_{\bm{\nu}})
-
\operatorname{Var}_{\text{2-design}}(a_{\bm{\nu}})
\right|
\in
\mathcal{O}\left( \varepsilon \|\hat{O}\|_F^2 \left( \frac{\sqrt{R}}{4^n} + \varepsilon \right) \right).
\end{equation}
Here \(\operatorname{Var}_{\text{2-design}}(a_{\bm{\nu}})\) denotes the variance of the same symmetrized coefficient when the two trainable layers form independent exact 2-designs.
\end{theorem}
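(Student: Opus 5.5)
Write the $L=1$ model in the eigenbasis of the encoding, $E(\bm{x})=\sum_j e^{-i\bm{x}\cdot\bm{\lambda}_j}|j\rangle\langle j|$, so that
\[
c_{\bm{\omega}}=\sum_{(j,j')\in\mathfrak R(\bm{\omega})}\langle j|W^{(1)}\rho_0W^{(1)\dagger}|j'\rangle\langle j'|W^{(2)\dagger}\hat O W^{(2)}|j\rangle .
\]
Summing over the orbit then gives
\[
a_{\bm{\nu}}=\operatorname{Tr}\!\left[(A\otimes B)\,P_{[\bm{\nu}]}\right],
\]
where $A=W^{(1)}\rho_0W^{(1)\dagger}$, $B=W^{(2)\dagger}\hat OW^{(2)}$, and $P_{[\bm{\nu}]}=\sum_{(j,j')\in\mathfrak R_{[\bm{\nu}]}}|j\rangle\langle j'|\otimes|j'\rangle\langle j|$ is a matrix with $R$ unit entries.

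The second moment $\mathbb E|a_{\bm{\nu}}|^2$ is linear in $\mathbb E[A^{\otimes 2}\otimes \bar A^{\otimes 2}]$-type objects. The cleaner route is to write
\[
\mathbb E|a_{\bm{\nu}}|^2=\operatorname{Tr}\!\left[\bigl(\mathcal M_1^{(2)}[\rho_0^{\otimes2}]\otimes \mathcal M_2^{(2)}[\hat O^{\otimes2}]\bigr)\,Q_{[\bm{\nu}]}\right],
\]
where $Q_{[\bm{\nu}]}$ is a fixed operator built from $P_{[\bm{\nu}]}\otimes P_{[\bm{\nu}]}^\dagger$ after reordering tensor factors. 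This uses the fact that $a_{\bm{\nu}}a_{\bm{\nu}}^*$ involves two copies of $W^{(1)}$ paired with their adjoints, which is exactly the second-moment channel of each layer, together with independence of the two layers. Set $\Delta_l=\mathcal M_l^{(2)}-\mathcal H^{(2)}$ and expand
\[
\mathcal M_1\otimes\mathcal M_2=\mathcal H\otimes\mathcal H+\Delta_1\otimes\mathcal H+\mathcal H\otimes\Delta_2+\Delta_1\otimes\Delta_2 .
\]
The zeroth-order term is the exact 2-design second moment. The mean $\mathbb E[a_{\bm{\nu}}]$ is handled separately. By Lemma~\ref{lemma:distinct_frequency_properties}, $|\mathbb E c_{\bm{\omega}}|\le\varepsilon^2\|\hat O\|_F$ for nonzero frequencies, and the exact 2-design mean vanishes. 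Summing over the orbit this contributes at most $\mathcal O(\varepsilon^4|[\bm{\nu}]|^2\|\hat O\|_F^2)$ to the variance. I intend instead to bound $\mathbb E a_{\bm{\nu}}$ directly as $\operatorname{Tr}[(\mathcal M_1^{(1)}[\rho_0]\otimes\mathcal M_2^{(1)}[\hat O])P_{[\bm{\nu}]}]$, whose Haar parts are proportional to the identity and are killed by $P_{[\bm{\nu}]}$ off the diagonal, since the orbit excludes zero. This leaves a term quadratic in the deviations, which is absorbed into the $\varepsilon^2$ piece.

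The cross-order terms $\Delta_1\otimes\Delta_2$ are bounded crudely by $\|\Delta_1\|_\infty\|\Delta_2\|_\infty\|\rho_0^{\otimes2}\|_F\|\hat O^{\otimes2}\|_F\|Q\|_\infty\le\varepsilon^2\|\hat O\|_F^2$. Here $Q$ is a partial permutation, so its operator norm is at most $1$ (to be checked). This gives the $\varepsilon^2$ term.

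The first-order terms are the main obstacle, and they are where $\sqrt R/4^n$ must come from. For $\Delta_1\otimes\mathcal H$, I first apply the Haar channel to $\hat O^{\otimes2}$ explicitly by Weingarten calculus:
\[
\mathcal H[\hat O^{\otimes2}]=\alpha\,I+\beta\,\mathrm{SWAP},\qquad |\alpha|,|\beta|\in\mathcal O\!\left(\|\hat O\|_F^2/d^2\right)
\]
for traceless-dominated $\hat O$; the trace part contributes only to the zero frequency. Contracting this with $Q_{[\bm{\nu}]}$ reduces the term to $\operatorname{Tr}[\Delta_1[\rho_0^{\otimes2}]\,K]$, where $K$ is an operator supported on the $R$ index pairs of $\mathfrak R_{[\bm{\nu}]}$ with $\|K\|_F\le\sqrt R$, up to constants. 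Cauchy--Schwarz with $\|\Delta_1[\rho_0^{\otimes2}]\|_F\le\varepsilon$ then gives $\varepsilon\sqrt R\,\|\hat O\|_F^2/d^2=\varepsilon\sqrt R\|\hat O\|_F^2/4^n$. The symmetric term $\mathcal H\otimes\Delta_2$ is handled the same way, with $\mathcal H[\rho_0^{\otimes2}]=(I+\mathrm{SWAP})/(d(d+1))$ providing the $1/d^2$ factor and $\|\Delta_2[\hat O^{\otimes2}]\|_F\le\varepsilon\|\hat O\|_F^2$. The delicate points are verifying that the identity components actually contribute no $\mathcal O(1)$ piece (they should pair only diagonal $j=j'$ indices, which are excluded for nonzero frequencies) and that the Frobenius norm of $K$ really scales as $\sqrt R$ rather than $R$. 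Adding the three contributions and the mean correction yields Eq.~\eqref{eq:exact_tight_bound}.
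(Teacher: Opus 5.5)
Your argument is essentially the paper's proof in operator language. Your expansion of $\mathcal M_1\otimes\mathcal M_2$ is the paper's splitting $A=A^{\mathrm H}+\Delta A$, $B=B^{\mathrm H}+\Delta B$ inside $\mathbb E|a_{\bm{\nu}}|^2=\sum_{(j,j'),(k,k')\in\mathfrak R_{[\bm{\nu}]}}A_{j,j';k,k'}B_{j,j';k,k'}$. Your first-order analysis is its restricted-Haar-norm computation: on a nonzero orbit only the swap part of each Haar moment survives, which forces $(k,k')=(j,j')$ and leaves exactly $R$ entries of size $1/(d(d+1))$ or $\beta$. Your mean bound, where the identity-proportional Haar first moments drop out and only $\sum_{(j,j')\in\mathfrak R_{[\bm{\nu}]}}(\Delta_1)_{jj'}(\Delta_2)_{j'j}$ remains, is its first-moment lemma.

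One step, however, is justified incorrectly. For the $\Delta_1\otimes\Delta_2$ term, $\|Q\|_\infty\le 1$ is true but does not give the bound. Hölder pairs $\|Q\|_\infty$ with the \emph{trace} norm of $X\otimes Y$, where $X=\Delta_1[\rho_0^{\otimes 2}]$ and $Y=\Delta_2[\hat O^{\otimes 2}]$. The inequality $|\operatorname{Tr}[(X\otimes Y)Q]|\le\|X\|_F\|Y\|_F\|Q\|_\infty$ is false in general: take $Q=I$ and $X=Y=I/d$ on $\mathbb C^d\otimes\mathbb C^d$, so the left side is $d^2$ and the right side is $1$. Hilbert--Schmidt Cauchy--Schwarz instead costs a factor $\|Q\|_F=R$. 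What makes the bound true is the index structure. The contraction is $\sum_{(j,j'),(k,k')\in\mathfrak R_{[\bm{\nu}]}}\langle j,k'|X|j',k\rangle\langle j',k|Y|j,k'\rangle$, and each quadruple $(j,j',k,k')$ addresses a distinct entry of $X$ and a distinct entry of $Y$. Entrywise Cauchy--Schwarz therefore gives $\le\|X\|_F\|Y\|_F\le\varepsilon^2\|\hat O\|_F^2$. This is exactly how the paper bounds all three cross terms.

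There are three smaller points.

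\begin{itemize}
\item Your mean bound uses first-moment deviations, but the hypothesis only controls second-moment channels. You need $\|\mathbb E[WYW^\dagger]-\operatorname{Tr}(Y)I/d\|_F\le\varepsilon\|Y\|_F$. This follows from $(\mathcal M^{(2)}-\mathcal H^{(2)})[Y\otimes I]=(\mathbb E[WYW^\dagger]-\operatorname{Tr}(Y)I/d)\otimes I$ together with $\|Z\otimes I\|_F=\sqrt d\,\|Z\|_F$. The paper does the equivalent via a partial trace of the channel applied to $Y\otimes I/d$.
\item Drop the \emph{traceless-dominated} caveat on $\hat O$, since the theorem has no such hypothesis. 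It is true that $\alpha$ can be as large as $\|\hat O\|_F^2/d$ (e.g.\ $\hat O=I$). But the $\alpha I$ part pairs only $j=j'$ and vanishes identically on a nonzero orbit, and $0\le\beta\le\|\hat O\|_F^2/(d^2-1)$ holds for every $\hat O$.
\item $\mathbb E[B\otimes B]$ is produced by the reversed channel $X\mapsto\mathbb E[(W^{(2)\dagger})^{\otimes 2}X(W^{(2)})^{\otimes 2}]$, not by $\mathcal M_2^{(2)}$. This is harmless: the reversed channel is the Hilbert--Schmidt adjoint of $\mathcal M_2^{(2)}$, and $\mathcal H^{(2)}$ is self-adjoint. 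Hence its spectral-norm distance to $\mathcal H^{(2)}$ is also at most $\varepsilon$, which gives a self-contained replacement for the literature property the paper cites at that point.
\end{itemize}
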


The proof is detailed in Appendix \ref{app:approximate_theorem}, and the exact bound is shown in Eq.~\eqref{eq:exact_coeff_bound}. Theorem~\ref{theorem:tight_approx_exp_variance} is therefore a stability statement for Theorem~\ref{theorem:exp_variance}: the symmetrized coefficient remains close to its exact 2-design variance when the second moment channels are close to Haar and the accumulated orbit redundancy \(R\) is controlled. It is a worst case sufficient bound in spectral norm, so it should be distinguished from the empirical behavior of a particular shallow ansatz.

Figure~\ref{fig:variance_values} illustrates the coefficient variance gain produced by symmetrization. In contrast to Figure~\ref{fig:coefficient_variance_distribution}, the QFMs in this figure are measured with the \textit{local observable}
\begin{equation}
\label{eq:local_observable}
\hat O_{\mathrm L}
=
\frac{1}{n}
\sum_{j=1}^{n}
\left(\ket{0}\!\bra{0}\right)_j
\otimes I_{\bar j},
\end{equation}
where \(I_{\bar j}\) denotes the identity operator on all qubits other than qubit \(j\). Figure~\ref{fig:variance_values_comparison} compares the variance of the selected symmetrized coefficient with the mean variance of the Fourier coefficients in its orbit. For \(L=1\), the symmetrized coefficient variance decays much more slowly with the number of qubits in both the exact and \(\varepsilon\)-approximate 2-design settings. The separation between the corresponding exact and approximate variance curves does not grow with \(n\). For \(L=2\), the same behavior is observed. The faster growth of the selected orbit produces a larger separation between the symmetrized coefficient variance and the mean coefficient variance, while the corresponding exact and approximate curves remain close as \(n\) increases. These results show how the growth of the frequency orbit can mitigate coefficient variance decay.

Figure~\ref{fig:variance_ratio_comparison} shows the corresponding variance ratio together with the selected orbit size. For \(L=1\), the ratio in the exact 2-design setting nearly coincides with the selected orbit size, numerically confirming Eq.~\eqref{eq:exp_variance} in Theorem~\ref{theorem:exp_variance}. The ratio in the \(\varepsilon\)-approximate 2-design setting also nearly coincides with the selected orbit size, indicating that the cross covariance terms are small for the ansatz considered. For \(L=2\), the ratios in both settings also remain close to the selected orbit size. Although Theorem~\ref{theorem:tight_approx_exp_variance} is restricted to \(L=1\), this observation suggests that the same variance gain may persist for practical ansatzes with \(L>1\) and continue to mitigate vanishing expressivity. A formal analysis of the \(\varepsilon\)-approximate 2-design setting for \(L>1\) remains open.

The approximate analysis above has focused on symmetrized coefficients defined over frequency orbits. The same analysis also yields a result for an individual Fourier coefficient of the underlying QFM. Taking the symmetry group to be trivial makes every orbit a singleton, \([\bm{\nu}]=\{\bm{\omega}\}\), and therefore \(a_{\bm{\nu}}=c_{\bm{\omega}}\). Specializing Theorem~\ref{theorem:tight_approx_exp_variance} to this case gives an individual coefficient bound that can be compared directly with \citet[Theorem~7]{Hela2025}. To isolate the term shared by the two bounds, define
\begin{equation}
\mathcal{P}
:=
\frac{\varepsilon \sqrt{|\mathfrak{R}(\bm{\omega})|}}{d(d^2-1)}
\Big( (2d-1)\|\hat{O}\|_F^2 - (\operatorname{Tr} \hat{O})^2 \Big).
\end{equation}
For \(L=1\), \citet[Theorem~7]{Hela2025} give the following bound under \(\varepsilon\)-approximate 2-designs with spectral norm control:
\begin{align}
\operatorname{Var}_{\bm{\theta}}(c_{\bm{\omega}})
-
\operatorname{Var}_{\text{2-design}}(c_{\bm{\omega}})
\leq \mathcal{P} + \varepsilon^2 \| \hat{O}\|_F^2 |\mathfrak{R}(\bm{\omega})|.
\label{eq:hela_single_approx_variance_bound}
\end{align}
The second term grows linearly with the frequency redundancy \(|\mathfrak{R}(\bm{\omega})|\). The following corollary states our corresponding bound for an individual Fourier coefficient.

\begin{corollary}[Individual Coefficient Variance under \(\varepsilon\)-Approximate 2-designs]
\label{corollary:single_tight_approx_exp_variance}
Consider the \(n\)-qubit single-uploading QFM in Eq.~\eqref{eq:QFM} with \(L=1\) and Fourier coefficients \(c_{\bm{\omega}}(\bm{\theta})\). Suppose the two trainable layers form independent \(\varepsilon\)-approximate 2-designs. For any nonzero frequency $\bm{\omega} \neq 0$, let $|\mathfrak{R}(\bm{\omega})|$ be the redundancy size. The deviation from the exact 2-design variance is bounded by:
\begin{align}
\left|
\operatorname{Var}_{\bm{\theta}}(c_{\bm{\omega}})
-
\operatorname{Var}_{\text{2-design}}(c_{\bm{\omega}})
\right|
\leq \mathcal{P} + (\varepsilon^2+ \varepsilon^4) \|\hat{O}\|_F^2.
\label{eq:exact_single_approx_variance_bound}
\end{align}

\end{corollary}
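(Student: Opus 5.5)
The corollary is the special case of Theorem~\ref{theorem:tight_approx_exp_variance} with the trivial symmetry group. I would not read it off from the \(\mathcal{O}\)-statement, though. I would rerun the argument with explicit constants, because the claimed bound has exact constants (\(\mathcal{P}\) and \((\varepsilon^2+\varepsilon^4)\|\hat O\|_F^2\)).

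\textbf{Reduction and decomposition.} Take \(\mathcal S=\{e\}\). Then every orbit is a singleton, so \([\bm{\nu}]=\{\bm{\omega}\}\), \(a_{\bm{\nu}}=c_{\bm{\omega}}\) and \(R=|\mathfrak R(\bm{\omega})|\). Write
\[
\operatorname{Var}_{\bm{\theta}}(c_{\bm{\omega}})=\mathbb E_{\bm{\theta}}|c_{\bm{\omega}}|^2-|\mathbb E_{\bm{\theta}} c_{\bm{\omega}}|^2 .
\]
Under exact 2-designs, Lemma~\ref{lemma:distinct_frequency_properties} with \(\varepsilon=0\), or equivalently the Haar first moment, gives \(\mathbb E c_{\bm{\omega}}=0\) for \(\bm{\omega}\neq0\). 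So \(\operatorname{Var}_{\text{2-design}}(c_{\bm{\omega}})=\mathbb E_{\text{Haar}}|c_{\bm{\omega}}|^2\), and the deviation splits into a second-moment difference and the first-moment term \(|\mathbb E c_{\bm{\omega}}|^2\).

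\textbf{Second-moment term.} For \(L=1\), write \(c_{\bm{\omega}}=\sum_{(j,j')\in\mathfrak R(\bm{\omega})}(W^{(1)}|0\rangle\langle0|W^{(1)\dagger})_{jj'}(W^{(2)\dagger}\hat O W^{(2)})_{j'j}\). Then \(\mathbb E|c_{\bm{\omega}}|^2\) is a pairing
\[
\operatorname{Tr}\!\left[\mathcal M_1^{(2)}[\rho_0^{\otimes2}]\,\Pi_{\bm{\omega}}\,\mathcal M_2^{(2)}[\hat O^{\otimes2}]\right],
\]
where \(\Pi_{\bm{\omega}}\) is a sum of \(|\mathfrak R(\bm{\omega})|\) matrix units selecting the generator pairs. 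Expand each \(\mathcal M_i^{(2)}=\mathcal H^{(2)}+\Delta_i\) with \(\|\Delta_i\|_\infty\le\varepsilon\), and split into three pieces.
\begin{itemize}
\item The Haar--Haar term is \(\operatorname{Var}_{\text{2-design}}\).
\item The two cross terms are each linear in \(\varepsilon\). In each, one factor is the Haar image, whose Weingarten form is explicit: \(\mathcal H^{(2)}[\hat O^{\otimes 2}]\) is a combination of \(I\) and SWAP with coefficients built from \((\operatorname{Tr}\hat O)^2\) and \(\|\hat O\|_F^2\), divided by \(d(d^2-1)\). Cauchy--Schwarz over the \(|\mathfrak R(\bm{\omega})|\) selected entries gives the factor \(\sqrt{|\mathfrak R(\bm{\omega})|}\). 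This should reproduce exactly the term \(\mathcal P\) appearing in Eq.~\eqref{eq:hela_single_approx_variance_bound}.
\item The \(\Delta_1\Delta_2\) term is where I depart from \citet{Hela2025}. Bounding it entrywise costs a factor \(|\mathfrak R(\bm{\omega})|\). Instead I would bound it directly by \(\varepsilon^2\|\hat O\|_F^2\).
\end{itemize}

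\textbf{Key step: the \(\Delta_1\Delta_2\) term.} For this bound I would treat \(\Pi_{\bm{\omega}}\) not as a sum of entries but as a contraction operator with \(\|\Pi_{\bm{\omega}}\|_\infty\le1\). This works because each \(j\) pairs with at most one \(j'\) per frequency \emph{within} a fixed difference of eigenvalue vectors. More precisely, it is a diagonal-selection (dephasing-type) map, hence a projection. Then
\[
|\langle \Delta_1[\rho_0^{\otimes2}],\Pi\,\Delta_2[\hat O^{\otimes 2}]\rangle|\le\varepsilon\|\rho_0^{\otimes2}\|_F\cdot\varepsilon\|\hat O^{\otimes2}\|_F=\varepsilon^2\|\hat O\|_F^2 .
\]
This is the same mechanism used in Lemma~\ref{lemma:distinct_frequency_properties}.

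\textbf{First-moment term.} Lemma~\ref{lemma:distinct_frequency_properties} gives \(|\mathbb E c_{\bm{\omega}}|\le\varepsilon^2\|\hat O\|_F\), so \(|\mathbb E c_{\bm{\omega}}|^2\le\varepsilon^4\|\hat O\|_F^2\).

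Adding the three contributions (the cross terms, the \(\Delta_1\Delta_2\) term and the first-moment term) by the triangle inequality gives Eq.~\eqref{eq:exact_single_approx_variance_bound}.

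The main obstacle is the cross-term bookkeeping. It has to yield exactly \(\mathcal P\), including the \((2d-1)\|\hat O\|_F^2-(\operatorname{Tr}\hat O)^2\) combination, and I expect to match the computation in \citet[Theorem~7]{Hela2025} term by term there.
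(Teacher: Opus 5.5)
Your proposal is correct and is essentially the paper's own argument, namely the proof of Theorem~\ref{theorem:tight_approx_exp_variance} specialized to singleton orbits so that $R=|\mathfrak R(\bm{\omega})|$: the same split $A=A^{\mathrm H}+\Delta A$, $B=B^{\mathrm H}+\Delta B$, Cauchy--Schwarz against the explicit Haar forms to get the two $\sqrt{|\mathfrak R(\bm{\omega})|}$ cross terms that combine into $\mathcal P$, the bound $\varepsilon^2\|\hat O\|_F^2$ on the $\Delta A\,\Delta B$ term because a restricted Frobenius norm is at most the full one (your $0/1$-mask contraction is exactly this), and $|\mathbb E_{\bm{\theta}}c_{\bm{\omega}}|^2\le\varepsilon^4\|\hat O\|_F^2$. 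Two corrections that do not affect the bound: the remark that each $j$ pairs with at most one $j'$ per frequency is false for degenerate encodings such as Pauli encoding and is not needed; and the selection is not left multiplication by an operator $\Pi_{\bm{\omega}}$ but an entrywise (Schur-multiplier) mask, pairing entry $((j,k'),(j',k))$ of the first matrix with the transposed entry of the second, which is all the contraction argument requires.
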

Since \(\bm{\omega}\neq0\), its frequency generator contains no diagonal index pair and therefore satisfies \(|\mathfrak R(\bm{\omega})|\leq d(d-1)<d^2=4^n\). Both results compare \(\operatorname{Var}_{\bm{\theta}}(c_{\bm{\omega}})\) with \(\operatorname{Var}_{\text{2-design}}(c_{\bm{\omega}})\). The result of \citet[Theorem~7]{Hela2025} gives only a one-sided upper bound on \(\operatorname{Var}_{\bm{\theta}}(c_{\bm{\omega}})-\operatorname{Var}_{\text{2-design}}(c_{\bm{\omega}})\), whereas Eq.~\eqref{eq:exact_single_approx_variance_bound} controls the absolute deviation \(\left|\operatorname{Var}_{\bm{\theta}}(c_{\bm{\omega}})-\operatorname{Var}_{\text{2-design}}(c_{\bm{\omega}})\right|\). For \(0<\varepsilon\leq1\), their right-hand sides have the following worst-case scalings:
\begin{equation}
\label{eq:single_coefficient_worst_case_scalings}
\begin{aligned}
\text{Existing bound~\cite{Hela2025}:}\quad
&\mathcal O\!\left(\varepsilon\|\hat O\|_F^2
\left(2^{-n}+\varepsilon 4^n\right)\right),\\
\text{This work:}\quad
&\mathcal O\!\left(\varepsilon\|\hat O\|_F^2
\left(2^{-n}+\varepsilon\right)\right).
\end{aligned}
\end{equation}

\begin{figure}[!t]
    \centering
    \includegraphics[width=1\linewidth]{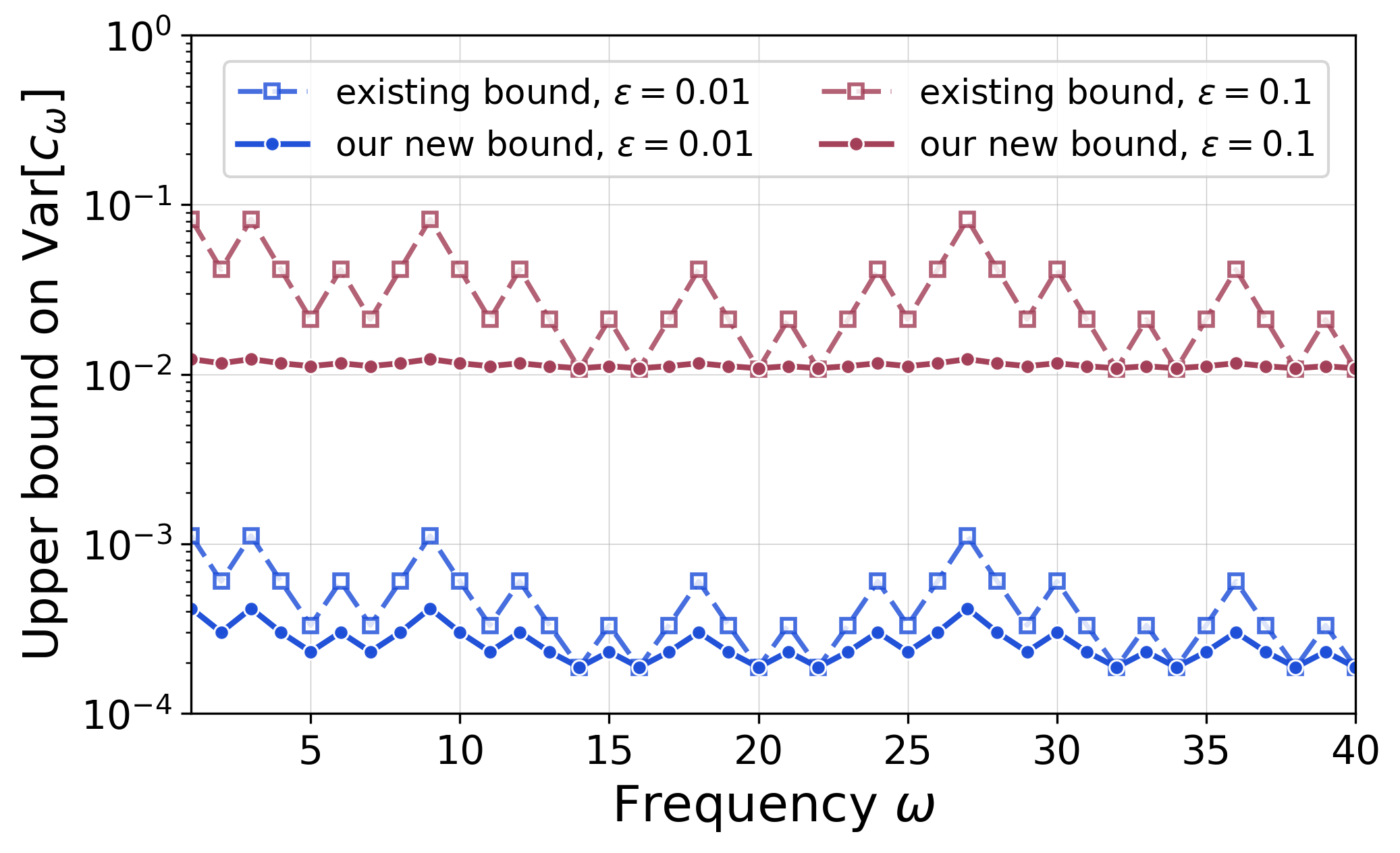}
    \caption{Comparison between the existing \(\varepsilon\)-approximate 2-design bound in Eq.~\eqref{eq:hela_single_approx_variance_bound} and the bound in Corollary~\ref{corollary:single_tight_approx_exp_variance}. Dashed curves show the existing bound, and solid curves show our bound.}
    \label{fig:operator_norm_bound_manual_comparison}
\end{figure}

To illustrate the improvement provided by our bound, Figure~\ref{fig:operator_norm_bound_manual_comparison} compares the two analytic upper bounds on \(\operatorname{Var}(c_{\omega})\) for a single-variable QFM with one encoding layer (\(L=1\)), four qubits (\(n=4\)), and Hilbert space dimension \(d=16\). The model uses the exponential encoding layer \(E(x)=\bigotimes_{i=1}^{4}R_Z(3^{i-1}x)\), the global observable \(\hat O=(\ket{0}\!\bra{0})^{\otimes4}\), and the frequency spectrum \(\Omega=\{-40,-39,\ldots,40\}\). The figure evaluates the two bounds directly rather than using circuit simulations. For this encoding, the bounds are symmetric under \(\omega\mapsto-\omega\), so only positive nonzero frequencies are shown.

The exponential separation in Eq.~\eqref{eq:single_coefficient_worst_case_scalings} is a worst-case statement rather than a uniform improvement over all frequencies. For an individual frequency, the separation between the bounds is governed by its redundancy \(|\mathfrak R(\omega)|\). Figure~\ref{fig:operator_norm_bound_manual_comparison} shows that the two bounds nearly overlap for many frequencies because these frequencies have \(|\mathfrak R(\omega)|=1\). More generally, the redundancy of an exponential encoding with \(L\) layers is \(|\mathfrak R(\omega)|=2^{nL-T(|\omega|)}\), where \(T(|\omega|)\in\{0,\ldots,nL\}\) \cite[Appendix~A.2.2]{Hela2025}. This redundancy can therefore grow exponentially with \(nL\). For a Pauli encoding with \(L\) layers, the redundancy is \(|\mathfrak R(\omega)|=\binom{2nL}{nL-|\omega|}\) for \(\omega\in\{-nL,\ldots,nL\}\) \cite[Appendix~A.2.1]{Hela2025}. Thus, our bound can be exponentially tighter for both encodings, with the improvement more pronounced for the Pauli encoding because the redundancies of most frequencies grow exponentially.

For \(L=1\), the one-sided bound of \citet[Theorem~7]{Hela2025} contains a square root term proportional to \(\sqrt{|\mathfrak{R}(\bm{\omega})|}\) and a linear term proportional to \(|\mathfrak{R}(\bm{\omega})|\). Corollary~\ref{corollary:single_tight_approx_exp_variance} strengthens the comparison to an absolute-deviation bound, retains the common square root dependence through \(\mathcal{P}\), and removes the linear redundancy factor from the \(\varepsilon^2\) contribution. This tighter redundancy scaling does not determine whether the variance deviation itself depends linearly on redundancy. Extending the analysis to approximate 2-designs with \(L>1\) remains open.

\subsection{Inductive bias for solving PDEs}
\label{sec:basis_function}

The previous subsection shows that symmetry averaging can improve coefficient variances by reorganizing Fourier coefficients into symmetrized coefficients. For solving PDEs, the same operation has a second role: it can change the effective hypothesis space and thereby impose a useful inductive bias. This is especially relevant when the target solution is expected to be well represented by a structured basis.

Chebyshev polynomials provide such a basis. They are widely used in spectral methods for solving PDEs and form an optimal set of basis functions in the sense of the uniform \(L_\infty\) norm, enabling effective approximation of smooth functions \cite{Kyriienko_2021}. A hypothesis space spanned by Chebyshev polynomials is therefore a natural target for the class of PDE problems considered here. The question is how to use quantum models to construct such a hypothesis space.

This question is directly relevant to quantum physics-informed neural networks (QPINNs), which use quantum neural networks to represent solutions of PDEs \cite{Berger2025,panichi2025,Farea2025,Setty2025,Chen2026,Lantigua2026,wang2026, Li2026}. The same basic mechanism appears in several related variational quantum approaches to differential equations \cite{Kyriienko_2021,Jaderberg_2024,Hunout_2025,Setty2025}. Many works using discrete-variable architectures adopt the data re-uploading model \cite{Kyriienko_2021,Jaderberg_2024,Hunout_2025,Setty2025,Berger2025,Farea2025}. In these models, the choice of data encoding determines the hypothesis space of the quantum model.

\begin{table*}[t]
\begin{ruledtabular}
\begin{tabular}{llllll}
\textrm{Method} &\textrm{Target}&
\textrm{Construction Cost}&
\textrm{Qubits}&
\textrm{Depth}&
\textrm{Reference}\\ \colrule

Null Space &Equivariant layer& Exponential in \(n\)& Depends& Depends& \cite{Nguyen2024}, Sec.~V.B.1\\

Choi Operator
 &Equivariant layer& Exponential in \(n\)& Depends& Depends& \cite{Nguyen2024}, Sec.~V.B.3\\

Gate Symmetrization  &Equivariant gateset& \(\mathcal O(|\mathcal S|)\)& \(n\)& Depends& \cite{Meyer2023}, Sec.~III.B\\

Deterministic Twirling &Equivariant layer& Negligible& \(n+\lceil \log_2|\mathcal{S}| \rceil\)& \(D+\mathcal{O}(|\mathcal{S}|D_R)\)& \multirow{2}{*}{\cite{Nguyen2024}, App.~D.2}\\

Stochastic Twirling
 &Equivariant layer& Randomized $R(s)$& \(n\)
& \(D+D_R\)& \\

\textbf{Randomized encoding}
 &Invariant full model& Randomized $E_s(\bm{x})$& \(n\)
& \(D\)& This work\\
\textbf{Baseline} &None& None& \(n\)& \(D\)&--\\

\end{tabular}
\end{ruledtabular}
\caption{\label{tab:resource_complexity}
Resource comparison for different approaches to constructing invariant quantum models.
We introduce the terms ``deterministic twirling'' and ``stochastic twirling'' to distinguish the two methods discussed but not named in \cite{Nguyen2024}. The baseline model is the general QFM in Eq.~\eqref{eq:QFM}, with no invariance imposed. The column \emph{Target} indicates the object on which symmetry is imposed.
The column \emph{Construction Cost} reports the offline construction cost for deterministic constructions and the randomized runtime component for stochastic implementations.
The column \emph{Qubits} reports the number of qubits needed for each implementation. The column \emph{Depth} gives the total circuit depth.
Here \(D\) is the depth of the QFM circuit when no invariance is imposed, \(|\mathcal S|\) is the order of the finite symmetry group, and \(D_R\) is the depth required to implement the induced representation \(R(s)\) defined in Eq.~\eqref{eq:eq_encoding} or Eq.~\eqref{eq:twirling}.
``Depends'' indicates that the corresponding resource requirement depends on the particular circuit implementation.
Compared with stochastic twirling, our randomized encoding avoids the additional circuit depth \(D_R\), thereby reducing the required quantum resources.
}
\end{table*}

To incorporate Chebyshev polynomials into the hypothesis space of a quantum model, \citet{Kyriienko_2021} introduced the \textit{Chebyshev feature map}. We refer to the model obtained by directly applying this feature map as the \textit{quantum Chebyshev model (QCM)}. The construction uses the identity \(T_n(x)=\cos(n\arccos x)\) for nonnegative integers \(n\), where \(T_n\) is the Chebyshev polynomial of the first kind, and applies \(\arccos\) componentwise to the spatial coordinates \(\bm{x}=(x_1,x_2,\dots,x_m)\in[-1,1]^m\) before encoding them into the circuit.

For a signed integer frequency \(\omega_j\), define \(T_{\omega_j}(x_j):=\cos(\omega_j\arccos x_j)=T_{|\omega_j|}(x_j)\) and \(S_{\omega_j}(x_j):=\sin(\omega_j\arccos x_j)\). For \(\omega_j>0\), \(S_{\omega_j}(x_j)=\sqrt{1-x_j^2}U_{\omega_j-1}(x_j)\), where \(U_{\omega_j-1}\) is the Chebyshev polynomial of the second kind. Substituting \(\arccos\bm{x}\) into Eq.~\eqref{eq:QFM} gives
\begin{equation}
\label{eq:direct_QCM}
f(\arccos\bm{x},\bm{\theta})=
\sum_{\bm{\omega}\in\Omega}c_{\bm{\omega}}(\bm{\theta})
\prod_{j=1}^m
\left(T_{\omega_j}(x_j)-iS_{\omega_j}(x_j)\right).
\end{equation}

The presence of these non-polynomial terms weakens the intended Chebyshev inductive bias and can introduce numerical instability when computing input derivatives for solving PDEs. Taking the partial derivative of the non-polynomial term \(S_{\omega_j}(x_j)\) with respect to input \(x_j\) for \(\omega_j > 0\) yields
\begin{equation}
\frac{\partial}{\partial x_j} S_{\omega_j}(x_j) = \frac{-\omega_j T_{\omega_j}(x_j)}{\sqrt{1-x_j^2}}.
\end{equation}
As the input approaches the domain boundaries \(x_j \to \pm 1\), the denominator approaches zero, causing the derivative to diverge. These divergent derivatives can destabilize the evaluation of PDE residuals near the domain boundaries.

To formalize an ideal hypothesis space free of these non-polynomial terms, we consider the \emph{truncated multivariate Chebyshev series}, defined as a finite linear combination of pure multivariate Chebyshev polynomials with nonnegative degrees \cite{Trefethen2017}:
\begin{equation}
\sum_{\boldsymbol{\nu}} a_{\boldsymbol{\nu}} \prod_{j=1}^m T_{\nu_j}(x_j),
\qquad
\nu_j\in\mathbb Z^{+}.
\end{equation}
Here \(\mathbb Z^{+}:=\{0,1,2,\ldots\}\) denotes the set of nonnegative integers. We refer to the space spanned by these truncated series as the \emph{Chebyshev space}. To systematically eliminate the non-polynomial terms from the quantum model and project its output into the Chebyshev space, let \(\mathcal S=\{\pm1\}^m\) be the elementary abelian 2-group \(\mathbb Z_2^m\) introduced in Section~\ref{sec:coefficient_analysis}. We can also call \(\{\pm1\}^m\) the sign-flip group. Let \(V:\mathcal S\to O(m)\), \(s\mapsto V_s\), be its standard representation on the input angle vector \(\boldsymbol{u}\), defined by \(V_s[\boldsymbol{u}]=(s_1 u_1,\ldots,s_m u_m)\) for \(s=(s_1,\ldots,s_m)\). The operator \(\mathcal{A}\) acting on a function \(f\) is then given by
\begin{equation}
\label{eq:sign_avg}
(\mathcal{A}f)(\boldsymbol{u},\bm{\theta}) = \frac{1}{|\mathcal S|} \sum_{s \in \mathcal S} f(V_s[\boldsymbol{u}],\bm{\theta}).
\end{equation}
Applying the operator \(\mathcal{A}\) to \(f(\boldsymbol{u},\bm{\theta})\) yields
\begin{align}
(\mathcal{A}f)(\boldsymbol{u},\bm{\theta})
&=\sum_{\boldsymbol{\omega}\in\Omega} c_{\boldsymbol{\omega}}(\bm{\theta}) \frac{1}{|\mathcal S|} \sum_{s \in \mathcal S} e^{-iV_s[\boldsymbol{u}] \cdot \boldsymbol{\omega}}\\
&=\sum_{\boldsymbol{\omega}\in\Omega} c_{\boldsymbol{\omega}}(\bm{\theta}) \prod_{j=1}^m \frac{e^{-i u_j \omega_j}+e^{i u_j \omega_j}}{2}\\
&=\sum_{\boldsymbol{\omega}\in\Omega} c_{\boldsymbol{\omega}}(\bm{\theta}) \prod_{j=1}^m \cos \left(u_j \omega_j\right).
\end{align}
Since \(u_j=\arccos x_j\), we have \(\cos(\omega_j u_j)=T_{|\omega_j|}(x_j)\). Therefore, the averaging eliminates all terms containing sine factors and leaves only products of Chebyshev polynomials. We formalize the resulting averaged model in the following proposition.

\begin{proposition}[Symmetrized Quantum Chebyshev Model]
\label{prop:quantum_chebyshev_model}
Let \(f\) be the hypothesis function defined in Eq.~\eqref{eq:QFM}, and assume that its frequency spectrum satisfies \(\Omega\subset\mathbb Z^m\). Applying the operator \(\mathcal{A}\) with the sign-flip group \(\mathcal S=\{\pm1\}^m\) and its representation \(V_s\) defined above yields a truncated multivariate Chebyshev series:
\begin{equation}
\label{eq:QCM}
(\mathcal{A}f)(\arccos \boldsymbol{x},\bm{\theta}) = \sum_{\boldsymbol{\nu}\in\Omega^{+}} a_{\boldsymbol{\nu}}(\bm{\theta}) \prod_{j=1}^m T_{\nu_j}(x_j),
\end{equation}
where \(\Omega^{+}:=\{|\boldsymbol{\omega}|:\boldsymbol{\omega}\in\Omega\}\), with \(|\boldsymbol{\omega}| = (|\omega_1|, \dots, |\omega_m|)\) denoting the componentwise absolute value. The corresponding coefficients are given by \(a_{\boldsymbol{\nu}}(\bm{\theta}):=\sum_{\{\boldsymbol{\omega}\in\Omega:\,|\boldsymbol{\omega}|=\boldsymbol{\nu}\}} c_{\boldsymbol{\omega}}(\bm{\theta})\).
\end{proposition}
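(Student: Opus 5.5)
The plan is to apply Proposition~\ref{prop:FOD} with \(\mathcal S=\{\pm1\}^m\) and its diagonal representation \(V_s=\operatorname{diag}(s_1,\ldots,s_m)\in O(m)\). The substantive work is then to identify the orbits, orbit frequencies, and normalized orbit basis functions for this particular group. Alternatively, one can continue the direct computation displayed just before the proposition. There, averaging each exponential over sign flips factorizes as \(\prod_j\cos(u_j\omega_j)\), because \(\frac{1}{2^m}\sum_{s}e^{-i\sum_j s_ju_j\omega_j}=\prod_j\frac12\left(e^{-iu_j\omega_j}+e^{iu_j\omega_j}\right)\). I would present the direct computation as the main line and note that it agrees with Proposition~\ref{prop:FOD}.

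\textbf{Step 1: frequency orbits.} The orbit of \(\bm\omega\in\mathbb Z^m\) is \([\bm\omega]=\{(s_1\omega_1,\ldots,s_m\omega_m)\}\), which is exactly the set of integer vectors \(\bm\omega'\) with \(|\bm\omega'|=|\bm\omega|\). Each orbit therefore contains a unique element of \((\mathbb Z^+)^m\), namely \(|\bm\omega|\). I choose this element as the orbit frequency, so that \(\tilde\Omega\) is identified with \(\Omega^+\). This choice is legitimate even if \(|\bm\omega|\notin\Omega\), because Proposition~\ref{prop:FOD} lets us complete orbits with zero coefficients. Some care is needed, since the paper required representatives to lie in \(\Omega\). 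Either the representative convention is relaxed, or, more simply, the sum in Eq.~\eqref{eq:FOD} is indexed by orbits, which correspond bijectively to \(\Omega^+\).

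\textbf{Step 2: coefficients.} By Eq.~\eqref{eq:symmetrized_coefficient}, and because \(c_{\bm\omega}=0\) off \(\Omega\), the symmetrized coefficient is \(a_{\bm\nu}=\sum_{\bm\omega\in\Omega,\,|\bm\omega|=\bm\nu}c_{\bm\omega}\). This matches the formula stated in the proposition.

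\textbf{Step 3: basis functions.} Regrouping the direct computation by orbits gives
\[
(\mathcal Af)(\bm u,\bm\theta)=\sum_{\bm\nu\in\Omega^+}\Big(\sum_{|\bm\omega|=\bm\nu}c_{\bm\omega}\Big)\prod_j\cos(u_j\nu_j),
\]
using the fact that \(\cos\) is even, so \(\cos(u_j\omega_j)=\cos(u_j|\omega_j|)\). In passing, this shows that the normalized orbit basis \(\phi_{[\bm\nu]}\) equals \(\prod_j\cos(u_j\nu_j)\). This is the point where the normalization by \(|[\bm\nu]|=2^{\#\{j:\nu_j\neq0\}}\) must be checked. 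The sign-group average counts each orbit element \(2^{\#\{j:\nu_j=0\}}\) times, so the average over \(\mathcal S\) coincides with the average over the orbit.

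\textbf{Step 4: Chebyshev form.} Substitute \(u_j=\arccos x_j\) with \(x_j\in[-1,1]\). The identity \(\cos(\nu_j\arccos x_j)=T_{\nu_j}(x_j)\) holds for nonnegative integers \(\nu_j\), which is where \(\Omega\subset\mathbb Z^m\) is used. This yields Eq.~\eqref{eq:QCM}.

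The only real obstacle is bookkeeping. One part is the representative and normalization issue in Steps 1 and 3, since orbits with zero coordinates have stabilizers. The other is making sure that distinct \(\bm\omega\) with the same \(|\bm\omega|\) are collected exactly once. Both are handled by working with the direct product formula rather than with orbit cardinalities.
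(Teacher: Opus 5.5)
Your proposal is correct and follows essentially the same route as the paper, which proves the statement by the direct computation preceding it: the sign-flip average factorizes as \(\prod_j\cos(u_j\omega_j)\), evenness of cosine lets the terms be grouped by \(|\boldsymbol{\omega}|\), and \(\cos(\nu_j\arccos x_j)=T_{\nu_j}(x_j)\) for nonnegative integers \(\nu_j\) gives the Chebyshev form. Your extra consistency check with Proposition~\ref{prop:FOD} is accurate but not needed for the direct argument; this covers the normalization \(|[\boldsymbol{\nu}]|=2^{\#\{j:\nu_j\neq0\}}\) and the caveat that the representative \(|\boldsymbol{\omega}|\) need not lie in \(\Omega\).
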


The averaged function is invariant under independent sign flips of the angle variable \(\boldsymbol{u}=\arccos\boldsymbol{x}\), and is therefore even in each coordinate of \(\boldsymbol{u}\). This coordinatewise evenness removes all terms containing sine factors and gives the Chebyshev form in Eq.~\eqref{eq:QCM}.

The symmetry group can also be enlarged from \(\mathcal S\) to the hyperoctahedral group \(B_m\) while preserving this Chebyshev form. This adds coordinate permutation symmetry to the independent sign flips. We examine this extension for \(B_2\) in the QPINN experiments in Section~\ref{sec:experiment}.

\section{Randomized Encoding}
\label{sec:randomized_encoding}

The previous section presented a new Fourier perspective on QFMs by showing how invariance with respect to a symmetry group reorganizes the frequency spectrum and Fourier coefficients. However, existing methods for implementing invariant models face resource limitations. This section first examines these limitations and then introduces randomized encoding as a more resource-efficient alternative.

\subsection{Construction and resource comparison}
\label{sec:construction_resource_comparison}

Lemma~\ref{lemma:invariance} gives a standard route to invariant quantum models: one may combine an equivariant data embedding, equivariant trainable layers, an invariant initial state, and an invariant observable to obtain an invariant hypothesis function. This componentwise construction is powerful, but it also shifts the problem to the design of quantum objects compatible with the symmetry. In particular, the symmetry of the classical data must first be encoded as a group representation on the Hilbert space, and the trainable part of the model must then be restricted to maps or gates that are equivariant with respect to this group representation \cite{Meyer2023}.

Several techniques have been developed to construct such equivariant objects, as summarized in Table~\ref{tab:resource_complexity}. The null space and Choi operator methods construct equivariant layers or channels by solving the corresponding constraints induced by the group representation. These methods are systematic, but their offline cost can scale exponentially with the number of qubits. Gate symmetrization instead starts from a standard gateset and projects its generators onto the commutant of the group representation, thereby producing an equivariant gateset. This avoids solving for a full space of channels, but evaluating this projection can scale as \(\mathcal O(|\mathcal S|)\). Since \(|\mathcal S|\) often grows exponentially with system size, this projection can impose a substantial computational burden.

Another way to impose symmetry is twirling, which means averaging over the group representation. Twirling can be implemented directly at runtime rather than used only as an offline construction principle. For the resource comparison in this work, we call the two implementations at the circuit level discussed by \citet{Nguyen2024} \textit{deterministic twirling} and \textit{stochastic twirling}. These are two circuit implementations of the same twirling operation. Deterministic twirling realizes the group average using an ancilla register of size \(\lceil\log_2|\mathcal S|\rceil\) and controlled implementations of the induced representation \(R(s)\). Since each implementation of \(R(s)\) has depth \(D_R\), the controlled twirling construction can have additional depth \(\mathcal O(|\mathcal S|D_R)\). Stochastic twirling samples a group element \(s\in\mathcal S\) and implements \(R(s)\) once in each sampled circuit, so its additional depth is \(D_R\). Thus, stochastic twirling removes the ancilla register and reduces the twirling depth overhead from \(\mathcal O(|\mathcal S|D_R)\) to \(D_R\). These resource savings make stochastic twirling more practical than deterministic twirling and can be favorable on near-term devices \cite{Nguyen2024}. However, stochastic twirling still adds \(D_R\) to the baseline circuit depth \(D\). This overhead can remain substantial when the symmetry group is large or complex.

\begin{figure}[t]
    \centering
    \includegraphics[width=0.9\linewidth]{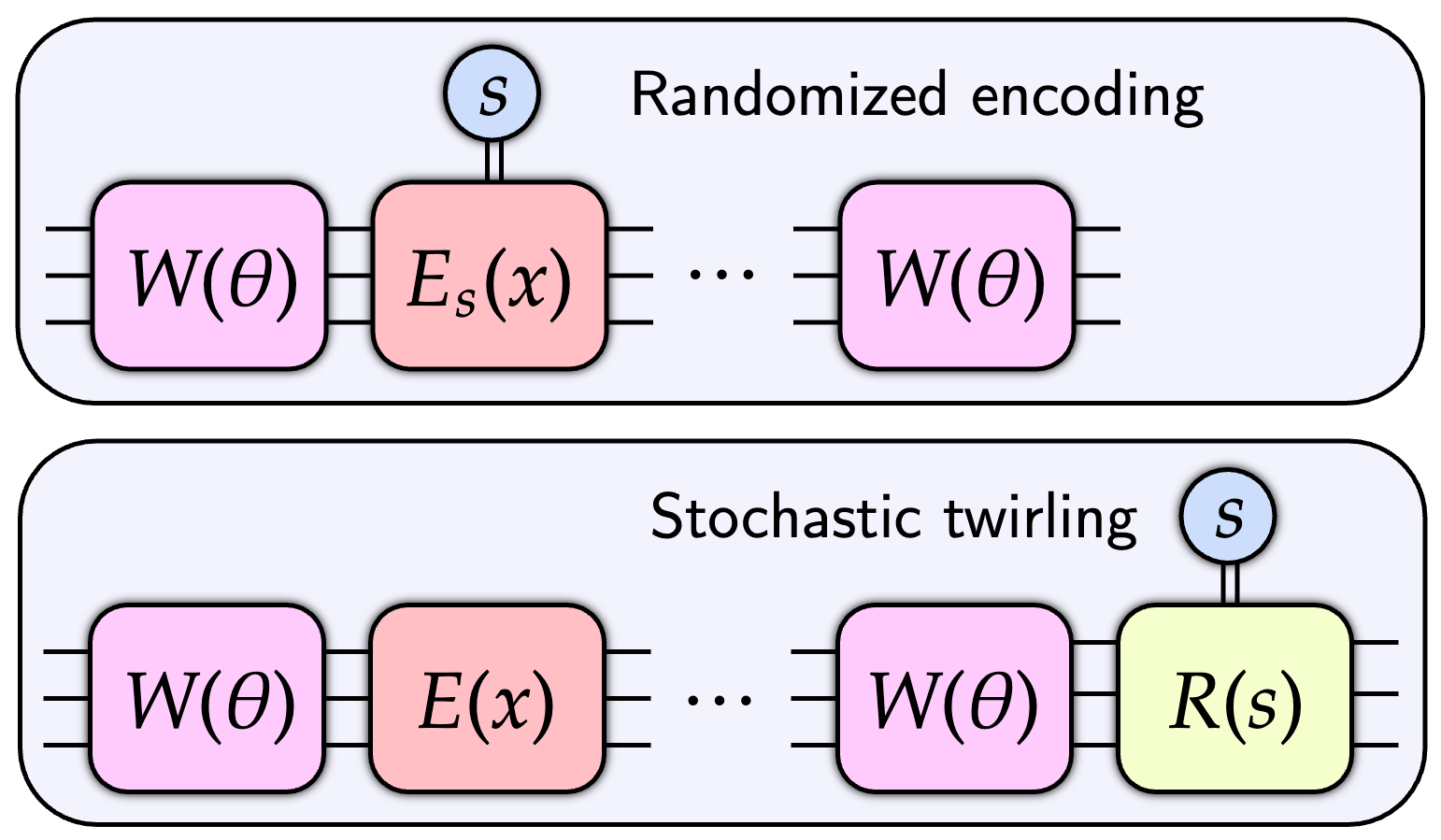}
    \caption{Circuit comparison of randomized encoding (top) and stochastic twirling (bottom) for a sampled symmetry element \(s\). Randomized encoding replaces \(E(\bm{x})\) with the transformed encoding \(E_s(\bm{x})=E(V_s[\bm{x}])\), whereas stochastic twirling keeps \(E(\bm{x})\) and implements the induced representation \(R(s)\) on the quantum device. The relation between \(E(V_s[\bm{x}])\) and \(R(s)\) is given by the equivariance condition in Eq.~\eqref{eq:eq_encoding}.}
    \label{fig:random_encoding}
\end{figure}

To overcome this difficulty, we propose \textit{randomized encoding}, which keeps the input side form of the equivariance relation. For a sampled \(s\in\mathcal S\), it applies the encoding layer
\[
E_s(\bm{x}):=E(V_s[\bm{x}]).
\]
Figure~\ref{fig:random_encoding} compares randomized encoding with stochastic twirling. Randomized encoding directly applies \(E_s(\bm{x})\), as shown in the top circuit. For stochastic twirling, Eq.~\eqref{eq:eq_encoding} gives \(E_s(\bm{x})=R(s)E(\bm{x})R(s)^\dagger\). Since the trainable layers commute with \(R(s)\), these representation operations can be reduced to \(R(s)^\dagger\) at the beginning of the circuit and \(R(s)\) at the end. Because \(R(s)^\dagger|0\rangle=|0\rangle\) by Eq.~\eqref{eq:eq_init}, the \(R(s)^\dagger\) at the beginning need not be implemented, leaving only the final \(R(s)\) shown in the bottom circuit.

For the theoretical analysis, the same circuit implementation can be viewed as first replacing \(\bm{x}\) by \(V_s[\bm{x}]\) and then applying \(E\). Because \(E_s(\bm{x})\) has the same circuit structure as \(E(\bm{x})\), randomized encoding keeps the circuit depth at \(D\) and leaves the trainable ansatz and observable unchanged. Since invariance is imposed by averaging the full predictions \(f(V_s[\bm{x}],\bm{\theta})\), randomized encoding does not require the trainable layers to commute with \(R(s)\), unlike componentwise equivariant constructions \cite{Meyer2023}.

\subsection{Sampling cost}
\label{sec:sampling_cost}

Quantum expectation values are estimated from repeated measurements. For the model considered here, repeatedly preparing \(\rho(\bm{x},\bm{\theta})\), measuring \(\hat O\), and averaging the measurement outcomes estimates \(f(\bm{x},\bm{\theta})\). Under symmetry averaging, the quantity to be estimated becomes the averaged predictor \((\mathcal A f)(\bm{x},\bm{\theta})\) defined in Eq.~\eqref{eq:group_op_def}. Throughout this section, the input \(\bm{x}\) and model parameters \(\bm{\theta}\) are fixed, and we consider ideal circuit devices. For a sampled symmetry element \(s\in\mathcal S\), let \(\rho_s(\bm{x},\bm{\theta}):=\rho(V_s[\bm{x}],\bm{\theta})\) denote the state prepared for the transformed input \(V_s[\bm{x}]\). The averaged state and averaged predictor are
\begin{align}
\bar\rho(\bm{x},\bm{\theta})
&:=
\frac{1}{|\mathcal S|}
\sum_{s\in\mathcal S}
\rho_s(\bm{x},\bm{\theta}),\\
(\mathcal A f)(\bm{x},\bm{\theta})
&=
\operatorname{Tr}\!\left(\hat O\bar\rho(\bm{x},\bm{\theta})\right).
\end{align}
We use the Pauli decomposition \(\hat O=\sum_i h_i O^{(i)}\) from Eq.~\eqref{eq:re-uploading_function}. Let \(\mathcal I=\{i:h_i\neq0\}\), and let \(I\) be the Pauli index random variable with \(\Pr[I=i]=q_i=|h_i|/\|h\|_1\). Let \(S\) be the symmetry random variable, sampled uniformly from \(\mathcal S\). In a realized shot, we denote the sampled values by lowercase \(s\) and \(i\), and the observed measurement outcome by \(z^{(i)}_{s,\bm{x}}\in\{-1,1\}\). The corresponding measurement random variable for fixed \(s\) and \(i\) is denoted by \(Z^{(i)}_{s,\bm{x}}\), with expectation
\begin{equation}
\mathbb E
\left[
Z^{(i)}_{s,\bm{x}}
\right]
=
\operatorname{Tr}\!\left(
O^{(i)}
\rho_s(\bm{x},\bm{\theta})
\right).
\end{equation}
The corresponding single shot estimator \(Y_{\bm{x}}\) is defined through its conditional distribution. For any \(s\in\mathcal S\) and \(i\in\mathcal I\),
\begin{equation}
\label{eq:randomized_encoding_single_shot}
\left(
Y_{\bm{x}}
\mid
S=s,
I=i
\right)
\overset{d}{=}
\frac{h_i}{q_i}
Z^{(i)}_{s,\bm{x}},
\end{equation}
where \(\overset{d}{=}\) denotes equality in distribution.
For a realized shot with samples \(s\), \(i\), and observed outcome \(z^{(i)}_{s,\bm{x}}\), the realized estimator value is $y_{\bm{x}}
=\frac{h_i}{q_i}z^{(i)}_{s,\bm{x}}$. The empirical estimator over \(N\) independent shots is the random variable
\begin{equation}
\label{eq:randomized_encoding_empirical_mean}
\widehat{\mathcal A f}_N(\bm{x},\bm{\theta})
:=
\frac{1}{N}
\sum_{k=1}^{N}
Y_{\bm{x}}^{(k)}.
\end{equation}
For an experimental record with realized values \(y_{\bm{x}}^{(1)},\dots,y_{\bm{x}}^{(N)}\), the numerical estimate is obtained by replacing \(Y_{\bm{x}}^{(k)}\) in Eq.~\eqref{eq:randomized_encoding_empirical_mean} with \(y_{\bm{x}}^{(k)}\).
This is the symmetry sampling estimator analyzed below.

\begin{proposition}[Symmetry sampling shot complexity]
\label{prop:randomized_encoding_shot_complexity}
The estimator in Eq.~\eqref{eq:randomized_encoding_empirical_mean} is unbiased for \((\mathcal A f)(\bm{x},\bm{\theta})\). Its single shot variance is
\begin{equation}
\label{eq:randomized_encoding_variance}
\operatorname{Var}(Y_{\bm{x}})
=
\|h\|_1^2
-
\left((\mathcal A f)(\bm{x},\bm{\theta})\right)^2.
\end{equation}
Moreover, for any \(\epsilon>0\) and \(\delta\in(0,1)\),
\begin{equation}
\label{eq:randomized_encoding_hoeffding}
\Pr\!\left[
\left|
\widehat{\mathcal A f}_N(\bm{x},\bm{\theta})
-
(\mathcal A f)(\bm{x},\bm{\theta})
\right|
\leq
\epsilon
\right]
\geq
1-\delta
\end{equation}
whenever
\begin{equation}
\label{eq:randomized_encoding_sample_bound}
N
\geq
\frac{2\|h\|_1^2}{\epsilon^2}
\log\!\left(\frac{2}{\delta}\right).
\end{equation}
\end{proposition}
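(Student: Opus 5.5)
The argument conditions on the sampled symmetry element and Pauli index, then uses the tower property to compute the mean and second moment of the single-shot estimator, and finishes with Hoeffding's inequality for bounded independent variables.

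\emph{Unbiasedness.} Condition on \(S=s\) and \(I=i\). By Eq.~\eqref{eq:randomized_encoding_single_shot} and Born's rule,
\[
\mathbb E[Y_{\bm{x}}\mid S=s,I=i]=\frac{h_i}{q_i}\operatorname{Tr}\!\left(O^{(i)}\rho_s(\bm{x},\bm{\theta})\right).
\]
The index \(I\) is drawn independently of \(S\) with \(\Pr[I=i]=q_i\), so averaging over \(I\) cancels the factor \(q_i\). This gives
\[
\mathbb E[Y_{\bm{x}}\mid S=s]=\sum_{i\in\mathcal I}h_i\operatorname{Tr}\!\left(O^{(i)}\rho_s\right)=\operatorname{Tr}\!\left(\hat O\rho_s\right)=f(V_s[\bm{x}],\bm{\theta}).
\]
Terms with \(h_i=0\) contribute nothing, so restricting the sum to \(\mathcal I\) is harmless. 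Averaging uniformly over \(S\) then yields \((\mathcal A f)(\bm{x},\bm{\theta})\) by Eq.~\eqref{eq:group_op_def}. Linearity of expectation extends this to the empirical mean in Eq.~\eqref{eq:randomized_encoding_empirical_mean}.

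\emph{Variance.} Every realized value has the form \(y=\tfrac{h_i}{q_i}z\) with \(z\in\{\pm1\}\). Since \(q_i=|h_i|/\|h\|_1\), this equals \(\operatorname{sgn}(h_i)\,\|h\|_1\,z\). Hence \(Y_{\bm{x}}\in\{-\|h\|_1,\|h\|_1\}\) almost surely, and \(Y_{\bm{x}}^2=\|h\|_1^2\) deterministically. It follows that
\[
\operatorname{Var}(Y_{\bm{x}})=\mathbb E[Y_{\bm{x}}^2]-\left(\mathbb E[Y_{\bm{x}}]\right)^2=\|h\|_1^2-\left((\mathcal A f)(\bm{x},\bm{\theta})\right)^2,
\]
which is Eq.~\eqref{eq:randomized_encoding_variance}.

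\emph{Concentration.} The shots are independent. Each \(Y_{\bm{x}}^{(k)}\) lies in an interval of length \(2\|h\|_1\) and has mean \((\mathcal A f)(\bm{x},\bm{\theta})\). Hoeffding's inequality therefore gives
\[
\Pr\!\left[\left|\widehat{\mathcal A f}_N-\mathcal A f\right|>\epsilon\right]\le 2\exp\!\left(-\frac{2N^2\epsilon^2}{N(2\|h\|_1)^2}\right)=2\exp\!\left(-\frac{N\epsilon^2}{2\|h\|_1^2}\right).
\]
Setting the right-hand side \(\le\delta\) gives exactly the sample-size condition in Eq.~\eqref{eq:randomized_encoding_sample_bound}.

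The main point needing care is the modelling of randomness. One must check that the triples \((S,I,Z)\) are independent and identically distributed across shots, and that the conditional outcome law depends only on \((s,i)\). Given that, the bound is immediate.
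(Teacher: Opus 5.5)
Your proposal is correct and follows essentially the same route as the paper. You use the tower property over \(S\) and \(I\) for unbiasedness and the fact that \(Y_{\bm{x}}^2=\|h\|_1^2\) deterministically for the second moment (the paper obtains this by summing \(h_i^2/q_i\)). You then apply Hoeffding's inequality with range \(2\|h\|_1\), which gives the stated sample bound.
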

The proof is given in Appendix~\ref{app:randomized_encoding_shot_complexity}. The Hoeffding bound in Eq.~\eqref{eq:randomized_encoding_sample_bound} is an absolute worst case guarantee and should not be interpreted as saying that the symmetry sampling estimator always requires the same number of shots as the baseline estimator. A more informative comparison is obtained from the variance. Since the estimator is unbiased, the central limit theorem gives the asymptotic Gaussian distribution
\begin{equation}
\widehat{\mathcal A f}_N(\bm{x},\bm{\theta})
\sim
\mathcal N\!\left(
(\mathcal A f)(\bm{x},\bm{\theta}),
\frac{\operatorname{Var}(Y_{\bm{x}})}{N}
\right)
\end{equation}
for sufficiently large \(N\). Therefore, in the asymptotic regime and at a fixed input, the number of shots required to reach a given estimation error is proportional to the single shot variance. We therefore compare \(\operatorname{Var}(Y_{\bm{x}})\) with the corresponding baseline variance.

For the comparison averaged over the input domain, we say that the representation elements \(V_s\), \(s\in\mathcal S\), are \textit{uniformity-preserving} on \(\mathcal X\) if \(V_s[\bm{x}]\in\mathcal X\) for every \(\bm{x}\in\mathcal X\) and
\begin{equation}
\int_{\mathcal X} g(V_s[\bm{x}])\,\mathrm d\bm{x}
=
\int_{\mathcal X} g(\bm{x})\,\mathrm d\bm{x}
\end{equation}
for every integrable function \(g\) and every \(s\in\mathcal S\). Let \(K_{\bm{x}}\) denote the baseline estimator without symmetry sampling. When \(V\) is the trivial representation, \(V_s[\bm{x}]=\bm{x}\) for every \(s\in\mathcal S\), so \(Y_{\bm{x}}\) reduces to \(K_{\bm{x}}\) and \((\mathcal A f)(\bm{x},\bm{\theta})=f(\bm{x},\bm{\theta})\). Equation~\eqref{eq:randomized_encoding_variance} then gives
\begin{equation}
\label{eq:baseline_single_shot_variance}
\operatorname{Var}(K_{\bm{x}})
=
\|h\|_1^2
-
f(\bm{x},\bm{\theta})^2.
\end{equation}
Together with Eq.~\eqref{eq:randomized_encoding_variance}, this gives the following variance comparison.

\begin{proposition}[Variance comparison]
\label{prop:variance_comparison}
Let \(Y_{\bm{x}}\) be the symmetry sampling estimator in Eq.~\eqref{eq:randomized_encoding_single_shot}. Then \(Y_{\bm{x}}\) and the baseline estimator \(K_{\bm{x}}\) satisfy
\begin{equation}
\label{eq:variance_difference_sampling_cost}
\operatorname{Var}(Y_{\bm{x}})
-
\operatorname{Var}(K_{\bm{x}})
=
f(\bm{x},\bm{\theta})^2
-
\left((\mathcal A f)(\bm{x},\bm{\theta})\right)^2.
\end{equation}
Consequently, for a fixed input \(\bm{x}\),
\begin{equation}
\operatorname{Var}(Y_{\bm{x}})
<
\operatorname{Var}(K_{\bm{x}})
\Longleftrightarrow
\left|
(\mathcal A f)(\bm{x},\bm{\theta})
\right|
>
\left|
f(\bm{x},\bm{\theta})
\right|.
\end{equation}
If \(\bm{x}\) is uniformly distributed over \(\mathcal X\) and the representation elements \(V_s\) are uniformity-preserving on \(\mathcal X\), then
\begin{equation}
\label{eq:average_variance_inflation}
\mathbb E_{\bm{x}}
\left[
\operatorname{Var}(Y_{\bm{x}})
\right]
\geq
\mathbb E_{\bm{x}}
\left[
\operatorname{Var}(K_{\bm{x}})
\right].
\end{equation}
\end{proposition}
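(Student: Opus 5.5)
The plan is to read off the first two claims directly from Proposition~\ref{prop:randomized_encoding_shot_complexity}, and then prove the averaged inequality with Jensen's inequality combined with the uniformity-preserving assumption.

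\textbf{Step 1 (pointwise difference).} Subtracting Eq.~\eqref{eq:baseline_single_shot_variance} from Eq.~\eqref{eq:randomized_encoding_variance}, the \(\|h\|_1^2\) terms cancel. This leaves \(f(\bm{x},\bm{\theta})^2-((\mathcal A f)(\bm{x},\bm{\theta}))^2\), which is Eq.~\eqref{eq:variance_difference_sampling_cost}.

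For the equivalence, note that \(f\) is real-valued, since it is the expectation of a Hermitian observable. Hence \(\mathcal A f\) is real as well. The difference is negative exactly when \(((\mathcal A f))^2>f^2\), that is, when \(|\mathcal A f|>|f|\).

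\textbf{Step 2 (integrated inequality).} Take the expectation over uniform \(\bm{x}\in\mathcal X\) of the difference in Step~1. It suffices to show
\[
\mathbb E_{\bm{x}}[f(\bm{x},\bm{\theta})^2]\ge \mathbb E_{\bm{x}}[((\mathcal A f)(\bm{x},\bm{\theta}))^2].
\]

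By Jensen's inequality (convexity of \(t\mapsto t^2\)) applied to the uniform average over \(\mathcal S\), for each \(\bm{x}\),
\[
((\mathcal A f)(\bm{x},\bm{\theta}))^2\le \frac{1}{|\mathcal S|}\sum_{s\in\mathcal S} f(V_s[\bm{x}],\bm{\theta})^2 .
\]
Next, integrate over \(\mathcal X\) and apply the uniformity-preserving property with \(g=f(\cdot,\bm{\theta})^2\). This function is integrable because it is bounded by \(\|h\|_1^2\) and continuous. The property gives \(\int_{\mathcal X} f(V_s[\bm{x}],\bm{\theta})^2\,\mathrm d\bm{x}=\int_{\mathcal X} f(\bm{x},\bm{\theta})^2\,\mathrm d\bm{x}\) for every \(s\). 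The right-hand side therefore averages to \(\mathbb E_{\bm{x}}[f^2]\), which yields Eq.~\eqref{eq:average_variance_inflation}.

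\textbf{Expected obstacle.} There is no real obstacle. The only points needing care are minor.
\begin{itemize}
\item Realness of \(f\) is needed for the absolute-value equivalence.
\item Dividing by the volume of \(\mathcal X\) to form the uniform expectation requires \(\mathcal X\) to have finite positive measure. This is implicit in "uniformly distributed".
\end{itemize}
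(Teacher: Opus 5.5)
Your proposal is correct and follows essentially the same route as the paper. Like the paper, you subtract the two single-shot variance formulas so the $\|h\|_1^2$ terms cancel, then apply convexity of $t\mapsto t^2$ to the group average and use the uniformity-preserving property with $g=f(\cdot,\bm{\theta})^2$ to get $\mathbb E_{\bm{x}}[((\mathcal A f)(\bm{x},\bm{\theta}))^2]\le \mathbb E_{\bm{x}}[f(\bm{x},\bm{\theta})^2]$; your remarks on realness of $f$ and finite measure of $\mathcal X$ are details the paper leaves implicit.
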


The proof of Proposition~\ref{prop:variance_comparison} is given in Appendix~\ref{app:variance_comparison}. At a fixed input \(\bm{x}\), Eq.~\eqref{eq:variance_difference_sampling_cost} shows that symmetry averaging can either increase or decrease the single shot variance. If the averaged predictor has smaller magnitude than the baseline prediction, the symmetry sampling estimator has larger variance. If the averaged predictor has larger magnitude, the symmetry sampling estimator has smaller variance. The two variances are equal when the prediction magnitudes are equal. Although either ordering can occur at an individual input, the comparison after averaging over the input domain is different. When \(\bm{x}\) is uniformly distributed over \(\mathcal X\) and the representation elements \(V_s\) are uniformity-preserving, Eq.~\eqref{eq:average_variance_inflation} shows that the mean variance is no smaller for the symmetry sampling estimator than for the baseline estimator. Under these conditions, variance reductions at some inputs are therefore offset by equal or larger variance increases at other inputs.

In the asymptotic normal regime, the corresponding change in the required number of shots at a fixed input and confidence level is quantified by the variance ratio
\begin{equation}
\label{eq:shot_ratio_sampling_cost}
\frac{N_{\mathrm{sym}}}{N_{\mathrm{base}}}
\approx
\frac{
\|h\|_1^2
-
\left((\mathcal A f)(\bm{x},\bm{\theta})\right)^2
}{
\|h\|_1^2
-
f(\bm{x},\bm{\theta})^2
},
\end{equation}
when the denominator is nonzero. Here \(N_{\mathrm{sym}}\) and \(N_{\mathrm{base}}\) denote the shot counts required by the symmetry sampling and baseline estimators, respectively. This conclusion also applies to the stochastic twirling implementation in \cite{Nguyen2024} when it is used to estimate the same averaged predictor. Randomized encoding and stochastic twirling therefore require the same number of shots in this setting, while stochastic twirling retains the additional circuit depth \(D_R\). Randomized encoding offers a depth advantage, but neither implementation is guaranteed to require fewer shots than the baseline estimator.

The Hoeffding guarantee in Eqs.~\eqref{eq:randomized_encoding_hoeffding} and \eqref{eq:randomized_encoding_sample_bound} does not depend on \(|\mathcal S|\). Randomized encoding can therefore sample one transformation per shot from a very large symmetry group without enumerating its elements, and the worst case shot bound does not grow with the size of the group. More generally, randomized encoding does not require the sampled input transformations to form a group. A general set of transformations can be used instead, although the resulting averaged predictor is not guaranteed to be invariant. Such constructions could be used to estimate more complex quantities. For example, they could be combined with parameter shift rules to estimate derivatives of the averaged predictor directly. We leave this direction for future work.

\subsection{Hardware requirements and feasibility}
\label{sec:physical_implementation}

The resource advantage of randomized encoding is realized only if the sampled encoding layer can be selected and applied without becoming a new bottleneck. In the ideal implementation at each shot, each measurement shot samples \(s\in\mathcal S\) and applies the corresponding layer \(E_s(\bm{x})\). Thus the relevant hardware requirement is not the implementation of an additional quantum operation, but the ability to update the classical controls that define \(E_s\) at a rate comparable to the measurement repetition rate. If this update is slow, randomized encoding can still be implemented in batches, but the runtime advantage over conventional twirling methods may be reduced.

Platforms in which encoding parameters are controlled directly by classical signals are natural candidates for this implementation. Continuous variable photonic processors provide one example: encoding unitaries that depend on the input can be realized through phase shifts, displacements, or other electro-optic controls, so the sampled layer \(E_s(\bm{x})\) can be applied by changing the corresponding classical drive signals rather than by changing the optical circuit. This is consistent with prior photonic experiments in which the input configuration is randomized from shot to shot \cite{Marco2015}.

More generally, randomized encoding requires fast classical reconfiguration rather than a specific hardware platform. Several recent control protocols demonstrate that updates at each shot or low latency updates of quantum control parameters are becoming experimentally realistic. Examples include randomized unitary rotations for readout error mitigation \cite{Andrew2023}, reproducible phase resets for changes of the measurement direction from shot to shot \cite{Ding2024}, fast feedback calibration protocols that update control parameters after each measurement outcome \cite{Alicia2025}, and control parameter updates from shot to shot with low latency FPGA/cryo-CMOS control \cite{Miller2025}. These results provide evidence that quantum hardware can be updated or reconfigured at or near the shot rate, thereby supporting the hardware feasibility of randomized encoding.

\section{Numerical experiments}
\label{sec:experiment}
To illustrate how symmetry averaging affects quantum machine learning models, we use quantum physics-informed neural networks (QPINNs) as the experimental framework for solving PDEs. QPINNs combine the physics-informed neural network (PINN) training principle with parameterized quantum circuits. In a classical PINN, a neural network represents the unknown solution and is trained by minimizing the differential equation residual and the boundary condition error at collocation points \cite{raissi2019physics}. In a QPINN, the solution model is built from a quantum circuit expectation value, and the same residual loss is evaluated through derivatives of the circuit output. Many works on variational quantum algorithms and QPINNs have used the same idea of residual minimization to solve PDEs \cite{Kyriienko_2021,Jaderberg_2024,Hunout_2025,Setty2025,Berger2025,panichi2025,Farea2025,Chen2026,Lantigua2026,wang2026, Li2026}.

Our quantum models used in the QPINNs are built from a two-qubit QFM with two layers. Let \(\bm z=(z_1,z_2)\) denote the input after preprocessing. Specializing Eq.~\eqref{eq:reupload} to \(L=2\), the implemented QFM unitary is
\begin{equation}
\begin{aligned}
U_{\mathrm{QFM}}(\bm z,\bm\theta)
&=
W^{(3)}(\bm\theta)
E(\bm z)W^{(2)}(\bm\theta)
E(\bm z)W^{(1)}(\bm\theta),\\
E(\bm z)
&=
\bigotimes_{j=1}^{2}R_X(\beta_j z_j),
\qquad
\beta_j=3^{j-1}.
\end{aligned}
\end{equation}
Here \(E(\bm z)\) denotes the encoding layer, and each \(W^{(\ell)}\) is a trainable layer built from the Strongly Entangling Ansatz \cite{Schuld2020}. The corresponding QFM output is the expectation value of the local observable \(\hat O_{\mathrm L}\) defined in Eq.~\eqref{eq:local_observable},
\begin{equation}
f_{\mathrm{QFM}}(\bm z,\bm\theta)
=
\left\langle00\right|
U_{\mathrm{QFM}}^\dagger(\bm z,\bm\theta)
\hat O_{\mathrm L}
U_{\mathrm{QFM}}(\bm z,\bm\theta)
\left|00\right\rangle.
\end{equation}
The choice \(\bm z=\bm x\) gives the QFM input, while \(\bm z=\arccos\bm x\) gives the QCM hypothesis function defined in Eq.~\eqref{eq:direct_QCM} and introduced as the Chebyshev feature map in \cite{Kyriienko_2021}.

In our experiments, we use the 2-dimensional hyperoctahedral group \(B_2\) to represent the symmetry in the average operator \(\mathcal A\) from Eq.~\eqref{eq:group_op_def}. The group \(B_2\) consists of coordinate permutations and independent sign flips. We write an element as \(s=(\bm{\epsilon},\pi)\in B_2\) with \(|B_2|=8\), where \(\pi\in S_2\) and \(\bm{\epsilon}=(\epsilon_1,\epsilon_2)\in\{\pm1\}^2\). Here \(S_2\) denotes the symmetric group on two elements, and \(\bm{\epsilon}\) records the two sign choices. For any vector \(\bm{z}=(z_1,z_2)\), we write \(V_s[\bm{z}]=(\epsilon_1 z_{\pi^{-1}(1)},\epsilon_2 z_{\pi^{-1}(2)})\) (Eq.~\eqref{eq:B2_Vs}). For more information on the hyperoctahedral group \(B_2\), see the discussion surrounding Eq.~\eqref{eq:B2_Vs} and Eq.~\eqref{eq:orbit-stabilizer}.

The experimental choice \((\beta_1,\beta_2)=(1,3)\) gives a frequency spectrum that is not closed under the representation \(V:B_2\to O(2)\) before averaging. Applying the \(B_2\) average operator \(\mathcal A\) completes the part of the original spectrum in each frequency orbit to the full orbit and expresses the averaged model in the corresponding orbit basis. Consequently, the \(B_2\)-QCM and the QCM without symmetry averaging have different hypothesis spaces.

Table~\ref{tab:experiment_hypothesis_functions} summarizes the four quantum hypothesis functions used in the experiments. These are the QFM and QCM baselines, together with the \(B_2\)-QFM and \(B_2\)-QCM obtained by applying \(\mathcal A\) over \(B_2\). We also evaluate randomized \(B_2\)-QFM and randomized \(B_2\)-QCM models. Instead of enumerating all elements of \(B_2\), randomized encoding samples group elements and averages the circuit outputs evaluated at the corresponding transformed inputs. Because the group elements are sampled uniformly, this sample average is an unbiased estimator of the corresponding exact \(B_2\)-averaged model. The randomized models therefore reduce the number of circuit evaluations required for each symmetry average by replacing full group enumeration with a finite sampled average, at the cost of stochastic estimation error. A fully connected PINN is included as the classical baseline.


\begin{table}[t]
\scriptsize
\renewcommand{\arraystretch}{1.35}
\begin{ruledtabular}
\begin{tabular}{@{}lll@{}}
\textrm{Model} & \textrm{Hypothesis function} & \textrm{Ref.}\\ \colrule
QFM
&
\(f_{\mathrm{QFM}}(\bm{x},\bm{\theta})=\sum_{\bm{\omega}\in\Omega}c_{\bm{\omega}}(\bm{\theta})e^{-i\bm{x}\cdot\bm{\omega}}\)
&
Eq.~\eqref{eq:QFM}
\\
QCM
&
\(f_{\mathrm{QCM}}(\bm{x},\bm{\theta})=f_{\mathrm{QFM}}(\arccos\bm{x},\bm{\theta})\)
&
Eq.~\eqref{eq:direct_QCM}
\\
\(B_2\)-QFM
&
\(\begin{aligned}
f_{B_2\text{-}\mathrm{QFM}}(\bm{x},\bm{\theta})=\frac{1}{|B_2|}\sum_{s\in B_2}f_{\mathrm{QFM}}(V_s[\bm{x}],\bm{\theta})
\end{aligned}\)
&
Eq.~\eqref{eq:group_op_def}
\\
\(B_2\)-QCM
&
\(\begin{aligned}
f_{B_2\text{-}\mathrm{QCM}}(\bm{x},\bm{\theta})=\frac{1}{|B_2|}\sum_{s\in B_2}f_{\mathrm{QFM}}(V_s[\arccos\bm{x}],\bm{\theta})
\end{aligned}\)
&
Eq.~\eqref{eq:sign_avg}
\end{tabular}
\end{ruledtabular}
\caption{\label{tab:experiment_hypothesis_functions}
Quantum hypothesis functions used in the QPINN experiments. Here \(B_2\) is the two-dimensional hyperoctahedral group, and \(V_s\) is the representation element associated with \(s\in B_2\), acting by coordinate permutations and independent sign flips.}
\end{table}

In both PDE benchmarks, each training step samples \(50\) interior collocation points from \([-0.96,0.96]^2\) and \(50\) boundary collocation points from \(\partial[-1,1]^2\). We use Adam for \(300\) steps with an initial learning rate of \(10^{-1}\) and a multiplicative decay factor of \(0.99\) at each step. The randomized \(B_2\) models use \(N_s=6\) group elements per forward evaluation, sampled without replacement from \(|B_2|=8\). In each forward evaluation, the sampled subset is shared by all input points in the batch. The soft-constrained runs use boundary weight \(\gamma=10\). All derivatives required for the PDE residuals are evaluated by automatic differentiation through the simulator, and the parameter gradient norm is clipped to \(5\) during optimization. GPT-5.5 Pro was used to assist in implementing the code for the numerical experiments. The authors specified all algorithms and numerical settings, manually reviewed and tested the AI-assisted code, and verified the correctness of the reported results by comparing them with numerical reference solutions.

For each benchmark, every model and constraint setting is evaluated over ten independent runs. The loss curves report the geometric mean over these runs, and the shaded bands show one standard deviation in log space. For each model and constraint setting, the representative run shown in the solution comparison figures is selected by the lowest final loss value. Each panel title reports the relative \(L^2\) error of the selected run. The final relative \(L^2\) error statistics over all ten runs are reported in Table~\ref{tab:final_relative_l2_errors}.

\subsection{Screened Poisson equation}

\begin{figure*}[t]
    \centering
    \includegraphics[width=0.98\textwidth]{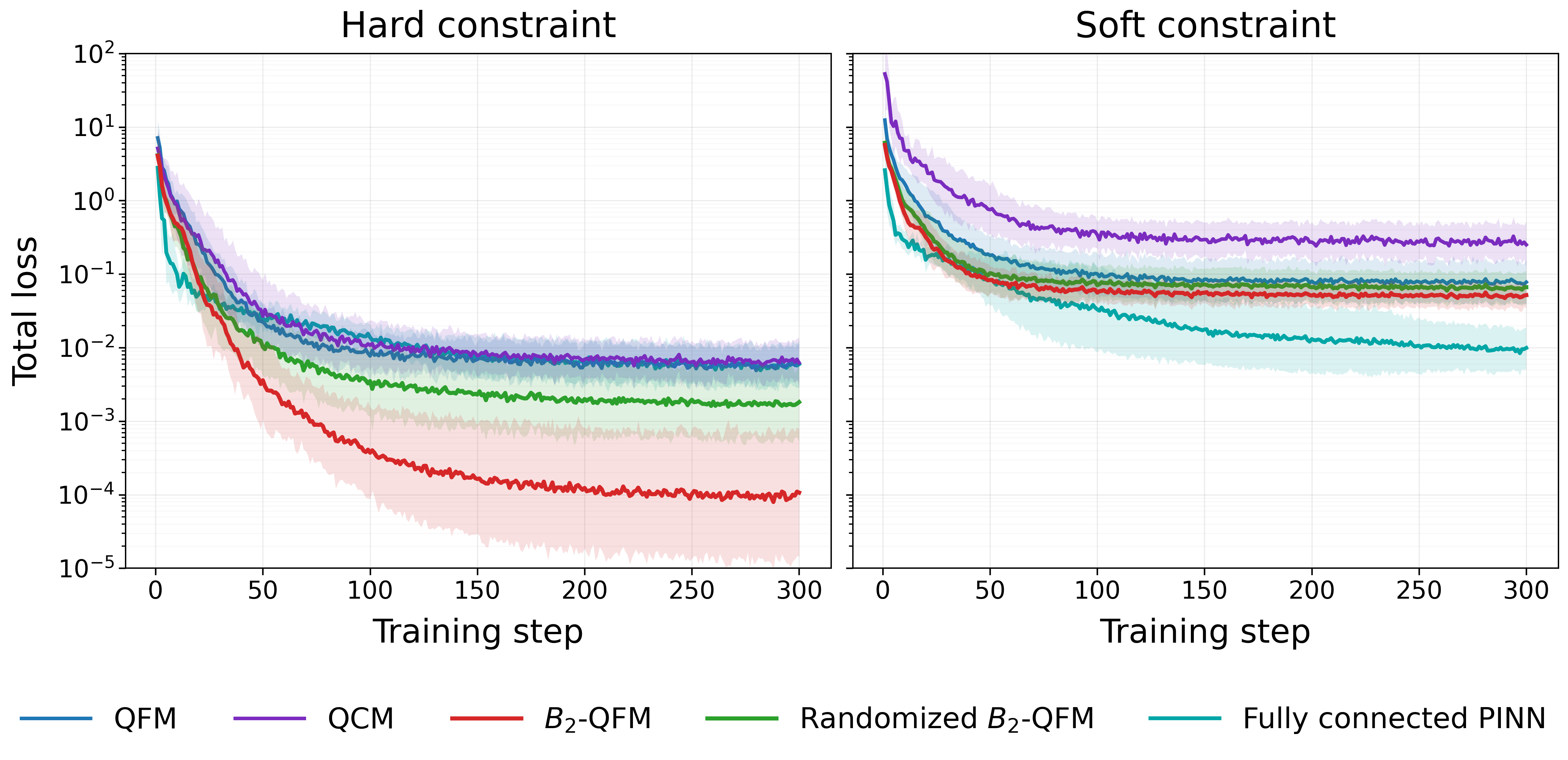}
    \caption{Training dynamics for the screened Poisson benchmark. Hard and soft constraints differ in whether the model is forced to satisfy the boundary condition exactly, leading to different definitions of the plotted loss. Curves show the geometric mean over ten independent runs, and shaded bands indicate one standard deviation in log space. Among all models, the \(B_2\)-QFM with the hard constraint has the lowest final loss.}
    \label{fig:qpinn_screened_poisson_training_losses}
\end{figure*}

\begin{figure*}[t]
    \centering
    \includegraphics[width=0.98\textwidth]{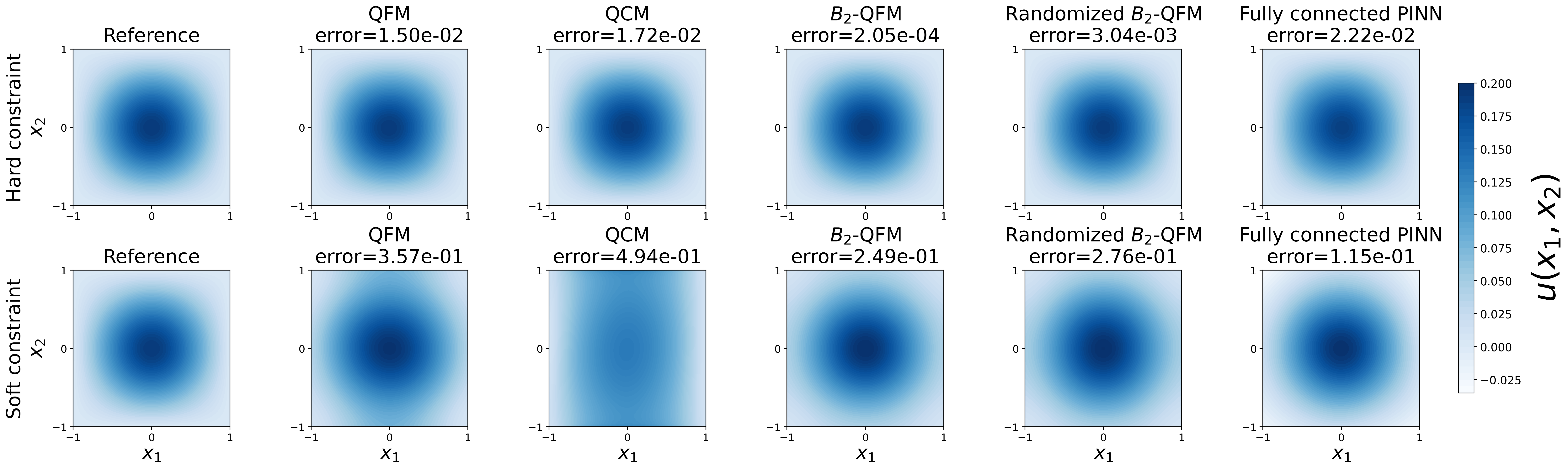}
    \caption{Solution comparison for the screened Poisson benchmark. Rows show results obtained by imposing the boundary condition as a hard or soft constraint. Columns show the analytical solution, QFM, QCM, \(B_2\)-QFM, randomized \(B_2\)-QFM, and fully connected PINN. For the randomized \(B_2\)-QFM, each forward evaluation independently samples six of the eight group elements without replacement. Model panel titles report the relative \(L^2\) error of the representative run. For each model and constraint setting, this run is selected by the lowest final loss value among ten independent runs. Among all models, the \(B_2\)-QFM with the hard constraint has the lowest error.}
    \label{fig:qpinn_screened_poisson_solutions}
\end{figure*}

We first use a two-dimensional screened Poisson equation as the QPINN benchmark. Consider the domain \(\Omega=(-1,1)^2\), with closure \(\bar\Omega=[-1,1]^2\) and boundary \(\partial\Omega=\bar\Omega\setminus\Omega\).
Fix \(\lambda>0\) and a source function
\(q:\Omega\to\mathbb R\). The target function \(u:\bar\Omega\to\mathbb R\) solves
\begin{equation}
\begin{aligned}
    \left[-\Delta+\lambda\right]u(\bm{x})
    &=
    q(\bm{x}),
    \quad
    \bm{x}\in\Omega,\\
    u(\bm{x})
    &=
    0,
    \quad
    \bm{x}\in\partial\Omega.
\end{aligned}
\end{equation}
For a model prediction \(u_{\bm{\theta}}\), the screened Poisson residual is
\begin{align}
r_{\bm{\theta}}^{\mathrm{SP}}(\bm{x})
&=
-\frac{\partial^2 u_{\bm{\theta}}}{\partial x_1^2}(\bm{x})
-\frac{\partial^2 u_{\bm{\theta}}}{\partial x_2^2}(\bm{x})
+\lambda u_{\bm{\theta}}(\bm{x})
-q(\bm{x}).
\end{align}
Let \(\bm z(\bm{x})=(z_1(x_1),z_2(x_2))\) denote the preprocessing map used before encoding. The choice \(\bm z(\bm{x})=\bm{x}\) gives the Fourier input, and \(\bm z(\bm{x})=\arccos \bm{x}\) gives the Chebyshev input. Let \(\{\bm{x}_i\}_{i=1}^{N_\Omega}\) be interior collocation points and \(\{\bm{y}_j\}_{j=1}^{N_{\partial\Omega}}\) be boundary collocation points. The conventional soft-constrained formulation uses \(u_{\bm{\theta}}(\bm{x})=v_{\bm{\theta}}(\bm z(\bm{x}))\). Here \(v_{\bm{\theta}}\) represents the quantum part of the model, excluding preprocessing and other maps, while \(u_{\bm{\theta}}\) is the complete model prediction. We denote the resulting residual by \(r_{\bm{\theta},\mathrm{soft}}^{\mathrm{SP}}\). The soft loss penalizes the boundary condition:
\begingroup
\setlength{\jot}{2pt}
\begin{align}
\mathcal L_{\mathrm{soft}}^{\mathrm{SP}}(\bm{\theta})
&=
\frac{1}{N_\Omega}
\sum_{i=1}^{N_\Omega}
\left[
r_{\bm{\theta},\mathrm{soft}}^{\mathrm{SP}}(\bm{x}_i)
\right]^2
+
\gamma
\frac{1}{N_{\partial\Omega}}
\sum_{j=1}^{N_{\partial\Omega}}
u_{\bm{\theta}}(\bm{y}_j)^2,
\notag\\
\frac{\partial \mathcal L_{\mathrm{soft}}^{\mathrm{SP}}}{\partial\theta_\ell}
&=
\frac{2}{N_\Omega}
\sum_{i=1}^{N_\Omega}
r_{\bm{\theta},\mathrm{soft}}^{\mathrm{SP}}(\bm{x}_i)
\frac{\partial r_{\bm{\theta},\mathrm{soft}}^{\mathrm{SP}}}{\partial\theta_\ell}(\bm{x}_i)
\notag\\[-0.2em]
&\quad
+
\frac{2\gamma}{N_{\partial\Omega}}
\sum_{j=1}^{N_{\partial\Omega}}
u_{\bm{\theta}}(\bm{y}_j)
\frac{\partial u_{\bm{\theta}}}{\partial\theta_\ell}(\bm{y}_j).
\label{eq:screened_poisson_soft_loss_gradient}
\end{align}
\endgroup

We can also impose the boundary condition exactly, following a similar hard-constraint idea mentioned in \cite{Kyriienko_2021}, by writing \(u_{\bm{\theta}}(\bm{x})=B(\bm{x})v_{\bm{\theta}}(\bm z(\bm{x}))\), where \(B(\bm{x})=\cos(\pi x_1/2)\cos(\pi x_2/2)\). This hard-constrained ansatz satisfies \(u_{\bm{\theta}}=0\) on \(\partial\Omega\) for every \(\bm{\theta}\). We denote the resulting residual by \(r_{\bm{\theta},\mathrm{hard}}^{\mathrm{SP}}\). The hard loss and its derivative are
\begingroup
\setlength{\jot}{2pt}
\begin{align}
\mathcal L_{\mathrm{hard}}^{\mathrm{SP}}(\bm{\theta})
&=
\frac{1}{N_\Omega}
\sum_{i=1}^{N_\Omega}
\left[
r_{\bm{\theta},\mathrm{hard}}^{\mathrm{SP}}(\bm{x}_i)
\right]^2,
\label{eq:screened_poisson_hard_loss}
\\
\frac{\partial \mathcal L_{\mathrm{hard}}^{\mathrm{SP}}}{\partial\theta_\ell}
&=
\frac{2}{N_\Omega}
\sum_{i=1}^{N_\Omega}
r_{\bm{\theta},\mathrm{hard}}^{\mathrm{SP}}(\bm{x}_i)
\frac{\partial r_{\bm{\theta},\mathrm{hard}}^{\mathrm{SP}}}{\partial\theta_\ell}(\bm{x}_i).
\label{eq:screened_poisson_hard_loss_gradient}
\end{align}
\endgroup
The displayed loss gradient expressions are not yet written in terms of quantities that can be obtained directly by running the quantum circuits used in our implementation. We can rewrite the residuals \(r_{\bm{\theta},\mathrm{soft}}^{\mathrm{SP}}(\bm{x}_i)\) and \(r_{\bm{\theta},\mathrm{hard}}^{\mathrm{SP}}(\bm{x}_i)\), together with their derivatives with respect to \(\theta_\ell\), in terms of \(v_{\bm{\theta}}\) and its required derivatives with respect to \(z_d\) and \(\theta_\ell\), all of which can be estimated directly from quantum circuit evaluations. Appendix~\ref{app:loss_gradient_expression_screen} gives the corresponding residual formulas. In this work, these derivatives are evaluated by automatic differentiation of the simulator. On hardware, the same derivative quantities could instead be estimated by differentiable circuit methods such as parameter shift rules.

For the screened Poisson numerical benchmark in Figs.~\ref{fig:qpinn_screened_poisson_training_losses} and~\ref{fig:qpinn_screened_poisson_solutions}, we use the Strongly Entangling Ansatz with two repetitions in each trainable layer, set \(\lambda=1\), let \(\phi_m(t)=\cos(m\pi t/2)\), and choose
\begin{align}
q(x_1,x_2)
&=
\phi_1(x_1)\phi_1(x_2)
\notag\\
&\quad
+
0.3\left[
\phi_1(x_1)\phi_3(x_2)
+
\phi_3(x_1)\phi_1(x_2)
\right].
\end{align}
The analytical solution used for evaluation is
\begin{align}
u_\star(x_1,x_2)
&=
\frac{\phi_1(x_1)\phi_1(x_2)}
{1+\pi^2/2}
\notag\\
&\quad
+
\frac{0.3\left[
\phi_1(x_1)\phi_3(x_2)
+
\phi_3(x_1)\phi_1(x_2)
\right]}
{1+5\pi^2/2}.
\label{eq:screened_poisson_exact_solution}
\end{align}
Each quantum model has 36 trainable circuit parameters. The fully connected PINN has layer widths \(2\!-\!4\!-\!4\!-\!1\), with two \(\tanh\)-activated hidden layers and 37 trainable parameters. Figures~\ref{fig:qpinn_screened_poisson_training_losses} and~\ref{fig:qpinn_screened_poisson_solutions} show that, under the hard constraint, the exact \(B_2\)-QFM reaches the lowest final loss among all models, followed by the randomized \(B_2\)-QFM. In the representative solution comparison, the exact \(B_2\)-QFM attains the lowest relative \(L^2\) error among all models, followed by the randomized \(B_2\)-QFM. Under the soft constraint, the fully connected PINN reaches the lowest final loss and the lowest relative \(L^2\) error in the representative solution comparison, while the exact \(B_2\)-QFM performs best among the quantum models in both comparisons. The chosen source function is \(B_2\)-invariant (with the parameter argument \(\bm{\theta}\) omitted), satisfying \(q(V_s[\bm{x}])=q(\bm{x})\) for all \(s\in B_2\) and \(\bm{x}\in\Omega\). The domain and homogeneous Dirichlet boundary condition are invariant under the same signed-permutation action, and the operator \(-\Delta+\lambda\) commutes with the \(B_2\) action. Since the Dirichlet problem has a unique solution for \(\lambda>0\), the target solution in Eq.~\eqref{eq:screened_poisson_exact_solution} is also \(B_2\)-invariant and satisfies \(u_\star(V_s[\bm{x}])=u_\star(\bm{x})\) for all \(s\in B_2\) and \(\bm{x}\in\bar\Omega\). The \(B_2\) average therefore restricts the hypothesis space to \(B_2\)-invariant functions, providing an inductive bias aligned with the target solution. The strong performance of the exact \(B_2\)-QFM is consistent with this symmetry alignment.

\subsection{Stationary viscous Hamilton-Jacobi equation}

\begin{figure*}[t]
    \centering
    \includegraphics[width=0.98\textwidth]{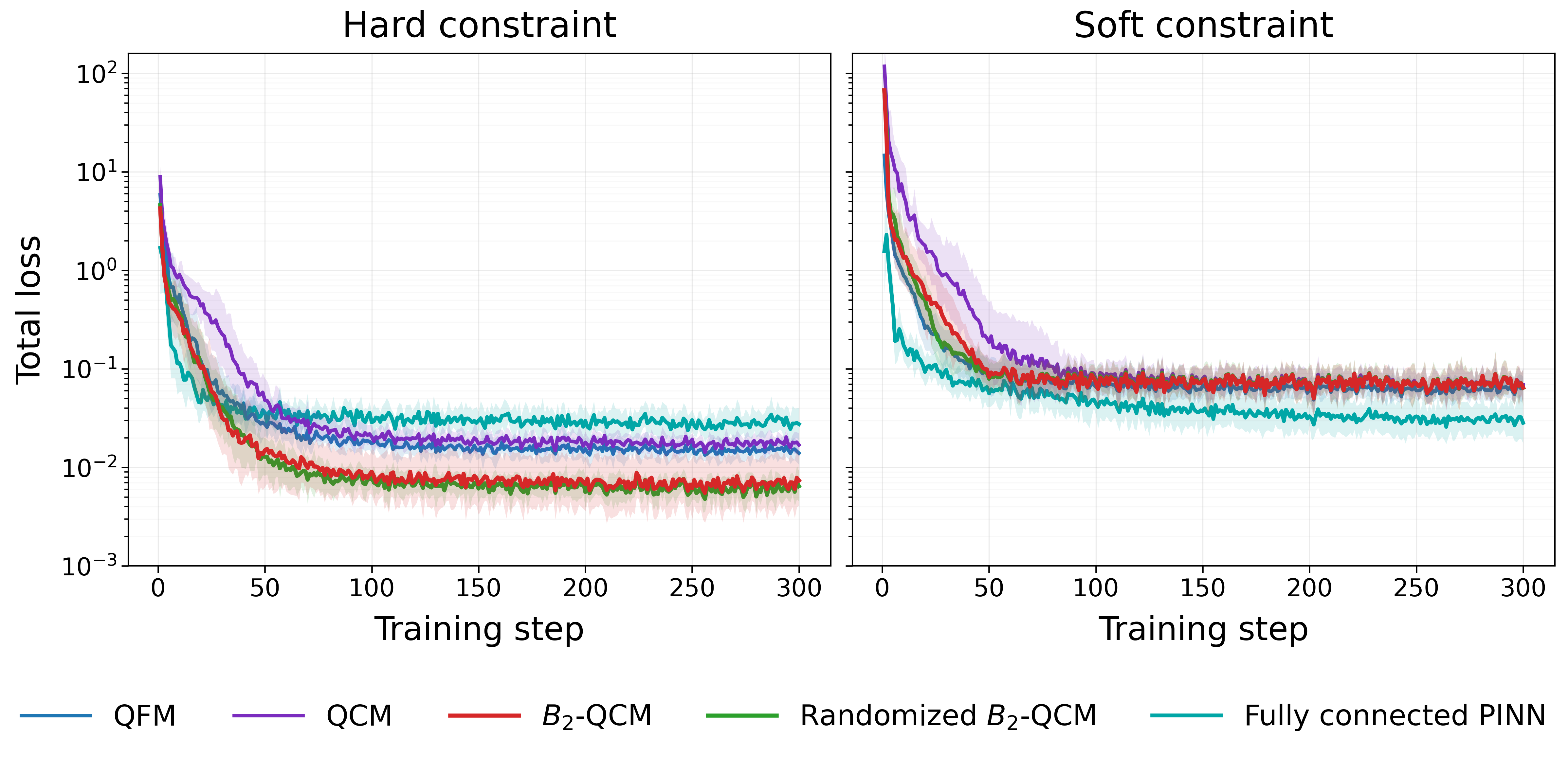}
    \caption{Training dynamics for the stationary viscous Hamilton-Jacobi benchmark. Hard and soft constraints differ in whether the model is forced to satisfy the boundary condition exactly, leading to different definitions of the plotted loss. Curves show the geometric mean over ten independent runs, and shaded bands indicate one standard deviation in log space. Among all models, the exact and randomized \(B_2\)-QCM models achieve the two lowest final losses, with very similar values.}
    \label{fig:qpinn_boundary_constraint_loss}
\end{figure*}

\begin{figure*}[t]
    \centering
    \includegraphics[width=0.98\textwidth]{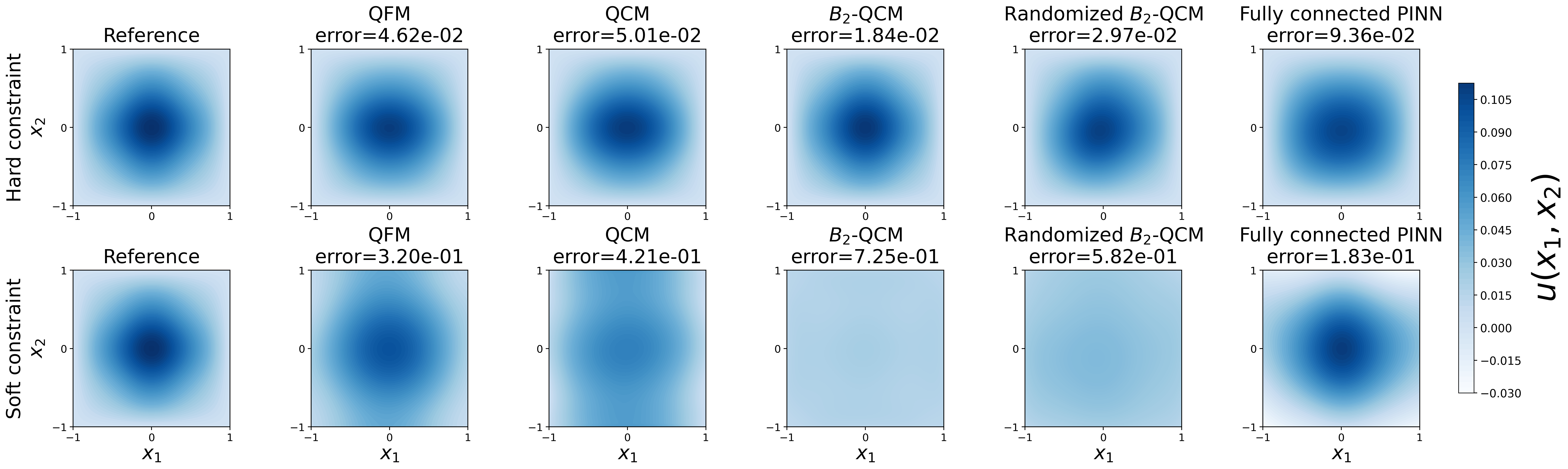}
    \caption{Solution comparison for the stationary viscous Hamilton-Jacobi benchmark. Rows show results obtained by imposing the boundary condition as a hard or soft constraint. Columns show the numerical reference solution, QFM, QCM, \(B_2\)-QCM, randomized \(B_2\)-QCM, and fully connected PINN. For the randomized \(B_2\)-QCM, each forward evaluation independently samples six of the eight group elements without replacement. Model panel titles report the relative \(L^2\) error of the representative run. For each model and constraint setting, this run is selected by the lowest final loss value among ten independent runs. Among all models, the \(B_2\)-QCM with the hard constraint has the lowest error.}
    \label{fig:qpinn_boundary_constraint_solution}
\end{figure*}

We next use a two-dimensional stationary viscous Hamilton-Jacobi equation as the second QPINN benchmark. We use \(\Omega=(-1,1)^2\), its closure \(\bar\Omega=[-1,1]^2\), its boundary \(\partial\Omega=\bar\Omega\setminus\Omega\), and \(\bm{x}=(x_1,x_2)\), as in the screened Poisson benchmark. Let \(H:\Omega\times\mathbb R^2\to\mathbb R\) be the Hamiltonian and let \(\alpha_0\in\mathbb R\) be a constant. The target function \(u:\bar\Omega\to\mathbb R\) solves
\begin{equation}
\begin{aligned}
&-\Delta u(\bm{x})
+
H(\bm{x},\nabla u(\bm{x}))
=
\alpha_0,
\quad
\bm{x}\in\Omega,
\\
&u(\bm{x})
=
0,
\quad
\bm{x}\in\partial\Omega.
\end{aligned}
\label{eq:hj_pde}
\end{equation}

Here, for \(\bm{p}\in\mathbb R^2\), the Hamiltonian is
\begin{equation}
H(\bm{x},\bm{p})
=
\rho\|\bm{p}\|_2^\beta
-
q(\bm{x}).
\end{equation}
Here \(\rho>0\), \(\beta>1\), and \(q:\Omega\to\mathbb R\) is a source function. Substituting the Hamiltonian into Eq.~\eqref{eq:hj_pde} gives
\begin{align}
&-
\left(
\frac{\partial^2 u}{\partial x_1^2}
+
\frac{\partial^2 u}{\partial x_2^2}
\right)
+
\rho
\left[
\left(
\frac{\partial u}{\partial x_1}
\right)^2
+
\left(
\frac{\partial u}{\partial x_2}
\right)^2
\right]^{\beta/2}
\notag\\[-0.2em]
&\quad
-
q(x_1,x_2)
=
\alpha_0.
\label{eq:hj_expanded}
\end{align}
For evaluation, we compute a numerical reference solution using a Newton--Krylov method. In both the training residual and the reference solver, we evaluate the term \(\rho\|\bm{p}\|_2^\beta\) numerically as \(\rho[(\|\bm{p}\|_2^2+\delta)^{\beta/2}-\delta^{\beta/2}]\), with \(\delta=10^{-12}\), to avoid numerical instability when computing derivatives near \(\bm{p}=\bm{0}\). For a model prediction \(u_{\bm{\theta}}\), the Hamilton-Jacobi residual corresponding to Eq.~\eqref{eq:hj_expanded} is
\begin{align}
r_{\bm{\theta}}^{\mathrm{HJ}}(\bm{x})
&=
-\left(
\frac{\partial^2 u_{\bm{\theta}}}{\partial x_1^2}
+
\frac{\partial^2 u_{\bm{\theta}}}{\partial x_2^2}
\right)
\notag\\[-0.2em]
&\quad
+
\rho
\left[
\left(
\frac{\partial u_{\bm{\theta}}}{\partial x_1}
\right)^2
+
\left(
\frac{\partial u_{\bm{\theta}}}{\partial x_2}
\right)^2
\right]^{\beta/2}
\notag\\[-0.2em]
&\quad
-
q(x_1,x_2)
-
\alpha_0.
\label{eq:hj_residual}
\end{align}
Using the same interior collocation points \(\{\bm{x}_i\}_{i=1}^{N_\Omega}\) and boundary collocation points \(\{\bm{y}_j\}_{j=1}^{N_{\partial\Omega}}\), the soft-constrained formulation uses \(u_{\bm{\theta}}(\bm{x})=v_{\bm{\theta}}(\bm z(\bm{x}))\). We denote the resulting residual by \(r_{\bm{\theta},\mathrm{soft}}^{\mathrm{HJ}}\). The soft loss and its derivative are
\begingroup
\setlength{\jot}{2pt}
\begin{align}
\mathcal L_{\mathrm{soft}}^{\mathrm{HJ}}(\bm{\theta})
&=
\frac{1}{N_\Omega}
\sum_{i=1}^{N_\Omega}
\left[
r_{\bm{\theta},\mathrm{soft}}^{\mathrm{HJ}}(\bm{x}_i)
\right]^2
+
\gamma
\frac{1}{N_{\partial\Omega}}
\sum_{j=1}^{N_{\partial\Omega}}
u_{\bm{\theta}}(\bm{y}_j)^2,
\notag\\
\frac{\partial \mathcal L_{\mathrm{soft}}^{\mathrm{HJ}}}{\partial\theta_\ell}
&=
\frac{2}{N_\Omega}
\sum_{i=1}^{N_\Omega}
r_{\bm{\theta},\mathrm{soft}}^{\mathrm{HJ}}(\bm{x}_i)
\frac{\partial r_{\bm{\theta},\mathrm{soft}}^{\mathrm{HJ}}}{\partial\theta_\ell}(\bm{x}_i)
\notag\\[-0.2em]
&\quad
+
\frac{2\gamma}{N_{\partial\Omega}}
\sum_{j=1}^{N_{\partial\Omega}}
u_{\bm{\theta}}(\bm{y}_j)
\frac{\partial u_{\bm{\theta}}}{\partial\theta_\ell}(\bm{y}_j).
\label{eq:hj_soft_loss_gradient}
\end{align}
\endgroup
The hard-constrained ansatz uses the same boundary factor \(B(\bm{x})\) as the screened Poisson benchmark and writes \(u_{\bm{\theta}}(\bm{x})=B(\bm{x})v_{\bm{\theta}}(\bm z(\bm{x}))\). We denote the resulting residual by \(r_{\bm{\theta},\mathrm{hard}}^{\mathrm{HJ}}\). The hard loss and its derivative are
\begingroup
\setlength{\jot}{2pt}
\begin{align}
\mathcal L_{\mathrm{hard}}^{\mathrm{HJ}}(\bm{\theta})
&=
\frac{1}{N_\Omega}
\sum_{i=1}^{N_\Omega}
\left[
r_{\bm{\theta},\mathrm{hard}}^{\mathrm{HJ}}(\bm{x}_i)
\right]^2,
\notag\\
\frac{\partial \mathcal L_{\mathrm{hard}}^{\mathrm{HJ}}}{\partial\theta_\ell}
&=
\frac{2}{N_\Omega}
\sum_{i=1}^{N_\Omega}
r_{\bm{\theta},\mathrm{hard}}^{\mathrm{HJ}}(\bm{x}_i)
\frac{\partial r_{\bm{\theta},\mathrm{hard}}^{\mathrm{HJ}}}{\partial\theta_\ell}(\bm{x}_i).
\label{eq:hj_hard_loss_gradient}
\end{align}
\endgroup
As in the screened Poisson benchmark, these residuals and their derivatives with respect to \(\theta_\ell\) can be rewritten in terms of \(v_{\bm{\theta}}\) and its derivatives, which can be estimated directly from quantum circuit evaluations. The corresponding formulas are given in Appendix~\ref{app:loss_gradient_expression_HJ}.

For the Hamilton-Jacobi numerical benchmark in Figs.~\ref{fig:qpinn_boundary_constraint_loss} and~\ref{fig:qpinn_boundary_constraint_solution}, each trainable layer uses three repetitions of the Strongly Entangling Ansatz. We multiply the quantum model output \(v_{\bm{\theta}}\) by \(0.75\) and set \(\alpha_0=0\), \(\rho=0.15\), and \(\beta=1.5\). Using \(\phi_m(t)=\cos(m\pi t/2)\), we choose
\begin{align}
q(x_1,x_2)
&=
0.5\,\phi_1(x_1)\phi_1(x_2)
\notag\\
&\quad
+
0.15\left[
\phi_1(x_1)\phi_3(x_2)
+
\phi_3(x_1)\phi_1(x_2)
\right]
\notag\\
&\quad
+
0.20\left[
\phi_1(x_1)\phi_5(x_2)
+
\phi_5(x_1)\phi_1(x_2)
\right].
\label{eq:hj_source}
\end{align}
Each quantum model has 54 trainable circuit parameters. The fully connected PINN has layer widths \(2\!-\!4\!-\!7\!-\!1\), with two \(\tanh\)-activated hidden layers and 55 trainable parameters.
Figures~\ref{fig:qpinn_boundary_constraint_loss} and~\ref{fig:qpinn_boundary_constraint_solution} show that, under the hard constraint, the exact and randomized \(B_2\)-QCM models reach similar final losses, both lower than those of the other models. In the representative solution comparison, the exact \(B_2\)-QCM attains the lowest relative \(L^2\) error among all models, followed by the randomized \(B_2\)-QCM. Under the soft constraint, the fully connected PINN reaches the lowest final loss and the lowest relative \(L^2\) error in the representative solution comparison. Because the sign flips in the \(B_2\)-QCM act on the angle variables \(\arccos\bm{x}\), they do not impose the sign flip symmetry of the physical input coordinates enforced by the \(B_2\)-QFM. The observed advantage of the \(B_2\)-QCM instead mainly comes from the additional frequencies supplied by orbit completion under symmetry averaging and the Chebyshev inductive bias.

\subsection{Discussion}

\begin{table}[t]
\scriptsize
\setlength{\tabcolsep}{1pt}
\renewcommand{\arraystretch}{1.12}
\begin{ruledtabular}
\begin{tabular}{@{}>{\raggedright\arraybackslash}p{0.22\columnwidth}>{\raggedright\arraybackslash}p{0.16\columnwidth}>{\raggedright\arraybackslash}p{0.27\columnwidth}>{\raggedright\arraybackslash}p{0.30\columnwidth}@{}}
\textrm{Benchmark} & \textrm{Constraint} & \textrm{Model} & \textrm{Relative \(L^2\) error}\\ \colrule
\multirow{10}{0.22\columnwidth}{Screened Poisson} & \multirow{5}{0.16\columnwidth}{hard} & QFM & \((3.06\pm1.76)\times10^{-2}\)\\
& & QCM & \((3.19\pm1.61)\times10^{-2}\)\\
& & \(B_2\)-QFM & \((2.59\pm2.37)\times10^{-3}\)\\
& & randomized \(B_2\)-QFM & \((1.99\pm1.35)\times10^{-2}\)\\
& & fully connected PINN & \((2.38\pm1.50)\times10^{-2}\)\\
\cline{2-4}
& \multirow{5}{0.16\columnwidth}{soft} & QFM & \((4.20\pm0.71)\times10^{-1}\)\\
& & QCM & \((8.17\pm6.82)\times10^{-1}\)\\
& & \(B_2\)-QFM & \((3.91\pm0.80)\times10^{-1}\)\\
& & randomized \(B_2\)-QFM & \((4.66\pm1.80)\times10^{-1}\)\\
& & fully connected PINN & \((1.45\pm0.40)\times10^{-1}\)\\
\colrule
\multirow{10}{0.22\columnwidth}{Stationary viscous Hamilton-Jacobi} & \multirow{5}{0.16\columnwidth}{hard} & QFM & \((5.70\pm0.99)\times10^{-2}\)\\
& & QCM & \((6.70\pm1.43)\times10^{-2}\)\\
& & \(B_2\)-QCM & \((8.22\pm14.9)\times10^{-2}\)\\
& & randomized \(B_2\)-QCM & \((3.68\pm1.01)\times10^{-2}\)\\
& & fully connected PINN & \((10.0\pm1.48)\times10^{-2}\)\\
\cline{2-4}
& \multirow{5}{0.16\columnwidth}{soft} & QFM & \((4.35\pm1.02)\times10^{-1}\)\\
& & QCM & \((5.39\pm1.06)\times10^{-1}\)\\
& & \(B_2\)-QCM & \((5.20\pm1.23)\times10^{-1}\)\\
& & randomized \(B_2\)-QCM & \((5.36\pm1.37)\times10^{-1}\)\\
& & fully connected PINN & \((2.42\pm1.39)\times10^{-1}\)
\end{tabular}
\end{ruledtabular}
\caption{\label{tab:final_relative_l2_errors}
Final relative \(L^2\) errors over ten independent runs for the screened Poisson and stationary viscous Hamilton-Jacobi benchmarks. The reported values are mean \(\pm\) standard deviation.}
\end{table}

Table~\ref{tab:final_relative_l2_errors} reports the mean final relative \(L^2\) errors over ten runs, rather than the errors of the representative runs shown in the solution figures. A clear pattern is that the hard constraint is more favorable for the quantum models than the soft constraint. Under the hard constraint, a quantum model gives the lowest mean error in each benchmark, whereas under the soft constraint, the fully connected PINN performs best. The mean errors of the quantum models are also generally close to one order of magnitude smaller under the hard constraint. The quantum models considered here may represent more complex correlations in the solution, but their hypothesis spaces may be less flexible when fitting the interior equation and the boundary condition simultaneously. By enforcing the boundary condition exactly, \(B(\bm{x})\) removes the boundary penalty and allows the quantum models to focus on the PDE residual, which may help explain their stronger performance under the hard constraint. Under the hard constraint, the large standard deviation of the exact \(B_2\)-QCM in the Hamilton-Jacobi benchmark is due to one of the ten runs, whose relative \(L^2\) error is \(5.0\times10^{-1}\); the median over the ten runs is \(3.8\times10^{-2}\).

The \(B_2\) average plays different roles in the two benchmarks. For the screened Poisson equation, it matches the coordinate exchange and independent sign flip symmetries of the analytical solution in Eq.~\eqref{eq:screened_poisson_exact_solution}. For the Hamilton-Jacobi equation, coordinate permutations still impose exchange symmetry on the physical inputs, while sign flips act on \(\arccos\bm{x}\) rather than imposing physical sign flip symmetry. The full \(B_2\) average also reorganizes the feature space through orbit completion and the Chebyshev inductive bias.

The loss gradients under the hard constraint in Eqs.~\eqref{eq:screened_poisson_hard_loss_gradient} and~\eqref{eq:hj_hard_loss_gradient} contain products between the residual and its parameter derivative. When both factors are evaluated using sampled group elements, the expected stochastic gradient can differ from the gradient of the exact \(B_2\)-averaged model. In this work, we use automatic differentiation only to simplify the numerical simulations. On quantum hardware, randomized encoding could instead be combined with the doubly stochastic gradient descent optimizers introduced in \cite{Sweke2020} to estimate the required loss derivatives. The symmetry elements and every term in the loss derivatives that can be estimated directly from quantum circuit evaluations could then be sampled jointly to construct a single stochastic estimator of the loss derivatives. We leave this fully stochastic implementation for future work.

\section{Conclusion}

This work connects invariance in GQML with the Fourier representation of QFMs to study the expressivity and inductive bias of invariant quantum models. We introduced the Fourier orbit decomposition as a new perspective on invariant QFMs, showing how invariance reorganizes the frequency spectrum into frequency orbits and the corresponding Fourier coefficients into symmetrized coefficients. Under exact 2-designs, we showed that the variance of a symmetrized coefficient equals the sum of the variances of the Fourier coefficients in its orbit. For \(\varepsilon\)-approximate 2-designs, we derived coefficient variance bounds that can be exponentially tighter than existing bounds in relevant regimes. The hyperoctahedral group provided a concrete example of the resulting orbit structure and orbit growth. We also showed that sign flip averaging yields pure multivariate Chebyshev polynomial basis functions, providing an inductive bias for solving PDEs.

To address the resource limitations of existing methods for implementing large or complex symmetry groups, we proposed \textit{randomized encoding}. It provides an unbiased estimator of the same averaged predictor by sampling transformed encoding layers without adding qubits or quantum circuit depth to the original QFM. We evaluated these models through numerical QPINN experiments on a screened Poisson equation and a stationary viscous Hamilton-Jacobi equation. The results show that the benefit of symmetry averaging depends on the relation between the symmetry, feature map, and constraint setting. More broadly, our results show how symmetry can be leveraged for expressivity of quantum models, shape their inductive bias, and guide their resource-efficient implementation.

\begin{acknowledgments}
The authors thank Xiabing Ruan (Advanced Research Institute of Mathematics, University of Strasbourg, France) for helpful discussions on group representations. This work used GPT-5.5 Pro in discussions of parts of the proofs, mainly to suggest possible ideas and identify relevant published references for further reading. All AI-assisted output was reviewed and verified by the authors. This research was supported by the French government under the France 2030 program, reference ANR-11-IDEX-0003 within the OI H-Code.
\end{acknowledgments}

\section*{Author Contributions}
L.W.: Conceptualization, Data curation, Formal analysis, Investigation,
Methodology, Software, Visualization, Writing -- original draft.
A.L., S.S., and Z.T.: Funding acquisition, Supervision,
Writing -- review \& editing. All authors reviewed and approved
the final version of the manuscript.

\section*{Data Availability}

The numerical data underlying the figures and tables, together with the
code used to run the simulations and experiments, are publicly available
in Ref.~\cite{FourierSymmetrizationCode}.

\onecolumngrid
\appendix

\section{Invariance from equivariance}
\label{sec:invariance}

\begin{restate}{Lemma}{lemma:invariance}
A data re-uploading quantum model defined in Eq.~\eqref{eq:re-uploading_function} induces an \(\mathcal{S}\)-invariant hypothesis function if it satisfies four conditions with respect to a symmetry group \(\mathcal{S}\): equivariant encoding and trainable layers, and an invariant initial state and observable.
\end{restate}

\begin{proof}
To prove that the hypothesis function \(f(\bm{x}, \bm{\theta})\) is \(\mathcal{S}\)-invariant, we must show that \(f(V_s[\bm{x}], \bm{\theta}) = f(\bm{x}, \bm{\theta})\) for all \(s \in \mathcal{S}\). We begin by formalizing the four given requirements using a unitary induced representation \(R(s)\):
\begin{enumerate}
    \item \textbf{Equivariant encoding layer:}
    \begin{equation}
    \label{eq:app_equivariant_encoding}
    E^{(l)}(V_s[\bm{x}]) = R(s) E^{(l)}(\bm{x}) R(s)^\dagger \quad \forall s \in \mathcal{S}.
    \end{equation}
    \item \textbf{Equivariant trainable layers:} The parameterized layers commute with the induced representation:
    \begin{equation}
    \label{eq:app_equivariant_trainable}
    [W^{(l)}(\bm{\theta}), R(s)] = 0 \implies W^{(l)}(\bm{\theta}) = R(s) W^{(l)}(\bm{\theta}) R(s)^\dagger.
    \end{equation}
    \item \textbf{Invariant initial state:}
    \begin{equation}
    \label{eq:app_invariant_initial_state}
    R(s)|0\rangle\langle0|R(s)^\dagger = |0\rangle\langle0|.
    \end{equation}
    \item \textbf{Invariant observable:}
    \begin{equation}
    \label{eq:app_invariant_observable}
    R(s)^\dagger \hat{O} R(s) = \hat{O}.
    \end{equation}
\end{enumerate}
First, we expand the re-uploading unitary \(U(V_s[\bm{x}], \bm{\theta})\) defined in Eq.~\eqref{eq:reupload}. Applying Eqs.~\eqref{eq:app_equivariant_encoding} and~\eqref{eq:app_equivariant_trainable}, we find that adjacent \(R(s)^\dagger\) and \(R(s)\) terms cancel, so the entire parameterized circuit transforms equivariantly:
\begin{equation}
\begin{aligned}
U(V_s[\bm{x}], \bm{\theta})
&= W^{(L+1)}(\bm{\theta})
\prod_{l=1}^{L}\bigl[E^{(l)}(V_s[\bm{x}])W^{(l)}(\bm{\theta})\bigr] \\
&= \bigl(R(s)W^{(L+1)}(\bm{\theta})R(s)^\dagger\bigr)
\prod_{l=1}^{L}
\Bigl[
\bigl(R(s)E^{(l)}(\bm{x})R(s)^\dagger\bigr)
\bigl(R(s)W^{(l)}(\bm{\theta})R(s)^\dagger\bigr)
\Bigr] \\
&= R(s)W^{(L+1)}(\bm{\theta})
\prod_{l=1}^{L}\bigl[E^{(l)}(\bm{x})W^{(l)}(\bm{\theta})\bigr]
R(s)^\dagger \\
&= R(s) U(\bm{x}, \bm{\theta}) R(s)^\dagger.
\end{aligned}
\end{equation}
Next, applying this unitary equivariance together with the invariant initial state condition in Eq.~\eqref{eq:app_invariant_initial_state} gives
\begin{equation}
\begin{aligned}
\rho(V_s[\bm{x}], \bm{\theta})
&= U(V_s[\bm{x}], \bm{\theta}) |0\rangle\langle0| U^\dagger(V_s[\bm{x}], \bm{\theta}) \\
&= R(s) U(\bm{x}, \bm{\theta}) |0\rangle\langle0| U^\dagger(\bm{x}, \bm{\theta}) R(s)^\dagger \\
&= R(s) \rho(\bm{x}, \bm{\theta}) R(s)^\dagger.
\end{aligned}
\end{equation}
Finally, using the cyclic property of the trace and the invariant observable condition in Eq.~\eqref{eq:app_invariant_observable}, we obtain
\begin{equation}
\begin{aligned}
f(V_s[\bm{x}], \bm{\theta})
&= \operatorname{Tr}\left(\hat{O}\rho(V_s[\bm{x}], \bm{\theta})\right) \\
&= \operatorname{Tr}\left(\hat{O}R(s)\rho(\bm{x}, \bm{\theta})R(s)^\dagger\right) \\
&= \operatorname{Tr}\left(R(s)^\dagger \hat{O} R(s)\rho(\bm{x}, \bm{\theta})\right) \\
&= \operatorname{Tr}\left(\hat{O}\rho(\bm{x}, \bm{\theta})\right) \\
&= f(\bm{x}, \bm{\theta}).
\end{aligned}
\end{equation}
This concludes the proof that the model induces an \(\mathcal{S}\)-invariant hypothesis function.
\end{proof}

\section{Fourier representation and orbit structure}
\label{app:fourier_representation}

\subsection{Fourier representation of re-uploading models}

For this derivation, we consider an input vector \(\bm{x}=(x_1,\dots,x_m)\) and \(L\) encoding layers each of the form
\[
E^{(l)}(\bm{x})
\;=\;
\bigotimes_{i=1}^m e^{-i\,x_i\,H_l^i},
\qquad l=1,\dots,L,
\]
where \(H_l^i\) acts on a \(d_i\)-dimensional factor, and the total Hilbert space has dimension \(d=\prod_{i=1}^m d_i\). For each \(i\), \(H_l^i\) has eigenvectors \(\{\lvert j_i^l\rangle\}_{j_i^l=1}^{d_i}\) satisfying \(\;H_l^i\lvert j_i^l\rangle=\lambda_{j_i^l}\lvert j_i^l\rangle\), and we write
\[
j^l=(j^l_1,\dots,j^l_m)\in\prod_{i=1}^m[1,d_i],
\qquad
\left|\prod_{i=1}^m[1,d_i]\right|=d,\qquad
|j^l\rangle=|j^l_1\rangle\otimes\dots\otimes|j^l_m\rangle .
\]
Then
\[
E^{(l)}(\bm{x})\lvert j^l\rangle=\bigotimes_{i=1}^m e^{-i\,x_i\,H_l^i}|j^l_i\rangle
=\bigotimes_{i=1}^m e^{-i\,x_i\,\lambda_{j_i^l}}|j^l_i\rangle
=e^{-i\sum_{i=1}^m x_i\,\lambda_{j_i^l}}\lvert j^l\rangle.
\]
Using the identity $I=\sum_{j^1\in\prod_{i=1}^m[1,d_i]}\lvert j^1\rangle\langle j^1\rvert$ and writing $\lvert0\rangle=\lvert j^0\rangle$, we insert the identity before \(W^{(1)}\):
\[
W^{(1)}\lvert0\rangle
=I\,W^{(1)}\lvert0\rangle
=\sum_{j^1\in\prod_{i=1}^m[1,d_i]}\lvert j^1\rangle\langle j^1\rvert W^{(1)}\lvert j^0\rangle.
\]
Define $W^{(1)}_{\,j^1,j^0}
=\langle j^1\rvert W^{(1)}\lvert j^0\rangle$, here the multi-index \(j^1=(j_1^1,\ldots,j_m^1)\) is treated as a single row index of \(W^{(1)}\). So that
\[
W^{(1)}\lvert0\rangle
=\sum_{j^1\in\prod_{i=1}^m[1,d_i]}W^{(1)}_{\,j^1,j^0}\,\lvert j^1\rangle.
\]
Applying the first encoding layer gives
\[
E^{(1)}(\bm{x})\,W^{(1)}\lvert0\rangle
=\sum_{j^1\in\prod_{i=1}^m[1,d_i]}W^{(1)}_{\,j^1,j^0}
\,e^{-i\sum_{i=1}^m x_i\,\lambda_{j^1_i}}
\lvert j^1\rangle.
\]
Applying the second encoding layer gives
\[
\begin{aligned}
&E^{(2)}(\bm{x})\,W^{(2)}\,E^{(1)}(\bm{x})\,W^{(1)}\lvert0\rangle\\
&= E^{(2)}(\bm{x})
  \Bigl(\sum_{j^2\in\prod_{i=1}^m[1,d_i]}\ket{j^2}\bra{j^2}\Bigr) W^{(2)}
  \sum_{j^1\in\prod_{i=1}^m[1,d_i]}
  W^{(1)}_{\,j^1,j^0}\,
  e^{-i\sum_{i=1}^m x_i\,\lambda_{j^1_i}}
  \,\ket{j^1}.\\
&=\sum_{j^2,j^1\in\prod_{i=1}^m[1,d_i]}
\bra{j^2}W^{(2)}\ket{j^1}\,
  W^{(1)}_{\,j^1,j^0}\,
  e^{-i\sum_{i=1}^m x_i\,\lambda_{j^1_i}}
  \;E^{(2)}(\bm{x})\ket{j^2}\\
&=\sum_{j^2,j^1\in\prod_{i=1}^m[1,d_i]}
  W^{(2)}_{\,j^2,j^1}\,
  W^{(1)}_{\,j^1,j^0}\,
  e^{-i\sum_{i=1}^m x_i\,\lambda_{j^1_i}}\,
  e^{-i\sum_{i=1}^m x_i\,\lambda_{j^2_i}}\,
  \ket{j^2}
\end{aligned}
\]
Continuing in this way, after \(L\) layers we obtain
\[
\prod_{l=1}^L E^{(l)}(\bm{x})\,W^{(l)}\lvert0\rangle
=\sum_{j^L,\dots,j^1\in\prod_{i=1}^m[1,d_i]}
\biggl[\prod_{l=1}^L W^{(l)}_{\,j^l,j^{l-1}}
e^{-i\sum_{i=1}^m x_i\,\lambda_{j_i^l}}\biggr]
\lvert j^L\rangle.
\]
Define the total path \(J=(j^1,\dots,j^L)\in(\prod_{i=1}^m[1,d_i])^L\). For each \(l\), \(|j^l\rangle\) is an eigenvector of \(\bigotimes_{i=1}^m H_l^i\), and we set
\[
\boldsymbol\Lambda_J
=\bigl(\Lambda_J^1,\dots,\Lambda_J^m\bigr),
\quad
\Lambda_J^i=\sum_{l=1}^L\lambda_{j_i^l}.
\]
Then
\[
\begin{aligned}
&\sum_{j^L,\dots,j^1\in\prod_{i=1}^m[1,d_i]}
\biggl[(\prod_{l=1}^L W^{(l)}_{\,j^l,j^{l-1}})
e^{-i\sum_{i=1}^m x_i\,\lambda_{j^1_i}-\ldots-i\sum_{i=1}^m x_i\,\lambda_{j_i^L}}
\biggr]
\lvert j^L\rangle\\
&=\sum_{j^L,\dots,j^1\in\prod_{i=1}^m[1,d_i]}
\biggl[(\prod_{l=1}^L W^{(l)}_{\,j^l,j^{l-1}})
e^{-i\sum_{i=1}^m x_i(\lambda_{j^1_i}+\ldots+\lambda_{j_i^L})}
\biggr]
\lvert j^L\rangle\\
&=\sum_{j^L,\dots,j^1\in\prod_{i=1}^m[1,d_i]}
\biggl[(\prod_{l=1}^L W^{(l)}_{\,j^l,j^{l-1}})
e^{-i\sum_{i=1}^m x_i\Lambda_J^i}
\biggr]
\lvert j^L\rangle\\
\end{aligned}
\]
So
\[
\prod_{l=1}^L E^{(l)}(\bm{x})\,W^{(l)}\lvert0\rangle
=\sum_{j^L,\dots,j^1\in\prod_{i=1}^m[1,d_i]}
\Bigl(\prod_{l=1}^L W^{(l)}_{\,j^l,j^{l-1}}\Bigr)\,
e^{-i\,\bm{x}\cdot\boldsymbol\Lambda_J}\,
\lvert j^L\rangle.
\]
After \(L\) layers the state is
\[
\ket{\psi_L(\bm{x})}
=\sum_{J}
\Bigl(\prod_{l=1}^L W^{(l)}_{\,j^l,j^{l-1}}\Bigr)
e^{-i\,\bm{x}\cdot\boldsymbol\Lambda_J}
\;\ket{j^L}.
\]
Applying the final trainable layer \(W^{(L+1)}\) gives
\[
U(\bm{x},\bm{\theta})\ket{0}
=W^{(L+1)}\ket{\psi_L(\bm{x})}
=\sum_{J}
\Bigl(\prod_{l=1}^L W^{(l)}_{\,j^l,j^{l-1}}\Bigr)
e^{-i\,\bm{x}\cdot\boldsymbol\Lambda_J}
\,W^{(L+1)}\ket{j^L}.
\]
Insert the identity \(\sum_{k=1}^d\ket{k}\bra{k}\), with \(k\) enumerating the \(d\)-dimensional product basis, to expand \(W^{(L+1)}\ket{j^L}\):
\[
W^{(L+1)}\ket{j^L}
=\sum_{k=1}^d\ket{k}\bra{k}W^{(L+1)}\ket{j^L}
=\sum_{k=1}^d W^{(L+1)}_{\,k,j^L}\,\ket{k}.
\]
Hence
\[
U(\bm{x},\bm{\theta})\ket{0}
=\sum_{J}\sum_{k=1}^d
W^{(L+1)}_{\,k,j^L}
\Bigl(\prod_{l=1}^L W^{(l)}_{\,j^l,j^{l-1}}\Bigr)
e^{-i\,\bm{x}\cdot\boldsymbol\Lambda_J}
\;\ket{k}
\]
For readability, we suppress the dependence of the trainable-layer matrix elements on \(\bm{\theta}\) in the following expressions. Since $f(\bm{x},\bm{\theta})=\langle0|U(\bm{x},\bm{\theta})^\dagger O U(\bm{x},\bm{\theta})\ket{0}$, we have
\[
f(\bm{x},\bm{\theta})
=\sum_{k',k}\sum_{J,J'}
\bigl(W^{(L+1)}_{\,k',j'^L}\cdots W^{(1)}_{\,j'^1,j^0}\bigr)^*
\,O_{k'k}\,
\bigl(W^{(L+1)}_{\,k,j^L}\cdots W^{(1)}_{\,j^1,j^0}\bigr)
\,e^{-i\,\bm{x}\cdot(\boldsymbol\Lambda_J-\boldsymbol\Lambda_{J'})}.
\]
Grouping terms with the same \(\boldsymbol\omega=\boldsymbol\Lambda_J-\boldsymbol\Lambda_{J'}\) defines
\[
c_{\boldsymbol\omega}(\bm{\theta})
=\sum_{(J,J'):\,\boldsymbol\Lambda_J-\boldsymbol\Lambda_{J'}=\boldsymbol\omega}
\sum_{k',k}
\bigl(W^{(L+1)}_{\,k',j'^L}\cdots W^{(1)}_{\,j'^1,j^0}\bigr)^*
\,O_{k'k}\,
\bigl(W^{(L+1)}_{\,k,j^L}\cdots W^{(1)}_{\,j^1,j^0}\bigr),
\]
which can be rewritten as
\begin{equation}
\label{eq:QFM_coefficient}
c_{\bm{\omega}}(\bm{\theta})=\sum_{\substack{\left(J, J^{\prime}\right): \\ \bm{\Lambda}_J-\bm{\Lambda}_{J^{\prime}}=\bm{\omega}}} \sum_{j^{L+1}, j^{\prime L+1}} O_{j^{\prime L+1} j^{L+1}} \prod_{l=1}^{L+1}\left(W_{j^{\prime l}, j^{\prime l-1}}^{(l)}\right)^* W_{j^l, j^{l-1}}^{(l)}
\end{equation}
and the expectation becomes the multivariate Fourier series:
\[
f(\bm{x},\bm{\theta})
=\sum_{\boldsymbol\omega\in\Omega}
c_{\boldsymbol\omega}(\bm{\theta})\,
e^{-i\,\bm{x}\cdot\boldsymbol\omega},\quad
\Omega=\{\boldsymbol\Lambda_J-\boldsymbol\Lambda_{J'}\}.
\]

\subsection{Redundancy preservation under the group representation}
\label{app:redundancy_preservation}

\begin{restate}{Proposition}{prop:redundancy_preservation_representation_elements}
With the notation of the QFM in Eq.~\eqref{eq:reupload} and the frequency generators in Definition~\ref{def:frequency_generator}, let \(V:\mathcal S\to O(m)\), \(s\mapsto V_s\), be the group representation and fix \(s\in\mathcal S\). Suppose that for every \(l\in[1,L]\), there exists a bijection \(\tau_{s,l}:[1,d]\to[1,d]\) such that
\begin{equation}
\bm{\lambda}^{(l)}_{\tau_{s,l}(j)}
=
V_s[\bm{\lambda}^{(l)}_j],
\qquad
j=1,\ldots,d.
\end{equation}
Here \(j\) is the flattened label of the multi-index \(j^l\) defined above.
Then, for all \(\bm{\omega}\in\Omega\),
\begin{equation}
\left|\mathfrak R(V_s[\bm{\omega}])\right|
=
\left|\mathfrak R(\bm{\omega})\right|,
\qquad
\bm{\omega}\in\Omega.
\end{equation}
\end{restate}

\begin{proof}
For \((J,J')\in[1,d]^{2L}\), define
\begin{equation}
\Delta(J,J')
:=
\sum_{l=1}^{L}
\left(
\bm{\lambda}^{(l)}_{j_l}
-
\bm{\lambda}^{(l)}_{j'_l}
\right).
\end{equation}
Then
\begin{equation}
\mathfrak R(\bm{\omega})
=
\left\{(J,J')\in[1,d]^{2L}:\Delta(J,J')=\bm{\omega}\right\}.
\end{equation}
For \(J=(j_1,\ldots,j_L)\), define
\begin{equation}
\tau_s(J)
:=
\left(\tau_{s,1}(j_1),\ldots,\tau_{s,L}(j_L)\right),
\qquad
\Phi_s(J,J')
:=
\left(\tau_s(J),\tau_s(J')\right).
\end{equation}
We use the following bijection property: if \(f:A\to A\) and \(g:A\to A\) satisfy
\begin{equation}
g(f(x))=x,
\qquad
f(g(y))=y,
\qquad
x,y\in A,
\end{equation}
then \(f\) is a bijection, because \(g(f(x))=x\) guarantees injectivity (since \(f(x_1)=f(x_2) \implies x_1=x_2\)), and \(f(g(y))=y\) guarantees surjectivity (since for any \(y\), \(x=g(y)\) satisfies \(f(x)=y\)). Below, we verify the two identities with two arbitrary elements \(J,K\in[1,d]^L\) to keep the forward and inverse substitutions visually distinct. Since each \(\tau_{s,l}\) is a bijection on \([1,d]\), each inverse \(\tau_{s,l}^{-1}\) exists. For \(K=(k_1,\ldots,k_L)\), define
\begin{equation}
\tau_s^{-1}(K)
=
\left(\tau_{s,1}^{-1}(k_1),\ldots,\tau_{s,L}^{-1}(k_L)\right),
\end{equation}
which satisfies
\begin{equation}
\begin{aligned}
\tau_s^{-1}(\tau_s(J))
&=
\left(
\tau_{s,1}^{-1}(\tau_{s,1}(j_1)),
\ldots,
\tau_{s,L}^{-1}(\tau_{s,L}(j_L))
\right)
=J,\\
\tau_s(\tau_s^{-1}(K))
&=
\left(
\tau_{s,1}(\tau_{s,1}^{-1}(k_1)),
\ldots,
\tau_{s,L}(\tau_{s,L}^{-1}(k_L))
\right)
=K.
\end{aligned}
\end{equation}
By the bijection property with \(A=[1,d]^L\), \(\tau_s\) is a bijection on \([1,d]^L\). Define
\begin{equation}
\Phi_s^{-1}(K,K')
=
\left(\tau_s^{-1}(K),\tau_s^{-1}(K')\right).
\end{equation}
Then
\begin{equation}
\begin{aligned}
\Phi_s^{-1}(\Phi_s(J,J'))
&=
\left(
\tau_s^{-1}(\tau_s(J)),
\tau_s^{-1}(\tau_s(J'))
\right)
=(J,J'),\\
\Phi_s(\Phi_s^{-1}(K,K'))
&=
\left(
\tau_s(\tau_s^{-1}(K)),
\tau_s(\tau_s^{-1}(K'))
\right)
=(K,K').
\end{aligned}
\end{equation}
By the same criterion with \(A=[1,d]^{2L}\), \(\Phi_s\) is a bijection on \([1,d]^{2L}\). For every \((J,J')\in[1,d]^{2L}\),
\begin{equation}
\begin{aligned}
\Delta(\Phi_s(J,J'))
&=
\sum_{l=1}^{L}
\left(
\bm{\lambda}^{(l)}_{\tau_{s,l}(j_l)}
-
\bm{\lambda}^{(l)}_{\tau_{s,l}(j'_l)}
\right)\\
&=
\sum_{l=1}^{L}
\left(
V_s[\bm{\lambda}^{(l)}_{j_l}]
-
V_s[\bm{\lambda}^{(l)}_{j'_l}]
\right)\\
&=
V_s\left[
\sum_{l=1}^{L}
\left(
\bm{\lambda}^{(l)}_{j_l}
-
\bm{\lambda}^{(l)}_{j'_l}
\right)
\right]
=
V_s[\Delta(J,J')].
\end{aligned}
\end{equation}
Because \(\Phi_s\) is a bijection on \([1,d]^{2L}\), we can change variables by \((K,K')=\Phi_s(J,J')\). Using the invertibility of \(V_s\), we find that the image of \(\mathfrak R(\bm{\omega})\) under \(\Phi_s\) is
\begin{equation}
\begin{aligned}
\Phi_s\!\left(\mathfrak R(\bm{\omega})\right)
&=
\left\{
\Phi_s(J,J')\in [1,d]^{2L}
\;\middle|\;
\Delta(J,J')=\bm{\omega}
\right\}\\
&=
\left\{
(K,K')\in[1,d]^{2L}
\;\middle|\;
\Delta(\Phi_s^{-1}(K,K'))=\bm{\omega}
\right\}\\
&=
\left\{
(K,K')\in[1,d]^{2L}
\;\middle|\;
V_s[\Delta(\Phi_s^{-1}(K,K'))]=V_s[\bm{\omega}]
\right\}\\
&=
\left\{
(K,K')\in[1,d]^{2L}
\;\middle|\;
\Delta(K,K')=V_s[\bm{\omega}]
\right\}\\
&=
\mathfrak R(V_s[\bm{\omega}]).
\end{aligned}
\end{equation}
Thus \(\Phi_s\) restricts to a bijection from \(\mathfrak R(\bm{\omega})\) to \(\mathfrak R(V_s[\bm{\omega}])\). Taking cardinalities,
\begin{equation}
\left|\mathfrak R(\bm{\omega})\right|
=
\left|\mathfrak R(V_s[\bm{\omega}])\right|.
\end{equation}

\end{proof}

\subsection{Fourier orbit decomposition}
\label{app:FOD}

\begin{restate}{Proposition}{prop:FOD}
Let \(\mathcal S\) be a finite group, \(V:\mathcal S\to O(m)\) an orthogonal representation, \(\Omega\) a finite frequency spectrum, and \(\tilde{\Omega}\) the orbit frequency spectrum. For \(f(\bm{x},\bm{\theta})=\sum_{\bm{\omega}\in\Omega}c_{\bm{\omega}}(\bm{\theta})e^{-i\,\bm{x}\cdot\bm{\omega}}\) defined in Eq.~\eqref{eq:QFM}, set \(c_{\bm{\omega}}(\bm{\theta})=0\) for \(\bm{\omega}\notin\Omega\). Then the averaged function \(\mathcal{A}f\) defined by Eq.~\eqref{eq:group_op_def} is \(\mathcal{S}\)-invariant and satisfies
\begin{equation}
\label{eq:app_FOD}
(\mathcal{A}f)(\bm{x}, \bm{\theta}) = \sum_{\bm{\nu} \in \tilde{\Omega}} a_{\bm{\nu}}(\bm{\theta}) \phi_{[\bm{\nu}]}(\bm{x}),
\end{equation}
where the symmetrized coefficient is
\begin{equation}
\label{eq:app_symmetrized_coefficient}
a_{\bm{\nu}}(\bm{\theta})
:=
\sum_{\bm{\omega}\in[\bm{\nu}]}
c_{\bm{\omega}}(\bm{\theta}),
\end{equation}
and the normalized orbit basis is
\begin{equation}
\phi_{[\bm{\nu}]}(\bm{x})
=
\frac{1}{|[\bm{\nu}]|}
\sum_{\tilde{\bm{\omega}}\in[\bm{\nu}]}
e^{-i\bm{x}\cdot\tilde{\bm{\omega}}}.
\end{equation}
\end{restate}

\begin{proof}
The \(\mathcal S\)-invariance of \(\mathcal A f\) follows from the group average in Eq.~\eqref{eq:group_op_def}. It remains to derive the orbit decomposition. Since \(V_s\in O(m)\) and \(V\) is a group representation,
\begin{align}
V_s[\bm{x}]\cdot\bm{\omega}
&=
\bigl(V_s[\bm{x}]\bigr)^{\top}\bm{\omega}
\nonumber\\
&=
\bm{x}^{\top}V_s^{\top}\bm{\omega}
\nonumber\\
&=
\bm{x}^{\top}V_s^{-1}\bm{\omega}
\nonumber\\
&=
\bm{x}^{\top}V_{s^{-1}}\bm{\omega}
\nonumber\\
&=
\bm{x}\cdot V_{s^{-1}}[\bm{\omega}].
\end{align}
Substituting the Fourier expansion into the average operator gives
\begin{align}
(\mathcal A f)(\bm{x},\bm{\theta})
&=
\frac{1}{|\mathcal S|}
\sum_{s\in\mathcal S}
f(V_s[\bm{x}],\bm{\theta})
\nonumber\\
&=
\frac{1}{|\mathcal S|}
\sum_{s\in\mathcal S}
\sum_{\bm{\omega}\in\Omega}
c_{\bm{\omega}}(\bm{\theta})
e^{-iV_s[\bm{x}]\cdot\bm{\omega}}
\nonumber\\
&=
\sum_{\bm{\omega}\in\Omega}
c_{\bm{\omega}}(\bm{\theta})
\left(
\frac{1}{|\mathcal S|}
\sum_{s\in\mathcal S}
e^{-i\bm{x}\cdot V_{s^{-1}}[\bm{\omega}]}
\right).
\label{eq:app_average_expansion}
\end{align}
Fix \(\bm{\omega}\in\Omega\), with orbit
\begin{equation}
[\bm{\omega}]
=
\{V_s[\bm{\omega}]:s\in\mathcal S\}.
\end{equation}
Define
\begin{equation}
\operatorname{Stab}(\bm{\omega})
=
\{s\in\mathcal S:V_s[\bm{\omega}]=\bm{\omega}\}.
\end{equation}
Applying the orbit-stabilizer theorem to \(\bm{\omega}\) gives
\begin{equation}
|[\bm{\omega}]|
=
\frac{|\mathcal S|}{|\operatorname{Stab}(\bm{\omega})|}.
\end{equation}
Therefore
\begin{equation}
|\operatorname{Stab}(\bm{\omega})|
=
\frac{|\mathcal S|}{|[\bm{\omega}]|}.
\label{eq:app_stabilizer_size}
\end{equation}
For \(\tilde{\bm{\omega}}\in[\bm{\omega}]\), choose \(r\in\mathcal S\) such that \(V_r[\bm{\omega}]=\tilde{\bm{\omega}}\). Then, using Eq.~\eqref{eq:app_stabilizer_size},
\begin{align}
\left|\{s\in\mathcal S:V_s[\bm{\omega}]=\tilde{\bm{\omega}}\}\right|
&=
\left|\{r^{-1}s:s\in\mathcal S,\ V_s[\bm{\omega}]=\tilde{\bm{\omega}}\}\right|
\nonumber\\
&=
\left|\{r^{-1}s:s\in\mathcal S,\ V_s[\bm{\omega}]=V_r[\bm{\omega}]\}\right|
\nonumber\\
&=
\left|\{r^{-1}s:s\in\mathcal S,\ V_{r^{-1}s}[\bm{\omega}]=\bm{\omega}\}\right|
\nonumber\\
&=
|\operatorname{Stab}(\bm{\omega})|
=
\frac{|\mathcal S|}{|[\bm{\omega}]|}.
\label{eq:app_preimage_count}
\end{align}
\begin{align}
\frac{1}{|\mathcal S|}
\sum_{s\in\mathcal S}
e^{-i\bm{x}\cdot V_{s^{-1}}[\bm{\omega}]}
&=
\frac{1}{|\mathcal S|}
\sum_{s\in\mathcal S}
e^{-i\bm{x}\cdot V_s[\bm{\omega}]}.
\label{eq:app_inverse_average}
\end{align}
\begin{equation}
\mathcal S
=
\bigsqcup_{\tilde{\bm{\omega}}\in[\bm{\omega}]}
\{s\in\mathcal S:V_s[\bm{\omega}]=\tilde{\bm{\omega}}\}.
\label{eq:app_disjoint_union}
\end{equation}
Here \(\bigsqcup\) denotes a disjoint union.
Using Eqs.~\eqref{eq:app_disjoint_union} and \eqref{eq:app_preimage_count},
\begin{align}
\frac{1}{|\mathcal S|}
\sum_{s\in\mathcal S}
e^{-i\bm{x}\cdot V_s[\bm{\omega}]}
&=
\frac{1}{|\mathcal S|}
\sum_{\tilde{\bm{\omega}}\in[\bm{\omega}]}
\sum_{\substack{s\in\mathcal S:\\ V_s[\bm{\omega}]=\tilde{\bm{\omega}}}}
e^{-i\bm{x}\cdot V_s[\bm{\omega}]}
\nonumber\\
&=
\frac{1}{|\mathcal S|}
\sum_{\tilde{\bm{\omega}}\in[\bm{\omega}]}
\sum_{\substack{s\in\mathcal S:\\ V_s[\bm{\omega}]=\tilde{\bm{\omega}}}}
e^{-i\bm{x}\cdot\tilde{\bm{\omega}}}
\nonumber\\
&=
\frac{1}{|\mathcal S|}
\sum_{\tilde{\bm{\omega}}\in[\bm{\omega}]}
\frac{|\mathcal S|}{|[\bm{\omega}]|}
e^{-i\bm{x}\cdot\tilde{\bm{\omega}}}
\nonumber\\
&=
\frac{1}{|[\bm{\omega}]|}
\sum_{\tilde{\bm{\omega}}\in[\bm{\omega}]}
e^{-i\bm{x}\cdot\tilde{\bm{\omega}}},
\qquad
\bm{\omega}\in\Omega.
\label{eq:app_orbit_average}
\end{align}
Combining Eqs.~\eqref{eq:app_inverse_average} and \eqref{eq:app_orbit_average} with the definition of \(\phi_{[\bm{\omega}]}\) gives
\begin{equation}
\frac{1}{|\mathcal S|}
\sum_{s\in\mathcal S}
e^{-i\bm{x}\cdot V_{s^{-1}}[\bm{\omega}]}
=
\phi_{[\bm{\omega}]}(\bm{x}).
\label{eq:app_inner_average_basis}
\end{equation}
After substituting Eq.~\eqref{eq:app_inner_average_basis} into Eq.~\eqref{eq:app_average_expansion}, we can first write the averaged function as a sum over the original frequencies,
\begin{equation}
(\mathcal A f)(\bm{x},\bm{\theta})
=
\sum_{\bm{\omega}\in\Omega}
c_{\bm{\omega}}(\bm{\theta})
\phi_{[\bm{\omega}]}(\bm{x}).
\label{eq:app_original_frequency_sum}
\end{equation}
Choose one representative \(\bm{\nu}\in\Omega\) from each distinct orbit generated by \(\Omega\), and denote the set of representatives by \(\tilde{\Omega}\):
\begin{equation}
\tilde{\Omega}\subseteq\Omega,\qquad
\Omega\subseteq\bigcup_{\bm{\nu}\in\tilde{\Omega}}[\bm{\nu}].
\label{eq:app_orbit_frequency_spectrum}
\end{equation}
In particular, if \(V_s[\bm{\omega}]\in\Omega\) for every \(\bm{\omega}\in\Omega\) and \(s\in\mathcal S\), then
\begin{equation}
\Omega=\bigcup_{\bm{\nu}\in\tilde{\Omega}}[\bm{\nu}].
\end{equation}
This set \(\tilde{\Omega}\) is the orbit frequency spectrum. With \(c_{\bm{\omega}}(\bm{\theta})=0\) for \(\bm{\omega}\notin\Omega\), Eq.~\eqref{eq:app_original_frequency_sum} can be regrouped over full orbits:
\begin{align}
(\mathcal A f)(\bm{x},\bm{\theta})
&=
\sum_{\bm{\nu}\in\tilde{\Omega}}
\sum_{\bm{\omega}\in[\bm{\nu}]}
c_{\bm{\omega}}(\bm{\theta})
\phi_{[\bm{\omega}]}(\bm{x})
\nonumber\\
&=
\sum_{\bm{\nu}\in\tilde{\Omega}}
\sum_{\bm{\omega}\in[\bm{\nu}]}
c_{\bm{\omega}}(\bm{\theta})
\phi_{[\bm{\nu}]}(\bm{x})
\nonumber\\
&=
\sum_{\bm{\nu}\in\tilde{\Omega}}
\left(
\sum_{\bm{\omega}\in[\bm{\nu}]}
c_{\bm{\omega}}(\bm{\theta})
\right)
\phi_{[\bm{\nu}]}(\bm{x})
\nonumber\\
&=
\sum_{\bm{\nu}\in\tilde{\Omega}}
a_{\bm{\nu}}(\bm{\theta})
\phi_{[\bm{\nu}]}(\bm{x}),
\end{align}
where the second equality uses \([\bm{\omega}]=[\bm{\nu}]\) for \(\bm{\omega}\in[\bm{\nu}]\), and the last equality uses Eq.~\eqref{eq:app_symmetrized_coefficient}. This proves Eq.~\eqref{eq:app_FOD}.
\end{proof}

\section{Exact 2-design coefficient results}

\subsection{Coefficient decoupling}
\label{app:fourier_coefficient_decoupling}

\begin{restate}{Lemma}{lemma:fourier_coefficient_decoupling}
Consider the QFM hypothesis function \(f(\bm{x},\bm{\theta})\) in Eq.~\eqref{eq:QFM}, with Fourier coefficients \(c_{\bm{\omega}}(\bm{\theta})\). If each trainable layer \(W^{(l)}\) forms an independent exact 2-design, then for any distinct frequencies \(\bm{\omega}\neq\bm{\omega}'\),
\begin{align}
\mathbb{E}_{\bm{\theta}} [c_{\bm{\omega}} c_{\bm{\omega}'}^*] &= 0,\\
\operatorname{Cov}_{\bm{\theta}}(c_{\bm{\omega}}, c_{\bm{\omega}'}) &= 0.
\end{align}
\end{restate}

\begin{proof}
Recall that the Fourier coefficient is defined by Eq.~\eqref{eq:QFM_coefficient}:
\[
c_{\bm{\omega}}=\sum_{\substack{\left(J, J^{\prime}\right): \\ \bm{\Lambda}_J-\bm{\Lambda}_{J^{\prime}}=\bm{\omega}}} \sum_{j^{L+1}, j^{\prime L+1}} O_{j^{\prime L+1} j^{L+1}} \prod_{l=1}^{L+1}\left(W_{j^{\prime l}, j^{\prime l-1}}^{(l)}\right)^* W_{j^l, j^{l-1}}^{(l)}
\]
Thus,
\[
\mathbb{E}_{\bm{\theta}} [c_{\bm{\omega}} c_{\bm{\omega}'}^*] = \sum_{J, J'} \sum_{K, K'} \sum_{\substack{j^{L+1}, j^{\prime L+1}\\
k^{L+1}, k^{\prime L+1}}} O_{j'^{L+1} j^{L+1}} O_{k'^{L+1} k^{L+1}}^* \cdot \mathbb{E}_{\bm{\theta}} \left[ \prod_{l=1}^{L+1} \left( W^{(l)}_{j'^l, j'^{l-1}} \right)^* W^{(l)}_{j^l, j^{l-1}} W^{(l)}_{k'^l, k'^{l-1}} \left( W^{(l)}_{k^l, k^{l-1}} \right)^* \right]
\]
By the independence of the layers,
\[
\mathbb{E}_{\bm{\theta}} \left[ \prod_{l=1}^{L+1} (\cdots) \right] = \prod_{l=1}^{L+1} \mathbb{E}_{\bm{\theta}_l} \left[ W^{(l)}_{j^l, j^{l-1}} W^{(l)}_{k'^l, k'^{l-1}} \left( W^{(l)}_{j'^l, j'^{l-1}} \right)^* \left( W^{(l)}_{k^l, k^{l-1}} \right)^* \right]
\]

A probability distribution $\nu$ over the unitary group $U(d)$ is a unitary $k$-design if it exactly reproduces the first $k$ moments of the Haar measure $\mu_H$. In terms of tensor products, this definition states:
\[
\underset{V \sim \nu}{\mathbb{E}}\left[V^{\otimes k} \otimes V^{* \otimes k}\right]=\underset{U \sim \mu_H}{\mathbb{E}}\left[U^{\otimes k} \otimes U^{* \otimes k}\right].
\]
Because the expectation of a tensor product corresponds directly to the expectation of the products of its individual matrix elements, the $k$-design condition guarantees that for any set of $2k$ indices, the expectation under $\nu$ matches that under the Haar measure. In our case, $k=2$. Assuming each trainable layer $W^{(l)}$ forms an exact 2-design over its parameter space $\bm{\theta}_l$, any 4th-order polynomial of its matrix elements simplifies precisely to the Haar expectation:
\[
\mathbb{E}_{\bm{\theta}_l} \left[ W^{(l)}_{j^l, j^{l-1}} W^{(l)}_{k'^l, k'^{l-1}} \left( W^{(l)}_{j'^l, j'^{l-1}} \right)^* \left( W^{(l)}_{k^l, k^{l-1}} \right)^* \right] = \mathbb{E}_{U\sim\mu_H} \left[ U_{j^l, j^{l-1}} U_{k'^l, k'^{l-1}} U_{j'^l, j'^{l-1}}^* U_{k^l, k^{l-1}}^* \right]
\]

Using the Weingarten calculus formula \cite[Eq.~(151)]{Mele2024},
\[
\begin{aligned}
\underset{U \sim \mu_H}{\mathbb{E}}\left[U_{i_1, j_1} U_{i_2, j_2} U_{i_3, j_3}^* U_{i_4, j_4}^*\right] & =\frac{1}{d^2-1}\left[\delta_{i_1, i_3} \delta_{i_2, i_4} \delta_{j_1, j_3} \delta_{j_2, j_4}-\frac{1}{d} \delta_{i_1, i_3} \delta_{i_2, i_4} \delta_{j_1, j_4} \delta_{j_2, j_3}\right] \\
& +\frac{1}{d^2-1}\left[-\frac{1}{d} \delta_{i_1, i_4} \delta_{i_2, i_3} \delta_{j_1, j_3} \delta_{j_2, j_4}+\delta_{i_1, i_4} \delta_{i_2, i_3} \delta_{j_1, j_4} \delta_{j_2, j_3}\right]
\end{aligned}
\]
We substitute our specific indices ($i_1=j^l, i_2=k'^l, i_3=j'^l, i_4=k^l$ and corresponding $j$ indices). By factoring out the row index combinations, we can decouple the pairing constraints from the subsequent column terms:
\[
\begin{aligned}
\mathbb{E}_{U\sim\mu_H} \left[ U_{j^l, j^{l-1}} \dots U_{k^l, k^{l-1}}^* \right]
&= \delta_{j^l, j'^l} \delta_{k'^l, k^l} \left[ \frac{1}{d^2-1} \left( \delta_{j^{l-1}, j'^{l-1}} \delta_{k'^{l-1}, k^{l-1}} - \frac{1}{d} \delta_{j^{l-1}, k^{l-1}} \delta_{k'^{l-1}, j'^{l-1}} \right) \right] \\
&\quad + \delta_{j^l, k^l} \delta_{k'^l, j'^l} \left[ \frac{1}{d^2-1} \left( \delta_{j^{l-1}, k^{l-1}} \delta_{k'^{l-1}, j'^{l-1}} - \frac{1}{d} \delta_{j^{l-1}, j'^{l-1}} \delta_{k'^{l-1}, k^{l-1}} \right) \right]
\end{aligned}
\]

In this factored form, it becomes explicit that to yield a non-zero expectation, the row indices at layer $l$ must satisfy either the Identity pairing ($\delta_{j^l, j'^l}\delta_{k'^l, k^l} = 1$) or the Swap pairing ($\delta_{j^l, k^l}\delta_{k'^l, j'^l} = 1$). Both constraints mathematically enforce that the frequency increment at layer $l$ is identical for both paths.

Recall that $\boldsymbol\omega = \boldsymbol\Lambda_J - \boldsymbol\Lambda_{J'}$, where $\boldsymbol\Lambda_J = \bigl(\Lambda_J^1,\dots,\Lambda_J^m\bigr)$ with $\Lambda_J^i = \sum_{l=1}^L\lambda_{j_i^l}$ and $J=(j^1,\dots,j^L)\in(\prod_{i=1}^m[1,d_i])^L$. We can express $\bm{\omega}$ as a sum of layer-wise frequency increments $\bm{\omega} = \sum_{l=1}^L \Delta \bm{\omega}_l$, where
\[
\Delta \bm{\omega}_l = \boldsymbol\lambda_{j^l} - \boldsymbol\lambda_{j'^l}
\]
and $\boldsymbol\lambda_{j^l} = \bigl(\lambda_{j_1^l},\dots,\lambda_{j_m^l}\bigr)$. Hence, $\bm{\omega} = \sum_{l=1}^L (\bm{\lambda}_{j^l} - \bm{\lambda}_{j'^l})$, and similarly $\bm{\omega}' = \sum_{l=1}^L \Delta \bm{\omega}'_l$.

The Identity pairing implies $j^l=j'^l$ and $k^l=k'^l$, yielding $\Delta\bm{\omega}_l = \Delta\bm{\omega}'_l = 0$. Conversely, the Swap pairing implies $j^l=k^l$ and $j'^l=k'^l$, yielding exact index matching such that $\Delta \bm{\omega}_l = \Delta \bm{\omega}'_l$. In both scenarios, a non-zero expectation requires $\Delta \bm{\omega}_l = \Delta \bm{\omega}'_l$ for all layers $l$, which directly implies $\bm{\omega} = \bm{\omega}'$. However, this contradicts our explicit premise that $\bm{\omega} \neq \bm{\omega}'$. Consequently, at least one layerwise expectation term must vanish:
\[
\mathbb{E}_{\bm{\theta}_l} \left[ W^{(l)}_{j^l, j^{l-1}} W^{(l)}_{k'^l, k'^{l-1}} \left( W^{(l)}_{j'^l, j'^{l-1}} \right)^* \left( W^{(l)}_{k^l, k^{l-1}} \right)^* \right] = 0
\]
Because the total expectation is a product over all layers, the entire product evaluates to zero. Hence, we conclude:
\[
\mathbb{E}_{\bm{\theta}} [c_{\bm{\omega}} c_{\bm{\omega}'}^*] = 0
\]
Given that the coefficients satisfy $\mathbb{E}_{\bm{\theta}}\left[c_{\bm{\omega}}\right]=\frac{\operatorname{Tr}(O)}{d} \delta_{\bm{\omega}}^0$ \cite{Hela2025}, and that $\mathbb{E}_{\bm{\theta}} [c_{\bm{\omega}} c_{\bm{\omega}'}^*] = 0$ for all $\bm{\omega}\neq \bm{\omega}'$ as established above, the covariance between distinct coefficients is
\[
\operatorname{Cov}_{\bm{\theta}}\left(c_{\bm{\omega}}, c_{\bm{\omega}^{\prime}}\right)=\mathbb{E}_{\bm{\theta}}\left[c_{\bm{\omega}} c_{\bm{\omega}^{\prime}}^*\right]-\mathbb{E}_{\bm{\theta}}\left[c_{\bm{\omega}}\right] \mathbb{E}_{\bm{\theta}}\left[c_{\bm{\omega}^{\prime}}^*\right]= 0 - \frac{|\operatorname{Tr}(O)|^2}{d^2} \delta_{\bm{\omega}}^0 \delta_{\bm{\omega}^{\prime}}^0
\]
Since $\bm{\omega} \neq \bm{\omega}^{\prime}$, at least one of the frequencies must be non-zero, which means the product of the Kronecker deltas is zero ($\delta_{\bm{\omega}}^0 \delta_{\bm{\omega}^{\prime}}^0 = 0$). Thus, the covariance terms vanish entirely: $\operatorname{Cov}_{\bm{\theta}}\left(c_{\bm{\omega}}, c_{\bm{\omega}^{\prime}}\right) = 0$.

\end{proof}

\subsection{Symmetrized coefficient variance}
\label{app:symmetrized_coefficient_variance}

\begin{restate}{Theorem}{theorem:exp_variance}
Consider the QFM hypothesis function \(f(\bm{x},\bm{\theta})\) in Eq.~\eqref{eq:QFM}, with Fourier coefficients \(c_{\bm{\omega}}(\bm{\theta})\) satisfying \(c_{\bm{\omega}}(\bm{\theta})=0\) for \(\bm{\omega}\notin\Omega\). Let \(a_{\bm{\nu}}(\bm{\theta})\) be the symmetrized coefficient defined in Eq.~\eqref{eq:symmetrized_coefficient}. If each trainable layer forms an independent exact 2-design, then for any orbit frequency \(\bm{\nu}\) in the orbit frequency spectrum \(\tilde{\Omega}\),
\begin{equation}
\operatorname{Var}_{\bm{\theta}}(a_{\bm{\nu}}) = \sum_{\bm{\omega} \in [\bm{\nu}]} \operatorname{Var}_{\bm{\theta}}(c_{\bm{\omega}}) = |[\bm{\nu}]| \cdot \overline{\operatorname{Var}}_{\bm{\theta}}(c_{\bm{\omega}}),
\end{equation}
where \(\overline{\operatorname{Var}}_{\bm{\theta}}(c_{\bm{\omega}})\) denotes the mean variance over the distinct frequencies within the orbit \([\bm{\nu}]=\{ V_s[\bm{\nu}] \mid s \in \mathcal{S} \}\).
\end{restate}

\begin{proof}
The aggregated coefficient over the orbit \([\bm{\nu}]=\{ V_s[\bm{\nu}] \mid s \in \mathcal{S} \}\) is defined as \(a_{\bm{\nu}} := \sum_{\bm{\omega} \in [\bm{\nu}]} c_{\bm{\omega}}\). Thus, the variance of \(a_{\bm{\nu}}\) expands as
\begin{equation}
\label{eq:var_expansion}
\operatorname{Var}_{\bm{\theta}}(a_{\bm{\nu}}) = \sum_{\bm{\omega} \in [\bm{\nu}]} \operatorname{Var}_{\bm{\theta}}(c_{\bm{\omega}}) + \sum_{\substack{\bm{\omega} \neq \bm{\omega}' \\ \bm{\omega}, \bm{\omega}' \in [\bm{\nu}]}} \operatorname{Cov}_{\bm{\theta}}(c_{\bm{\omega}}, c_{\bm{\omega}'}).
\end{equation}
By Lemma~\ref{lemma:fourier_coefficient_decoupling}, the covariance terms vanish, giving
\begin{equation}
\operatorname{Var}_{\bm{\theta}}(a_{\bm{\nu}}) = \sum_{\bm{\omega} \in [\bm{\nu}]} \operatorname{Var}_{\bm{\theta}}(c_{\bm{\omega}})
= |[\bm{\nu}]| \cdot \overline{\operatorname{Var}}_{\bm{\theta}}(c_{\bm{\omega}}).
\end{equation}
\end{proof}

\section{Approximate 2-design coefficient results}
\label{sec:fourier_coefficient_approximate_2design}

\subsection{Moment operators and preliminary bounds}

Recall that we consider the situation of an \(\varepsilon\)-approximate 2-design when the number of layers is $L=1$. Consider a single-uploading QFM
\begin{equation}
U(\bm{x},\bm{\theta})
=
W^{(2)}(\bm{\theta}_2)E(\bm{x})W^{(1)}(\bm{\theta}_1),
\end{equation}
where $d=2^n$, $W^{(1)}$ and $W^{(2)}$ are independent trainable unitaries, and the encoding layer is diagonal in a fixed basis $E(\bm{x})|j\rangle=e^{-i\bm{x}\cdot\bm{\lambda}_j}|j\rangle$. Since \(L=1\), we drop the layer superscript and write \(j\in[1,d]\) for the flattened label of \(j^1=(j_1^1,\ldots,j_m^1)\in\prod_{i=1}^m[1,d_i]\), with \(d=\prod_{i=1}^m d_i\) and \(\bm{\lambda}_j\in\mathbb R^m\). Let the Fourier decomposition of the hypothesis function be
\begin{equation}
f(\bm{x},\bm{\theta})
=
\langle 0|
U(\bm{x},\bm{\theta})^\dagger
\hat O
U(\bm{x},\bm{\theta})
|0\rangle
=
\sum_{\bm{\omega}\in\Omega}
c_{\bm{\omega}}(\bm{\theta})
e^{-i\bm{x}\cdot\bm{\omega}}.
\end{equation}
For each frequency $\bm{\omega}\in\Omega$, define its frequency generator as
\begin{equation}
\mathfrak{R}(\bm{\omega})
:=
\left\{
(j,j'):\bm{\lambda}_{j}-\bm{\lambda}_{j'}=\bm{\omega}
\right\}.
\end{equation}
The two channels below are layer-specific notational substitutes for the generic second-moment channel in Eq.~\eqref{eq:second_moment_channel}. Define the second-moment channel of $W^{(1)}$ and the reverse second-moment channel of $W^{(2)}$ by
\begin{align}
&\mathcal{M}^{(2)}_1[X]
:=
\mathbb{E}_{\bm{\theta}_1}
\left[
(W^{(1)}\otimes W^{(1)})
X
(W^{(1)\dagger}\otimes W^{(1)\dagger})
\right],\\
&\widetilde{\mathcal{M}}^{(2)}_2[X]
:=
\mathbb{E}_{\bm{\theta}_2}
\left[
(W^{(2)\dagger}\otimes W^{(2)\dagger})
X
(W^{(2)}\otimes W^{(2)})
\right],
\end{align}
where $W^{(1)},W^{(2)}$ are the two trainable layers with $L=1$. Recall that the exact Haar measure channel is defined by
\begin{equation}
\mathcal{H}^{(2)}[X]
:=
\mathbb{E}_{U\sim\mu_H}
\left[
(U\otimes U)
X
(U^\dagger\otimes U^\dagger)
\right]=\mathbb{E}_{U\sim\mu_H}
\left[
(U^\dagger\otimes U^\dagger)
X
(U\otimes U)
\right].
\end{equation}
Following the property used in \cite[Appendix B.2]{Hela2025}, the trainable-layer average satisfies \(\mathbb{E}_{\bm{\theta}}[W^{\otimes 2}(\cdot)(W^\dagger)^{\otimes 2}]=\mathbb{E}_{\bm{\theta}}[(W^\dagger)^{\otimes 2}(\cdot)W^{\otimes 2}]\). Recall that Eq.~\eqref{eq:approx_2_design_channel_assumption} defines the \(\varepsilon\)-approximate 2-design condition \(\|
\mathcal{M}^{(2)}-\mathcal{H}^{(2)}\|_{\infty} \leq \varepsilon\). In the present single-uploading setting, this condition gives the two layer-specific bounds
\begin{equation}
\left\|
\mathcal{M}^{(2)}_1-\mathcal{H}^{(2)}
\right\|_{\infty}
\leq
\varepsilon,
\qquad
\left\|
\widetilde{\mathcal{M}}^{(2)}_2-\mathcal{H}^{(2)}
\right\|_{\infty}
\leq
\varepsilon .
\end{equation}
Here $\|\cdot\|_{\infty}$ denotes the spectral norm of the matrix form of the corresponding superoperator.

To streamline our analysis of the Fourier coefficients, we first rigorously define the moment operators for the trainable layers and their deviations from the exact Haar measure.

\begin{definition}[Moment Operators and Deviations]
\label{def:moment_operators}
For the independent trainable layers $W^{(1)}$ and $W^{(2)}$, we define the actual first-moment operators $\bar A, \bar B$ and second-moment operators $A, B$ over their respective index pairs. Here \(W^{(1)}_{j,0}:=\langle j|W^{(1)}|0\rangle\) with \(j \in [1,d]\). Both \(|0\rangle\langle0|\) and \(I/d\) are \(d\times d\) operators. They can be equivalently expressed via parameter expectations or superoperator channels:
\begin{align}
\bar A_{j,j'}
&:=
\mathbb{E}_{\bm{\theta}_1} \left[ W^{(1)}_{j,0} \left(W^{(1)}_{j',0}\right)^* \right]
=
\left\langle j\right| \operatorname{Tr}_2 \left[ \mathcal{M}^{(2)}_1 \left( |0\rangle\langle0|\otimes\frac{I}{d} \right) \right] \left|j'\right\rangle, \\
\bar B_{j,j'}
&:=
\mathbb{E}_{\bm{\theta}_2} \left[ \left\langle j'\right| W^{(2)\dagger}\hat O W^{(2)} \left|j\right\rangle \right]
=
\left\langle j'\right| \operatorname{Tr}_2 \left[ \widetilde{\mathcal{M}}^{(2)}_2 \left( \hat O\otimes\frac{I}{d} \right) \right] \left|j\right\rangle, \\
A_{j,j';k,k'}
&:=
\mathbb{E}_{\bm{\theta}_1} \left[ W^{(1)}_{j,0} W^{(1)}_{k',0} \left(W^{(1)}_{j',0}\right)^* \left(W^{(1)}_{k,0}\right)^* \right]
=
\left\langle j,k'\right| \mathcal{M}^{(2)}_1 \left( |0,0\rangle\langle0,0| \right) \left|j',k\right\rangle, \\
B_{j,j';k,k'}
&:=
\mathbb{E}_{\bm{\theta}_2} \left[ \left\langle j'\right| W^{(2)\dagger}\hat O W^{(2)} \left|j\right\rangle \left( \left\langle k'\right| W^{(2)\dagger}\hat O W^{(2)} \left|k\right\rangle \right)^* \right]
=
\left\langle j',k\right| \widetilde{\mathcal{M}}^{(2)}_2 \left( \hat O\otimes\hat O \right) \left|j,k'\right\rangle.
\end{align}

The corresponding ideal Haar moment operators, denoted with the superscript $\mathrm{H}$, are obtained by replacing the parameterized layers with unitaries drawn from the exact Haar measure $U \sim \mu_H$:
\begin{align}
\bar A^{\mathrm H}_{j,j'}
&:=
\mathbb{E}_{U\sim\mu_H} \left[ U_{j,0}U_{j',0}^* \right]
=
\left\langle j\right| \operatorname{Tr}_2 \left[ \mathcal{H}^{(2)} \left( |0\rangle\langle0|\otimes\frac{I}{d} \right) \right] \left|j'\right\rangle, \\
\bar B^{\mathrm H}_{j,j'}
&:=
\mathbb{E}_{U\sim\mu_H} \left[ \left\langle j'\right| U^\dagger\hat O U \left|j\right\rangle \right]
=
\left\langle j'\right| \operatorname{Tr}_2 \left[ \mathcal{H}^{(2)} \left( \hat O\otimes\frac{I}{d} \right) \right] \left|j\right\rangle, \\
A^{\mathrm H}_{j,j';k,k'}
&:=
\mathbb{E}_{U\sim\mu_H} \left[ U_{j,0} U_{k',0} U^*_{j',0} U^*_{k,0} \right]
=
\left\langle j,k'\right| \mathcal H^{(2)} \left( |0,0\rangle\langle0,0| \right) \left|j',k\right\rangle, \\
B^{\mathrm H}_{j,j';k,k'}
&:=
\mathbb{E}_{U\sim\mu_H} \left[ \left\langle j'\right| U^\dagger\hat O U \left|j\right\rangle \left( \left\langle k'\right| U^\dagger\hat O U \left|k\right\rangle \right)^* \right]
=
\left\langle j',k\right| \mathcal H^{(2)} \left( \hat O\otimes\hat O \right) \left|j,k'\right\rangle.
\end{align}

Finally, the deviation operators ($\Delta\bar A, \Delta\bar B, \Delta A, \Delta B$) represent the differences between the actual moments and the ideal Haar moments and correspond directly to the relevant deviation channels $(\mathcal{M} - \mathcal{H}^{(2)})$:
\begin{align}
\Delta\bar A_{j,j'}
&:=
\bar A_{j,j'} - \bar A^{\mathrm H}_{j,j'}
=
\left\langle j\right| \operatorname{Tr}_2 \left[ \left( \mathcal M^{(2)}_1-\mathcal H^{(2)} \right) \left( |0\rangle\langle0|\otimes\frac{I}{d} \right) \right] \left|j'\right\rangle, \\
\Delta\bar B_{j,j'}
&:=
\bar B_{j,j'} - \bar B^{\mathrm H}_{j,j'}
=
\left\langle j'\right| \operatorname{Tr}_2 \left[ \left( \widetilde{\mathcal M}^{(2)}_2-\mathcal H^{(2)} \right) \left( \hat O\otimes\frac{I}{d} \right) \right] \left|j\right\rangle, \\
\Delta A_{j,j';k,k'}
&:=
A_{j,j';k,k'} - A^{\mathrm H}_{j,j';k,k'}
=
\left\langle j,k'\right| \left( \mathcal M^{(2)}_1-\mathcal H^{(2)} \right) \left( |0,0\rangle\langle0,0| \right) \left|j',k\right\rangle, \\
\Delta B_{j,j';k,k'}
&:=
B_{j,j';k,k'} - B^{\mathrm H}_{j,j';k,k'}
=
\left\langle j',k\right| \left( \widetilde{\mathcal M}^{(2)}_2-\mathcal H^{(2)} \right) \left( \hat O\otimes\hat O \right) \left|j,k'\right\rangle.
\end{align}
\end{definition}

With these operators established, we can express the statistical moments of the Fourier coefficients concisely.
\begin{lemma}[Coefficient Moment Relations]
\label{lemma:coefficient_moment_relations}
For a single-uploading circuit ($L=1$), the expectation, cross-moment, and the aggregated second moment of the symmetrized coefficient $a_{\bm{\nu}} = \sum_{\bm{\omega}\in[\bm{\nu}]} c_{\bm{\omega}}$ over the frequency orbit $[\bm{\nu}]$ take the following decoupled forms. For the parameterized expectations, we have:
\begin{align}
\mathbb{E}_{\bm{\theta}} \left[ c_{\bm{\omega}} \right] &= \sum_{(j,j')\in\mathfrak R(\bm{\omega})} \bar A_{j,j'} \bar B_{j,j'}, \label{eq:coeff_first_moment_theta} \\
\mathbb{E}_{\bm{\theta}} \left[ c_{\bm{\omega}} c_{\bm{\omega}'}^* \right] &= \sum_{(j,j')\in\mathfrak R(\bm{\omega})} \sum_{(k,k')\in\mathfrak R(\bm{\omega}')} A_{j,j';k,k'} B_{j,j';k,k'}, \label{eq:coeff_cross_moment_theta} \\
\mathbb{E}_{\bm{\theta}} \left[ |a_{\bm{\nu}}|^2 \right] &= \sum_{(j,j'), (k,k') \in \mathfrak{R}_{[\bm{\nu}]}} A_{j,j';k,k'} B_{j,j';k,k'}. \label{eq:coeff_aggregated_moment_theta}
\end{align}
Correspondingly, for the exact Haar measure expectations, the same structural decoupling holds:
\begin{align}
\mathbb{E}_{\mathcal{H}} \left[ c_{\bm{\omega}} \right] &= \sum_{(j,j')\in\mathfrak R(\bm{\omega})} \bar A^{\mathrm{H}}_{j,j'} \bar B^{\mathrm{H}}_{j,j'}, \label{eq:coeff_first_moment_haar} \\
\mathbb{E}_{\mathcal{H}} \left[ c_{\bm{\omega}} c_{\bm{\omega}'}^* \right] &= \sum_{(j,j')\in\mathfrak R(\bm{\omega})} \sum_{(k,k')\in\mathfrak R(\bm{\omega}')} A^{\mathrm{H}}_{j,j';k,k'} B^{\mathrm{H}}_{j,j';k,k'}, \label{eq:coeff_cross_moment_haar} \\
\mathbb{E}_{\mathcal{H}} \left[ |a_{\bm{\nu}}|^2 \right] &= \sum_{(j,j'), (k,k') \in \mathfrak{R}_{[\bm{\nu}]}} A^{\mathrm{H}}_{j,j';k,k'} B^{\mathrm{H}}_{j,j';k,k'}, \label{eq:coeff_aggregated_moment_haar}
\end{align}
where $\mathfrak{R}_{[\bm{\nu}]} := \bigcup_{\bm{\omega}\in[\bm{\nu}]} \mathfrak{R}(\bm{\omega})$ is the orbit frequency generator.
\end{lemma}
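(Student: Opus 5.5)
We specialize the general Fourier-coefficient formula Eq.~\eqref{eq:QFM_coefficient} to \(L=1\) and then take expectations, using the independence of \(\bm{\theta}_1\) and \(\bm{\theta}_2\) to factor each term.

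\textbf{Rewriting the coefficient.} With \(L=1\), the path is a single index \(J=j\), \(J'=j'\), and \(j^0=j'^0=0\). Eq.~\eqref{eq:QFM_coefficient} then reads
\[
c_{\bm{\omega}}=\sum_{(j,j')\in\mathfrak R(\bm{\omega})} W^{(1)}_{j,0}\bigl(W^{(1)}_{j',0}\bigr)^* \sum_{k,k'} \bigl(W^{(2)}_{k',j'}\bigr)^* O_{k'k} W^{(2)}_{k,j}.
\]
The inner sum equals \(\langle j'|W^{(2)\dagger}\hat O W^{(2)}|j\rangle\). Hence each coefficient is a sum over the frequency generator of a product of a \(W^{(1)}\)-dependent factor and a \(W^{(2)}\)-dependent factor.

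\textbf{First moment.} Taking \(\mathbb{E}_{\bm{\theta}}\) and factoring by independence, the first factor averages to \(\bar A_{j,j'}\) and the second to \(\bar B_{j,j'}\), exactly as in Definition~\ref{def:moment_operators}. This gives Eq.~\eqref{eq:coeff_first_moment_theta}.

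\textbf{Cross moment.} For \(c_{\bm{\omega}}c_{\bm{\omega}'}^*\), we multiply the expansion of \(c_{\bm{\omega}}\), indexed by \((j,j')\in\mathfrak R(\bm{\omega})\), by the conjugate expansion of \(c_{\bm{\omega}'}\), indexed by \((k,k')\in\mathfrak R(\bm{\omega}')\). The \(W^{(1)}\) part becomes \(W^{(1)}_{j,0}(W^{(1)}_{j',0})^*(W^{(1)}_{k,0})^*W^{(1)}_{k',0}\), whose average is \(A_{j,j';k,k'}\). The \(W^{(2)}\) part becomes \(\langle j'|\cdots|j\rangle\langle k'|\cdots|k\rangle^*\), whose average is \(B_{j,j';k,k'}\). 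Factoring by independence gives Eq.~\eqref{eq:coeff_cross_moment_theta}.

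\textbf{Index conventions.} We need to check that the matrix-element forms in Definition~\ref{def:moment_operators} agree with these expectations. For \(A\), this holds because \(\langle j,k'|(W\otimes W)|00\rangle\langle00|(W^\dagger\otimes W^\dagger)|j',k\rangle=W_{j0}W_{k'0}W^*_{j'0}W^*_{k0}\). For \(B\), we use \(\langle k'|X|k\rangle^*=\langle k|X|k'\rangle\) for the Hermitian \(X=W^{(2)\dagger}\hat O W^{(2)}\).

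\textbf{Aggregated moment.} By Proposition~\ref{prop:FOD}, \(a_{\bm{\nu}}=\sum_{\bm{\omega}\in[\bm{\nu}]}c_{\bm{\omega}}\), with \(c_{\bm{\omega}}=0\) off \(\Omega\); this is consistent because \(\mathfrak R(\bm{\omega})=\emptyset\) for such \(\bm{\omega}\). Then
\[
\mathbb{E}|a_{\bm{\nu}}|^2=\sum_{\bm{\omega},\bm{\omega}'\in[\bm{\nu}]}\mathbb{E}[c_{\bm{\omega}}c_{\bm{\omega}'}^*].
\]
The sets \(\mathfrak R(\bm{\omega})\) for distinct \(\bm{\omega}\) are disjoint, since each pair \((j,j')\) determines a unique difference \(\bm{\lambda}_j-\bm{\lambda}_{j'}\). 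Their union over the orbit is therefore \(\mathfrak R_{[\bm{\nu}]}\), and the double sum of Eq.~\eqref{eq:coeff_cross_moment_theta} collapses into one sum over \(\mathfrak R_{[\bm{\nu}]}\times\mathfrak R_{[\bm{\nu}]}\). This gives Eq.~\eqref{eq:coeff_aggregated_moment_theta}.

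\textbf{Haar versions.} The Haar identities follow verbatim by replacing \(W^{(1)},W^{(2)}\) with independent Haar unitaries.

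\textbf{Main obstacle.} The only delicate step is the bookkeeping of index order and conjugation, so that the four-index tensors \(A\) and \(B\) in Definition~\ref{def:moment_operators} match the expressions produced above exactly. In particular, the placement of \(k\) versus \(k'\) under conjugation in \(A\) and \(B\) must be checked; the rest is routine.
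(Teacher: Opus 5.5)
Your proposal is correct and follows the paper's proof essentially step by step. Like the paper, you specialize the coefficient to $L=1$ as $\sum_{(j,j')\in\mathfrak R(\bm{\omega})} W^{(1)}_{j,0}(W^{(1)}_{j',0})^*\langle j'|W^{(2)\dagger}\hat O W^{(2)}|j\rangle$, factor the expectations using independence of the two layers, merge the disjoint generators $\mathfrak R(\bm{\omega})$ over the orbit into $\mathfrak R_{[\bm{\nu}]}$, and get the Haar case by substitution. Your explicit checks that the channel forms of $A$ and $B$ match the expectation forms (via Hermiticity of $W^{(2)\dagger}\hat O W^{(2)}$), and that $\mathfrak R(\bm{\omega})=\emptyset$ for $\bm{\omega}\notin\Omega$, are details the paper leaves implicit; they do not change the argument.
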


\begin{proof}
Recall the Fourier coefficient is given by
\[
c_{\boldsymbol\omega}
=\sum_{(J,J'):\,\boldsymbol\Lambda_J-\boldsymbol\Lambda_{J'}=\boldsymbol\omega}
\sum_{k',k}
\bigl(W^{(L+1)}_{\,k',j'^L}\cdots W^{(1)}_{\,j'^1,j^0}\bigr)^*
\,O_{k'k}\,
\bigl(W^{(L+1)}_{\,k,j^L}\cdots W^{(1)}_{\,j^1,j^0}\bigr).
\]
For the single-uploading circuit with $L=1$, since $\sum_{k^{\prime}, k}\left(W_{k^{\prime}, j^{\prime}}^{(2)}\right)^* O_{k^{\prime} k} W_{k, j}^{(2)}=\left\langle j^{\prime}\right| W^{(2) \dagger} \hat{O} W^{(2)}|j\rangle$, the Fourier coefficient can be written as
\begin{equation}
\label{eq:single_uploading_coefficient}
c_{\bm{\omega}}
=
\sum_{(j,j')\in\mathfrak{R}(\bm{\omega})}
W^{(1)}_{j,0}
\left(W^{(1)}_{j',0}\right)^*
\left\langle j'\right|
W^{(2)\dagger}\hat O W^{(2)}
\left|j\right\rangle .
\end{equation}
When we take the expectation over the parameters $\bm{\theta} = (\bm{\theta}_1, \bm{\theta}_2)$, the independence of the layers $W^{(1)}$ and $W^{(2)}$ allows the expectation to factorize. Applying the definitions of $\bar A$ and $\bar B$ yields the first moment:
\begin{align}
\mathbb{E}_{\bm{\theta}} \left[ c_{\bm{\omega}} \right]
&=
\sum_{(j,j')\in\mathfrak{R}(\bm{\omega})}
\mathbb{E}_{\bm{\theta}_1} \left[ W^{(1)}_{j,0} \left(W^{(1)}_{j',0}\right)^* \right]
\mathbb{E}_{\bm{\theta}_2} \left[ \left\langle j'\right| W^{(2)\dagger}\hat O W^{(2)} \left|j\right\rangle \right] \nonumber \\
&=
\sum_{(j,j')\in\mathfrak R(\bm{\omega})} \bar A_{j,j'} \bar B_{j,j'}.
\end{align}
By identical reasoning, replacing the parameter distribution with the exact Haar measure $U \sim \mu_H$ directly yields $\mathbb{E}_{\mathcal{H}} [ c_{\bm{\omega}} ] = \sum_{\mathfrak R(\bm{\omega})} \bar A^{\mathrm{H}}_{j,j'} \bar B^{\mathrm{H}}_{j,j'}$. To derive the cross-moment, we write the complex conjugate of a coefficient for a frequency $\bm{\omega}'$:
\begin{equation}
c_{\bm{\omega}'}^*
=
\sum_{(k,k')\in\mathfrak{R}(\bm{\omega}')}
\left(W^{(1)}_{k,0}\right)^*
W^{(1)}_{k',0}
\left\langle k\right| W^{(2)\dagger}\hat O W^{(2)} \left|k'\right\rangle.
\end{equation}
Multiplying $c_{\bm{\omega}}$ and $c_{\bm{\omega}'}^*$ produces a double sum over the index pairs generating both frequencies. Taking the expectation $\mathbb{E}_{\bm{\theta}}$ and factoring the independent layer contributions gives:
\begin{align}
\mathbb{E}_{\bm{\theta}} \left[ c_{\bm{\omega}} c_{\bm{\omega}'}^* \right]
&=
\sum_{(j,j')\in\mathfrak R(\bm{\omega})} \sum_{(k,k')\in\mathfrak R(\bm{\omega}')}
\mathbb{E}_{\bm{\theta}_1} \left[ W^{(1)}_{j,0} W^{(1)}_{k',0} \left(W^{(1)}_{j',0}\right)^* \left(W^{(1)}_{k,0}\right)^* \right] \nonumber \\
&\quad \times
\mathbb{E}_{\bm{\theta}_2} \left[ \left\langle j'\right| W^{(2)\dagger}\hat O W^{(2)} \left|j\right\rangle \left( \left\langle k'\right| W^{(2)\dagger}\hat O W^{(2)} \left|k\right\rangle \right)^* \right] \nonumber \\
&=
\sum_{(j,j')\in\mathfrak R(\bm{\omega})} \sum_{(k,k')\in\mathfrak R(\bm{\omega}')} A_{j,j';k,k'} B_{j,j';k,k'}.
\end{align}
Under the exact Haar measure, this factorization analogously produces
\begin{equation}
\mathbb{E}_{\mathcal{H}} [ c_{\bm{\omega}} c_{\bm{\omega}'}^* ] = \sum_{(j,j')\in\mathfrak R(\bm{\omega})} \sum_{(k,k')\in\mathfrak R(\bm{\omega}')}  A^{\mathrm{H}}_{j,j';k,k'} B^{\mathrm{H}}_{j,j';k,k'}.
\end{equation}
Finally, for the aggregated second moment of the symmetrized coefficient $a_{\bm{\nu}} = \sum_{\bm{\omega}\in[\bm{\nu}]} c_{\bm{\omega}}$, we expand the absolute square as a sum over all frequency pairs in the orbit:
\begin{equation}
\mathbb{E}_{\bm{\theta}} \left[ |a_{\bm{\nu}}|^2 \right]
=
\mathbb{E}_{\bm{\theta}} \left[ \left( \sum_{\bm{\omega}\in[\bm{\nu}]} c_{\bm{\omega}} \right) \left( \sum_{\bm{\omega}'\in[\bm{\nu}]} c_{\bm{\omega}'}^* \right) \right]
=
\sum_{\bm{\omega}\in[\bm{\nu}]} \sum_{\bm{\omega}'\in[\bm{\nu}]} \mathbb{E}_{\bm{\theta}} \left[ c_{\bm{\omega}} c_{\bm{\omega}'}^* \right].
\end{equation}
Substituting the cross-moment expression expands this into nested sums over $\bm{\omega}, \bm{\omega}'$ and their respective index sets $\mathfrak{R}(\bm{\omega}), \mathfrak{R}(\bm{\omega}')$. Because the frequency generators partition the index space by frequency, the disjoint union $\bigcup_{\bm{\omega}\in[\bm{\nu}]} \mathfrak{R}(\bm{\omega}) = \mathfrak{R}_{[\bm{\nu}]}$ allows us to compactly merge the sums into a single double-summation over the orbit set:
\begin{equation}
\mathbb{E}_{\bm{\theta}} \left[ |a_{\bm{\nu}}|^2 \right]
=
\sum_{(j,j') \in \mathfrak{R}_{[\bm{\nu}]}} \sum_{(k,k') \in \mathfrak{R}_{[\bm{\nu}]}} A_{j,j';k,k'} B_{j,j';k,k'}.
\end{equation}
Applying the same orbit aggregation logic to the ideal Haar cross-moments gives the result for $\mathbb{E}_{\mathcal{H}} [ |a_{\bm{\nu}}|^2 ]$, completing the proof.
\end{proof}

\subsection{Fourier coefficient moments and covariance}

\begin{restate}{Lemma}{lemma:distinct_frequency_properties}
Consider the single-uploading QFM hypothesis function \(f(\bm{x},\bm{\theta})\) in Eq.~\eqref{eq:QFM} with \(L=1\) and Fourier coefficients \(c_{\bm{\omega}}(\bm{\theta})\). If each trainable layer \(W^{(l)}\) forms an \(\varepsilon\)-approximate 2-design in the sense of Eq.~\eqref{eq:approx_2_design_channel_assumption}, then for any two distinct nonzero frequencies \(\bm{\omega}\neq\bm{\omega}'\), we have
\begin{align}
&\left|
\mathbb{E}_{\bm{\theta}}
\left[
c_{\bm{\omega}}
\right]
\right|
\leq
\varepsilon^2\|\hat O\|_F,\\
&\left|
\mathbb{E}_{\bm{\theta}}
\left[
c_{\bm{\omega}}c_{\bm{\omega}'}^*
\right]
\right|
\leq
\varepsilon^2
\|\hat O\|_F^2,\\
&\left|
\operatorname{Cov}_{\bm{\theta}}
\left(
c_{\bm{\omega}},c_{\bm{\omega}'}
\right)
\right|
\leq
\|\hat O\|_F^2
\left(
\varepsilon^2+\varepsilon^4
\right),
\end{align}
where \(\|\cdot\|_{F}\) denotes the Frobenius norm.
\end{restate}

\begin{lemma}[Superoperator Frobenius norm]
\label{lemma:superoperator_norm}
For any superoperator (or quantum channel) $\mathcal{E}$ acting on a matrix $X$, the following inequality holds:
\begin{equation}
    \|\mathcal{E}(X)\|_F \leq \|\mathcal{E}\|_{\infty} \|X\|_F,
\end{equation}
where $\|\cdot\|_F$ denotes the Frobenius norm, and $\|\mathcal{E}\|_{\infty}$ denotes the spectral norm of the matrix form of the corresponding superoperator.
\end{lemma}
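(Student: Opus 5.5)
The plan is to vectorize. Identify $X\in\mathbb C^{d'\times d'}$ with $\mathrm{vec}(X)\in\mathbb C^{d'^2}$. This identification is an isometry from the Frobenius norm to the Euclidean norm: $\|X\|_F=\|\mathrm{vec}(X)\|_2$, since both equal $\big(\sum_{ij}|X_{ij}|^2\big)^{1/2}$.

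Any linear superoperator $\mathcal E$ on $d'\times d'$ matrices then has a unique matrix form $\hat{\mathcal E}\in\mathbb C^{d'^2\times d'^2}$ with $\mathrm{vec}(\mathcal E(X))=\hat{\mathcal E}\,\mathrm{vec}(X)$. This is exactly the matrix whose spectral norm defines $\|\mathcal E\|_\infty$, as in Eq.~\eqref{eq:approx_2_design_channel_assumption}.

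For the channels considered here, $\hat{\mathcal E}$ can be written explicitly. A map of the form $X\mapsto \mathbb E[W^{\otimes 2}X(W^{\dagger})^{\otimes2}]$ has $\hat{\mathcal E}=\mathbb E[W^{\otimes2}\otimes \overline{W}^{\otimes2}]$, up to the ordering convention fixed by $\mathrm{vec}$. Linearity is the only property actually needed, and differences such as $\mathcal M^{(2)}-\mathcal H^{(2)}$ are linear as well.

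The chain of inequalities is
\begin{equation*}
\|\mathcal E(X)\|_F=\|\hat{\mathcal E}\,\mathrm{vec}(X)\|_2\le \|\hat{\mathcal E}\|_{\infty}\,\|\mathrm{vec}(X)\|_2=\|\mathcal E\|_\infty\|X\|_F .
\end{equation*}
The middle inequality is the defining property of the operator $2$-norm: $\|M v\|_2\le\|M\|_\infty\|v\|_2$, where $\|M\|_\infty$ is the largest singular value of $M$.

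There is no real obstacle. The only point needing care is consistency of conventions: $\|\cdot\|_\infty$ for the superoperator must be the spectral norm of the same vectorized matrix used to represent its action on $\mathrm{vec}(X)$. Changing the vectorization convention (row versus column stacking) replaces $\hat{\mathcal E}$ by $P\hat{\mathcal E}P^{\top}$ for a permutation matrix $P$, which leaves both Euclidean norms and singular values unchanged. So the bound holds for any fixed convention.
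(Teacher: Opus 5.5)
Your proof is correct and follows essentially the same route as the paper: vectorize $X$ so that $\|X\|_F=\|\mathrm{vec}(X)\|_2$, represent $\mathcal E$ by its matrix form $\hat{\mathcal E}$, and apply $\|\hat{\mathcal E}v\|_2\le\|\hat{\mathcal E}\|_\infty\|v\|_2$. Your extra remark that switching between row and column stacking conjugates $\hat{\mathcal E}$ by a permutation matrix, leaving its singular values unchanged, is correct and makes explicit a convention-independence that the paper leaves implicit.
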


\begin{proof}
Let $|X\rangle\rangle = \operatorname{vec}(X)$ denote the vectorized column vector of matrix $X$. The Frobenius norm of $X$ is equal to the $L_2$ norm of its vectorized form:
\begin{equation}
    \|X\|_F = \||X\rangle\rangle\|_2.
\end{equation}
Since the superoperator $\mathcal{E}$ is a linear map, it can be written as a matrix $\hat{\mathcal{E}}$ acting on the vectorized state, such that $|\mathcal{E}(X)\rangle\rangle = \hat{\mathcal{E}}|X\rangle\rangle$. For a linear operator $A: V \rightarrow W$, define the spectral norm (Schatten $\infty$-norm) of $A$ as $\|A\|_{\infty}:=\sup\{\frac{\|A v\|_2}{\|v\|_2}: v \neq 0\; \text{and}\; v \in V\}$. By this definition, we also have $\|Av\|_2 \leq \|A\|_{\infty}\|v\|_2$. Applying this fundamental property directly to the matrix form $\hat{\mathcal{E}}$ and vector $|X\rangle\rangle$ yields:
\begin{equation}
    \| \hat{\mathcal{E}} |X\rangle\rangle \|_2 \leq \|\hat{\mathcal{E}}\|_{\infty} \| |X\rangle\rangle \|_2.
\end{equation}
By definition, the spectral norm of the superoperator $\mathcal{E}$ is exactly the spectral norm of its matrix form ($\|\mathcal{E}\|_{\infty} = \|\hat{\mathcal{E}}\|_{\infty}$). Substituting the equivalent Frobenius norms back into the inequality, we immediately obtain:
\begin{equation}
    \|\mathcal{E}(X)\|_F \leq \|\mathcal{E}\|_{\infty} \|X\|_F.
\end{equation}
This completes the proof.
\end{proof}

\begin{lemma}[Restricted Frobenius bounds from approximate second moments]
\label{lemma:restricted_frobenius_from_design}
Let $\mathfrak R(\bm{\omega})$ and $\mathfrak R(\bm{\omega}')$ be any two frequency generators. Under the $\varepsilon$-approximate 2-design assumptions defined in Eq.~\eqref{eq:approx_2_design_channel_assumption}, for every operator $X$ on $\mathbb C^d\otimes\mathbb C^d$,
\begin{align}
&
\left(
\sum_{(j,j')\in\mathfrak R(\bm{\omega})}
\sum_{(k,k')\in\mathfrak R(\bm{\omega}')}
\left|
\left\langle j,k'\right|
\left(
\mathcal M^{(2)}_1-\mathcal H^{(2)}
\right)(X)
\left|j',k\right\rangle
\right|^2
\right)^{1/2}
\leq
\varepsilon\|X\|_F,
\label{eq:restricted_bound_M1}
\\
&
\left(
\sum_{(j,j')\in\mathfrak R(\bm{\omega})}
\sum_{(k,k')\in\mathfrak R(\bm{\omega}')}
\left|
\left\langle j',k\right|
\left(
\widetilde{\mathcal M}^{(2)}_2-\mathcal H^{(2)}
\right)(X)
\left|j,k'\right\rangle
\right|^2
\right)^{1/2}
\leq
\varepsilon\|X\|_F.
\label{eq:restricted_bound_M2}
\end{align}
Moreover, for every operator $Y$ on $\mathbb C^d$,
\begin{align}
&
\left(
\sum_{(j,j')\in\mathfrak R(\bm{\omega})}
\left|
\left\langle j\right|
\operatorname{Tr}_2
\left[
\left(
\mathcal M^{(2)}_1-\mathcal H^{(2)}
\right)
\left(
Y\otimes\frac{I}{d}
\right)
\right]
\left|j'\right\rangle
\right|^2
\right)^{1/2}
\leq
\varepsilon\|Y\|_F,
\label{eq:partial_trace_bound_M1}
\\
&
\left(
\sum_{(j,j')\in\mathfrak R(\bm{\omega})}
\left|
\left\langle j'\right|
\operatorname{Tr}_2
\left[
\left(
\widetilde{\mathcal M}^{(2)}_2-\mathcal H^{(2)}
\right)
\left(
Y\otimes\frac{I}{d}
\right)
\right]
\left|j\right\rangle
\right|^2
\right)^{1/2}
\leq
\varepsilon\|Y\|_F.
\label{eq:partial_trace_bound_M2}
\end{align}
\end{lemma}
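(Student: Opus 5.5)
The four inequalities all reduce to Lemma~\ref{lemma:superoperator_norm} combined with one elementary observation: restricting a sum of squared matrix entries to a subset of index positions can only decrease it. For \eqref{eq:restricted_bound_M1}, let $\mathcal D_1:=\mathcal M^{(2)}_1-\mathcal H^{(2)}$ and set $Z:=\mathcal D_1(X)$, an operator on $\mathbb C^d\otimes\mathbb C^d$. Each summand is $|\langle j,k'|Z|j',k\rangle|^2$, the squared modulus of the entry of $Z$ in row $(j,k')$ and column $(j',k)$.

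The key combinatorial step is to show that the map $\big((j,j'),(k,k')\big)\mapsto\big((j,k'),(j',k)\big)$ is injective on $\mathfrak R(\bm{\omega})\times\mathfrak R(\bm{\omega}')$. This is immediate, since the four labels $j,j',k,k'$ can be read off from the image. Hence the restricted double sum is bounded by the full sum $\sum_{a,b}|\langle a|Z|b\rangle|^2=\|Z\|_F^2$. Lemma~\ref{lemma:superoperator_norm} then gives $\|Z\|_F\le\|\mathcal D_1\|_\infty\|X\|_F\le\varepsilon\|X\|_F$, using the layer-specific bound $\|\mathcal M^{(2)}_1-\mathcal H^{(2)}\|_\infty\le\varepsilon$ stated before Definition~\ref{def:moment_operators}. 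The argument for \eqref{eq:restricted_bound_M2} is identical, with $\widetilde{\mathcal M}^{(2)}_2$ in place of $\mathcal M^{(2)}_1$ and the index map $\big((j,j'),(k,k')\big)\mapsto\big((j',k),(j,k')\big)$, which is again injective.

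For the partial-trace bounds \eqref{eq:partial_trace_bound_M1}--\eqref{eq:partial_trace_bound_M2}, set $T:=\operatorname{Tr}_2[\mathcal D_1(Y\otimes I/d)]$. The sum over $(j,j')\in\mathfrak R(\bm{\omega})$ touches each entry $\langle j|T|j'\rangle$ at most once, so it is at most $\|T\|_F^2$. This step requires a Frobenius contraction bound for the partial trace, namely $\|\operatorname{Tr}_2 Z\|_F\le\sqrt d\,\|Z\|_F$. I will prove it by writing $(\operatorname{Tr}_2Z)_{ab}=\sum_c Z_{(a,c),(b,c)}$ and applying Cauchy--Schwarz to the sum over $c$:
\begin{equation*}
\|\operatorname{Tr}_2 Z\|_F^2=\sum_{a,b}\Big|\sum_{c=1}^{d} Z_{(a,c),(b,c)}\Big|^2\le d\sum_{a,b}\sum_{c}|Z_{(a,c),(b,c)}|^2\le d\,\|Z\|_F^2 .
\end{equation*}
Then Lemma~\ref{lemma:superoperator_norm} gives
\begin{equation*}
\|T\|_F\le\sqrt d\,\varepsilon\,\|Y\otimes I/d\|_F=\sqrt d\,\varepsilon\,\|Y\|_F\,\frac{\|I\|_F}{d}=\sqrt d\,\varepsilon\,\|Y\|_F\,\frac{\sqrt d}{d}=\varepsilon\|Y\|_F .
\end{equation*}
The same argument with $\widetilde{\mathcal M}^{(2)}_2$ proves \eqref{eq:partial_trace_bound_M2}; the transposed index order $\langle j'|\cdot|j\rangle$ is harmless because the sum again visits each entry at most once.

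I expect the partial-trace step to be the only delicate point. The normalization is tight: the factor $\sqrt d$ lost in the partial trace is exactly compensated by $\|I/d\|_F=1/\sqrt d$. I will also state explicitly that $\|\cdot\|_\infty$ here is the operator norm on the vectorized (Hilbert--Schmidt) space, which is the sense required by Lemma~\ref{lemma:superoperator_norm}.
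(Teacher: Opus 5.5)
Your proof is correct and follows the same route as the paper. You bound the restricted sum by the full Frobenius norm of the deviation output, apply Lemma~\ref{lemma:superoperator_norm}, and for the partial-trace bounds combine $\|\operatorname{Tr}_2 Z\|_F\le\sqrt d\,\|Z\|_F$ with $\|I/d\|_F=1/\sqrt d$. The only differences are that you make the injectivity of the index map explicit, where the paper leaves it implicit, and that you prove the partial-trace inequality directly by Cauchy--Schwarz instead of citing Rastegin's Proposition~1.
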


\begin{proof}
We prove Eq.~\eqref{eq:restricted_bound_M1}; the proof for Eq.~\eqref{eq:restricted_bound_M2} is identical. Let $Z_X = (\mathcal M^{(2)}_1-\mathcal H^{(2)})(X)$. The restricted sum of matrix elements is naturally bounded by the full Frobenius norm:
\begin{equation}
\sum_{(j,j')\in\mathfrak R(\bm{\omega})}
\sum_{(k,k')\in\mathfrak R(\bm{\omega}')}
\left|
\left\langle j,k'\right|
Z_X
\left|j',k\right\rangle
\right|^2
\leq
\sum_{a,b,c,e}
\left|
\left\langle a,b\right|
Z_X
\left|c,e\right\rangle
\right|^2
=
\|Z_X\|_F^2.
\end{equation}
Using Eq.~\eqref{eq:approx_2_design_channel_assumption} and Lemma~\ref{lemma:superoperator_norm}, we get $\|Z_X\|_F \leq \|\mathcal M^{(2)}_1-\mathcal H^{(2)}\|_{\infty} \|X\|_F \leq \varepsilon\|X\|_F$. This proves Eq.~\eqref{eq:restricted_bound_M1}. For the partial-trace estimate, let $Z_Y = (\mathcal M^{(2)}_1-\mathcal H^{(2)})(Y\otimes\frac{I}{d})$. Then
\begin{align}
&\sum_{(j,j')\in\mathfrak R(\bm{\omega})}
\left|
\left\langle j\right|
\operatorname{Tr}_2(Z_Y)
\left|j'\right\rangle
\right|^2
\leq
\left\| \operatorname{Tr}_2(Z_Y) \right\|_F^2
\leq
d\|Z_Y\|_F^2 \nonumber \\
&\leq
d\varepsilon^2
\left\| Y\otimes\frac{I}{d} \right\|_F^2
=
d\varepsilon^2 \|Y\|_F^2 \left\|\frac{I}{d}\right\|_F^2
=
\varepsilon^2\|Y\|_F^2,
\end{align}
where we used $\|\operatorname{Tr}_2(Z_Y)\|_F\leq\sqrt d\|Z_Y\|_F$ (this formula is shown in \cite[Proposition 1]{Rastegin2012}) and $\|I/d\|_F=1/\sqrt d$. This proves Eq.~\eqref{eq:partial_trace_bound_M1}, and Eq.~\eqref{eq:partial_trace_bound_M2} follows identically.
\end{proof}

\begin{lemma}[Restricted Frobenius bounds over frequency orbits]
\label{lemma:orbit_restricted_frobenius}
Let $[\bm{\nu}]$ and $[\bm{\nu}']$ be any two frequency orbits, and let $\mathfrak R_{[\bm{\nu}]} = \bigcup_{\bm{\omega}\in[\bm{\nu}]} \mathfrak R(\bm{\omega})$ and $\mathfrak R_{[\bm{\nu}']} = \bigcup_{\bm{\omega}\in[\bm{\nu}']} \mathfrak R(\bm{\omega})$ denote their aggregated frequency generators. Under the $\varepsilon$-approximate 2-design assumptions defined in Eq.~\eqref{eq:approx_2_design_channel_assumption}, for every operator $X$ on $\mathbb C^d\otimes\mathbb C^d$,
\begin{align}
&
\left(
\sum_{(j,j')\in\mathfrak R_{[\bm{\nu}]}}
\sum_{(k,k')\in\mathfrak R_{[\bm{\nu}']}}
\left|
\left\langle j,k'\right|
\left(
\mathcal M^{(2)}_1-\mathcal H^{(2)}
\right)(X)
\left|j',k\right\rangle
\right|^2
\right)^{1/2}
\leq
\varepsilon\|X\|_F,
\label{eq:orbit_restricted_bound_M1}
\\
&
\left(
\sum_{(j,j')\in\mathfrak R_{[\bm{\nu}]}}
\sum_{(k,k')\in\mathfrak R_{[\bm{\nu}']}}
\left|
\left\langle j',k\right|
\left(
\widetilde{\mathcal M}^{(2)}_2-\mathcal H^{(2)}
\right)(X)
\left|j,k'\right\rangle
\right|^2
\right)^{1/2}
\leq
\varepsilon\|X\|_F.
\label{eq:orbit_restricted_bound_M2}
\end{align}
Moreover, for every operator $Y$ on $\mathbb C^d$,
\begin{align}
&
\left(
\sum_{(j,j')\in\mathfrak R_{[\bm{\nu}]}}
\left|
\left\langle j\right|
\operatorname{Tr}_2
\left[
\left(
\mathcal M^{(2)}_1-\mathcal H^{(2)}
\right)
\left(
Y\otimes\frac{I}{d}
\right)
\right]
\left|j'\right\rangle
\right|^2
\right)^{1/2}
\leq
\varepsilon\|Y\|_F,
\label{eq:orbit_partial_trace_bound_M1}
\\
&
\left(
\sum_{(j,j')\in\mathfrak R_{[\bm{\nu}]}}
\left|
\left\langle j'\right|
\operatorname{Tr}_2
\left[
\left(
\widetilde{\mathcal M}^{(2)}_2-\mathcal H^{(2)}
\right)
\left(
Y\otimes\frac{I}{d}
\right)
\right]
\left|j\right\rangle
\right|^2
\right)^{1/2}
\leq
\varepsilon\|Y\|_F.
\label{eq:orbit_partial_trace_bound_M2}
\end{align}
\end{lemma}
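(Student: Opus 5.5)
The argument reduces Lemma~\ref{lemma:orbit_restricted_frobenius} to exactly the same mechanism as Lemma~\ref{lemma:restricted_frobenius_from_design}. Nothing about the frequency generators was used there except that they index a subset of matrix entries. So the plan is to show that the orbit-aggregated index sets $\mathfrak R_{[\bm{\nu}]}\times\mathfrak R_{[\bm{\nu}']}$ also pick out each matrix entry of the relevant $d^2\times d^2$ (or $d\times d$) operator at most once. Then we bound the restricted sum of squared entries by the full Frobenius norm.

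\textbf{Step 1: the index map is injective.} Since the frequency generators $\mathfrak R(\bm{\omega})$ for distinct $\bm{\omega}$ are disjoint (each pair $(j,j')$ determines the single frequency $\bm{\lambda}_j-\bm{\lambda}_{j'}$), the union $\mathfrak R_{[\bm{\nu}]}$ is a subset of $[1,d]^2$ with no repetition. The same holds for $\mathfrak R_{[\bm{\nu}']}$. The map
\[
\bigl((j,j'),(k,k')\bigr)\mapsto\bigl((j,k'),(j',k)\bigr)
\]
from $\mathfrak R_{[\bm{\nu}]}\times\mathfrak R_{[\bm{\nu}']}$ into row/column index pairs of $\mathbb C^d\otimes\mathbb C^d$ is injective, since it is just a relabeling of the four indices. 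The same holds for $(j',k),(j,k')$ in the second inequality.

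\textbf{Step 2: bound the four-index sums.} Set $Z_X=(\mathcal M^{(2)}_1-\mathcal H^{(2)})(X)$. By injectivity,
\[
\sum_{(j,j')\in\mathfrak R_{[\bm{\nu}]}}\sum_{(k,k')\in\mathfrak R_{[\bm{\nu}']}}\left|\langle j,k'|Z_X|j',k\rangle\right|^2\le\|Z_X\|_F^2 .
\]
Lemma~\ref{lemma:superoperator_norm} together with Eq.~\eqref{eq:approx_2_design_channel_assumption} then gives $\|Z_X\|_F\le\varepsilon\|X\|_F$. This yields Eq.~\eqref{eq:orbit_restricted_bound_M1}. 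Eq.~\eqref{eq:orbit_restricted_bound_M2} follows identically with $\widetilde{\mathcal M}^{(2)}_2$, using the layer-specific bound stated before Definition~\ref{def:moment_operators}.

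\textbf{Step 3: the partial-trace bounds.} For Eqs.~\eqref{eq:orbit_partial_trace_bound_M1}--\eqref{eq:orbit_partial_trace_bound_M2}, set $Z_Y=(\mathcal M^{(2)}_1-\mathcal H^{(2)})(Y\otimes I/d)$. The pairs $(j,j')\in\mathfrak R_{[\bm{\nu}]}$ are distinct entries of the $d\times d$ matrix $\operatorname{Tr}_2 Z_Y$, so the restricted sum is at most $\|\operatorname{Tr}_2 Z_Y\|_F^2$. We then chain the same three estimates as in Lemma~\ref{lemma:restricted_frobenius_from_design}:
\[
\|\operatorname{Tr}_2 Z_Y\|_F^2\le d\|Z_Y\|_F^2\le d\varepsilon^2\|Y\otimes I/d\|_F^2=\varepsilon^2\|Y\|_F^2 .
\]
The first uses \cite[Proposition 1]{Rastegin2012}, the second uses Lemma~\ref{lemma:superoperator_norm}, and the third uses $\|I/d\|_F^2=1/d$.

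\textbf{Main obstacle.} The only point needing care is Step 1, the disjointness and no-double-counting claim. If $\mathfrak R_{[\bm{\nu}]}$ were a multiset, for instance if the orbit union double-counted some pair, the restricted sum could exceed $\|Z_X\|_F^2$. The argument therefore rests on the frequency generators partitioning $[1,d]^2$, which follows because each index pair determines a unique frequency. Once this is established, the rest is a verbatim repetition of the earlier proof. Indeed, the earlier lemma already holds for arbitrary subsets of index pairs, so one could alternatively cite it directly with $\mathfrak R(\bm{\omega})$ replaced by the subset $\mathfrak R_{[\bm{\nu}]}$.
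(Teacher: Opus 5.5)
Your proposal is correct and follows the paper's proof essentially verbatim: you bound the restricted sum by the full Frobenius norm of $Z_X$ (or of $\operatorname{Tr}_2 Z_Y$), then apply Lemma~\ref{lemma:superoperator_norm} with the $\varepsilon$-approximate 2-design bound, and in the partial-trace case you also use $\|\operatorname{Tr}_2 Z_Y\|_F\le\sqrt d\|Z_Y\|_F$ and $\|I/d\|_F=1/\sqrt d$. Your Step 1 spells out what the paper leaves implicit, namely that $\mathfrak R_{[\bm{\nu}]}$ is a genuine subset of index pairs so no entry is counted twice; this is automatic because a union of sets is a set, and disjointness of the $\mathfrak R(\bm{\omega})$ is needed only elsewhere, e.g.\ when merging sums in Lemma~\ref{lemma:coefficient_moment_relations}.
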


\begin{proof}
We prove Eq.~\eqref{eq:orbit_restricted_bound_M1}; the proof for Eq.~\eqref{eq:orbit_restricted_bound_M2} is identical. Let $Z_X = (\mathcal M^{(2)}_1-\mathcal H^{(2)})(X)$. Since $\mathfrak R_{[\bm{\nu}]}$ and $\mathfrak R_{[\bm{\nu}']}$ are simply subsets of the full Hilbert space index pairs, the restricted sum of matrix elements over the orbit generators is naturally bounded by the full Frobenius norm:
\begin{equation}
\sum_{(j,j')\in\mathfrak R_{[\bm{\nu}]}}
\sum_{(k,k')\in\mathfrak R_{[\bm{\nu}']}}
\left|
\left\langle j,k'\right|
Z_X
\left|j',k\right\rangle
\right|^2
\leq
\sum_{a,b,c,e}
\left|
\left\langle a,b\right|
Z_X
\left|c,e\right\rangle
\right|^2
=
\|Z_X\|_F^2.
\end{equation}
Using Eq.~\eqref{eq:approx_2_design_channel_assumption} and Lemma~\ref{lemma:superoperator_norm}, we get $\|Z_X\|_F \leq \|\mathcal M^{(2)}_1-\mathcal H^{(2)}\|_{\infty} \|X\|_F \leq \varepsilon\|X\|_F$. Taking the square root proves Eq.~\eqref{eq:orbit_restricted_bound_M1}.

For the partial-trace estimate, let $Z_Y = (\mathcal M^{(2)}_1-\mathcal H^{(2)})(Y\otimes\frac{I}{d})$. Then
\begin{align}
&\sum_{(j,j')\in\mathfrak R_{[\bm{\nu}]}}
\left|
\left\langle j\right|
\operatorname{Tr}_2(Z_Y)
\left|j'\right\rangle
\right|^2
\leq
\left\| \operatorname{Tr}_2(Z_Y) \right\|_F^2
\leq
d\|Z_Y\|_F^2 \nonumber \\
&\leq
d\varepsilon^2
\left\| Y\otimes\frac{I}{d} \right\|_F^2
=
d\varepsilon^2 \|Y\|_F^2 \left\|\frac{I}{d}\right\|_F^2
=
\varepsilon^2\|Y\|_F^2,
\end{align}
where we used $\|\operatorname{Tr}_2(Z_Y)\|_F\leq\sqrt d\|Z_Y\|_F$ (this formula is shown in \cite[Proposition 1]{Rastegin2012}) and $\|I/d\|_F=1/\sqrt d$. This proves Eq.~\eqref{eq:orbit_partial_trace_bound_M1}, and Eq.~\eqref{eq:orbit_partial_trace_bound_M2} follows identically.
\end{proof}

\begin{lemma}[Haar pairings vanish on incompatible frequencies]
\label{lemma:haar_pairing_zero_frequency}
Recall the ideal Haar operators from Definition~\ref{def:moment_operators}. If $\bm{\omega}\neq\bm{\omega}'$, $\bm{\omega}\neq0$, and $\bm{\omega}'\neq0$, then
\begin{equation}
A^{\mathrm H}_{j,j';k,k'} = B^{\mathrm H}_{j,j';k,k'} = 0
\end{equation}
for every $(j,j')\in\mathfrak R(\bm{\omega})$ and $(k,k')\in\mathfrak R(\bm{\omega}')$. Furthermore, for the first-moment Haar quantities, if $\bm{\omega}\neq0$, then
\begin{equation}
\bar A^{\mathrm H}_{j,j'} = \bar B^{\mathrm H}_{j,j'} = 0
\end{equation}
for every $(j,j')\in\mathfrak R(\bm{\omega})$.
\end{lemma}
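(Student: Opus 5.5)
The plan is to compute the ideal Haar moment operators explicitly with the Weingarten formula already used in the proof of Lemma~\ref{lemma:fourier_coefficient_decoupling}, and then check which index pairings can survive. Write \(U\sim\mu_H\). The second moments are treated first.

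For \(A^{\mathrm H}_{j,j';k,k'}=\mathbb{E}[U_{j,0}U_{k',0}U^*_{j',0}U^*_{k,0}]\), set \(i_1=j\), \(i_2=k'\), \(i_3=j'\), \(i_4=k\), with all column indices equal to \(0\). The Weingarten expression is then a combination of only two row deltas, \(\delta_{j,j'}\delta_{k',k}\) and \(\delta_{j,k}\delta_{k',j'}\).

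\begin{itemize}
\item The identity pairing \(j=j'\) gives \(\bm{\lambda}_j-\bm{\lambda}_{j'}=0\), which contradicts \((j,j')\in\mathfrak R(\bm{\omega})\) with \(\bm{\omega}\neq0\).
\item The swap pairing \(j=k\), \(j'=k'\) gives \(\bm{\omega}=\bm{\lambda}_j-\bm{\lambda}_{j'}=\bm{\lambda}_k-\bm{\lambda}_{k'}=\bm{\omega}'\), which contradicts \(\bm{\omega}\neq\bm{\omega}'\).
\end{itemize}

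Both deltas vanish, so \(A^{\mathrm H}=0\).

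For \(B^{\mathrm H}_{j,j';k,k'}=\langle j',k|\mathcal H^{(2)}(\hat O\otimes\hat O)|j,k'\rangle\), I use Schur--Weyl duality. The Haar twirl projects onto the span of \(I\) and the swap \(F\), giving \(\mathcal H^{(2)}(X)=\alpha I+\beta F\) with \(\alpha,\beta\) determined by \(\operatorname{Tr}X\) and \(\operatorname{Tr}(FX)\). (Equivalently, expand the product of four entries of \(U\) and four of \(U^*\) via Weingarten, where the row deltas again fall into the two pairings.) The matrix elements are \(\langle j',k|I|j,k'\rangle=\delta_{j',j}\delta_{k,k'}\) and \(\langle j',k|F|j,k'\rangle=\delta_{j',k'}\delta_{k,j}\). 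These are exactly the same two patterns as before, and they are excluded by the same two arguments (\(\bm{\omega}\neq0\) and \(\bm{\omega}\neq\bm{\omega}'\)). Hence \(B^{\mathrm H}=0\).

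The first moments follow the same way. \(\bar A^{\mathrm H}_{j,j'}=\mathbb{E}[U_{j,0}U^*_{j',0}]=\delta_{j,j'}/d\), and \(\bar B^{\mathrm H}_{j,j'}=\langle j'|\mathbb{E}[U^\dagger\hat OU]|j\rangle=\frac{\operatorname{Tr}\hat O}{d}\delta_{j',j}\). Both are multiples of \(\delta_{j,j'}\), which is zero on \(\mathfrak R(\bm{\omega})\) for \(\bm{\omega}\neq0\). The partial-trace forms in Definition~\ref{def:moment_operators} agree with these, since \(\operatorname{Tr}_2\mathcal H^{(2)}(Y\otimes I/d)=\mathbb{E}[UYU^\dagger]=\operatorname{Tr}(Y)I/d\).

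The main obstacle is bookkeeping. The bra/ket placement in \(B^{\mathrm H}\) (\(\langle j',k|\cdot|j,k'\rangle\)) has to be tracked carefully, so that the two surviving delta patterns really are identity and swap with respect to the pairs \((j,j')\) and \((k,k')\), rather than some cross pattern that frequency differences would not rule out. I would verify this by writing \(B^{\mathrm H}\) as \(\mathbb{E}[(U^\dagger\hat OU)_{j'j}\,\overline{(U^\dagger\hat OU)_{k'k}}]\) and using \(\overline{M_{k'k}}=(M^\dagger)_{kk'}=M_{kk'}\) for Hermitian \(M=U^\dagger\hat OU\). This gives \(\mathbb{E}[M_{j'j}M_{kk'}]=\langle j',k|\mathbb{E}[M\otimes M]|j,k'\rangle\), which matches the stated index placement.
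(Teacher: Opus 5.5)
Your proposal is correct and follows essentially the same route as the paper. In both, $A^{\mathrm H}$ and $B^{\mathrm H}$ reduce to the identity pattern $\delta_{j,j'}\delta_{k,k'}$, which is excluded by $\bm{\omega}\neq0$, and the swap pattern, which is excluded by $\bm{\omega}\neq\bm{\omega}'$; the argument uses $\mathcal H^{(2)}(\hat O\otimes\hat O)=\alpha I\otimes I+\beta F$ for $B^{\mathrm H}$ and $\mathbb{E}[UYU^\dagger]=\operatorname{Tr}(Y)I/d$ for the first moments, and the only cosmetic difference is that the paper gets $A^{\mathrm H}$ from $\mathcal H^{(2)}(|0,0\rangle\langle0,0|)=(I+F)/(d(d+1))$ instead of the Weingarten formula, which gives the same coefficients.
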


\begin{proof}
By the symmetric-subspace formula for the Haar second moment in \citet[Theorem~22]{Mele2024},
\begin{equation}
\mathcal H^{(2)}(|0,0\rangle\langle0,0|) = \frac{1}{d(d+1)}(I+F),
\end{equation}
where $I$ is the identity matrix and \(F=\sum_{i, j=1}^d|i, j\rangle\langle j, i|\) is the swap operator. Substituting this into the definition of $A^{\mathrm H}_{j,j';k,k'}=\left\langle j,k'\right| \mathcal H^{(2)} \left(|0,0\rangle\langle0,0| \right) \left|j',k\right\rangle$ yields:
\begin{equation}
A^{\mathrm H}_{j,j';k,k'}
=
\frac{1}{d(d+1)}
\left\langle j,k'\right| (I+F) \left|j',k\right\rangle
=
\frac{1}{d(d+1)}
\left( \delta_{j,j'}\delta_{k',k} + \delta_{j,k}\delta_{k',j'} \right).
\label{eq:AH_pairing_compact}
\end{equation}
The first pairing $\delta_{j,j'}\delta_{k',k}$ implies $\bm{\omega}=0$ and $\bm{\omega}'=0$, while the second pairing $\delta_{j,k}\delta_{k',j'}$ implies $\bm{\omega}' = \bm{\lambda}_k-\bm{\lambda}_{k'} = \bm{\lambda}_j-\bm{\lambda}_{j'} = \bm{\omega}$. Both conditions contradict our assumptions; therefore $A^{\mathrm H}_{j,j';k,k'}=0$. Similarly, by the Haar second-moment twirling formula in \citet[Corollary~13]{Mele2024}, there exist constants $\alpha,\beta$ such that
\begin{equation}
\mathcal H^{(2)}(\hat O\otimes\hat O) = \alpha I\otimes I+\beta F,
\end{equation}
where
\begin{equation}
\label{eq:decomposition_coefficient}
\alpha=\frac{d\operatorname{Tr}(\hat{O})^2- \|\hat{O}\|_F^2}{d(d^2-1)} \; \text{and} \; \beta=\frac{d\|\hat{O}\|_F^2-(\operatorname{Tr} \hat{O})^2}{d\left(d^2-1\right)}.
\end{equation}
Applying this to $B^{\mathrm H}_{j,j';k,k'}=\left\langle j',k\right| \mathcal H^{(2)} \left(\hat O\otimes\hat O \right) \left|j,k'\right\rangle$ gives:
\begin{equation}
B^{\mathrm H}_{j,j';k,k'}
=
\left\langle j',k\right| (\alpha I\otimes I+\beta F) \left|j,k'\right\rangle
=
\alpha\delta_{j',j}\delta_{k,k'} + \beta\delta_{j',k'}\delta_{k,j}.
\label{eq:BH_pairing_compact}
\end{equation}
The first pairing again forces both frequencies to be zero, and the second forces $\bm{\omega}'=\bm{\omega}$, so $B^{\mathrm H}_{j,j';k,k'}=0$. For the first moments, standard Haar integration gives $\mathbb{E}_{U\sim\mu_H} [ U A U^\dagger ] = \frac{\operatorname{Tr}(A)}{d}I$ \cite{Roberts2017}. Applying this
to $\bar A^{\mathrm H}_{j,j'}=
\mathbb{E}_{U\sim\mu_H}
\left[
\left\langle j\right|
U|0\rangle\langle0|U^\dagger
\left|j'\right\rangle
\right]
\nonumber,
\bar B^{\mathrm H}_{j,j'}
=
\mathbb{E}_{U\sim\mu_H} \left[ \left\langle j'\right| U^\dagger\hat O U \left|j\right\rangle \right]$ yields:
\begin{equation}
\bar A^{\mathrm H}_{j,j'} = \frac{\delta_{j,j'}}{d}, \qquad \bar B^{\mathrm H}_{j,j'} = \frac{\operatorname{Tr}(\hat O)}{d}\delta_{j,j'}.
\end{equation}
For any $(j,j')\in\mathfrak R(\bm{\omega})$ with $\bm{\omega}\neq0$, the diagonal pairing $j=j'$ is impossible since $\bm{\lambda}_j-\bm{\lambda}_{j'}\neq 0$. Thus $\bar A^{\mathrm H}_{j,j'}=\bar B^{\mathrm H}_{j,j'}=0$.
\end{proof}

\begin{lemma}[Nonzero-frequency and Orbit First Moment Bound]
\label{lemma:nonzero_frequency_first_moment_bound}
For every nonzero frequency $\bm{\omega}\neq0$, the expectation of the individual Fourier coefficient satisfies:
\begin{equation}
\left| \mathbb{E}_{\bm{\theta}} \left[ c_{\bm{\omega}} \right] \right| \leq \varepsilon^2\|\hat O\|_F.
\end{equation}
Furthermore, for any frequency orbit $[\bm{\nu}]$ containing only nonzero frequencies, the expectation of the symmetrized coefficient $a_{\bm{\nu}} = \sum_{\bm{\omega}\in[\bm{\nu}]} c_{\bm{\omega}}$ satisfies the same bound:
\begin{equation}
\left| \mathbb{E}_{\bm{\theta}} \left[ a_{\bm{\nu}} \right] \right| \leq \varepsilon^2\|\hat O\|_F.
\end{equation}
\end{lemma}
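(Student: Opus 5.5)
The plan starts from the first-moment formula in Lemma~\ref{lemma:coefficient_moment_relations}, Eq.~\eqref{eq:coeff_first_moment_theta}:
\[
\mathbb{E}_{\bm{\theta}}[c_{\bm{\omega}}]=\sum_{(j,j')\in\mathfrak R(\bm{\omega})}\bar A_{j,j'}\bar B_{j,j'} .
\]
The first step is to split each moment into its Haar part and its deviation, $\bar A=\bar A^{\mathrm H}+\Delta\bar A$ and $\bar B=\bar B^{\mathrm H}+\Delta\bar B$. Since $\bm{\omega}\neq0$, Lemma~\ref{lemma:haar_pairing_zero_frequency} gives $\bar A^{\mathrm H}_{j,j'}=\bar B^{\mathrm H}_{j,j'}=0$ for every $(j,j')\in\mathfrak R(\bm{\omega})$. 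This holds because no diagonal pair $j=j'$ can generate a nonzero frequency. Consequently the three cross terms $\bar A^{\mathrm H}\bar B^{\mathrm H}$, $\bar A^{\mathrm H}\Delta\bar B$ and $\Delta\bar A\,\bar B^{\mathrm H}$ all vanish on the summation set. This leaves
\[
\mathbb{E}_{\bm{\theta}}[c_{\bm{\omega}}]=\sum_{(j,j')\in\mathfrak R(\bm{\omega})}\Delta\bar A_{j,j'}\,\Delta\bar B_{j,j'} ,
\]
which is a product of two first-order deviations. That is the source of the $\varepsilon^2$.

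The next step is Cauchy--Schwarz over the index set $\mathfrak R(\bm{\omega})$:
\[
\bigl|\mathbb{E}_{\bm{\theta}}[c_{\bm{\omega}}]\bigr|\leq\Bigl(\sum_{\mathfrak R(\bm{\omega})}|\Delta\bar A_{j,j'}|^2\Bigr)^{1/2}\Bigl(\sum_{\mathfrak R(\bm{\omega})}|\Delta\bar B_{j,j'}|^2\Bigr)^{1/2}.
\]
Definition~\ref{def:moment_operators} writes $\Delta\bar A$ as a matrix element of $\operatorname{Tr}_2[(\mathcal M^{(2)}_1-\mathcal H^{(2)})(|0\rangle\langle0|\otimes I/d)]$ and $\Delta\bar B$ as a matrix element of $\operatorname{Tr}_2[(\widetilde{\mathcal M}^{(2)}_2-\mathcal H^{(2)})(\hat O\otimes I/d)]$. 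I then apply the partial-trace bounds Eqs.~\eqref{eq:partial_trace_bound_M1} and \eqref{eq:partial_trace_bound_M2} of Lemma~\ref{lemma:restricted_frobenius_from_design}, taking $Y=|0\rangle\langle0|$ and $Y=\hat O$ respectively. The first factor is then at most $\varepsilon\||0\rangle\langle0|\|_F=\varepsilon$ and the second at most $\varepsilon\|\hat O\|_F$, so the product gives $\varepsilon^2\|\hat O\|_F$. No redundancy factor appears, because the restricted sums are dominated by full Frobenius norms.

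For the orbit statement, linearity gives $\mathbb{E}_{\bm{\theta}}[a_{\bm{\nu}}]=\sum_{\bm{\omega}\in[\bm{\nu}]}\mathbb{E}_{\bm{\theta}}[c_{\bm{\omega}}]$. I will not sum the individual bounds, since that would cost a factor $|[\bm{\nu}]|$. Instead, the frequency generators $\mathfrak R(\bm{\omega})$ are disjoint, so this sum equals a single sum of $\bar A_{j,j'}\bar B_{j,j'}$ over $\mathfrak R_{[\bm{\nu}]}$. Every frequency in the orbit is nonzero, so the Haar terms vanish on all of $\mathfrak R_{[\bm{\nu}]}$, again by Lemma~\ref{lemma:haar_pairing_zero_frequency}. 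The same Cauchy--Schwarz step then goes through with Eqs.~\eqref{eq:orbit_partial_trace_bound_M1} and \eqref{eq:orbit_partial_trace_bound_M2} of Lemma~\ref{lemma:orbit_restricted_frobenius}, and yields the identical bound $\varepsilon^2\|\hat O\|_F$.

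The one point requiring care is the index bookkeeping. The Haar first moment must be diagonal in exactly the index pair used by $\mathfrak R$: $\Delta\bar B$ is defined through $\langle j'|\cdot|j\rangle$, matching the $\langle j'|\cdots|j\rangle$ ordering in Eq.~\eqref{eq:partial_trace_bound_M2}. The partial-trace Frobenius estimate with the $1/d$ normalization has already been absorbed into Lemmas~\ref{lemma:restricted_frobenius_from_design} and \ref{lemma:orbit_restricted_frobenius}, so it needs no further work here.
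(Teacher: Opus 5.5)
Your proposal is correct and follows the paper's proof essentially step for step: the Haar first moments vanish on $\mathfrak R(\bm{\omega})$ because $j\neq j'$ there, leaving $\sum\Delta\bar A_{j,j'}\Delta\bar B_{j,j'}$, which is then bounded by Cauchy--Schwarz together with the partial-trace estimates of Lemmas~\ref{lemma:restricted_frobenius_from_design} and~\ref{lemma:orbit_restricted_frobenius} applied with $Y=|0\rangle\langle0|$ and $Y=\hat O$. Your orbit case also matches the paper: the disjoint generators are merged into $\mathfrak R_{[\bm{\nu}]}$ before Cauchy--Schwarz is applied, so the bound carries no $|[\bm{\nu}]|$ factor.
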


\begin{proof}
By Lemma~\ref{lemma:coefficient_moment_relations}, the expectation of an individual coefficient is $\mathbb{E}_{\bm{\theta}} [ c_{\bm{\omega}} ] = \sum_{(j,j')\in\mathfrak R(\bm{\omega})} \bar A_{j,j'}\bar B_{j,j'}$. Expanding the operators into their Haar and deviation components (Definition~\ref{def:moment_operators}) gives $\bar A_{j,j'} = \bar A^{\mathrm H}_{j,j'} + \Delta\bar A_{j,j'}$ and $\bar B_{j,j'} = \bar B^{\mathrm H}_{j,j'} + \Delta\bar B_{j,j'}$. By Lemma~\ref{lemma:haar_pairing_zero_frequency}, since $\bm{\omega}\neq0$, the ideal Haar components vanish exactly ($\bar A^{\mathrm H}_{j,j'} = \bar B^{\mathrm H}_{j,j'} = 0$). Thus, the expectation reduces entirely to the deviations:
\begin{equation}
\mathbb{E}_{\bm{\theta}} [ c_{\bm{\omega}} ]
=
\sum_{(j,j')\in\mathfrak R(\bm{\omega})} \Delta\bar A_{j,j'} \Delta\bar B_{j,j'}.
\end{equation}
Similarly, for the symmetrized coefficient $a_{\bm{\nu}} = \sum_{\bm{\omega}\in[\bm{\nu}]} c_{\bm{\omega}}$, we can sum the expectations over all frequencies in the orbit. Because the frequency generators $\mathfrak{R}(\bm{\omega})$ partition the index space into disjoint sets, the disjoint union $\bigcup_{\bm{\omega}\in[\bm{\nu}]} \mathfrak{R}(\bm{\omega}) = \mathfrak{R}_{[\bm{\nu}]}$ allows us to merge the sums compactly into the orbit frequency generator:
\begin{equation}
\mathbb{E}_{\bm{\theta}} [ a_{\bm{\nu}} ]
=
\sum_{\bm{\omega}\in[\bm{\nu}]} \sum_{(j,j')\in\mathfrak R(\bm{\omega})} \Delta\bar A_{j,j'} \Delta\bar B_{j,j'}
=
\sum_{(j,j')\in\mathfrak R_{[\bm{\nu}]}} \Delta\bar A_{j,j'} \Delta\bar B_{j,j'}.
\end{equation}
Applying the Cauchy--Schwarz inequality to both expressions yields their respective upper bounds:
\begin{align}
\left| \mathbb{E}_{\bm{\theta}} [ c_{\bm{\omega}} ] \right|
&\leq
\left( \sum_{(j,j')\in\mathfrak R(\bm{\omega})} |\Delta\bar A_{j,j'}|^2 \right)^{1/2}
\left( \sum_{(j,j')\in\mathfrak R(\bm{\omega})} |\Delta\bar B_{j,j'}|^2 \right)^{1/2}, \label{eq:cs_bound_single}\\
\left| \mathbb{E}_{\bm{\theta}} [ a_{\bm{\nu}} ] \right|
&\leq
\left( \sum_{(j,j')\in\mathfrak R_{[\bm{\nu}]}} |\Delta\bar A_{j,j'}|^2 \right)^{1/2}
\left( \sum_{(j,j')\in\mathfrak R_{[\bm{\nu}]}} |\Delta\bar B_{j,j'}|^2 \right)^{1/2}. \label{eq:cs_bound_orbit}
\end{align}
Recall the explicit partial-trace forms of the deviation operators from Definition~\ref{def:moment_operators}:
\begin{align*}
\Delta\bar A_{j,j'} &= \left\langle j\right| \operatorname{Tr}_2 \left[ \left( \mathcal M^{(2)}_1-\mathcal H^{(2)} \right) \left( |0\rangle\langle0|\otimes\frac{I}{d} \right) \right] \left|j'\right\rangle, \\
\Delta\bar B_{j,j'} &= \left\langle j'\right| \operatorname{Tr}_2 \left[ \left( \widetilde{\mathcal M}^{(2)}_2-\mathcal H^{(2)} \right) \left( \hat O\otimes\frac{I}{d} \right) \right] \left|j\right\rangle.
\end{align*}
These structures match the restricted Frobenius bounds derived in Lemma~\ref{lemma:restricted_frobenius_from_design} (for the individual frequency $\mathfrak R(\bm{\omega})$) and Lemma~\ref{lemma:orbit_restricted_frobenius} (for the orbit $\mathfrak R_{[\bm{\nu}]}$).

For the first factor in both Eq.~\eqref{eq:cs_bound_single} and Eq.~\eqref{eq:cs_bound_orbit}, we apply the respective lemmas with $Y = |0\rangle\langle0|$. Since the Frobenius norm of a rank-1 pure state projector is exactly $\||0\rangle\langle0|\|_F = 1$, the bounds evaluate to:
\begin{align}
\left( \sum_{(j,j')\in\mathfrak R(\bm{\omega})} |\Delta\bar A_{j,j'}|^2 \right)^{1/2} &\leq \varepsilon \||0\rangle\langle0|\|_F = \varepsilon, \\
\left( \sum_{(j,j')\in\mathfrak R_{[\bm{\nu}]}} |\Delta\bar A_{j,j'}|^2 \right)^{1/2} &\leq \varepsilon \||0\rangle\langle0|\|_F = \varepsilon.
\end{align}
Similarly, for the second factor in both inequalities, we apply the respective lemmas using the observable $Y = \hat O$. This directly yields:
\begin{align}
\left( \sum_{(j,j')\in\mathfrak R(\bm{\omega})} |\Delta\bar B_{j,j'}|^2 \right)^{1/2} &\leq \varepsilon \|\hat O\|_F, \\
\left( \sum_{(j,j')\in\mathfrak R_{[\bm{\nu}]}} |\Delta\bar B_{j,j'}|^2 \right)^{1/2} &\leq \varepsilon \|\hat O\|_F.
\end{align}
Multiplying these respective factors together immediately produces the desired bounds:
\begin{align}
\left| \mathbb{E}_{\bm{\theta}} [ c_{\bm{\omega}} ] \right| &\leq \varepsilon^2\|\hat O\|_F, \\
\left| \mathbb{E}_{\bm{\theta}} [ a_{\bm{\nu}} ] \right| &\leq \varepsilon^2\|\hat O\|_F.
\end{align}
\end{proof}


\begin{lemma}[Approximate Fourier Coefficient Decoupling]
\label{lemma:approx_fourier_coefficient_decoupling}
For any two distinct nonzero frequencies $\bm{\omega}\neq\bm{\omega}'$ ($\bm{\omega}\neq0, \bm{\omega}'\neq0$), we have
\begin{equation}
\label{eq:approx_cross_moment_final}
\left| \mathbb{E}_{\bm{\theta}} \left[ c_{\bm{\omega}}c_{\bm{\omega}'}^* \right] \right| \leq \varepsilon^2 \|\hat O\|_F^2.
\end{equation}
\end{lemma}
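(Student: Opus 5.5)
The plan mirrors the proof of Lemma~\ref{lemma:nonzero_frequency_first_moment_bound}, now for second moments. Start from Eq.~\eqref{eq:coeff_cross_moment_theta} in Lemma~\ref{lemma:coefficient_moment_relations}:
\[
\mathbb{E}_{\bm{\theta}}[c_{\bm{\omega}}c_{\bm{\omega}'}^*]=\sum_{(j,j')\in\mathfrak R(\bm{\omega})}\sum_{(k,k')\in\mathfrak R(\bm{\omega}')}A_{j,j';k,k'}B_{j,j';k,k'}.
\]
Split each factor into its Haar part and deviation part, $A=A^{\mathrm H}+\Delta A$ and $B=B^{\mathrm H}+\Delta B$, as in Definition~\ref{def:moment_operators}. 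This gives four terms: $A^{\mathrm H}B^{\mathrm H}$, $A^{\mathrm H}\Delta B$, $\Delta A\,B^{\mathrm H}$, and $\Delta A\,\Delta B$.

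The key structural step is Lemma~\ref{lemma:haar_pairing_zero_frequency}. Since $\bm{\omega}\neq\bm{\omega}'$ and both are nonzero, $A^{\mathrm H}_{j,j';k,k'}=B^{\mathrm H}_{j,j';k,k'}=0$ for every index pair in the summation range $\mathfrak R(\bm{\omega})\times\mathfrak R(\bm{\omega}')$. The three terms containing a Haar factor therefore vanish identically on this range, and only the purely deviational term survives:
\[
\mathbb{E}_{\bm{\theta}}[c_{\bm{\omega}}c_{\bm{\omega}'}^*]=\sum_{\mathfrak R(\bm{\omega})\times\mathfrak R(\bm{\omega}')}\Delta A_{j,j';k,k'}\,\Delta B_{j,j';k,k'}.
\]
This is the point where the approach avoids any factor of $|\mathfrak R|$: no Haar-times-deviation cross term ever has to be bounded.

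Next, apply Cauchy--Schwarz over the joint index set $\mathfrak R(\bm{\omega})\times\mathfrak R(\bm{\omega}')$. This bounds the modulus by the product of the restricted $\ell_2$ norms of $\Delta A$ and of $\Delta B$ over that set.

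Bound each norm with Lemma~\ref{lemma:restricted_frobenius_from_design}. By Definition~\ref{def:moment_operators}, $\Delta A_{j,j';k,k'}=\langle j,k'|(\mathcal M^{(2)}_1-\mathcal H^{(2)})(|0,0\rangle\langle0,0|)|j',k\rangle$, so Eq.~\eqref{eq:restricted_bound_M1} with $X=|0,0\rangle\langle0,0|$ gives a bound of $\varepsilon\,\||0,0\rangle\langle0,0|\|_F=\varepsilon$. Likewise, $\Delta B_{j,j';k,k'}=\langle j',k|(\widetilde{\mathcal M}^{(2)}_2-\mathcal H^{(2)})(\hat O\otimes\hat O)|j,k'\rangle$, so Eq.~\eqref{eq:restricted_bound_M2} with $X=\hat O\otimes\hat O$ gives $\varepsilon\|\hat O\otimes\hat O\|_F=\varepsilon\|\hat O\|_F^2$, using multiplicativity of the Frobenius norm under tensor products. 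Multiplying the two bounds yields Eq.~\eqref{eq:approx_cross_moment_final}.

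The only step I expect to need care is the index bookkeeping. One must check that the matrix-element patterns of $\Delta A$ and $\Delta B$ (which bra and ket slots carry $j,j',k,k'$) match exactly the patterns in Lemma~\ref{lemma:restricted_frobenius_from_design}. One must also check that the map $((j,j'),(k,k'))\mapsto$ matrix entry is injective, so that the restricted sum is genuinely dominated by the full Frobenius norm; it is, since each distinct quadruple addresses a distinct entry. Everything else is direct.
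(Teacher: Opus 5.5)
Your proposal is correct and follows the paper's proof essentially step for step. Like the paper, you split $A=A^{\mathrm H}+\Delta A$ and $B=B^{\mathrm H}+\Delta B$, use Lemma~\ref{lemma:haar_pairing_zero_frequency} to kill the three Haar-containing terms pointwise on $\mathfrak R(\bm{\omega})\times\mathfrak R(\bm{\omega}')$, and then bound the surviving $\Delta A\,\Delta B$ sum by Cauchy--Schwarz together with Lemma~\ref{lemma:restricted_frobenius_from_design}, taking $X=|0,0\rangle\langle0,0|$ and $X=\hat O\otimes\hat O$.
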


\begin{proof}
By Lemma~\ref{lemma:coefficient_moment_relations}, the mixed second moment is $\mathbb{E}_{\bm{\theta}} [ c_{\bm{\omega}}c_{\bm{\omega}'}^* ] = \sum_{(j,j')\in\mathfrak R(\bm{\omega})} \sum_{(k,k')\in\mathfrak R(\bm{\omega}')}  A_{j,j';k,k'} B_{j,j';k,k'}$. Expanding the second-moment operators into their Haar and deviation components (Definition~\ref{def:moment_operators}) gives:
\begin{align}
\mathbb{E}_{\bm{\theta}} \left[ c_{\bm{\omega}}c_{\bm{\omega}'}^* \right]
&=
\sum_{(j,j')\in\mathfrak R(\bm{\omega})} \sum_{(k,k')\in\mathfrak R(\bm{\omega}')}  A^{\mathrm H}_{j,j';k,k'} B^{\mathrm H}_{j,j';k,k'}
+ \sum_{(j,j')\in\mathfrak R(\bm{\omega})} \sum_{(k,k')\in\mathfrak R(\bm{\omega}')}  \Delta A_{j,j';k,k'} B^{\mathrm H}_{j,j';k,k'} \nonumber\\
&\quad + \sum_{(j,j')\in\mathfrak R(\bm{\omega})} \sum_{(k,k')\in\mathfrak R(\bm{\omega}')}  A^{\mathrm H}_{j,j';k,k'} \Delta B_{j,j';k,k'}
+ \sum_{(j,j')\in\mathfrak R(\bm{\omega})} \sum_{(k,k')\in\mathfrak R(\bm{\omega}')}  \Delta A_{j,j';k,k'} \Delta B_{j,j';k,k'}.
\label{eq:AB_expansion_short}
\end{align}
By Lemma~\ref{lemma:haar_pairing_zero_frequency}, since the frequencies are distinct and nonzero, the Haar pairings vanish pointwise ($A^{\mathrm H}_{j,j';k,k'} = B^{\mathrm H}_{j,j';k,k'} = 0$). Consequently, the first three terms in the expansion evaluate to zero, leaving only the pure deviation term:
\begin{equation}
\mathbb{E}_{\bm{\theta}} \left[ c_{\bm{\omega}}c_{\bm{\omega}'}^* \right]
=
\sum_{(j,j')\in\mathfrak R(\bm{\omega})} \sum_{(k,k')\in\mathfrak R(\bm{\omega}')}  \Delta A_{j,j';k,k'} \Delta B_{j,j';k,k'}.
\end{equation}
Taking the absolute value of both sides and applying the Cauchy--Schwarz inequality, we can bound this double sum by the product of their restricted Frobenius norms:
\begin{equation}
\left| \mathbb{E}_{\bm{\theta}} \left[ c_{\bm{\omega}}c_{\bm{\omega}'}^* \right] \right|
\leq
\|\Delta A\|_{F;\bm{\omega},\bm{\omega}'} \|\Delta B\|_{F;\bm{\omega},\bm{\omega}'},
\end{equation}
where the restricted Frobenius norms over the given frequency generators are explicitly defined as:
\begin{align}
\|\Delta A\|_{F;\bm{\omega},\bm{\omega}'}
&:=
\left( \sum_{(j,j')\in\mathfrak R(\bm{\omega})} \sum_{(k,k')\in\mathfrak R(\bm{\omega}')} |\Delta A_{j,j';k,k'}|^2 \right)^{1/2}, \\
\|\Delta B\|_{F;\bm{\omega},\bm{\omega}'}
&:=
\left( \sum_{(j,j')\in\mathfrak R(\bm{\omega})} \sum_{(k,k')\in\mathfrak R(\bm{\omega}')} |\Delta B_{j,j';k,k'}|^2 \right)^{1/2}.
\end{align}
To evaluate these restricted norms, we recall the exact channel expressions of the deviation operators from Definition~\ref{def:moment_operators}:
\begin{align*}
\Delta A_{j,j';k,k'} &= \left\langle j,k'\right| \left( \mathcal M^{(2)}_1-\mathcal H^{(2)} \right) \left( |0,0\rangle\langle0,0| \right) \left|j',k\right\rangle, \\
\Delta B_{j,j';k,k'} &= \left\langle j',k\right| \left( \widetilde{\mathcal M}^{(2)}_2-\mathcal H^{(2)} \right) \left( \hat O\otimes\hat O \right) \left|j,k'\right\rangle.
\end{align*}
These structures directly match the restricted bounding conditions of Lemma~\ref{lemma:restricted_frobenius_from_design}. For the first term, we apply the lemma with the input operator $X=|0,0\rangle\langle0,0|$. Since the Frobenius norm of a normalized pure state is $1$, this yields:
\begin{equation}
\|\Delta A\|_{F;\bm{\omega},\bm{\omega}'} \leq \varepsilon \| |0,0\rangle\langle0,0| \|_F = \varepsilon.
\end{equation}
For the second term, we apply the lemma with the input operator $X=\hat O\otimes\hat O$. Using the multiplicative property of the Frobenius norm under tensor products ($\|M \otimes N\|_F = \|M\|_F \|N\|_F$), we get:
\begin{equation}
\|\Delta B\|_{F;\bm{\omega},\bm{\omega}'} \leq \varepsilon \|\hat O\otimes\hat O\|_F = \varepsilon \|\hat O\|_F^2.
\end{equation}
Finally, multiplying these two derived bounds together confirms the original claim:
\begin{equation}
\left| \mathbb{E}_{\bm{\theta}} \left[ c_{\bm{\omega}}c_{\bm{\omega}'}^* \right] \right|
\leq
\varepsilon \cdot \left( \varepsilon\|\hat O\|_F^2 \right)
=
\varepsilon^2 \|\hat O\|_F^2.
\end{equation}

\end{proof}

\begin{corollary}[Approximate Fourier Coefficient Covariance Decoupling]
\label{cor:approx_fourier_coefficient_covariance_decoupling}
Under the assumptions of Lemma~\ref{lemma:approx_fourier_coefficient_decoupling}, for any two distinct nonzero frequencies $\bm{\omega}\neq\bm{\omega}'$, we have
\begin{equation}
\left| \operatorname{Cov}_{\bm{\theta}} \left( c_{\bm{\omega}},c_{\bm{\omega}'} \right) \right|
\leq
\|\hat O\|_F^2 \left( \varepsilon^2+\varepsilon^4 \right).
\end{equation}
\end{corollary}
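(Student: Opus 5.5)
The plan is to write the covariance as a cross moment minus a product of first moments, and bound each piece with a result already proved. By definition,
\begin{equation*}
\operatorname{Cov}_{\bm{\theta}}(c_{\bm{\omega}},c_{\bm{\omega}'})
=
\mathbb{E}_{\bm{\theta}}\!\left[c_{\bm{\omega}}c_{\bm{\omega}'}^*\right]
-
\mathbb{E}_{\bm{\theta}}\!\left[c_{\bm{\omega}}\right]\,
\mathbb{E}_{\bm{\theta}}\!\left[c_{\bm{\omega}'}\right]^*.
\end{equation*}
This uses the same complex convention as in the proof of Lemma~\ref{lemma:fourier_coefficient_decoupling}. The triangle inequality then gives
\begin{equation*}
\left|\operatorname{Cov}_{\bm{\theta}}(c_{\bm{\omega}},c_{\bm{\omega}'})\right|
\le
\left|\mathbb{E}_{\bm{\theta}}[c_{\bm{\omega}}c_{\bm{\omega}'}^*]\right|
+
\left|\mathbb{E}_{\bm{\theta}}[c_{\bm{\omega}}]\right|\,
\left|\mathbb{E}_{\bm{\theta}}[c_{\bm{\omega}'}]\right| .
\end{equation*}

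For the first term, the hypotheses are exactly those of Lemma~\ref{lemma:approx_fourier_coefficient_decoupling}: $\bm{\omega}\neq\bm{\omega}'$ and both are nonzero. That lemma gives the bound $\varepsilon^2\|\hat O\|_F^2$ directly.

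For the second term, both frequencies are nonzero, so Lemma~\ref{lemma:nonzero_frequency_first_moment_bound} applies to each one separately. It gives $|\mathbb{E}_{\bm{\theta}}[c_{\bm{\omega}}]|\le\varepsilon^2\|\hat O\|_F$ and $|\mathbb{E}_{\bm{\theta}}[c_{\bm{\omega}'}]|\le\varepsilon^2\|\hat O\|_F$. Their product is therefore at most $\varepsilon^4\|\hat O\|_F^2$.

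Adding the two bounds yields $\|\hat O\|_F^2(\varepsilon^2+\varepsilon^4)$, which is the claim.

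There is no real obstacle here. The only point needing care is the bookkeeping: the conjugate on the second mean must be handled consistently with the definition of $\operatorname{Cov}$, and that conjugation does not change the modulus. All the substantive work, namely the vanishing of the Haar pairings and the restricted Frobenius bounds, has already been done in Lemmas~\ref{lemma:haar_pairing_zero_frequency}, \ref{lemma:restricted_frobenius_from_design}, \ref{lemma:nonzero_frequency_first_moment_bound} and \ref{lemma:approx_fourier_coefficient_decoupling}.
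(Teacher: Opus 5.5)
Your proposal is correct and is the paper's own argument: write the covariance as $\mathbb{E}_{\bm{\theta}}[c_{\bm{\omega}}c_{\bm{\omega}'}^*]-\mathbb{E}_{\bm{\theta}}[c_{\bm{\omega}}]\,\mathbb{E}_{\bm{\theta}}[c_{\bm{\omega}'}]^*$ and apply the triangle inequality. Then bound the cross moment by Lemma~\ref{lemma:approx_fourier_coefficient_decoupling} and each mean by Lemma~\ref{lemma:nonzero_frequency_first_moment_bound}, which gives $\|\hat O\|_F^2(\varepsilon^2+\varepsilon^4)$.
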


\begin{proof}
Let $\mu_{\bm{\omega}} := \mathbb{E}_{\bm{\theta}} [ c_{\bm{\omega}} ]$ and $\mu_{\bm{\omega}'} := \mathbb{E}_{\bm{\theta}} [ c_{\bm{\omega}'} ]$. By Lemma~\ref{lemma:nonzero_frequency_first_moment_bound}, we know $|\mu_{\bm{\omega}}| \leq \varepsilon^2\|\hat O\|_F$ and $|\mu_{\bm{\omega}'}| \leq \varepsilon^2\|\hat O\|_F$. Furthermore, Lemma~\ref{lemma:approx_fourier_coefficient_decoupling} gives $\left| \mathbb{E}_{\bm{\theta}} [ c_{\bm{\omega}}c_{\bm{\omega}'}^* ] \right| \leq \varepsilon^2\|\hat O\|_F^2$. Since the covariance is defined as $\operatorname{Cov}_{\bm{\theta}} ( c_{\bm{\omega}},c_{\bm{\omega}'} ) = \mathbb{E}_{\bm{\theta}} [ c_{\bm{\omega}}c_{\bm{\omega}'}^* ] - \mu_{\bm{\omega}}\mu_{\bm{\omega}'}^*$, applying the triangle inequality yields:
\begin{equation}
\left| \operatorname{Cov}_{\bm{\theta}} ( c_{\bm{\omega}},c_{\bm{\omega}'} ) \right|
\leq
\left| \mathbb{E}_{\bm{\theta}} [ c_{\bm{\omega}}c_{\bm{\omega}'}^* ] \right| + |\mu_{\bm{\omega}}| |\mu_{\bm{\omega}'}|
\leq
\varepsilon^2\|\hat O\|_F^2 + \varepsilon^4\|\hat O\|_F^2,
\end{equation}
which proves the claim.
\end{proof}

\subsection{Symmetrized coefficient variance}
\label{app:tight_symmetrized_variance}

Recall that the total frequency redundancy of the orbit is denoted by $R = |\mathfrak{R}_{[\bm{\nu}]}| = \sum_{\bm{\omega}\in[\bm{\nu}]} |\mathfrak{R}(\bm{\omega})|$. To rigorously bound the deviation over this set, we define the restricted Frobenius norm of an operator $M$ on the orbit subspace $\mathfrak{R}_{[\bm{\nu}]}$ as
\begin{equation}
\| M |_{\mathfrak{R}_{[\bm{\nu}]}} \|_F := \left(\sum_{(j,j') \in \mathfrak{R}_{[\bm{\nu}]}} \sum_{(k,k') \in \mathfrak{R}_{[\bm{\nu}]}} |M_{j,j';k,k'}|^2\right)^{1/2}.
\end{equation}

\begin{lemma}[Restricted Haar Frobenius Norms]
\label{lemma:restricted_haar_norm}
Let $[\bm{\nu}]$ be a frequency orbit containing only nonzero frequencies. For the exact Haar operators $A^{\mathrm{H}}$ and $B^{\mathrm{H}}$ (Definition~\ref{def:moment_operators}), we have:
\begin{equation}
\left\| A^{\mathrm{H}} |_{\mathfrak{R}_{[\bm{\nu}]}} \right\|_F = \frac{\sqrt{R}}{d(d+1)}, \qquad
\left\| B^{\mathrm{H}} |_{\mathfrak{R}_{[\bm{\nu}]}} \right\|_F = \frac{ \sqrt{R}(d\|\hat{O}\|_F^2-(\operatorname{Tr} \hat{O})^2)}{d\left(d^2-1\right)}.
\end{equation}
\end{lemma}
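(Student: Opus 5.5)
The plan is to evaluate the restricted sums exactly, using the closed-form Haar pairings already obtained in the proof of Lemma~\ref{lemma:haar_pairing_zero_frequency}. There we found
\begin{equation*}
A^{\mathrm H}_{j,j';k,k'}=\frac{1}{d(d+1)}\left(\delta_{j,j'}\delta_{k',k}+\delta_{j,k}\delta_{k',j'}\right),
\qquad
B^{\mathrm H}_{j,j';k,k'}=\alpha\,\delta_{j',j}\delta_{k,k'}+\beta\,\delta_{j',k'}\delta_{k,j},
\end{equation*}
with $\alpha,\beta$ as in Eq.~\eqref{eq:decomposition_coefficient}. I will substitute these into the definition of $\|M|_{\mathfrak R_{[\bm{\nu}]}}\|_F$ and count the surviving index quadruples.

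\emph{Step 1: the identity pairing drops out.} Every $(j,j')\in\mathfrak R_{[\bm{\nu}]}$ satisfies $\bm{\lambda}_j-\bm{\lambda}_{j'}\in[\bm{\nu}]$. By assumption the orbit contains no zero frequency, so $j\neq j'$, and likewise $k\neq k'$. Hence the terms $\delta_{j,j'}\delta_{k',k}$ in $A^{\mathrm H}$ and $\alpha\,\delta_{j',j}\delta_{k,k'}$ in $B^{\mathrm H}$ vanish identically on the restricted index set. Only the swap pairing survives, giving
\begin{equation*}
A^{\mathrm H}_{j,j';k,k'}=\frac{\delta_{j,k}\delta_{j',k'}}{d(d+1)},
\qquad
B^{\mathrm H}_{j,j';k,k'}=\beta\,\delta_{j,k}\delta_{j',k'}.
\end{equation*}
Since only one pairing remains, there is no cross term when these entries are squared.

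\emph{Step 2: count the nonzero entries.} The surviving delta forces $(k,k')=(j,j')$. For each $(j,j')\in\mathfrak R_{[\bm{\nu}]}$, the pair $(k,k')=(j,j')$ is automatically in $\mathfrak R_{[\bm{\nu}]}$, and it is the only admissible choice. So exactly $R=|\mathfrak R_{[\bm{\nu}]}|$ quadruples are nonzero. Each contributes $1/(d(d+1))^2$ to $\|A^{\mathrm H}|_{\mathfrak R_{[\bm{\nu}]}}\|_F^2$ and $\beta^2$ to $\|B^{\mathrm H}|_{\mathfrak R_{[\bm{\nu}]}}\|_F^2$.

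\emph{Step 3: conclude.} Taking square roots gives $\|A^{\mathrm H}|_{\mathfrak R_{[\bm{\nu}]}}\|_F=\sqrt R/(d(d+1))$ and $\|B^{\mathrm H}|_{\mathfrak R_{[\bm{\nu}]}}\|_F=\sqrt R\,|\beta|$. Substituting $\beta=(d\|\hat O\|_F^2-(\operatorname{Tr}\hat O)^2)/(d(d^2-1))$ yields the stated formula. This requires $\beta\ge 0$, which holds because Cauchy--Schwarz gives $(\operatorname{Tr}\hat O)^2\le d\|\hat O\|_F^2$.

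The argument is essentially bookkeeping. The only point needing care is Step 2: it uses that $\mathfrak R_{[\bm{\nu}]}$ is a disjoint union of the generators $\mathfrak R(\bm{\omega})$ over $\bm{\omega}\in[\bm{\nu}]$, so that each quadruple $(j,j';j,j')$ is counted once. It also uses that the zero-frequency exclusion removes the identity pairing before squaring, so no cross terms between the two pairings arise.
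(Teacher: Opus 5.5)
Your proposal is correct and follows the paper's proof step for step. You substitute the closed-form pairings from Lemma~\ref{lemma:haar_pairing_zero_frequency}, use the nonzero-frequency assumption to kill the identity pairing so that only the swap term $\delta_{j,k}\delta_{j',k'}$ survives, count the $R$ surviving entries, and invoke $\beta\ge 0$ (Cauchy--Schwarz) to drop the absolute value (your closing appeal to disjointness is harmless but unnecessary, since the restricted norm already sums over the set $\mathfrak R_{[\bm{\nu}]}$, whose cardinality is $R$ by definition).
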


\begin{proof}
Recall from Eq.~\eqref{eq:AH_pairing_compact} that $A^{\mathrm H}_{j,j';k,k'} = \frac{1}{d(d+1)} ( \delta_{j,j'}\delta_{k',k} + \delta_{j,k}\delta_{k',j'} )$. Since $[\bm{\nu}]$ excludes the zero frequency, for any $(j,j') \in \mathfrak{R}_{[\bm{\nu}]}$, we have $\bm{\lambda}_j \neq \bm{\lambda}_{j'}$, which implies $\delta_{j,j'} = 0$ by the contrapositive of $j=j'\Rightarrow\bm{\lambda}_j =\bm{\lambda}_{j'}$. Consequently, the first pairing vanishes, and the operator becomes diagonal with respect to the index pairs:
\begin{equation}
A^{\mathrm H}_{j,j';k,k'} = \frac{1}{d(d+1)} \delta_{j,k}\delta_{j',k'}.
\end{equation}
Squaring this term and summing over $(j,j') \in \mathfrak{R}_{[\bm{\nu}]}$ and $(k,k') \in \mathfrak{R}_{[\bm{\nu}]}$ yields exactly $R$ non-zero elements (where $j=k$ and $j'=k'$):
\begin{equation}
\left\| A^{\mathrm{H}} |_{\mathfrak{R}_{[\bm{\nu}]}} \right\|_F = \left( \sum_{(j,j') \in \mathfrak{R}_{[\bm{\nu}]}} \left(\frac{1}{d(d+1)}\right)^2 \right)^{1/2} = \frac{\sqrt{R}}{d(d+1)}.
\end{equation}
Similarly, Eq.~\eqref{eq:BH_pairing_compact} gives $B^{\mathrm H}_{j,j';k,k'} = \alpha\delta_{j',j}\delta_{k,k'} + \beta\delta_{j',k'}\delta_{k,j}$. For the same non-zero frequency condition, $\delta_{j',j} = 0$, leaving
\[
B^{\mathrm H}_{j,j';k,k'} = \beta \delta_{j',k'}\delta_{k,j}
\]
Using the coefficient $\beta=\frac{d\|\hat{O}\|_F^2-(\operatorname{Tr} \hat{O})^2}{d\left(d^2-1\right)}$ from Eq.~\eqref{eq:decomposition_coefficient} and the fact that $\beta\geq0$ by the Cauchy--Schwarz inequality, we obtain the restricted Frobenius norm:
\begin{equation}
\left\| B^{\mathrm{H}} |_{\mathfrak{R}_{[\bm{\nu}]}} \right\|_F = \left( \sum_{(j,j') \in \mathfrak{R}_{[\bm{\nu}]}} |\beta|^2 \right)^{1/2} = \sqrt{R} |\beta|=\frac{ \sqrt{R}(d\|\hat{O}\|_F^2-(\operatorname{Tr} \hat{O})^2)}{d\left(d^2-1\right)}.
\end{equation}

\end{proof}

\begin{lemma}[Restricted Error Frobenius Norms]
\label{lemma:restricted_error_norm}
Under the $\varepsilon$-approximate 2-design assumptions defined in Eq.~\eqref{eq:approx_2_design_channel_assumption}, the deviation operators $\Delta A$ and $\Delta B$ (Definition~\ref{def:moment_operators}) satisfy:
\begin{equation}
\left\| \Delta A |_{\mathfrak{R}_{[\bm{\nu}]}} \right\|_F \leq \varepsilon, \qquad \left\| \Delta B |_{\mathfrak{R}_{[\bm{\nu}]}} \right\|_F \leq \varepsilon \|\hat{O}\|_F^2.
\end{equation}
\end{lemma}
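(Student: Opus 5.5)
This bound is a direct specialization of Lemma~\ref{lemma:orbit_restricted_frobenius} with $[\bm{\nu}']=[\bm{\nu}]$. By Definition~\ref{def:moment_operators}, the restricted entries are
\[
\Delta A_{j,j';k,k'}=\langle j,k'|(\mathcal M^{(2)}_1-\mathcal H^{(2)})(|0,0\rangle\langle0,0|)|j',k\rangle
\]
and
\[
\Delta B_{j,j';k,k'}=\langle j',k|(\widetilde{\mathcal M}^{(2)}_2-\mathcal H^{(2)})(\hat O\otimes\hat O)|j,k'\rangle .
\]
So the restricted norm $\|\Delta A|_{\mathfrak R_{[\bm{\nu}]}}\|_F$ is exactly the left-hand side of Eq.~\eqref{eq:orbit_restricted_bound_M1} with $X=|0,0\rangle\langle0,0|$. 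Likewise, $\|\Delta B|_{\mathfrak R_{[\bm{\nu}]}}\|_F$ is the left-hand side of Eq.~\eqref{eq:orbit_restricted_bound_M2} with $X=\hat O\otimes\hat O$.

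The key steps, in order, are as follows.

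First, I check that the map $((j,j'),(k,k'))\mapsto$ matrix element is injective into the entries of the $d^2\times d^2$ matrix $Z_X$. For $\Delta A$, the pair of row index $(j,k')$ and column index $(j',k)$ determines $(j,j',k,k')$ uniquely, and similarly for $\Delta B$. Hence the restricted sum of squared moduli is a sub-sum of $\|Z_X\|_F^2$; this is the inequality already used in the proof of Lemma~\ref{lemma:orbit_restricted_frobenius}.

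Second, I apply Lemma~\ref{lemma:superoperator_norm} together with the assumption Eq.~\eqref{eq:approx_2_design_channel_assumption}. This gives $\|Z_X\|_F\le\varepsilon\|X\|_F$ for the first layer. For the second layer it gives the same bound for the reverse channel $\widetilde{\mathcal M}^{(2)}_2$, using the layer-specific bound stated in the appendix, which follows from the symmetry $\mathbb E[W^{\otimes2}(\cdot)W^{\dagger\otimes2}]=\mathbb E[W^{\dagger\otimes2}(\cdot)W^{\otimes2}]$.

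Third, I evaluate the Frobenius norms of the inputs. The rank-one projector has $\||0,0\rangle\langle0,0|\|_F=1$. For the observable, $\|\hat O\otimes\hat O\|_F=\|\hat O\|_F^2$ by multiplicativity of the Frobenius norm under tensor products. Combining the three steps yields $\|\Delta A|_{\mathfrak R_{[\bm{\nu}]}}\|_F\le\varepsilon$ and $\|\Delta B|_{\mathfrak R_{[\bm{\nu}]}}\|_F\le\varepsilon\|\hat O\|_F^2$.

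There is no real obstacle; the only point needing care is the index bookkeeping in the first step. The restriction to $\mathfrak R_{[\bm{\nu}]}\times\mathfrak R_{[\bm{\nu}]}$ must not double count matrix entries, and this is guaranteed because the index quadruple is recovered from the row and column labels. Note also that the nonzero-frequency hypothesis is not needed for this bound.
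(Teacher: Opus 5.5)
Your proposal is correct and follows essentially the same argument as the paper. Both bound the restricted sum by the full Frobenius norm of $(\mathcal M^{(2)}_1-\mathcal H^{(2)})(|0,0\rangle\langle0,0|)$ and $(\widetilde{\mathcal M}^{(2)}_2-\mathcal H^{(2)})(\hat O\otimes\hat O)$, then apply Lemma~\ref{lemma:superoperator_norm} with the $\varepsilon$-approximate design assumption together with $\||0,0\rangle\langle0,0|\|_F=1$ and $\|\hat O\otimes\hat O\|_F=\|\hat O\|_F^2$; routing this through Lemma~\ref{lemma:orbit_restricted_frobenius} with $[\bm{\nu}']=[\bm{\nu}]$, and your explicit injectivity check on the index map, are only cosmetic differences from the paper's direct sub-sum argument.
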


\begin{proof}
By definition, the restricted Frobenius norm over a subset of index pairs $\mathfrak{R}_{[\bm{\nu}]}$ is a partial sum of non-negative absolute square terms. It is bounded by the full Frobenius norm over the entire Hilbert space:
\begin{equation}
\left\| \Delta A |_{\mathfrak{R}_{[\bm{\nu}]}} \right\|_F = \left(\sum_{(j,j') \in \mathfrak{R}_{[\bm{\nu}]}} \sum_{(k,k') \in \mathfrak{R}_{[\bm{\nu}]}} |\Delta A _{j,j';k,k'}|^2\right)^{1/2}\leq \left(\sum_{\text{all } j,j',k,k'} |\Delta A _{j,j';k,k'}|^2\right)^{1/2} = \|\Delta A\|_F,
\end{equation}
and identically, $\left\| \Delta B |_{\mathfrak{R}_{[\bm{\nu}]}} \right\|_F \leq \|\Delta B\|_F$. We recall $\Delta A$ and $\Delta B$ from Definition~\ref{def:moment_operators}:
\begin{align}
\Delta A &= (\mathcal{M}_1^{(2)} - \mathcal{H}^{(2)})(|0,0\rangle\langle0,0|), \\
\Delta B &= (\widetilde{\mathcal{M}}_2^{(2)} - \mathcal{H}^{(2)})(\hat{O} \otimes \hat{O}).
\end{align}
By Lemma~\ref{lemma:superoperator_norm}, any superoperator $\mathcal{E}$ acting on a matrix $X$ satisfies $\|\mathcal{E}(X)\|_F \leq \|\mathcal{E}\|_{\infty} \|X\|_F$. Applying the $\varepsilon$-approximate 2-design assumptions $\| \mathcal{M}^{(2)}_1 - \mathcal{H}^{(2)} \|_{\infty} \leq \varepsilon$ and $\| \widetilde{\mathcal{M}}^{(2)}_2 - \mathcal{H}^{(2)} \|_{\infty} \leq \varepsilon$, we obtain the strict bounds:
\begin{align}
\|\Delta A\|_F &\leq \varepsilon \| |0,0\rangle\langle0,0| \|_F, \\
\|\Delta B\|_F &\leq \varepsilon \| \hat{O} \otimes \hat{O} \|_F.
\end{align}
Since $\| |0,0\rangle\langle0,0| \|_F =1$ and $\| \hat{O} \otimes \hat{O} \|_F = \|\hat{O}\|_F^2$, combining these bounds completes the proof:
\begin{align}
\left\| \Delta A |_{\mathfrak{R}_{[\bm{\nu}]}} \right\|_F &\leq \|\Delta A\|_F \leq \varepsilon \cdot 1 = \varepsilon, \\
\left\| \Delta B |_{\mathfrak{R}_{[\bm{\nu}]}} \right\|_F &\leq \|\Delta B\|_F \leq \varepsilon \|\hat{O}\|_F^2.
\end{align}

\end{proof}

Now we present the exact and tight variance bound for the symmetrized coefficients.

\subsection{Variance bound for symmetrized coefficients}
\label{app:approximate_theorem}
\begin{restate}{Theorem}{theorem:tight_approx_exp_variance}
Consider the \(n\)-qubit single-uploading QFM in Eq.~\eqref{eq:QFM} with \(L=1\) and symmetrized coefficients \(a_{\bm{\nu}}(\bm{\theta})\) defined in Eq.~\eqref{eq:symmetrized_coefficient}. Suppose the two trainable layers form independent \(\varepsilon\)-approximate 2-designs in the sense of Eq.~\eqref{eq:approx_2_design_channel_assumption}. For any frequency orbit $[\bm{\nu}]$ excluding the zero frequency, let $R = |\mathfrak{R}_{[\bm{\nu}]}|$ be the total redundancy of the orbit. Then the deviation from the exact 2-design variance satisfies:
\begin{equation}
\left|
\operatorname{Var}_{\bm{\theta}}(a_{\bm{\nu}})
-
\operatorname{Var}_{\text{2-design}}(a_{\bm{\nu}})
\right|
\in
\mathcal{O}\left( \varepsilon \|\hat{O}\|_F^2 \left( \frac{\sqrt{R}}{4^n} + \varepsilon \right) \right).
\end{equation}
\end{restate}

\begin{proof}
By Lemma~\ref{lemma:coefficient_moment_relations}, the aggregated second moment expands over the orbit as $\mathbb{E}_{\bm{\theta}} \left[ |a_{\bm{\nu}}|^2 \right] = \sum_{(j,j'), (k,k') \in \mathfrak{R}_{[\bm{\nu}]}} A_{j,j';k,k'} B_{j,j';k,k'}$. By substituting the second-moment operators with their exact Haar components and deviation components, $A = A^{\mathrm{H}} + \Delta A$ and $B = B^{\mathrm{H}} + \Delta B$, we can expand this product into four separate summation terms over all valid index pairs $(j,j'), (k,k') \in \mathfrak{R}_{[\bm{\nu}]}$:
\begin{align}
\label{eq:four_terms_expansion}
\mathbb{E}_{\bm{\theta}} \left[ |a_{\bm{\nu}}|^2 \right]
&= \sum_{(j,j'), (k,k') \in \mathfrak{R}_{[\bm{\nu}]}} A^{\mathrm{H}}_{j,j';k,k'} B^{\mathrm{H}}_{j,j';k,k'} \nonumber \\
&\quad + \sum_{(j,j'), (k,k') \in \mathfrak{R}_{[\bm{\nu}]}} A^{\mathrm{H}}_{j,j';k,k'} \Delta B_{j,j';k,k'} \nonumber \\
&\quad + \sum_{(j,j'), (k,k') \in \mathfrak{R}_{[\bm{\nu}]}} \Delta A_{j,j';k,k'} B^{\mathrm{H}}_{j,j';k,k'} \nonumber \\
&\quad + \sum_{(j,j'), (k,k') \in \mathfrak{R}_{[\bm{\nu}]}} \Delta A_{j,j';k,k'} \Delta B_{j,j';k,k'}.
\end{align}

For the first term, applying the relations from Lemma~\ref{lemma:coefficient_moment_relations} in reverse under the exact Haar measure yields the sum of the ideal expected squared coefficients:
\begin{equation}
\sum_{(j,j'), (k,k') \in \mathfrak{R}_{[\bm{\nu}]}} A^{\mathrm{H}}_{j,j';k,k'} B^{\mathrm{H}}_{j,j';k,k'}
=
\sum_{\bm{\omega}\in[\bm{\nu}]} \mathbb{E}_{\mathcal{H}}\left[|c_{\bm{\omega}}|^2\right].
\end{equation}
Since it is a mathematical property of the exact Haar measure that the expectation of any non-zero frequency coefficient is zero ($\mathbb{E}_{\mathcal{H}}[c_{\bm{\omega}}] = 0$), we have $\mathbb{E}_{\mathcal{H}}[|c_{\bm{\omega}}|^2] = \operatorname{Var}_{\mathcal{H}}(c_{\bm{\omega}})$. Thus, the first term identically evaluates to the sum of the ideal Haar variances:
\begin{equation}
\sum_{(j,j'), (k,k') \in \mathfrak{R}_{[\bm{\nu}]}} A^{\mathrm{H}}_{j,j';k,k'} B^{\mathrm{H}}_{j,j';k,k'}
=
\sum_{\bm{\omega}\in[\bm{\nu}]} \operatorname{Var}_{\mathcal{H}}(c_{\bm{\omega}}).
\end{equation}
For the remaining three cross-deviation terms, we bound their absolute values using the Cauchy--Schwarz inequality. The absolute sum of the element-wise products over the restricted index set $\mathfrak{R}_{[\bm{\nu}]}$ is bounded by the product of their restricted Frobenius norms. Specifically, the first cross-term yields:
\begin{align}
\left| \sum_{(j,j'), (k,k') \in \mathfrak{R}_{[\bm{\nu}]}} A^{\mathrm{H}}_{j,j';k,k'} \Delta B_{j,j';k,k'} \right|
&\leq \left( \sum_{(j,j'), (k,k') \in \mathfrak{R}_{[\bm{\nu}]}} |A^{\mathrm{H}}_{j,j';k,k'}|^2 \right)^{1/2} \left( \sum_{(j,j'), (k,k') \in \mathfrak{R}_{[\bm{\nu}]}} |\Delta B_{j,j';k,k'}|^2 \right)^{1/2} \nonumber \\
&= \left\| A^{\mathrm{H}} |_{\mathfrak{R}_{[\bm{\nu}]}} \right\|_F \left\| \Delta B |_{\mathfrak{R}_{[\bm{\nu}]}} \right\|_F.
\end{align}
Applying the specific bounds from Lemma~\ref{lemma:restricted_haar_norm} and Lemma~\ref{lemma:restricted_error_norm} to these norms, we evaluate this upper bound:
\begin{equation}
\left\| A^{\mathrm{H}} |_{\mathfrak{R}_{[\bm{\nu}]}} \right\|_F \left\| \Delta B |_{\mathfrak{R}_{[\bm{\nu}]}} \right\|_F
\leq
\frac{\sqrt{R}}{d(d+1)} \left( \varepsilon \|\hat{O}\|_F^2 \right)
=
\varepsilon \frac{\sqrt{R}}{d(d+1)} \|\hat{O}\|_F^2.
\end{equation}
By applying the identical Cauchy--Schwarz bounding procedure to the other two components and retaining the exact tight bounds, we obtain:
\begin{align}
\left| \sum_{(j,j'), (k,k') \in \mathfrak{R}_{[\bm{\nu}]}} \Delta A_{j,j';k,k'} B^{\mathrm{H}}_{j,j';k,k'} \right| &\leq \left\| \Delta A |_{\mathfrak{R}_{[\bm{\nu}]}} \right\|_F \left\| B^{\mathrm{H}} |_{\mathfrak{R}_{[\bm{\nu}]}} \right\|_F \leq \frac{ \varepsilon \sqrt{R}(d\|\hat{O}\|_F^2-(\operatorname{Tr} \hat{O})^2)}{d\left(d^2-1\right)}, \\
\left| \sum_{(j,j'), (k,k') \in \mathfrak{R}_{[\bm{\nu}]}} \Delta A_{j,j';k,k'} \Delta B_{j,j';k,k'} \right| &\leq \left\| \Delta A |_{\mathfrak{R}_{[\bm{\nu}]}} \right\|_F \left\| \Delta B |_{\mathfrak{R}_{[\bm{\nu}]}} \right\|_F \leq \varepsilon \left( \varepsilon \|\hat{O}\|_F^2 \right) = \varepsilon^2 \|\hat{O}\|_F^2.
\end{align}
Defining $\text{Cross-Terms}$ as the sum of these three deviation components, we can combine their bounds using the triangle inequality:
\begin{equation}
|\text{Cross-Terms}| \leq \varepsilon \sqrt{R} \left( \frac{\|\hat{O}\|_F^2}{d(d+1)} + \frac{d\|\hat{O}\|_F^2-(\operatorname{Tr} \hat{O})^2}{d(d^2-1)} \right) + \varepsilon^2 \|\hat{O}\|_F^2.
\end{equation}
After factoring out the common denominator $d(d^2-1)$ and using the relation $\frac{1}{d(d+1)} = \frac{d-1}{d(d^2-1)}$, we obtain:
\begin{equation}
|\text{Cross-Terms}| \leq \varepsilon \frac{\sqrt{R}}{d(d^2-1)} \Big( (2d-1)\|\hat{O}\|_F^2 - (\operatorname{Tr} \hat{O})^2 \Big) + \varepsilon^2 \|\hat{O}\|_F^2.
\end{equation}
To connect this second-moment derivation to the variance of the physical model, we utilize the variance identity:
\begin{equation}
\mathbb{E}_{\bm{\theta}}\left[|a_{\bm{\nu}}|^2\right] = \operatorname{Var}_{\bm{\theta}}(a_{\bm{\nu}}) + |\mathbb{E}_{\bm{\theta}}[a_{\bm{\nu}}]|^2.
\end{equation}
Substituting this identity into our aggregated second moment expansion (Eq.~\ref{eq:four_terms_expansion}) gives:
\begin{equation}
\operatorname{Var}_{\bm{\theta}}(a_{\bm{\nu}}) + |\mathbb{E}_{\bm{\theta}}[a_{\bm{\nu}}]|^2 = \sum_{\bm{\omega}\in[\bm{\nu}]} \operatorname{Var}_{\mathcal{H}}(c_{\bm{\omega}}) + \text{Cross-Terms}.
\end{equation}
Isolating the difference between the physical variance and the ideal Haar variance sum yields:
\begin{equation}
\operatorname{Var}_{\bm{\theta}}(a_{\bm{\nu}}) - \sum_{\bm{\omega}\in[\bm{\nu}]} \operatorname{Var}_{\mathcal{H}}(c_{\bm{\omega}}) = \text{Cross-Terms} - |\mathbb{E}_{\bm{\theta}}[a_{\bm{\nu}}]|^2.
\end{equation}
Taking the absolute value of both sides and applying the triangle inequality ($|X - Y| \leq |X| + |Y|$) provides the structure of our final bound:
\begin{equation}
\left| \operatorname{Var}_{\bm{\theta}}(a_{\bm{\nu}}) - \sum_{\bm{\omega}\in[\bm{\nu}]} \operatorname{Var}_{\mathcal{H}}(c_{\bm{\omega}}) \right| \leq |\text{Cross-Terms}| + |\mathbb{E}_{\bm{\theta}}[a_{\bm{\nu}}]|^2.
\end{equation}
Finally, we evaluate the squared expectation shift $|\mathbb{E}_{\bm{\theta}}[a_{\bm{\nu}}]|^2$. By our tighter bound established in Lemma~\ref{lemma:nonzero_frequency_first_moment_bound}, the expectation of the symmetrized coefficient is bounded independently of the orbit size:
\begin{equation}
|\mathbb{E}_{\bm{\theta}}[a_{\bm{\nu}}]| \leq \varepsilon^2 \|\hat{O}\|_F.
\end{equation}
Squaring this result introduces a shift of at most $|\mathbb{E}_{\bm{\theta}}[a_{\bm{\nu}}]|^2 \leq \varepsilon^4 \|\hat{O}\|_F^2$. Substituting the derived upper bounds for both $|\text{Cross-Terms}|$ and $|\mathbb{E}_{\bm{\theta}}[a_{\bm{\nu}}]|^2$ into our absolute difference inequality produces the exact theoretical bound:
\begin{equation}
\label{eq:haar_exact_bound}
\left| \operatorname{Var}_{\bm{\theta}}(a_{\bm{\nu}}) - \sum_{\bm{\omega}\in[\bm{\nu}]} \operatorname{Var}_{\mathcal{H}}(c_{\bm{\omega}}) \right|
\leq \varepsilon \frac{\sqrt{R}}{d(d^2-1)} \Big( (2d-1)\|\hat{O}\|_F^2 - (\operatorname{Tr} \hat{O})^2 \Big) + (\varepsilon^2+ \varepsilon^4) \|\hat{O}\|_F^2
\end{equation}
Using Big-O notation, we simplify our result as follows:
\begin{align}
\left| \operatorname{Var}_{\bm{\theta}}(a_{\bm{\nu}}) - \sum_{\bm{\omega}\in[\bm{\nu}]} \operatorname{Var}_{\mathcal{H}}(c_{\bm{\omega}}) \right|
&\leq \varepsilon \frac{\sqrt{R} \cdot 2d}{d^3-d} \|\hat{O}\|_F^2 + \mathcal{O}(\varepsilon^2 \|\hat{O}\|_F^2) \notag \\
&= \mathcal{O}\left( \left( \frac{\varepsilon \sqrt{R}}{d^2} + \varepsilon^2 \right) \|\hat{O}\|_F^2 \right).
\end{align}
Substituting $d = 2^n$ turns the term $d^{-2}$ into $4^{-n}$, making the exponential decay with the number of qubits explicitly clear:
\begin{equation}
\left| \operatorname{Var}_{\bm{\theta}}(a_{\bm{\nu}}) - \sum_{\bm{\omega}\in[\bm{\nu}]} \operatorname{Var}_{\mathcal{H}}(c_{\bm{\omega}}) \right|
\in \mathcal{O}\left( \left( \frac{\varepsilon \sqrt{R}}{4^n} + \varepsilon^2 \right) \|\hat{O}\|_F^2 \right).
\end{equation}
From Theorem~\ref{theorem:exp_variance}, we have $\operatorname{Var}_{\text{2-design}}(a_{\bm{\nu}}) = \sum_{\bm{\omega} \in [\bm{\nu}]} \operatorname{Var}_{\text{2-design}}(c_{\bm{\omega}})$. Since $\operatorname{Var}_{\mathcal{H}}(c_{\bm{\omega}})=\operatorname{Var}_{\text{2-design}}(c_{\bm{\omega}})$, we finally obtain our simplified asymptotic bound:
\begin{equation}
\left|
\operatorname{Var}_{\bm{\theta}}(a_{\bm{\nu}})
-
\operatorname{Var}_{\text{2-design}}(a_{\bm{\nu}})
\right|
\in
\mathcal{O}\left( \varepsilon \|\hat{O}\|_F^2 \left( \frac{\sqrt{R}}{4^n} + \varepsilon \right) \right).
\end{equation}
This also gives the exact 2-design theoretical bound from Eq.~\eqref{eq:haar_exact_bound}:
\begin{equation}
\label{eq:exact_coeff_bound}
\left| \operatorname{Var}_{\bm{\theta}}(a_{\bm{\nu}}) - \sum_{\bm{\omega}\in[\bm{\nu}]} \operatorname{Var}_{\text{2-design}}(c_{\bm{\omega}}) \right|
\leq \varepsilon \frac{\sqrt{R}}{d(d^2-1)} \Big( (2d-1)\|\hat{O}\|_F^2 - (\operatorname{Tr} \hat{O})^2 \Big) + (\varepsilon^2+ \varepsilon^4) \|\hat{O}\|_F^2
\end{equation}

\end{proof}

\section{Sampling cost analysis}
\label{app:sampling_cost}

\subsection{Symmetry sampling shot complexity}
\label{app:randomized_encoding_shot_complexity}

\begin{restate}{Proposition}{prop:randomized_encoding_shot_complexity}
The estimator in Eq.~\eqref{eq:randomized_encoding_empirical_mean} is unbiased for \((\mathcal A f)(\bm{x},\bm{\theta})\). Its single shot variance is
\begin{equation}
\label{eq:app_randomized_encoding_variance}
\operatorname{Var}(Y_{\bm{x}})
=
\|h\|_1^2
-
\left((\mathcal A f)(\bm{x},\bm{\theta})\right)^2.
\end{equation}
Moreover, for any \(\epsilon>0\) and \(\delta\in(0,1)\),
\begin{equation}
\label{eq:app_randomized_encoding_hoeffding}
\Pr\!\left[
\left|
\widehat{\mathcal A f}_N(\bm{x},\bm{\theta})
-
(\mathcal A f)(\bm{x},\bm{\theta})
\right|
\leq
\epsilon
\right]
\geq
1-\delta
\end{equation}
whenever
\begin{equation}
\label{eq:app_randomized_encoding_sample_bound}
N
\geq
\frac{2\|h\|_1^2}{\epsilon^2}
\log\!\left(\frac{2}{\delta}\right).
\end{equation}
\end{restate}

\begin{proof}
By the law of total expectation, the conditional distribution in Eq.~\eqref{eq:randomized_encoding_single_shot}, and the definition of \(q_i\),
\begin{align}
\mathbb E[Y_{\bm{x}}]
&=
\frac{1}{|\mathcal S|}
\sum_{s\in\mathcal S}
\sum_{i\in\mathcal I}
q_i
\mathbb E
\left[
\frac{h_i}{q_i}
Z^{(i)}_{s,\bm{x}}
\right] \notag\\
&=
\frac{1}{|\mathcal S|}
\sum_{s\in\mathcal S}
\sum_{i\in\mathcal I}
h_i
\operatorname{Tr}
\left(
O^{(i)}
\rho_s(\bm{x},\bm{\theta})
\right) \notag\\
&=
\operatorname{Tr}
\left(
\hat O
\frac{1}{|\mathcal S|}
\sum_{s\in\mathcal S}
\rho_s(\bm{x},\bm{\theta})
\right)
=
(\mathcal A f)(\bm{x},\bm{\theta}).
\end{align}
Thus the empirical estimator in Eq.~\eqref{eq:randomized_encoding_empirical_mean} is unbiased.

The second moment is independent of \(s\), because \((Z^{(i)}_{s,\bm{x}})^2=1\). Hence
\begin{align}
\mathbb E[Y_{\bm{x}}^2]
&=
\frac{1}{|\mathcal S|}
\sum_{s\in\mathcal S}
\sum_{i\in\mathcal I}
q_i
\mathbb E
\left[
\left(
\frac{h_i}{q_i}
Z^{(i)}_{s,\bm{x}}
\right)^2
\right] \notag\\
&=
\frac{1}{|\mathcal S|}
\sum_{s\in\mathcal S}
\sum_{i\in\mathcal I}
q_i
\frac{h_i^2}{q_i^2}
=
\sum_{i\in\mathcal I}
\frac{h_i^2}{q_i} \notag\\
&=
\sum_{i\in\mathcal I}
|h_i|\|h\|_1
=
\|h\|_1^2.
\end{align}
Therefore
\begin{equation}
\operatorname{Var}(Y_{\bm{x}})
=
\mathbb E[Y_{\bm{x}}^2]
-
\mathbb E[Y_{\bm{x}}]^2
=
\|h\|_1^2
-
\left((\mathcal A f)(\bm{x},\bm{\theta})\right)^2.
\end{equation}
This proves Eq.~\eqref{eq:app_randomized_encoding_variance}.
Finally, since \(q_i=|h_i|/\|h\|_1\), the conditional random variable \(Y_{\bm{x}}\mid S=s,I=i\) takes values in \(\{-\|h\|_1,\|h\|_1\}\). Hence \(Y_{\bm{x}}\in[-\|h\|_1,\|h\|_1]\). Hoeffding's inequality for independent bounded random variables gives
\begin{equation}
\Pr\!\left[
\left|
\widehat{\mathcal A f}_N(\bm{x},\bm{\theta})
-
(\mathcal A f)(\bm{x},\bm{\theta})
\right|
\geq
\epsilon
\right]
\leq
2\exp\!\left(
-
\frac{N\epsilon^2}{2\|h\|_1^2}
\right).
\end{equation}
Solving the right-hand side for a failure probability at most \(\delta\) gives Eq.~\eqref{eq:app_randomized_encoding_sample_bound}. Substituting this bound into the preceding tail inequality gives Eq.~\eqref{eq:app_randomized_encoding_hoeffding}.
\end{proof}

\subsection{Variance comparison}
\label{app:variance_comparison}

\begin{restate}{Proposition}{prop:variance_comparison}
Let \(Y_{\bm{x}}\) be the symmetry sampling estimator in Eq.~\eqref{eq:randomized_encoding_single_shot}. Then \(Y_{\bm{x}}\) and the baseline estimator \(K_{\bm{x}}\) satisfy
\begin{equation}
\label{eq:app_variance_difference_sampling_cost}
\operatorname{Var}(Y_{\bm{x}})
-
\operatorname{Var}(K_{\bm{x}})
=
f(\bm{x},\bm{\theta})^2
-
\left((\mathcal A f)(\bm{x},\bm{\theta})\right)^2.
\end{equation}
Consequently, for a fixed input \(\bm{x}\),
\begin{equation}
\label{eq:app_variance_comparison_condition}
\operatorname{Var}(Y_{\bm{x}})
<
\operatorname{Var}(K_{\bm{x}})
\quad\Longleftrightarrow\quad
\left|
(\mathcal A f)(\bm{x},\bm{\theta})
\right|
>
\left|
f(\bm{x},\bm{\theta})
\right|.
\end{equation}
If \(\bm{x}\) is uniformly distributed over \(\mathcal X\) and the representation elements \(V_s\) are uniformity-preserving on \(\mathcal X\), then
\begin{equation}
\label{eq:app_average_variance_inflation}
\mathbb E_{\bm{x}}
\left[
\operatorname{Var}(Y_{\bm{x}})
\right]
\geq
\mathbb E_{\bm{x}}
\left[
\operatorname{Var}(K_{\bm{x}})
\right].
\end{equation}
\end{restate}

\begin{proof}
Combining Eq.~\eqref{eq:baseline_single_shot_variance} with Eq.~\eqref{eq:app_randomized_encoding_variance} gives Eq.~\eqref{eq:app_variance_difference_sampling_cost}.
Equation~\eqref{eq:app_variance_comparison_condition} follows immediately from this variance difference. Now assume that \(\bm{x}\) is uniformly distributed over \(\mathcal X\) and that the representation elements \(V_s\) are uniformity-preserving on \(\mathcal X\). Since \(v\mapsto v^2\) is convex,
\begin{equation}
\left((\mathcal A f)(\bm{x},\bm{\theta})\right)^2
=
\left(
\frac{1}{|\mathcal S|}
\sum_{s\in\mathcal S}
f(V_s[\bm{x}],\bm{\theta})
\right)^2
\leq
\frac{1}{|\mathcal S|}
\sum_{s\in\mathcal S}
f(V_s[\bm{x}],\bm{\theta})^2.
\end{equation}
Taking expectation over \(\bm{x}\) and applying the uniformity-preserving condition to the function \(\bm{x}\mapsto f(\bm{x},\bm{\theta})^2\) gives
\begin{equation}
\mathbb E_{\bm{x}}
\left[
\left((\mathcal A f)(\bm{x},\bm{\theta})\right)^2
\right]
\leq
\frac{1}{|\mathcal S|}
\sum_{s\in\mathcal S}
\mathbb E_{\bm{x}}
\left[
f(V_s[\bm{x}],\bm{\theta})^2
\right]
=
\mathbb E_{\bm{x}}
\left[
f(\bm{x},\bm{\theta})^2
\right].
\end{equation}
Substituting this inequality into the two variance formulas proves Eq.~\eqref{eq:app_average_variance_inflation}.
\end{proof}

\section{QPINN residual and loss gradient formulas}
\label{app:loss_gradient_expression}

This appendix rewrites the loss gradients in Sec.~\ref{sec:experiment} in the form used by the QPINN estimators. The superscripts \(\mathrm{SP}\) and \(\mathrm{HJ}\) distinguish the screened Poisson and Hamilton-Jacobi equations, while the subscripts \(\mathrm{soft}\) and \(\mathrm{hard}\) distinguish the ansatz used before evaluating the residual. The main text defines the pointwise residuals in terms of \(u_{\bm{\theta}}\). Here we substitute either the soft ansatz \(u_{\bm{\theta}}(\bm{x})=v_{\bm{\theta}}(\bm z(\bm{x}))\) or the hard ansatz \(u_{\bm{\theta}}(\bm{x})=B(\bm{x})v_{\bm{\theta}}(\bm z(\bm{x}))\). When \(v_{\bm{\theta}}\) and its derivatives appear without an argument in an expression indexed by \(\bm{x}_i\), they are evaluated at \(\bm z(\bm{x}_i)\). In a boundary term indexed by \(\bm{y}_j\), they are evaluated at \(\bm z(\bm{y}_j)\).

\subsection{Screened Poisson equation}
\label{app:loss_gradient_expression_screen}

For the screened Poisson equation, the residual for a model prediction \(u_{\bm{\theta}}\) is
\begin{equation}
r_{\bm{\theta}}^{\mathrm{SP}}(\bm{x})
=
-\sum_{d=1}^2
\frac{\partial^2 u_{\bm{\theta}}}{\partial x_d^2}(\bm{x})
+
\lambda u_{\bm{\theta}}(\bm{x})
-
q(\bm{x}).
\end{equation}
After substituting the soft and hard ansatzes, we write the resulting residuals as \(r_{\bm{\theta},\mathrm{soft}}^{\mathrm{SP}}\) and \(r_{\bm{\theta},\mathrm{hard}}^{\mathrm{SP}}\), respectively.
The corresponding loss-gradient identities are
\begin{align}
\frac{\partial \mathcal L_{\mathrm{soft}}^{\mathrm{SP}}}{\partial\theta_\ell}
&=
\frac{2}{N_\Omega}
\sum_{i=1}^{N_\Omega}
r_{\bm{\theta},\mathrm{soft}}^{\mathrm{SP}}(\bm{x}_i)
\frac{\partial r_{\bm{\theta},\mathrm{soft}}^{\mathrm{SP}}}{\partial\theta_\ell}(\bm{x}_i)
+
\frac{2\gamma}{N_{\partial\Omega}}
\sum_{j=1}^{N_{\partial\Omega}}
u_{\bm{\theta}}(\bm{y}_j)
\frac{\partial u_{\bm{\theta}}}{\partial\theta_\ell}(\bm{y}_j),
\nonumber\\
\frac{\partial \mathcal L_{\mathrm{hard}}^{\mathrm{SP}}}{\partial\theta_\ell}
&=
\frac{2}{N_\Omega}
\sum_{i=1}^{N_\Omega}
r_{\bm{\theta},\mathrm{hard}}^{\mathrm{SP}}(\bm{x}_i)
\frac{\partial r_{\bm{\theta},\mathrm{hard}}^{\mathrm{SP}}}{\partial\theta_\ell}(\bm{x}_i).
\end{align}
For the soft ansatz \(u_{\bm{\theta}}(\bm{x})=v_{\bm{\theta}}(\bm z(\bm{x}))\), the chain rule gives
\begin{align}
\frac{\partial u_{\bm{\theta}}}{\partial x_d}(\bm{x}_i)
&=
\frac{d z_d}{d x_d}(x_{i,d})
\frac{\partial v_{\bm{\theta}}}{\partial z_d},
\nonumber\\
\frac{\partial^2 u_{\bm{\theta}}}{\partial x_d^2}(\bm{x}_i)
&=
\frac{d^2 z_d}{d x_d^2}(x_{i,d})
\frac{\partial v_{\bm{\theta}}}{\partial z_d}
+
\left[
\frac{d z_d}{d x_d}(x_{i,d})
\right]^2
\frac{\partial^2 v_{\bm{\theta}}}{\partial z_d^2}.
\end{align}
Substituting these identities into the residual gives
\begin{align}
r_{\bm{\theta},\mathrm{soft}}^{\mathrm{SP}}(\bm{x}_i)
&=
\lambda v_{\bm{\theta}}
-
\sum_{d=1}^2
\frac{d^2 z_d}{d x_d^2}(x_{i,d})
\frac{\partial v_{\bm{\theta}}}{\partial z_d}
-
\sum_{d=1}^2
\left[
\frac{d z_d}{d x_d}(x_{i,d})
\right]^2
\frac{\partial^2 v_{\bm{\theta}}}{\partial z_d^2}
-
q(\bm{x}_i),
\nonumber\\
\frac{\partial r_{\bm{\theta},\mathrm{soft}}^{\mathrm{SP}}}{\partial\theta_\ell}(\bm{x}_i)
&=
\lambda
\frac{\partial v_{\bm{\theta}}}{\partial\theta_\ell}
-
\sum_{d=1}^2
\frac{d^2 z_d}{d x_d^2}(x_{i,d})
\frac{\partial^2 v_{\bm{\theta}}}{\partial\theta_\ell\partial z_d}
-
\sum_{d=1}^2
\left[
\frac{d z_d}{d x_d}(x_{i,d})
\right]^2
\frac{\partial^3 v_{\bm{\theta}}}{\partial\theta_\ell\partial z_d^2}.
\label{eq:app_poisson_soft_residual_gradient}
\end{align}
The soft boundary contribution is also a circuit quantity because
\begin{equation}
u_{\bm{\theta}}(\bm{y}_j)
=
v_{\bm{\theta}},
\quad
\frac{\partial u_{\bm{\theta}}}{\partial\theta_\ell}(\bm{y}_j)
=
\frac{\partial v_{\bm{\theta}}}{\partial\theta_\ell}.
\end{equation}
For the hard ansatz \(u_{\bm{\theta}}(\bm{x})=B(\bm{x})v_{\bm{\theta}}(\bm z(\bm{x}))\), the first and second spatial derivatives are
\begin{align}
\frac{\partial u_{\bm{\theta}}}{\partial x_d}(\bm{x}_i)
&=
\frac{\partial B}{\partial x_d}(\bm{x}_i)
v_{\bm{\theta}}
+
B(\bm{x}_i)
\frac{d z_d}{d x_d}(x_{i,d})
\frac{\partial v_{\bm{\theta}}}{\partial z_d},
\nonumber\\
\frac{\partial^2 u_{\bm{\theta}}}{\partial x_d^2}(\bm{x}_i)
&=
\frac{\partial^2 B}{\partial x_d^2}(\bm{x}_i)
v_{\bm{\theta}}
+
\left[
2
\frac{\partial B}{\partial x_d}(\bm{x}_i)
\frac{d z_d}{d x_d}(x_{i,d})
+
B(\bm{x}_i)
\frac{d^2 z_d}{d x_d^2}(x_{i,d})
\right]
\frac{\partial v_{\bm{\theta}}}{\partial z_d}
\nonumber\\
&\quad
+
B(\bm{x}_i)
\left[
\frac{d z_d}{d x_d}(x_{i,d})
\right]^2
\frac{\partial^2 v_{\bm{\theta}}}{\partial z_d^2}.
\end{align}
Substituting these derivatives into the screened Poisson residual gives
\begin{align}
r_{\bm{\theta},\mathrm{hard}}^{\mathrm{SP}}(\bm{x}_i)
&=
\left[
\lambda B(\bm{x}_i)
-
\sum_{d=1}^2
\frac{\partial^2 B}{\partial x_d^2}(\bm{x}_i)
\right]
v_{\bm{\theta}}
\nonumber\\
&\quad
-
\sum_{d=1}^2
\left[
2
\frac{\partial B}{\partial x_d}(\bm{x}_i)
\frac{d z_d}{d x_d}(x_{i,d})
+
B(\bm{x}_i)
\frac{d^2 z_d}{d x_d^2}(x_{i,d})
\right]
\frac{\partial v_{\bm{\theta}}}{\partial z_d}
\nonumber\\
&\quad
-
\sum_{d=1}^2
B(\bm{x}_i)
\left[
\frac{d z_d}{d x_d}(x_{i,d})
\right]^2
\frac{\partial^2 v_{\bm{\theta}}}{\partial z_d^2}
-
q(\bm{x}_i).
\end{align}
The quantities \(B\), \(q\), and \(z\) do not depend on \(\theta_\ell\). Therefore
\begin{align}
\frac{\partial r_{\bm{\theta},\mathrm{hard}}^{\mathrm{SP}}}{\partial\theta_\ell}(\bm{x}_i)
&=
\left[
\lambda B(\bm{x}_i)
-
\sum_{d=1}^2
\frac{\partial^2 B}{\partial x_d^2}(\bm{x}_i)
\right]
\frac{\partial v_{\bm{\theta}}}{\partial\theta_\ell}
\nonumber\\
&\quad
-
\sum_{d=1}^2
\left[
2
\frac{\partial B}{\partial x_d}(\bm{x}_i)
\frac{d z_d}{d x_d}(x_{i,d})
+
B(\bm{x}_i)
\frac{d^2 z_d}{d x_d^2}(x_{i,d})
\right]
\frac{\partial^2 v_{\bm{\theta}}}{\partial\theta_\ell\partial z_d}
\nonumber\\
&\quad
-
\sum_{d=1}^2
B(\bm{x}_i)
\left[
\frac{d z_d}{d x_d}(x_{i,d})
\right]^2
\frac{\partial^3 v_{\bm{\theta}}}{\partial\theta_\ell\partial z_d^2}.
\label{eq:app_poisson_hard_residual_gradient}
\end{align}

\subsection{Stationary viscous Hamilton-Jacobi equation}
\label{app:loss_gradient_expression_HJ}

For the stationary viscous Hamilton-Jacobi equation, the residual for a model prediction \(u_{\bm{\theta}}\) is
\begin{equation}
r_{\bm{\theta}}^{\mathrm{HJ}}(\bm{x})
=
-
\sum_{d=1}^2
\frac{\partial^2 u_{\bm{\theta}}}{\partial x_d^2}(\bm{x})
+
\rho
\left[
\sum_{d=1}^2
\left(
\frac{\partial u_{\bm{\theta}}}{\partial x_d}(\bm{x})
\right)^2
\right]^{\beta/2}
-
q(\bm{x})
-
\alpha_0.
\end{equation}
Here \(q\) is the source term defined in Eq.~\eqref{eq:hj_source}.
After substituting the soft and hard ansatzes, we write the resulting residuals as \(r_{\bm{\theta},\mathrm{soft}}^{\mathrm{HJ}}\) and \(r_{\bm{\theta},\mathrm{hard}}^{\mathrm{HJ}}\), respectively.
The soft and hard loss gradients are
\begin{align}
\frac{\partial \mathcal L_{\mathrm{soft}}^{\mathrm{HJ}}}{\partial\theta_\ell}
&=
\frac{2}{N_\Omega}
\sum_{i=1}^{N_\Omega}
r_{\bm{\theta},\mathrm{soft}}^{\mathrm{HJ}}(\bm{x}_i)
\frac{\partial r_{\bm{\theta},\mathrm{soft}}^{\mathrm{HJ}}}{\partial\theta_\ell}(\bm{x}_i)
+
\frac{2\gamma}{N_{\partial\Omega}}
\sum_{j=1}^{N_{\partial\Omega}}
u_{\bm{\theta}}(\bm{y}_j)
\frac{\partial u_{\bm{\theta}}}{\partial\theta_\ell}(\bm{y}_j),
\nonumber\\
\frac{\partial \mathcal L_{\mathrm{hard}}^{\mathrm{HJ}}}{\partial\theta_\ell}
&=
\frac{2}{N_\Omega}
\sum_{i=1}^{N_\Omega}
r_{\bm{\theta},\mathrm{hard}}^{\mathrm{HJ}}(\bm{x}_i)
\frac{\partial r_{\bm{\theta},\mathrm{hard}}^{\mathrm{HJ}}}{\partial\theta_\ell}(\bm{x}_i).
\end{align}
For the soft ansatz, the same chain-rule identities used above give
\begin{align}
r_{\bm{\theta},\mathrm{soft}}^{\mathrm{HJ}}(\bm{x}_i)
&=
-
\sum_{d=1}^2
\left[
\frac{d^2 z_d}{d x_d^2}(x_{i,d})
\frac{\partial v_{\bm{\theta}}}{\partial z_d}
+
\left[
\frac{d z_d}{d x_d}(x_{i,d})
\right]^2
\frac{\partial^2 v_{\bm{\theta}}}{\partial z_d^2}
\right]
\nonumber\\
&\quad
+
\rho
\left\{
\sum_{d=1}^2
\left[
\frac{d z_d}{d x_d}(x_{i,d})
\frac{\partial v_{\bm{\theta}}}{\partial z_d}
\right]^2
\right\}^{\beta/2}
-
q(\bm{x}_i)
-
\alpha_0.
\end{align}
Differentiating the residual with respect to \(\theta_\ell\) gives two contributions. The Laplacian term is linear in \(v_{\bm{\theta}}\). The nonlinear Hamiltonian term follows from differentiating the \(\beta/2\)-power of \(\sum_d(\partial u_{\bm{\theta}}/\partial x_d)^2\). Hence
\begin{align}
\frac{\partial r_{\bm{\theta},\mathrm{soft}}^{\mathrm{HJ}}}{\partial\theta_\ell}(\bm{x}_i)
&=
-
\sum_{d=1}^2
\left[
\frac{d^2 z_d}{d x_d^2}(x_{i,d})
\frac{\partial^2 v_{\bm{\theta}}}{\partial\theta_\ell\partial z_d}
+
\left[
\frac{d z_d}{d x_d}(x_{i,d})
\right]^2
\frac{\partial^3 v_{\bm{\theta}}}{\partial\theta_\ell\partial z_d^2}
\right]
\nonumber\\
&\quad
+
\rho\beta
\left\{
\sum_{d=1}^2
\left[
\frac{d z_d}{d x_d}(x_{i,d})
\frac{\partial v_{\bm{\theta}}}{\partial z_d}
\right]^2
\right\}^{\beta/2-1}
\nonumber\\
&\quad
\times
\sum_{d=1}^2
\left[
\frac{d z_d}{d x_d}(x_{i,d})
\right]^2
\frac{\partial v_{\bm{\theta}}}{\partial z_d}
\frac{\partial^2 v_{\bm{\theta}}}{\partial\theta_\ell\partial z_d}.
\label{eq:app_hj_soft_residual_gradient}
\end{align}
The boundary term in the soft Hamilton-Jacobi loss is again estimated from
\begin{equation}
u_{\bm{\theta}}(\bm{y}_j)
=
v_{\bm{\theta}},
\quad
\frac{\partial u_{\bm{\theta}}}{\partial\theta_\ell}(\bm{y}_j)
=
\frac{\partial v_{\bm{\theta}}}{\partial\theta_\ell}.
\end{equation}
For the hard ansatz, the chain rule gives the same spatial derivatives as in the hard screened Poisson case. In addition, differentiating those derivatives with respect to \(\theta_\ell\) gives
\begin{align}
\frac{\partial}{\partial\theta_\ell}
\frac{\partial u_{\bm{\theta}}}{\partial x_d}(\bm{x}_i)
&=
\frac{\partial B}{\partial x_d}(\bm{x}_i)
\frac{\partial v_{\bm{\theta}}}{\partial\theta_\ell}
+
B(\bm{x}_i)
\frac{d z_d}{d x_d}(x_{i,d})
\frac{\partial^2 v_{\bm{\theta}}}{\partial\theta_\ell\partial z_d},
\nonumber\\
\frac{\partial}{\partial\theta_\ell}
\frac{\partial^2 u_{\bm{\theta}}}{\partial x_d^2}(\bm{x}_i)
&=
\frac{\partial^2 B}{\partial x_d^2}(\bm{x}_i)
\frac{\partial v_{\bm{\theta}}}{\partial\theta_\ell}
+
\left[
2
\frac{\partial B}{\partial x_d}(\bm{x}_i)
\frac{d z_d}{d x_d}(x_{i,d})
+
B(\bm{x}_i)
\frac{d^2 z_d}{d x_d^2}(x_{i,d})
\right]
\frac{\partial^2 v_{\bm{\theta}}}{\partial\theta_\ell\partial z_d}
\nonumber\\
&\quad
+
B(\bm{x}_i)
\left[
\frac{d z_d}{d x_d}(x_{i,d})
\right]^2
\frac{\partial^3 v_{\bm{\theta}}}{\partial\theta_\ell\partial z_d^2}.
\end{align}
Substituting the hard ansatz into the Hamilton-Jacobi residual gives
\begin{align}
r_{\bm{\theta},\mathrm{hard}}^{\mathrm{HJ}}(\bm{x}_i)
&=
-
\sum_{d=1}^2
\biggl[
\frac{\partial^2 B}{\partial x_d^2}(\bm{x}_i)
v_{\bm{\theta}}
+
\left[
2
\frac{\partial B}{\partial x_d}(\bm{x}_i)
\frac{d z_d}{d x_d}(x_{i,d})
+
B(\bm{x}_i)
\frac{d^2 z_d}{d x_d^2}(x_{i,d})
\right]
\frac{\partial v_{\bm{\theta}}}{\partial z_d}
\nonumber\\
&\qquad
+
B(\bm{x}_i)
\left[
\frac{d z_d}{d x_d}(x_{i,d})
\right]^2
\frac{\partial^2 v_{\bm{\theta}}}{\partial z_d^2}
\biggr]
\nonumber\\
&\quad
+
\rho
\left\{
\sum_{d=1}^2
\left[
\frac{\partial B}{\partial x_d}(\bm{x}_i)
v_{\bm{\theta}}
+
B(\bm{x}_i)
\frac{d z_d}{d x_d}(x_{i,d})
\frac{\partial v_{\bm{\theta}}}{\partial z_d}
\right]^2
\right\}^{\beta/2}
-
q(\bm{x}_i)
-
\alpha_0.
\end{align}
Differentiating this expression with respect to \(\theta_\ell\) gives
\begin{align}
\frac{\partial r_{\bm{\theta},\mathrm{hard}}^{\mathrm{HJ}}}{\partial\theta_\ell}(\bm{x}_i)
&=
-
\sum_{d=1}^2
\biggl[
\frac{\partial^2 B}{\partial x_d^2}(\bm{x}_i)
\frac{\partial v_{\bm{\theta}}}{\partial\theta_\ell}
+
\left[
2
\frac{\partial B}{\partial x_d}(\bm{x}_i)
\frac{d z_d}{d x_d}(x_{i,d})
+
B(\bm{x}_i)
\frac{d^2 z_d}{d x_d^2}(x_{i,d})
\right]
\frac{\partial^2 v_{\bm{\theta}}}{\partial\theta_\ell\partial z_d}
\nonumber\\
&\qquad
+
B(\bm{x}_i)
\left[
\frac{d z_d}{d x_d}(x_{i,d})
\right]^2
\frac{\partial^3 v_{\bm{\theta}}}{\partial\theta_\ell\partial z_d^2}
\biggr]
\nonumber\\
&\quad
+
\rho\beta
\left\{
\sum_{d=1}^2
\left[
\frac{\partial B}{\partial x_d}(\bm{x}_i)
v_{\bm{\theta}}
+
B(\bm{x}_i)
\frac{d z_d}{d x_d}(x_{i,d})
\frac{\partial v_{\bm{\theta}}}{\partial z_d}
\right]^2
\right\}^{\beta/2-1}
\nonumber\\
&\quad
\times
\sum_{d=1}^2
\left[
\frac{\partial B}{\partial x_d}(\bm{x}_i)
v_{\bm{\theta}}
+
B(\bm{x}_i)
\frac{d z_d}{d x_d}(x_{i,d})
\frac{\partial v_{\bm{\theta}}}{\partial z_d}
\right]
\nonumber\\
&\quad
\times
\left[
\frac{\partial B}{\partial x_d}(\bm{x}_i)
\frac{\partial v_{\bm{\theta}}}{\partial\theta_\ell}
+
B(\bm{x}_i)
\frac{d z_d}{d x_d}(x_{i,d})
\frac{\partial^2 v_{\bm{\theta}}}{\partial\theta_\ell\partial z_d}
\right].
\label{eq:app_hj_hard_residual_gradient}
\end{align}

\clearpage
\twocolumngrid
\bibliography{references}

\end{document}